\def\lst@lettertrue{\let\lst@ifletter\iffalse}
\newtheorem{defi}{Definition}[section]
\crefname{defi}{Definition}{Definitions}
\newtheorem{thm}[defi]{Theorem}%
\crefname{thm}{Theorem}{Theorems}
\crefname{fct}{Fact}{Facts}
\newtheorem{lem}[defi]{Lemma}
\crefname{lem}{Lemma}{Lemmas}
\newtheorem{ex}[defi]{Example}
\crefname{ex}{Example}{Examples}
\crefname{assum}{Assumption}{Assumptions}
\newtheorem{prop}[defi]{Proposition}
\crefname{prop}{Proposition}{Propositions}
\newtheorem{rem}[defi]{Remark}
\crefname{rem}{Remark}{Remarks}
\newtheorem{cor}[defi]{Corollary}
\crefname{cor}{Corollary}{Corollaries}
\newtheorem{cla}[defi]{\textcolor{gray}{\scalebox{.3}[1]{$\blacktriangleright$}} Claim}
\crefname{cla}{Claim}{Claims}
\crefname{section}{Section}{Sections}
\crefname{figure}{Figure}{Figures}
\newtheorem{ques}{Open Question}
\crefname{ques}{Question}{Questions}
\newenvironment{claproof}{\trivlist\PRstyle\item[]{\bfseries Claim Proof:}\newline}{\CLAQED\endtrivlist}
\def\CLAQED{\unskip\nobreak\hfil
    \penalty50\hskip1em\null\nobreak\hfil\scalebox{.3}[1]{\textcolor{gray}{$\blacktriangleleft$}}
    \parfillskip=0pt\finalhyphendemerits=0\endgraf}
\definecolor[named]{ACMBlue}{cmyk}{1,0.1,0,0.1}
\definecolor[named]{ACMYellow}{cmyk}{0,0.16,1,0}
\definecolor[named]{ACMOrange}{cmyk}{0,0.42,1,0.01}
\definecolor[named]{ACMRed}{cmyk}{0,0.90,0.86,0}
\definecolor[named]{ACMLightBlue}{cmyk}{0.49,0.01,0,0}
\definecolor[named]{ACMGreen}{cmyk}{0.20,0,1,0.19}
\definecolor[named]{ACMPurple}{cmyk}{0.55,1,0,0.15}
\definecolor[named]{ACMDarkBlue}{cmyk}{1,0.58,0,0.21}
\newcommand{\labeltext}[2]{%
    \@bsphack
    \csname phantomsection\endcsname %
    \def\@currentlabel{#1}{\label{#2}}%
    \@esphack
}
\let\orgdescriptionlabel\descriptionlabel
\renewcommand*{\descriptionlabel}[1]{%
    \let\orglabel\label
    \let\label\@gobble
    \phantomsection
    \edef\@currentlabel{#1}%
    \let\label\orglabel
    \orgdescriptionlabel{#1}%
}
\let\original@footnote\footnote
\newcommand{\align@footnote}[1]{%
    \ifmeasuring@
        \chardef\@tempfn=\value{footnote}%
        \footnotemark
        \setcounter{footnote}{\@tempfn}%
    \else
        \quad\Leftrightarrow\quadirstchoice@
        \original@footnote{#1}%
        \fi
    \fi}
\pretocmd{\start@align}{\let\footnote\align@footnote}{}{}
\DeclarePairedDelimiter\set{\{}{\}}
\DeclarePairedDelimiter\tuple{\langle}{\rangle}
\DeclarePairedDelimiter\range{[}{]}
\newcommand{\const}[1]{\mathsf{#1}}
\newcommand{\bl}{\_}
\newcommand{\defeq}{\mathrel{\ensurestackMath{\stackon[1pt]{=}{\scriptscriptstyle\Delta}}}}
\newcommand{\defiff}{\mathrel{\ensurestackMath{\stackon[1pt]{\Leftrightarrow}{\scriptscriptstyle\Delta}}}}
\newcommand{\nat}{\mathbb{N}}
\newcommand{\Z}{\mathbb{Z}}
\NewDocumentCommand\alg{O{1}}{%
    \ifcase#1
        undefined
    \or \mathcal{A}
    \or \mathcal{B}
    \else undefined

    \fi
}
\NewDocumentCommand\algclass{O{1}}{%
    \ifcase#1
        undefined
    \or \mathcal{C}
    \else undefined

    \fi
}
\NewDocumentCommand\valclass{O{1}}{%
    \ifcase#1
        undefined
    \or \mathcal{V}
    \else undefined

    \fi
}
\NewDocumentCommand\graph{O{1}}{%
    \ifcase#1
        undefined
    \or G
    \or H
    \or J
    \else undefined

    \fi
}
\newcommand{\homo}{\longrightarrow}
\NewDocumentCommand\word{O{1}}{%
    \ifcase#1
        undefined
    \or w
    \or v
    \else undefined

    \fi
}
\NewDocumentCommand\wlang{O{1}}{%
    \ifcase#1
        undefined
    \or L
    \or K
    \else undefined

    \fi
}
\NewDocumentCommand\la{O{1}}{%
    \ifcase#1
        undefined
    \or L
    \or K
    \else undefined

    \fi
}
\NewDocumentCommand\rel{O{1}}{%
    \ifcase#1
        undefined
    \or R
    \or S
    \else undefined

    \fi
}
\newcommand{\vsig}{\mathbf{V}}
\NewDocumentCommand\term{O{1}}{%
    \ifcase#1
        undefined
    \or t
    \or s
    \or u
    \else undefined

    \fi
}
\newcommand{\sig}{\mathsf{S}}
\newcommand{\union}{\mathbin{+}}
\newcommand{\compo}{\mathbin{;}}
\DeclareMathOperator*{\bigcompo}{\scalerel*{\compo}{\sum}}
\newcommand{\kstar}{*}
\newcommand{\emp}{0}
\newcommand{\eps}{\const{I}}
\newcommand{\compl}{-}
\newcommand{\domain}[1]{|#1|}
\newcommand{\val}{\mathfrak{v}}
\newcommand{\LANG}{\mathsf{LANG}}
\newcommand{\REL}{\mathsf{REL}}
\NewDocumentCommand\glang{O{1}}{%
    \ifcase#1
        undefined
    \or \mathcal{G}
    \or \mathcal{H}
    \else undefined

    \fi
}
\NewDocumentCommand\automaton{O{1}}{%
    \ifcase#1
        undefined
    \or \mathcal{A}
    \or \mathcal{B}
    \else undefined

    \fi
}
\NewDocumentCommand\fml{O{1}}{%
    \ifcase#1
        undefined
    \or \varphi
    \or \psi
    \or \rho
    \else undefined

    \fi
}
\tikzstyle{mynode} = [inner sep = 1.5pt, fill= gray!20,  font = \scriptsize, draw, circle]
\tikzstyle{mysmallnode} = [inner sep = 1.pt, fill= gray!20, font = \scriptsize, draw, circle]
\tikzset{earrow/.style={>={{[flex] Latex[length=.08cm, width=2.5pt]}}}}
\tikzset{homoarrow/.style={earrow, line width = .8pt, densely dotted, color = olive, opacity=0.8}}
\newcommand{\src}{\const{1}}
\newcommand{\tgt}{\const{2}}
\newcommand{\com}[1]{\overline{#1}}
\newcommand{\PCoR}{\mathrm{PCoR}}
\newcommand{\psig}{\mathbf{P}}
\NewDocumentCommand\bterm{O{1}}{%
    \ifcase#1
        undefined
    \or b
    \or c
    \or d
    \else undefined

    \fi
}
\newcommand{\asig}{\mathbf{A}}
\newcommand{\DREL}{\mathsf{DREL}}
\newcommand\reallywidehat[1]{%
\savestack{\tmpbox}{\stretchto{%
  \scaleto{%
    \scalerel*[\widthof{\ensuremath{#1}}]{\kern-.6pt\bigwedge\kern-.6pt}%
    {\rule[-\textheight/2]{1ex}{\textheight}}%
  }{\textheight}%
}{0.5ex}}%
\stackon[1pt]{#1}{\tmpbox}%
}
\newcommand{\adom}{\mathtt{\mathop{a}}}
\newcommand{\dom}{\mathtt{\mathop{d}}}
\newcommand{\lop}{\circlearrowleft}
\newcommand{\aE}{\mathtt{E}}
\newcommand{\aN}{\mathtt{N}}
\newcommand{\aW}{\mathtt{W}}
\newcommand{\aS}{\mathtt{S}} %
\begin{document}

\title{Undecidability of the Emptiness Problem of Deterministic Propositional While Programs with Graph Loop:
Hypothesis Elimination Using Loops}

\address{nakamura.yoshiki.ny{@}gmail.com}

\author{Yoshiki Nakamura\thanks{%
This work was supported by JSPS KAKENHI Grant Numbers JP21K13828, JP25K14985.}\\
  Department of Computer Science\\
  Institute of Science Tokyo\\
  nakamura.yoshiki.ny{@}gmail.com
}

\maketitle

\runninghead{Y. Nakamura}{Undecidability of Deterministic Propositional While Programs with Loop}

\begin{abstract}
We show that the emptiness (unsatisfiability) problem is undecidable and
$\mathrm{\Pi}^{0}_{1}$-complete for deterministic propositional while programs with (graph) loop.
To this end, we introduce a hypothesis elimination using loops.
Using this, we give reductions from the complement of the periodic domino problem.

Moreover, as a corollary via hypothesis eliminations,
we also show that the equational theory is $\mathrm{\Pi}^{0}_{1}$-complete for the positive calculus of relations with transitive closure and difference.
Additionally, we show that the emptiness problem is PSPACE-complete for the existential calculus of relations with transitive closure.

 \end{abstract}

\begin{keywords}
  Relation algebra,
  Kleene algebra,
  While program,
  Hypothesis elimination.
\end{keywords}

\newcommand{\Tr}{\mathrm{T}}
\section{Introduction}\label{section: introduction}
\emph{The calculus of relations} (CoR, for short) \cite{tarskiCalculusRelations1941} is an algebraic system with operators on binary relations.
Here, the operators consist of identity ($\eps$), empty ($\emp$), universality ($\top$), composition ($\compo$), converse ($\bl^{\smile}$), union ($\union$), intersection ($\cap$), and complement ($\bl^{-}$).
While the \kl{equational theory} of CoR is \emph{undecidable} \cite{tarskiCalculusRelations1941,tarskiFormalizationSetTheory1987} (see also, e.g., \cite{hirschDecidabilityEquationalTheories2018,nakamuraUndecidabilityFO3Calculus2019,hirschUndecidabilityAlgebrasBinary2021} for more undecidable fragments),
one approach to avoid the undecidability is to consider its (existential) \emph{positive} fragment \cite{andrekaEquationalTheoryUnionfree1995,pousPositiveCalculusRelations2018} ($\PCoR$, for short) by excluding complement.
The \kl{equational theory} of $\PCoR$ is \emph{decidable} \cite{andrekaEquationalTheoryUnionfree1995} and this decidability result also holds when adding the reflexive transitive operator ($\bl^{*}$) \cite{brunetPetriAutomataKleene2015,brunetPetriAutomata2017,nakamuraPartialDerivativesGraphs2017,nakamuraDerivativesGraphsPositive2024}.

In \cite{nakamuraExistentialCalculiRelations2023}, to use complement in a restricted way for not increasing the complexity drastically,
the \emph{existential}\footnote{With respect to binary relations, $\PCoR$ and $\PCoR_{\set{\com{\eps}, \com{x}}}$ have the same expressive power as the three-variable fragment of existential \emph{positive} logic with equality and that of \emph{existential} logic with equality \cite{nakamuraExpressivePowerSuccinctness2022,nakamuraExistentialCalculiRelations2023}, respectively.} fragment ($\PCoR_{\set{\com{\eps}, \com{x}}}$)---$\PCoR$ with complements of term variables ($\com{x}$) and the difference constant ($\com{\eps}$) (i.e., the complement operator only applies to term variables or constants)---is considered.
The \kl{equational theory} of $\PCoR_{\set{\com{\eps}, \com{x}}}$ with transitive closure (denoted by $\PCoR_{\set{\bl^{*}, \com{\eps}, \com{x}}}$) is \emph{undecidable} and $\mathrm{\Pi}^{0}_{1}$-complete in general (but decidable in $\textsc{coNexp}$ for its $\cap$-free fragments) \cite{nakamuraExistentialCalculiRelations2023}.
The undecidability result still holds for the $\com{\eps}$-free fragment $\PCoR_{\set{\bl^{*}, \com{x}}}$ \cite[Theorem  50]{nakamuraExistentialCalculiRelations2023}.
However, to our knowledge, it was open whether the \kl{equational theory} is decidable for the $\com{x}$-free fragment $\PCoR_{\set{\bl^{*}, \com{\eps}}}$, cf.\ \cite[Table I]{nakamuraExistentialCalculiRelations2023}.
The known bounds were as follows: the \kl{equational theory} is $\textsc{ExpSpace}$-hard \cite{nakamuraPartialDerivativesGraphs2017,brunetPetriAutomata2017} (by a reduction from the universality problem of regular expressions with intersection \cite{furerComplexityInequivalenceProblem1980}) and in $\mathrm{\Pi}^{0}_{1}$ \cite[Lem.\ 22]{nakamuraExistentialCalculiRelations2023} (by the finite model property).

The main contribution of this paper is to show that the \kl{equational theory} is undecidable and $\mathrm{\Pi}^{0}_{1}$-complete (\Cref{cor: undecidable PCoR* difference}).
More strongly,
we show that the emptiness problem of \emph{deterministic} \emph{\kl{propositional while programs with loop}} is also $\mathrm{\Pi}^{0}_{1}$-complete (\Cref{thm: undecidable while emptiness}).
Notably, this result also shows that the \kl{validity problem} is still undecidable ($\mathrm{\Pi}^{0}_{1}$-complete) for the \emph{diamond-free fragment} of \kl{SDIPDL} \cite{goldblattWellstructuredProgramEquivalence2012} (strict deterministic propositional dynamic logic (SDPDL) \cite{halpernPropositionalDynamicLogic1983} with intersection).
This refines the undecidability result of \kl{SDIPDL},
which was open in the Stanford Encyclopedia of Philosophy \cite{balbianiPropositionalDynamicLogic2008}
and settled by Goldblatt and Jackson \cite{goldblattWellstructuredProgramEquivalence2012}, with respect to the nesting of modal operators (see \Cref{rem: difference,rem: nested modal operators}, for more detailed comparisons).

To this end, we give reductions from the complement of \kl{the periodic domino problem} (of $\Z^2$).
Our reductions are based on the reduction of Goldblatt and Jackson \cite{goldblattWellstructuredProgramEquivalence2012},
where we refine this reduction by a novel hypothesis elimination using \kl{graph loops} for hypotheses of the form $\eps \le \term[3]$.

Moreover, we pinpoint the complexity of the \kl{emptiness problem} with respect to $\REL$ and $\DREL$ for fragments of $\PCoR_{\set{\bl^{*}, \com{\eps}, \com{x}}}$ \kl{terms}.
Particularly, we show that the \kl{emptiness problem} is decidable and $\textsc{PSpace}$-complete for $\PCoR_{\set{\bl^{*}, \com{\eps}, \com{x}}}$ (on $\REL$) (\Cref{thm: decidable ECoR* emptiness}).

\subsection*{Difference with the conference version}
This paper is an extended and revised version of the 21st International Conference on Relational and Algebraic Methods in Computer Science (RAMiCS 2024) \cite{nakamuraUndecidabilityPositiveCalculus2024}.
The main differences from the conference version are as follows.
\begin{enumerate}
    \item
    In \Cref{section: hypothesis eliminations},
    we give a collection of \kl{hypothesis eliminations} (under $\REL$).

    \item 
    In \Cref{section: Hypothesis elimination using graph loops},
    we newly give some sufficient conditions for hypothesis eliminations using \kl{loops}.
    Using them, we will show undecidability results.

    \item
    In this version,
    we show the $\Pi^{0}_{1}$-hardness of the \kl{equational theory} of $\PCoR_{\set{\bl^{*}, \com{\eps}}}$ (\Cref{cor: undecidable PCoR* difference}) 
    from that of the \kl{emptiness problem} of \kl{propositional while programs with loop} (\Cref{thm: undecidable while emptiness}) via hypothesis eliminations.
    (In the conference version, they were shown independently, whereas similar reductions were used.)
     
    \item
    \Cref{section: on the number of primitive tests and actions} is new.
    We strengthen the undecidability result of the \kl{emptiness problem} of \kl{propositional while programs with loop} by fixing the number of \kl{primitive tests} and \kl{actions} (\Cref{thm: reducing primitive tests}).

    \item
    \Cref{section: emptiness} is new.
    We pinpoint the complexity of the \kl{emptiness problem} with respect to $\REL$ and $\DREL$ for fragments of $\PCoR_{\set{\bl^{*}, \com{\eps}, \com{x}}}$.
\end{enumerate}

\subsection*{Paper organization}
In \Cref{section: preliminaries}, we give basic definitions.
In \Cref{section: hypothesis eliminations},
we give a collection of \kl{hypothesis eliminations} under $\REL$.
In \Cref{section: Hypothesis elimination using graph loops},
we introduce a \kl{hypothesis elimination} using \kl{loops} and give its two sufficient conditions.
In \Cref{section: while}, we show that the emptiness problem is $\mathrm{\Pi}^{0}_{1}$-complete for deterministic \kl{propositional while programs with loop} (\Cref{thm: undecidable while emptiness}).
In \Cref{section: on the number of primitive tests and actions}, we show that the emptiness problem is still $\mathrm{\Pi}^{0}_{1}$-complete even if the numbers of \kl{actions} and \kl{primitive tests} are fixed.
In \Cref{section: undecidable PCoR* with difference}, we show that the \kl{equational theory} of $\PCoR_{\set{\bl^{*}, \com{\eps}}}$ is $\mathrm{\Pi}^{0}_{1}$-complete (\Cref{cor: undecidable PCoR* difference}).
In \Cref{section: emptiness}, we consider the \kl{emptiness problem} on $\REL$ and $\DREL$ for fragments of $\PCoR_{\set{\bl^{*}, \com{\eps}, \com{x}}}$.
In \Cref{section: conclusion}, we conclude this paper.
\section{Preliminaries}\label{section: preliminaries}
We write $\Z$ and $\nat$ for the set of integers and non-negative integers, respectively.
For $l, r \in \Z$, we write $\range{l, r}$ for the set $\{i \in \Z \mid l \le i \le r\}$.
For a set $X$, we write
$\wp(X)$ for the power set of $X$.

We write $\triangle_{A}$ for the identity relation on a set $A$: $\triangle_{A} \defeq \set{\tuple{x, x} \mid x \in A}$.
For \kl{binary relations} $\rel[1]$ and $\rel[2]$ on a set $B$, the \intro*\kl{converse} $\rel[1]^{\smile}$, the \intro*\kl{composition} $\rel[1] \compo \rel[2]$, the \intro*\kl{$n$-th iteration} $\rel[1]^{n}$ (where $n \in \nat$), and the \intro*\kl{reflexive transitive closure} $\rel[1]^*$ are defined by:
\begin{align*}
    \rel[1] \compo \rel[2] & \defeq \set{\tuple{x, z} \mid \exists y,\ \tuple{x, y} \in \rel[1] \ \land \ \tuple{y, z} \in \rel[2]}, & \rel[1]^{\smile} & \defeq \set{\tuple{x, y} \mid \tuple{y, x} \in \rel[1]}, \\
    \rel[1]^n             & \defeq \begin{cases}
                                       \rel[1] \compo \rel[1]^{n-1} & (n \ge 1) \\
                                       \triangle_{B}               & (n = 0)
                                   \end{cases},                                                             & \rel[1]^*        & \defeq \bigcup_{n \in \nat} \rel[1]^n.
\end{align*}

A \intro*\kl{($2$-pointed) graph} $\graph$ over a set $A$ is a tuple $\tuple{\domain{\graph}, \set{a^{\graph}}_{a \in A}, \src^{\graph}, \tgt^{\graph}}$, where
$\domain{\graph}$ is a non-empty set (of vertices),
each $a^{\graph} \subseteq \domain{\graph}^{2}$ is a binary relation, and
$\src^{\graph}, \tgt^{\graph} \in \domain{\graph}$ are the \kl{source} and \kl{target} vertices.
Let $\graph[1]$ and $\graph[2]$ be \kl{graphs} over a set $A$.
For a map $f \colon \domain{\graph[1]} \to \domain{\graph[2]}$,
we say that $f$ is a \intro*\kl{graph homomorphism} from $\graph[1]$ to $\graph[2]$, written $f \colon \graph \homo \graph[2]$ if
$f(\src^{\graph[1]}) = \src^{\graph[2]}$, $f(\tgt^{\graph[1]}) = \tgt^{\graph[2]}$, and
for all $x$, $y$, and $a$, $\tuple{x, y} \in a^{\graph[1]}$ implies $\tuple{f(x), f(y)} \in a^{\graph[2]}$.

\subsection{Syntax: PCoR with transitive closure and complement}\label{section: syntax}
Let $\vsig$ be a set of \intro*\kl{variables}.
We consider \intro*\kl{terms} over the signature
\[\sig \defeq \set{\eps_{(0)}, \emp_{(0)}, \compo_{(2)}, \union_{(2)}, {\bl^{\smile}}_{(1)}, {\bl^{\kstar}}_{(1)}, {\bl^{\compl}}_{(1)}}.\]
We use parentheses in ambiguous situations.
We often abbreviate $\term[1] \compo \term[2]$ to $\term[1] \term[2]$.
For a sequence $\term[1]_1, \dots, \term[1]_n$ of \kl{terms},
we write $\sum_{i = 1}^{n} \term[1]_i$ (or we write $\sum X$ where $X = \set{\term[1]_1, \dots, \term[1]_n}$ as the ordering is not important)
for the \kl{term} $\emp \union \term[1]_1 \union \dots \union \term[1]_n$
and write $\bigcompo_{i = 1}^{n} \term[1]_i$
for the \kl{term} $\eps \compo \term[1]_1 \compo \dots \compo \term[1]_n$.
For $n \in \nat$, we write $\term[1]^{n}$ for the \kl{term} $\bigcompo_{i = 1}^{n} \term[1]$.
For a \kl{term} $\term$ over $\sig$, let $\com{\term}$ be $\term[2]$ if $\term = \term[2]^{-}$ for some $\term[2]$ and be $\term[1]^{-}$ otherwise.
We use the following abbreviations:
\begin{align*}
    \top                   & \defeq \emp^{-},                               &
    \term[1] \cap \term[2] & \defeq (\term[1]^{-} \union \term[2]^{-})^{-}, &
    \term[1]^{+}           & \defeq \term \term[1]^*,                       &
    \term^{\lop}           & \defeq \term \cap \eps,                        &
    \term^{\dom}           & \defeq (\term \top)^{\lop},&
    \term^{\adom}           & \defeq ((\term \top)^{-})^{\lop}.
\end{align*}
Here, $\term^{\lop}$ expresses the \intro*\kl{(graph) loop operator} (strong loop predicate) \cite{daneckiPropositionalDynamicLogic1984},
$\term^{\dom}$ expresses the \intro*\kl{domain operator} \cite{desharnaisKleeneAlgebraDomain2006},
and $\term^{\adom}$ expresses the \intro*\kl{antidomain operator} \cite{desharnaisInternalAxiomsDomain2011}.

For a set $X \subseteq \set{\eps, \emp, \top, \compo, \union, {\bl^{\smile}}, {\bl^{*}}, \cap, {\bl^{\lop}}, \com{\eps}, \com{x}, \bl^{-}, {\bl^{\dom}}, {\bl^{\adom}}}$,
the set of $X$-\kl{terms}, written $\mathcal{T}_{X}$, is defined as the minimal subset of the set of \kl{terms} over $\sig$ satisfying the following:
\begin{gather*}
    \begin{prooftree}
        \hypo{y \in \vsig}
        \infer1{y \in \mathcal{T}_{X}}
    \end{prooftree}
    \quad
    \begin{prooftree}
        \hypo{\eps \in X}
        \infer1{\eps \in \mathcal{T}_{X}}
    \end{prooftree}
    \quad
    \begin{prooftree}
        \hypo{\emp \in X}
        \infer1{\emp \in \mathcal{T}_{X}}
    \end{prooftree}
    \quad
    \begin{prooftree}
        \hypo{\top \in X}
        \infer1{\top \in \mathcal{T}_{X}}
    \end{prooftree}
    \quad
    \begin{prooftree}
        \hypo{\compo \in X}
        \hypo{\term[1] \in \mathcal{T}_{X}}
        \hypo{\term[2] \in \mathcal{T}_{X}}
        \infer3{\term[1] \compo \term[2] \in \mathcal{T}_{X}}
    \end{prooftree}
    \quad
    \begin{prooftree}
        \hypo{\union \in X}
        \hypo{\term[1] \in \mathcal{T}_{X}}
        \hypo{\term[2] \in \mathcal{T}_{X}}
        \infer3{\term[1] \union \term[2] \in \mathcal{T}_{X}}
    \end{prooftree}\\
    \begin{prooftree}
        \hypo{\bl^{\smile} \in X}
        \hypo{\term[1] \in \mathcal{T}_{X}}
        \infer2{\term[1]^{\smile} \in \mathcal{T}_{X}}
    \end{prooftree}
    \quad
    \begin{prooftree}
        \hypo{\bl^{*} \in X}
        \hypo{\term[1] \in \mathcal{T}_{X}}
        \infer2{\term[1]^{\kstar} \in \mathcal{T}_{X}}
    \end{prooftree}
    \quad
    \begin{prooftree}
        \hypo{\cap \in X}
        \hypo{\term[1] \in \mathcal{T}_{X}}
        \hypo{\term[2] \in \mathcal{T}_{X}}
        \infer3{\term[1] \cap \term[2] \in \mathcal{T}_{X}}
    \end{prooftree}
    \quad
    \begin{prooftree}
        \hypo{\bl^{\lop} \in X}
        \hypo{\term[1] \in \mathcal{T}_{X}}
        \infer2{\term[1]^{\lop} \in \mathcal{T}_{X}}
    \end{prooftree}
    \\
    \begin{prooftree}
        \hypo{\com{\eps} \in X}
        \infer1{\com{\eps} \in \mathcal{T}_{X}}
    \end{prooftree}
    \quad
    \begin{prooftree}
        \hypo{\com{x} \in X}
        \hypo{y \in \vsig}
        \infer2{\com{y} \in \mathcal{T}_{X}}
    \end{prooftree}
    \quad
    \begin{prooftree}
        \hypo{\bl^{-} \in X}
        \hypo{\term[1] \in \mathcal{T}_{X}}
        \infer2{\term[1]^{-} \in \mathcal{T}_{X}}
    \end{prooftree}
    \quad
    \begin{prooftree}
        \hypo{\bl^{\dom} \in X}
        \hypo{\term[1] \in \mathcal{T}_{X}}
        \infer2{\term[1]^{\dom} \in \mathcal{T}_{X}}
    \end{prooftree}
    \quad
    \begin{prooftree}
        \hypo{\bl^{\adom} \in X}
        \hypo{\term[1] \in \mathcal{T}_{X}}
        \infer2{\term[1]^{\adom} \in \mathcal{T}_{X}}
    \end{prooftree}
\end{gather*}
(In particular, $\mathcal{T}_{\sig}$ coincides with the set of \kl{terms} over $\sig$.)
Additionally, we let $\mathrm{PCoR} \defeq \mathcal{T}_{\set{\eps, \emp, \top, \compo, \union, \bl^{\smile}}}$,
and let $\mathrm{PCoR}_{X} \defeq \mathcal{T}_{X \cup \set{\eps, \emp, \top, \compo, \union, \bl^{\smile}}}$ for $X \subseteq \set{\bl^{*}, \com{\eps}, \com{x}, \bl^{-}}$.

An \intro*\kl{equation} $\term[1] = \term[2]$ is a pair of \kl{terms}.
An \intro*\kl{inequation} $\term[1] \le \term[2]$ is an abbreviation of the \kl{equation} $\term[1] \union \term[2] = \term[2]$.
The set of \intro*\kl{(quantifier-free) formulas} of a \kl{term} class $\mathcal{T}$ is defined by:
\begin{align*}
    \fml[1], \fml[2] \quad::=\quad \term[1] = \term[2] \mid \fml[1] \land \fml[2] \mid \lnot \fml[1]. \tag{$\term[1], \term[2] \in \mathcal{T}$} 
\end{align*}
We use the following abbreviations, as usual: $\fml[1] \lor \fml[2] \defeq \lnot (\lnot \fml[1] \land \lnot \fml[2])$,
$\fml[1] \to \fml[2] \defeq (\lnot \fml[1]) \lor \fml[2]$, $\fml[1] \leftrightarrow \fml[2] \defeq (\fml[1] \to \fml[2]) \land (\fml[2] \to \fml[1])$, $\mathsf{f} \defeq (\lnot \fml) \land \fml$, and $\mathsf{t} \defeq \lnot \mathsf{f}$.

Additionally, we write $\|\term[1]\|$ for the number of symbols occurring in a \kl{term} $\term[1]$.

\subsection{Semantics: relational models}\label{section: semantics}
An \intro*\kl{$\sig$-algebra} $\alg$ is a tuple $\tuple{\domain{\alg}, \set{f^{\alg}}_{f_{(k)} \in \sig}}$, where $\domain{\alg}$ is a non-empty set and $f^{\alg} \colon \domain{\alg}^{k} \to \domain{\alg}$ is a $k$-ary map for each $f_{(k)} \in \sig$.
A \intro*\kl{valuation} $\val$ of an \kl{$\sig$-algebra} $\alg$ is a map $\val \colon \vsig \to \domain{\alg}$.
For a \kl{valuation} $\val$, we write $\hat{\val} \colon \mathcal{T}_{\sig} \to \domain{\alg}$ for the unique homomorphism extending $\val$.
For a \kl{formula} $\fml$, we define $\hat{\val}(\fml) \in \set{\const{false}, \const{true}}$ as follows:
\begin{align*}
    \hat{\val}(\term[1] = \term[2]) & \hspace{.15em} \defiff\hspace{.15em} (\hat{\val}(\term[1]) = \hat{\val}(\term[2])),&
    \hat{\val}(\fml[1] \land \fml[2]) &\hspace{.15em} \defiff\hspace{.15em}  (\hat{\val}(\fml[1]) \mbox{ and } \hat{\val}(\fml[2])), &
    \hat{\val}(\lnot \fml[1]) &\hspace{.15em} \defiff\hspace{.15em}  (\mbox{not } \hat{\val}(\fml[1])).
\end{align*}

For a \kl{formula} $\fml$ and a class of \kl{valuations} (of \kl{$\sig$-algebras}) $\algclass$, we write
\begin{align*}
    \algclass \models \fml & \quad \defiff \quad \hat{\val}(\fml) \mbox{ holds for all \kl{valuations} $\val \in \algclass$}.
\end{align*}
We abbreviate $\set{\val} \models \fml$ to $\val \models \fml$.
We write $\algclass_{\fml}$ for the class $\set{\val \in \algclass \mid \val \models \fml}$.
For a set $\Gamma$ of \kl{formulas}, we write $\algclass_{\Gamma}$ for the class $\bigcap_{\fml \in \Gamma} \algclass_{\fml}$.

A \intro*\kl{relational model} $\alg$ over a non-empty set $B$ is an \kl{$\sig$-algebra} such that
$\domain{\alg} = \wp(B^2)$,
$\eps^{\alg} = \triangle_{B}$,
$\emp^{\alg} = \emptyset$,
and for all $\rel[1], \rel[2] \subseteq B^2$,
\begin{flalign*}
               \rel[1] \compo^{\alg} \rel[2] & = \rel[1] \compo \rel[2], &\hspace{-.4em}
               \rel[1] \union^{\alg} \rel[2] & = \rel[1] \cup \rel[2], &\hspace{-.4em}
               \rel[1]^{\kstar^{\alg}}       & = \rel[1]^{\kstar},       &\hspace{-.4em}
               R^{\smile^{\alg}} & = R^{\smile}, &\hspace{-.4em}
               \rel[1]^{\compl^{\alg}}       & = B^2 \setminus \rel[1].
\end{flalign*}
We write $\REL$ $[\REL^{\mathrm{fin}}]$ for the class of all \kl{valuations} of \kl{relational models} [over a finite set].
For $\algclass \subseteq \REL$, we write $\algclass^{\mathrm{fin}}$ for the class $\algclass \cap \REL^{\mathrm{fin}}$.
For a class $\algclass \subseteq \REL$ and $X \subseteq \mathcal{T}_{\sig}$,
the \intro*\kl{equational theory} with respect to $\algclass$ is the set
$\set{\term[1] = \term[2]  \in X^2 \mid \algclass \models \term[1] = \term[2]}$.
Additionally, for a \kl{valuation} $\val \colon \vsig \to \wp(A^2)$ in $\REL$ and a non-empty subset $B \subseteq A$,
the \intro*\kl{submodel} of $\val$ with respect to $B$ is the \kl{valuation} of $\REL$ defined by:
\begin{align*}
    \val \restriction B \colon \vsig &\to \wp(B^2)\\
                             x &\mapsto \val(x) \cap B^2.
\end{align*}

\subsubsection{Alternative semantics using graphs languages}
For $\mathrm{PCoR}_{\set{\bl^{*}, \com{\eps}, \com{x}}}$, we can use an alternative semantics using \kl{graph homomorphisms} \cite{nakamuraExistentialCalculiRelations2023} (see also e.g., \cite{andrekaEquationalTheoryUnionfree1995,brunetPetriAutomata2017,pousPositiveCalculusRelations2018} for fragments of $\mathrm{PCoR}_{\set{\bl^{*}, \com{\eps}, \com{x}}}$).
Let $\tilde{\vsig} \defeq \set{x, \com{x} \mid x \in \vsig} \cup \set{\com{\eps}}$ and let $\tilde{\vsig}_{\eps} \defeq \tilde{\vsig} \cup \set{\eps}$.
The \intro*\kl{graph language} $\glang(\term)$ of a $\mathrm{PCoR}_{\set{\bl^{*}, \com{\eps}, \com{x}}}$ \kl{term} $\term$ is the set of \kl{graphs} over $\tilde{\vsig}_{\eps}$ defined by:
\begin{align*}
    \hspace{-1.em} \glang(x)                                                                                                      & \defeq \set{
        \begin{tikzpicture}[baseline = -.5ex]
            \graph[grow right = 1.cm, branch down = 6ex, nodes={mynode}]{
            {0/{}[draw, circle]}-!-{1/{}[draw, circle]}
            };
            \node[left = .5em of 0](l){};
            \node[right = .5em of 1](r){};
            \graph[use existing nodes, edges={color=black, pos = .5, earrow}, edge quotes={fill=white, inner sep=1pt,font= \scriptsize}]{
            0 ->["$x$"] 1;
            l -> 0; 1 -> r;
            };
        \end{tikzpicture}
    }  \mbox{ \mbox{ for $x \in \tilde{\vsig}$}},                                          & \glang(\emp)                                                                                                                  & \defeq \emptyset,                                                                                       \\
    \glang(\term[1]^{\smile})                                                                                                     & \defeq \set{
        \begin{tikzpicture}[baseline = -.5ex]
            \graph[grow right = 1.cm, branch down = 6ex, nodes={mynode}]{
            {0/{}[draw, circle]}-!-{1/{}[draw, circle]}
            };
            \node[left = .5em of 0](l){};
            \node[right = .5em of 1](r){};
            \graph[use existing nodes, edges={color=black, pos = .5, earrow}, edge quotes={fill=white, inner sep=1pt,font= \scriptsize}]{
            1 ->["$\graph$"] 0;
            l -> 0; 1 -> r;
            };
        \end{tikzpicture} \mid \graph \in \glang(\term)
    },                                                                                                                             & \glang(\const{I})                                                                                                             & \defeq \set{\begin{tikzpicture}[baseline = -.5ex]
                                                                                                                                                                                                                                                                                        \graph[grow right = 1.cm, branch down = 6ex, nodes={mynode}]{
                                                                                                                                                                                                                                                                                        {0/{}[draw, circle]}
                                                                                                                                                                                                                                                                                        };
                                                                                                                                                                                                                                                                                        \node[left = .5em of 0](l){};
                                                                                                                                                                                                                                                                                        \node[right = .5em of 0](r){};
                                                                                                                                                                                                                                                                                        \graph[use existing nodes, edges={color=black, pos = .5, earrow}, edge quotes={fill=white, inner sep=1pt,font= \scriptsize}]{
                                                                                                                                                                                                                                                                                            l -> 0; 0 -> r;
                                                                                                                                                                                                                                                                                        };
                                                                                                                                                                                                                                                                                    \end{tikzpicture}}, \\
    \hspace{-1.em}\glang(\term[1] \cap \term[2])                                                                                  & \defeq
    \set{\begin{tikzpicture}[baseline = -.5ex]
                 \graph[grow right = 1.cm, branch down = 2.5ex, nodes={mynode, font = \scriptsize}]{
                 {s1/{}[draw, circle]}
                 -!- {t1/{}[draw, circle]}
                 };
                 \node[left = 4pt of s1](s1l){} edge[earrow, ->] (s1);
                 \node[right = 4pt of t1](t1l){}; \path (t1) edge[earrow, ->] (t1l);
                 \graph[use existing nodes, edges={color=black, pos = .5, earrow}, edge quotes={fill=white, inner sep=1pt,font= \scriptsize}]{
                 s1 ->["$G$", bend left = 25] t1;
                 s1 ->["$H$", bend right = 25] t1;
                 };
             \end{tikzpicture} \mid G \in \glang(\term[1]) \land  H \in \glang(\term[2])},&
             \glang(\term[1] \union \term[2])                                                                                                & \defeq \glang(\term[1]) \cup \glang(\term[2]),                                                                                                      \\
    \hspace{-1.em}\glang(\term[1] \compo \term[2])                                                                                 & \defeq \set{\begin{tikzpicture}[baseline = -.5ex]
                                                                                                                                                        \graph[grow right = 1.cm, branch down = 2.5ex, nodes={mynode, font = \scriptsize}]{
                                                                                                                                                        {s1/{}[draw, circle]}
                                                                                                                                                        -!- {c/{}[draw, circle]}
                                                                                                                                                        -!- {t1/{}[draw, circle]}
                                                                                                                                                        };
                                                                                                                                                        \node[left = 4pt of s1](s1l){} edge[earrow, ->] (s1);
                                                                                                                                                        \node[right = 4pt of t1](t1l){}; \path (t1) edge[earrow, ->] (t1l);
                                                                                                                                                        \graph[use existing nodes, edges={color=black, pos = .5, earrow}, edge quotes={fill=white, inner sep=1pt,font= \scriptsize}]{
                                                                                                                                                        s1 ->["$G$"] c;
                                                                                                                                                        c ->["$H$"] t1;
                                                                                                                                                        };
                                                                                                                                                    \end{tikzpicture} \mid G \in \glang(\term[1]) \land H \in \glang(\term[2])}, &
    \glang(\term[1]^{*})                                                                                                          & \defeq
    \bigcup_{n \in \nat} \glang(\term[1]^n).                                                                                             
\end{align*}
Here, we use the \kl{series-composition} for ($\compo$), the \kl{parallel-composition} for ($\cap$), and the swapping of the source and the target for ($\smile$).
Below are some instances:
\begin{align*}
    \glang((a b \com{\eps}) \cap (a \eps c))                 & = \set*{\hspace{-.5em} \begin{tikzpicture}[baseline = -3.ex]
                                                                              \graph[grow right = .6cm, branch down = 2.5ex]{
                                                                              {/, 1/{}[draw, circle, mynode, font = \scriptsize]} -!- {21/{}[draw, circle, mynode, font = \scriptsize], /, 22/{}} -!- {31/{}[draw, circle, mynode, font = \scriptsize], /, 32/{}} -!- {/, 4/{}[draw, circle, mynode, font = \scriptsize]}
                                                                              };
                                                                              \path (22) edge [draw = white, opacity = 0] node[pos= 0.5, opacity = 1, draw = black, circle, mynode, font= \scriptsize](3){}(32);
                                                                              \node[left = 4pt of 1](s1l){} edge[earrow, ->] (1);
                                                                              \node[right = 4pt of 4](t4l){}; \path (4) edge[earrow, ->] (t4l);
                                                                              \graph[use existing nodes, edges={color=black, pos = .5, earrow}, edge quotes={fill=white, inner sep=1pt,font= \scriptsize}]{
                                                                              1 ->["$a$", pos = .4] 21 -> ["$b$", pos = .4] 31 -> ["$\com{\eps}$", pos = .4] 4;
                                                                              1 ->["$a$", pos = .4] 3 -> ["$c$"] 4;
                                                                              };
                                                                          \end{tikzpicture} \hspace{-.5em}}, &
    \glang((ab)^{\lop} c^{\smile} \com{a}^{\lop}b^{\lop}) & = \set*{\begin{tikzpicture}[baseline = 2.ex]
                                                                              \graph[grow right = 1.cm, branch down = 2.5ex, nodes={mynode, font = \scriptsize}]{
                                                                              {1/{}[draw, circle]} -!- {2/{}[draw, circle]}
                                                                              };
                                                                              \node[left = 4pt of 1](s1l){} edge[earrow, ->] (1);
                                                                              \node[right = 4pt of 2](t1l){}; \path (2) edge[earrow, ->] (t1l);
                                                                              \node[above = .6cm of 1, mynode, draw, circle, font =\scriptsize](3){};
                                                                              \graph[use existing nodes, edges={color=black, pos = .5, earrow}, edge quotes={fill=white, inner sep=1pt,font= \scriptsize}]{
                                                                              1 ->["$a$", bend right = 25] 3 -> ["$b$", bend right = 25] 1;
                                                                              2 ->["$c$"] 1;
                                                                              2 ->["$\com{a}$", out = 100, in = 135, looseness = 25] 2;
                                                                              2 ->["$b$", out = 45, in = 80, looseness = 25] 2;
                                                                              };
                                                                          \end{tikzpicture}},                                                                                                                                                                                                  \\
    \glang(a^+)                                              & =  \set*{\begin{tikzpicture}[baseline = -.5ex]
                                                                               \graph[grow right = .8cm, branch down = 2.5ex]{
                                                                               {1/{}[draw, circle, mynode, font = \scriptsize]} -!- {2/{}[draw, circle, mynode, font = \scriptsize]}
                                                                               };
                                                                               \node[left = 4pt of 1](s1l){} edge[earrow, ->] (1);
                                                                               \node[right = 4pt of 2](t2l){}; \path (2) edge[earrow, ->] (t2l);
                                                                               \graph[use existing nodes, edges={color=black, pos = .5, earrow}, edge quotes={fill=white, inner sep=1pt,font= \scriptsize}]{
                                                                               1 ->["$a$"] 2;
                                                                               };
                                                                           \end{tikzpicture}\hspace{-.5em} , \begin{tikzpicture}[baseline = -.5ex]
                                                                                                  \graph[grow right = .8cm, branch down = 2.5ex]{
                                                                                                  {1/{}[draw, circle, mynode, font = \scriptsize]} -!- {2/{}[draw, circle, mynode, font = \scriptsize]}  -!- {3/{}[draw, circle, mynode, font = \scriptsize]}
                                                                                                  };
                                                                                                  \node[left = 4pt of 1](s1l){} edge[earrow, ->] (1);
                                                                                                  \node[right = 4pt of 3](t3l){}; \path (3) edge[earrow, ->] (t3l);
                                                                                                  \graph[use existing nodes, edges={color=black, pos = .5, earrow}, edge quotes={fill=white, inner sep=1pt,font= \scriptsize}]{
                                                                                                  1 ->["$a$"] 2 ->["$a$"] 3;
                                                                                                  };
                                                                                              \end{tikzpicture}\hspace{-.5em} , \hspace{.5em} \dots}. \span \span
\end{align*}
For a \kl{valuation} $\val \colon \vsig \to \wp(A^2)$ in $\REL$ and $x, y \in A$,
let $\const{G}(\val, x, y)$ be the \kl{graph} $\tuple{A, \set{\hat{\val}(a)}_{a \in \tilde{\vsig}_{\eps}}, x, y}$.
We define $\hat{\val}(\graph[2])$ and $\hat{\val}(\glang)$ as follows:
\begin{align*}
    \hat{\val}(\graph[2]) & \defeq \set{\tuple{x, y} \mid \graph[2] \homo \const{G}(\val, x, y)}, & \hat{\val}(\glang) \defeq \bigcup_{\graph[2] \in \glang} \hat{\val}(\graph[2]).
\end{align*}
We then have the following alternative semantics using \kl{graph languages}.
\begin{prop}[{\cite[Prop.\ 11]{nakamuraExistentialCalculiRelations2023}}]\label{prop: glang}
    Let $\val \in \REL$.
    For $\mathrm{PCoR}_{\set{\bl^{*}, \com{\eps}, \com{x}}}$ \kl{terms} $\term$,
    we have $\hat{\val}(\term) = \hat{\val}(\glang(\term))$.
\end{prop}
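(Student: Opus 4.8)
The plan is to prove the identity by structural induction on the term $\term$, after first isolating the semantic content of each graph-forming operation as a standalone lemma about homomorphisms. The observation that makes an induction possible for this particular fragment is that complement is applied only to atoms: the only complemented subterms allowed in $\mathrm{PCoR}_{\set{\bl^{*}, \com{\eps}, \com{x}}}$ are $\com{x}$ and $\com{\eps}$, and these are already folded into the extended alphabet $\tilde{\vsig}$ over which the graphs in $\glang(\term)$ are built. Consequently there is \emph{no} inductive case for complement; instead $\com{x}$ and $\com{\eps}$ are treated as base cases in exactly the same way as variables, each being a single-edge graph whose unique label happens to carry the relation $B^{2}\setminus\hat{\val}(x)$ or $B^{2}\setminus\triangle_{B}$ inside $\const{G}(\val,x,y)$. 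This is precisely why the positive, homomorphism-based semantics captures this fragment but would fail for unrestricted complement, and I would emphasize it at the outset.

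First I would dispatch the base cases directly. For an atom $a\in\tilde{\vsig}$, a homomorphism from the single-edge graph $\glang(a)$ into $\const{G}(\val,x,y)$ must send the source to $x$, the target to $y$, and the $a$-labelled edge to an $a$-edge, so $\hat{\val}(\glang(a))=\set{\tuple{x,y}\mid \tuple{x,y}\in\hat{\val}(a)}=\hat{\val}(a)$; for $\eps$, the single-vertex graph forces $x=y$, giving $\triangle_{B}$; and $\glang(\emp)=\emptyset$ gives the empty relation.

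Next comes the heart of the argument: a lemma stating that the three nontrivial graph constructions interact with the homomorphism semantics exactly as the corresponding relational operators do. Concretely, $\hat{\val}(G\compo H)=\hat{\val}(G)\compo\hat{\val}(H)$ for series-composition, $\hat{\val}(G)\cap\hat{\val}(H)$ for the parallel-composite of $G$ and $H$, and $\hat{\val}(G^{\smile})=\hat{\val}(G)^{\smile}$ for the source/target swap. Each identity is a gluing/splitting argument on homomorphisms: a homomorphism out of a series-composite sends the shared middle vertex to some $z$ and restricts to homomorphisms of $G$ (witnessing $\tuple{x,z}$) and of $H$ (witnessing $\tuple{z,y}$), while conversely two such homomorphisms agree on the identified vertex and therefore glue into one homomorphism of $G\compo H$; the parallel case is identical except that the two homomorphisms must agree on both the source and the target, which yields an intersection; the converse case is immediate by unfolding $\const{G}(\val,x,y)$ with $x,y$ swapped. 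I expect this gluing/splitting step to be the main obstacle, since it is where one must use that the vertex sets of the components are disjoint apart from the glued vertices, and that a graph homomorphism both restricts and extends along this decomposition.

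Finally, I would lift these graph-level identities to the term induction. Because relational composition, intersection, and converse each distribute over arbitrary unions of relations, the equation $\hat{\val}(\glang(\term)) = \bigcup_{G\in\glang(\term)}\hat{\val}(G)$ transports each operation through the union defining the graph language; combined with the induction hypothesis this settles $\compo$, $\cap$, $\union$, and $\bl^{\smile}$ in one line each. For the Kleene star I would use $\glang(\term[1]^{*})=\bigcup_{n\in\nat}\glang(\term[1]^{n})$ together with an auxiliary induction on $n$ (invoking the composition lemma) to obtain $\hat{\val}(\glang(\term[1]^{n}))=\hat{\val}(\term[1])^{n}$, whence $\hat{\val}(\glang(\term[1]^{*}))=\bigcup_{n}\hat{\val}(\term[1])^{n}=\hat{\val}(\term[1])^{*}=\hat{\val}(\term[1]^{*})$, closing the induction.
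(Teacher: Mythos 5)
Your proposal is correct and follows essentially the same route as the paper: the paper defers \Cref{prop: glang} to the cited reference, but its own detailed proof of the analogous \Cref{prop: glang u} (in the appendix) proceeds exactly as you describe—base cases over the extended alphabet $\tilde{\vsig}_{\eps}$, gluing/splitting lemmas showing $\hat{\val}(\graph[1] \compo \graph[2]) = \hat{\val}(\graph[1]) \compo \hat{\val}(\graph[2])$ and $\hat{\val}(\graph[1] \cap \graph[2]) = \hat{\val}(\graph[1]) \cap \hat{\val}(\graph[2])$, then structural induction using distributivity of these operations over unions, with $\bl^{*}$ reduced to the composition case via $\glang(\term[1]^*) = \bigcup_{n} \glang(\term[1]^n)$. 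Your observation that complemented atoms are base cases rather than an inductive case, and your handling of converse by source/target swap, are likewise exactly what the paper's treatment requires.
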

Using this characterization, we have the following bounded model property.
\begin{prop}[bounded model property]\label{prop: bounded model property}
    Suppose that $\algclass \subseteq \REL$ is \kl{submodel-closed}.
    Let $\term[1]$ and $\term[2]$ are $\mathrm{PCoR}_{\set{\bl^{*}, \com{\eps}, \com{x}}}$ \kl{terms}
    such that $\algclass \not\models \term[1] \le \term[2]$.
    We then have the following.
    \begin{enumerate}
        \item \label{prop: bounded model property 1} 
        There is some \kl{valuation} $\val \colon \vsig \to \wp(X^2)$ in $\algclass$ of $\#X$ finite such that
        $\hat{\val}(\term[1]) \not\subseteq \hat{\val}(\term[2])$.

        \item \label{prop: bounded model property 2}
        If $\term[1]$ is a $\mathrm{PCoR}_{\set{\com{\eps}, \com{x}}}$ \kl{term},
        then there is some \kl{valuation} $\val \colon \vsig \to \wp(X^2)$ in $\algclass$ of $\#X \le 1 + \|\term[1]\|$ such that
        $\hat{\val}(\term[1]) \not\subseteq \hat{\val}(\term[2])$.
    \end{enumerate}
\end{prop}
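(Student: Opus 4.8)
The plan is to prove both parts by extracting a single finite witnessing \kl{graph} from the graph-language semantics and then restricting the model to its image. Since $\algclass \not\models \term[1] \le \term[2]$, fix a \kl{valuation} $\val \colon \vsig \to \wp(A^2)$ in $\algclass$ and a pair $\tuple{x, y}$ with $\tuple{x, y} \in \hat{\val}(\term[1])$ and $\tuple{x, y} \notin \hat{\val}(\term[2])$. By \Cref{prop: glang}, $\hat{\val}(\term[1]) = \hat{\val}(\glang(\term[1]))$, so there is a single \kl{graph} $\graph \in \glang(\term[1])$ together with a \kl{graph homomorphism} $h \colon \graph \homo \const{G}(\val, x, y)$. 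Each such $\graph$ has finitely many \kl{vertices}, so its image $X \defeq h(\domain{\graph}) \subseteq A$ is finite; moreover $x, y \in X$ since $h(\src^{\graph}) = x$ and $h(\tgt^{\graph}) = y$. I would then take $\val' \defeq \val \restriction X$, which lies in $\algclass$ because $\algclass$ is \kl{submodel-closed}, and show that $\val'$ is the desired bounded \kl{valuation}.

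The technical heart is the observation that restriction commutes with the edge relations of the associated \kl{graphs}: for every label $a \in \tilde{\vsig}_{\eps}$ one has $\hat{\val'}(a) = \hat{\val}(a) \cap X^2$. This is immediate for $a = x$ and $a = \eps$, and for the complemented labels it follows from the identity $X^2 \setminus (\val(x) \cap X^2) = (A^2 \setminus \val(x)) \cap X^2$ and its analogue for $\com{\eps}$; thus $\const{G}(\val', x, y)$ is precisely the \kl{induced subgraph} of $\const{G}(\val, x, y)$ on $X$. Two consequences finish part~\ref{prop: bounded model property 1}. First, since $h$ already maps into $X$ and sends each edge into $\hat{\val}(a) \cap X^2 = \hat{\val'}(a)$, the same map is a \kl{graph homomorphism} $h \colon \graph \homo \const{G}(\val', x, y)$, whence $\tuple{x, y} \in \hat{\val'}(\graph) \subseteq \hat{\val'}(\term[1])$ by \Cref{prop: glang}. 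Second, the inclusion $X \hookrightarrow A$ is itself a \kl{graph homomorphism} $\const{G}(\val', x, y) \homo \const{G}(\val, x, y)$, because $\hat{\val'}(a) = \hat{\val}(a) \cap X^2 \subseteq \hat{\val}(a)$; hence any homomorphism from a \kl{graph} of $\glang(\term[2])$ into $\const{G}(\val', x, y)$ composes with this inclusion to witness $\tuple{x, y} \in \hat{\val}(\term[2])$, contradicting the choice of $\tuple{x, y}$. Therefore $\tuple{x, y} \notin \hat{\val'}(\term[2])$.

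For part~\ref{prop: bounded model property 2}, the extra input is that when $\term[1]$ is star-free the witnessing \kl{graph} $\graph$ can be bounded in size, so that $\# X \le \# \domain{\graph} \le 1 + \|\term[1]\|$. I would prove by induction on $\term[1]$ that every $\graph \in \glang(\term[1])$ has at most $1 + \|\term[1]\|$ \kl{vertices}: a single variable contributes two \kl{vertices}, series-composition identifies one pair ($\# \domain{\graph[1]} + \# \domain{\graph[2]} - 1$), parallel-composition identifies two pairs ($\# \domain{\graph[1]} + \# \domain{\graph[2]} - 2$), converse and $\union$ do not increase the bound, and $\eps$, $\com{\eps}$, $\top$ each contribute at most two \kl{vertices}; in every inductive step the symbol added by the outermost operator exactly absorbs the arithmetic (the series case is tight). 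The absence of $\bl^{\kstar}$ is precisely what prevents $\glang(\term[1])$ from generating \kl{graphs} of unbounded size, and combining this bound with the submodel construction above yields the quantitative claim.

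The step I expect to be the main obstacle is verifying the restriction identity $\hat{\val'}(a) = \hat{\val}(a) \cap X^2$ uniformly across all label types, since the complemented labels $\com{x}$ and $\com{\eps}$ are interpreted by complementation relative to the whole universe, and one must check that cutting down to $X^2$ leaves them unchanged as edge relations; this is where \kl{submodel-closedness} is genuinely used and where a careless restriction would fail. Once this identity is in hand, both the ``edges survive'' direction and the ``no new witnesses'' direction are routine, and the vertex count for part~\ref{prop: bounded model property 2} is a direct structural induction.
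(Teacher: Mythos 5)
Your proposal is correct and takes essentially the same approach as the paper's proof: extract a finite witnessing graph via \Cref{prop: glang}, restrict $\val$ to the image of the homomorphism (which stays in $\algclass$ by submodel-closedness), transfer the counterexample in both directions along the restriction, and for part~\ref{prop: bounded model property 2} bound the number of vertices of graphs in $\glang(\term[1])$ by induction. The only slip is in your commentary, not the argument: submodel-closedness is used solely to guarantee $\val \restriction X \in \algclass$, whereas the restriction identity $\hat{\val}'(a) = \hat{\val}(a) \cap X^2$ (including for $\com{x}$ and $\com{\eps}$) holds in every relational model and does not depend on it.
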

\begin{proof}
    For \ref{prop: bounded model property 1}:
    Suppose that $\tuple{x, y} \in \hat{\val}(\term[1]) \setminus \hat{\val}(\term[2])$ for some \kl{valuation} $\val \in \algclass$.
    By \Cref{prop: glang},
    there is some \kl{graph homomorphism} $f$ from some \kl{graph} $\graph[2] \in \glang(\term[1])$ into $\const{G}(\val, x, y)$.
    Let $\val' \defeq \val \restriction f(\graph[2])$.
    We then have $\tuple{x, y} \in \hat{\val}'(\term[1])$ by the same map $f$.
    Also, we have $\tuple{x, y} \not\in \hat{\val}'(\term[2])$ by $\tuple{x, y} \not\in \hat{\val}(\term[2])$ with \Cref{prop: glang}.
    We thus have $\hat{\val}'(\term[1]) \not\subseteq \hat{\val}'(\term[2])$.
    Because the number of \kl{vertices} of $\graph[2]$ is finite, this completes the proof.

    For \ref{prop: bounded model property 2}:
    By easy induction on $\term[1]$, we can show that the number of \kl{vertices} of $\graph[2] \in \glang(\term)$ is bounded by $1 + \|\term[1]\|$.
    Hence, in the same way as above, this completes the proof.
\end{proof}
From \Cref{prop: bounded model property}, we have the following complexity upper bounds (cf.\ \cite[Lemma 22 and Theorem 24]{nakamuraExistentialCalculiRelations2023} for $\REL$).
\begin{prop}\label{prop: upper bound}
    Suppose that $\algclass \subseteq \REL$ is \kl{submodel-closed} and that the membership $\val \in \algclass$ is in $\textsc{P}$.
    We then have the following.
    \begin{enumerate}
        \item \label{prop: upper bound 1}
        The \kl{equational theory} of $\PCoR_{\set{\bl^{*}, \com{\eps}, \com{x}}}$ with respect to $\algclass$ is in $\Pi^{0}_{1}$.
        \item \label{prop: upper bound 2}
        The \kl{equational theory} of $\PCoR_{\set{\com{\eps}, \com{x}}}$ with respect to $\algclass$ is in $\textsc{coNP}$.
        More precisely, the \kl{inclusion problem} $\term[1] \le \term[2]$ with respect to $\algclass$ (i.e., given $\term[1]$ and $\term[2]$, does $\algclass \models \term[1] \le \term[2]$ hold?),
        where $\term$ is a $\PCoR_{\set{\com{\eps}, \com{x}}}$ \kl{term} and $\term[2]$ is a $\PCoR_{\set{\bl^{*}, \com{\eps}, \com{x}}}$, is in $\textsc{coNP}$.
    \end{enumerate}
\end{prop}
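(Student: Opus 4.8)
The plan is to reduce both statements to the \kl{inclusion problem} and then read off the bounds directly from the bounded model property (\Cref{prop: bounded model property}). Since an \kl{equation} $\term[1] = \term[2]$ holds exactly when both $\term[1] \le \term[2]$ and $\term[2] \le \term[1]$ hold, and since both $\Pi^{0}_{1}$ and $\textsc{coNP}$ are closed under conjunction, it suffices to bound the complexity of deciding whether $\algclass \models \term[1] \le \term[2]$.

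For the first item, I would replace the universal quantification over all \kl{valuations} by one over finite \kl{valuations}. By the first item of \Cref{prop: bounded model property}, whenever $\algclass \not\models \term[1] \le \term[2]$ there is already a witnessing \kl{valuation} over a finite set, so
\[
\algclass \models \term[1] \le \term[2] \quad\iff\quad \hat{\val}(\term[1]) \subseteq \hat{\val}(\term[2]) \ \text{ for all finite } \val \in \algclass.
\]
The finite \kl{valuations} are effectively enumerable, and the matrix on the right is decidable: membership $\val \in \algclass$ is decidable since it lies in $\textsc{P}$, and over a finite \kl{relational model} every $\PCoR_{\set{\bl^{*}, \com{\eps}, \com{x}}}$ \kl{term} evaluates to a computable relation, so the inclusion $\hat{\val}(\term[1]) \subseteq \hat{\val}(\term[2])$ is decidable. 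This exhibits the \kl{equational theory} as a universal quantification over a decidable predicate, hence a $\Pi^{0}_{1}$ predicate.

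For the second item, $\term[1]$ is $\bl^{*}$-free, so the second item of \Cref{prop: bounded model property} provides a witness over a set $X$ with $\#X \le 1 + \|\term[1]\|$, i.e.\ of size polynomial in the input. I would then place the complement problem in $\textsc{NP}$: nondeterministically guess such a \kl{valuation} $\val$ — it suffices to specify $\val$ on the (polynomially many) \kl{variables} occurring in $\term[1]$ and $\term[2]$, each as a subset of $X^2$, which is a guess of polynomial size — verify $\val \in \algclass$ in polynomial time, and verify $\hat{\val}(\term[1]) \not\subseteq \hat{\val}(\term[2])$ by evaluating both \kl{terms} over the polynomial-size model. Consequently the \kl{inclusion problem}, and hence the \kl{equational theory}, is in $\textsc{coNP}$.

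The only step requiring care is the claim that a \kl{term} can be evaluated over a given finite \kl{relational model} effectively, and in polynomial time for the second item. This is routine: each operation of $\sig$ — \kl{composition}, union, \kl{converse}, complement, and $\bl^{*}$, realized as the \kl{reflexive transitive closure} of a finite relation — is polynomial-time computable on relations over a finite set. The size bound $1 + \|\term[1]\|$ furnished by \Cref{prop: bounded model property} is precisely what keeps the guessed model, and therefore the whole verification, of polynomial size in the second item.
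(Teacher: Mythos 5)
Your proposal is correct and follows essentially the same route as the paper: item 1 is the finite model property (\Cref{prop: bounded model property}.\ref{prop: bounded model property 1}) plus decidable model checking over finite relational models and enumeration of finite valuations, and item 2 is the linearly bounded model property (\Cref{prop: bounded model property}.\ref{prop: bounded model property 2}) plus a nondeterministic guess of a polynomial-size valuation verified in polynomial time. The only additions are bookkeeping the paper leaves implicit (splitting an equation into two inclusions and noting closure of $\Pi^{0}_{1}$ and $\textsc{coNP}$ under conjunction, and spelling out why evaluation of each operator, including $\bl^{*}$, is polynomial-time on finite relations, which the paper delegates to a citation).
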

\begin{proof}
    For \ref{prop: upper bound 1}:
    By the finite model property (\Cref{prop: bounded model property}.\ref{prop: bounded model property 1}) with that
    the model checking problem of $\PCoR_{\set{\bl^{*}, \com{\eps}, \com{x}}}$ is decidable (in $\textsc{P}$; see, e.g., \cite[Proposition 20]{nakamuraExistentialCalculiRelations2023}).
    By enumerating all finite \kl{valuations} $\val \in \algclass$, we have that the problem in co-recursively enumerable.

    For \ref{prop: upper bound 2}:
    By the linearly bounded model property (\Cref{prop: bounded model property}.\ref{prop: bounded model property 2}) with that
    the model checking problem of $\PCoR_{\set{\com{\eps}, \com{x}}}$ is decidable in $\textsc{P}$.
    By picking a \kl{valuation} $\val$ of size linear to the input size such that $\val \in \algclass$, nondeterministically,
    we have that the problem is in $\textsc{coNP}$.
\end{proof}

\subsection{Deterministic Propositional While Programs with Loop}\label{section: PWP with loop}
We define \kl{propositional while programs with loop} (\kl{$\mathrm{PWP}_{\lop}$ terms}, for short), as a syntactic fragment of $\PCoR_{\set{\bl^{*}}}$ with hypotheses.

Let $\psig$ be a set of \intro*\kl{primitive tests} with a fixpoint free involution $\tilde{\bl} \colon \psig \to \psig$
and let $\asig$ be a set of \intro*\kl{actions}, where $\psig, \asig \subseteq \vsig$ are disjoint.
Let $\mathtt{test}$ be the set of \kl{equations} given by:
\[\mathtt{test} \defeq \set{p \cap \tilde{p} = \emp, p \union \tilde{p} = \eps \mid p \in \psig}.\]
For every $\val \in \REL_{\mathtt{test}}$,
each $p \in \psig$ satisfies $\hat{\val}(p) \subseteq \hat{\val}(\eps)$ and $\hat{\val}(\tilde{p}) = \hat{\val}(\eps) \setminus \hat{\val}(p)$.
Thus $p$ behaves as a \kl{primitive test} and $\tilde{p}$ behaves as its complement, as expected.

\subsubsection{Syntax}
The set of \intro*\kl{tests}, written $\mathrm{B}$, is defined as the minimal set of $\PCoR_{\set{\bl^{*}}}$ \kl{terms} satisfying the following:
\begin{gather*}
    \begin{prooftree}
        \hypo{p \in \psig}
        \infer1{p \in \mathrm{B}}
    \end{prooftree}
    \quad
    \begin{prooftree}
        \hypo{\mathstrut}
        \infer1{\eps \in \mathrm{B}}
    \end{prooftree}
    \quad
    \begin{prooftree}
        \hypo{\mathstrut}
        \infer1{\emp \in \mathrm{B}}
    \end{prooftree}
    \quad
    \begin{prooftree}
        \hypo{\bterm[1] \in \mathrm{B}}
        \hypo{\bterm[2] \in \mathrm{B}}
        \infer2{\bterm[1] \compo \bterm[2] \in \mathrm{B}}
    \end{prooftree}
    \quad
    \begin{prooftree}
        \hypo{\bterm[1] \in \mathrm{B}}
        \hypo{\bterm[2] \in \mathrm{B}}
        \infer2{\bterm[1] \union \bterm[2] \in \mathrm{B}}
    \end{prooftree}.
\end{gather*}
For a \kl{test} $b \in \mathrm{B}$, we define the complement $\widetilde{b} \in \mathrm{B}$ as follows:
\begin{gather*}
    \widetilde{p} \defeq \tilde{p}, \quad \widetilde{\eps} \defeq \emp, \quad \widetilde{\emp} \defeq \eps, \quad \widetilde{b \union c} \defeq \widetilde{b} \compo \widetilde{c}, \quad \widetilde{b \compo c} \defeq \widetilde{b} \union \widetilde{c}.
\end{gather*}
Note that $\REL_{\mathtt{test}} \models \widetilde{b} = \eps \cap \com{b}$, thus $\widetilde{b}$ expresses the complement of $\com{b}$ with respect to $\eps$, as expected.

For \kl{terms} $\term[1], \term[2]$ and a \kl{test} $b$, we define the following operators for encoding ``$\mathbf{if}$-$\mathbf{then}$'' and ``$\mathbf{while}$''.
\begin{align*}
    \term[1] \union_{b} \term[2] & \quad\defeq\quad b \term[1] \union \widetilde{b} \term[2]  \tag{$\mathbf{if}$ $b$ $\mathbf{then}$ $\term[1]$ $\mathbf{else}$ $\term[2]$} \\
    \term[1]^{*_{b}}           & \quad\defeq\quad (b \term[1])^{*} \widetilde{b}           \tag{$\mathbf{while}$ $b$ $\mathbf{do}$ $\term[1]$}                           \\
    \term[1]^{+_{b}}           & \quad\defeq\quad \term[1] \term[1]^{*_{b}}          \tag{$\mathbf{do}$ $\term[1]$ $\mathbf{while}$ $b$}
\end{align*}
These encoding are well-known in propositional dynamic logic \cite{fischerPropositionalDynamicLogic1979,halpernPropositionalDynamicLogic1983} and Kleene algebra with tests (KAT) \cite{kozenKleeneAlgebraTests1996,smolkaGuardedKleeneAlgebra2019}.
The set of \intro*\kl{$\mathrm{PWP}_{\lop}$ terms} is the minimal subset $\mathrm{W}$ of $\PCoR_{\set{\bl^{*}}}$ satisfying the following:
\begin{gather*}
    \begin{prooftree}[separation = .5em]
        \hypo{b \in \mathrm{B}}
        \infer1{b \in \mathrm{W}}
    \end{prooftree}
    \quad
    \begin{prooftree}[separation = .5em]
        \hypo{x \in \asig}
        \infer1{x \in \mathrm{W}}
    \end{prooftree}
    \quad
    \begin{prooftree}[separation = .5em]
        \hypo{\term[1] \in \mathrm{W}}
        \hypo{\term[2] \in \mathrm{W}}
        \infer2{\term[1] \compo \term[2] \in \mathrm{W}}
    \end{prooftree}
    \quad
    \begin{prooftree}[separation = .5em]
        \hypo{\term[1] \in \mathrm{W}}
        \hypo{\term[2] \in \mathrm{W}}
        \hypo{b \in \mathrm{B}}
        \infer3{\term[1] \union_{b} \term[2] \in \mathrm{W}}
    \end{prooftree}
    \quad
    \begin{prooftree}[separation = .5em]
        \hypo{\term[1] \in \mathrm{W}}
        \hypo{b \in \mathrm{B}}
        \infer2{\term[1]^{*_{b}} \in \mathrm{W}}
    \end{prooftree}
    \quad
    \begin{prooftree}[separation = .5em]
        \hypo{\term[1] \in \mathrm{W}}
        \infer1{\term[1]^{\lop} \in \mathrm{W}}
    \end{prooftree}.
\end{gather*}

Similarly,
the set of \intro*\kl{propositional while programs with intersection} is the minimal subset $\mathrm{W} \subseteq \PCoR_{\set{\bl^{*}}}$ satisfying the rules for \kl{$\mathrm{PWP}_{\lop}$ terms} with
the rule \begin{prooftree}[separation = .5em]
    \hypo{\term[1] \in \mathrm{W}}
    \hypo{\term[2] \in \mathrm{W}}
    \infer2{\term[1] \cap \term[2] \in \mathrm{W}}
\end{prooftree}.
The set of \intro*\kl{propositional while programs with loop and atomic converse} (written, \kl{$\mathrm{PWP}_{\lop \breve{x}}$ terms}) is the minimal subset of $\mathrm{W} \subseteq \PCoR_{\set{\bl^{*}}}$ satisfying the rules for \kl{$\mathrm{PWP}_{\lop}$ terms} with
the rule \begin{prooftree}[separation = .5em]
    \hypo{x \in \asig}
    \infer1{x^{\smile} \in \mathrm{W}}
\end{prooftree}.

The notations above are similar to \cite{smolkaGuardedKleeneAlgebra2019} (where we use $\bl^{*_{b}}$ and $\bl^{+_{b}}$ instead of ``$\bl^{(b)}$''), which are useful in that we can give an over-approximation for \kl{$\mathrm{PWP}_{\lop}$ terms} by eliminating annotations, as follows:
\begin{prop}\label{prop: annotation elimination}
    For all \kl{$\mathrm{PWP}_{\lop}$ terms}, $\term$ and $\term[2]$, we have:
    \begin{align*}
        \REL_{\mathtt{test}} &\models \term[1] \union_{b} \term[2] \le \term[1] \union \term[2],&
        \REL_{\mathtt{test}} &\models \term[1]^{*_{b}} \le \term[1]^{*},&
        \REL_{\mathtt{test}} &\models \term[1]^{+_{b}} \le \term[1]^{+}.
    \end{align*}
\end{prop}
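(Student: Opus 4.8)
The plan is to reduce all three \kl{inequations} to two elementary facts about \kl{relational models} in $\REL_{\mathtt{test}}$: first, that every \kl{test} denotes a subidentity, i.e.\ $\REL_{\mathtt{test}} \models b \le \eps$ for every $b \in \mathrm{B}$; and second, that the operators $\union$, $\compo$, and $\bl^{*}$ are monotone with respect to $\le$ on $\REL$. The monotonicity is immediate from the set-theoretic definitions of $\cup$, $\compo$, and $\bl^{*}$ on \kl{binary relations}, so the only substantive ingredient is the subidentity property.

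First I would establish that $\REL_{\mathtt{test}} \models b \le \eps$ by induction on the structure of a \kl{test} $b \in \mathrm{B}$. The base cases $p \le \eps$ for $p \in \psig$ (which holds since $\hat{\val}(p) \subseteq \hat{\val}(\eps)$ for $\val \in \REL_{\mathtt{test}}$, as already recorded after the definition of $\mathtt{test}$), together with $\emp \le \eps$ and $\eps \le \eps$, are immediate; the inductive cases $b \compo c$ and $b \union c$ follow because $\triangle_{B} \compo \triangle_{B} = \triangle_{B}$ and $\triangle_{B} \cup \triangle_{B} = \triangle_{B}$, so a composite or union of subidentities is again a subidentity. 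In particular $\widetilde{b} \le \eps$ as well, since $\widetilde{b} \in \mathrm{B}$. From this I extract the key absorption fact: whenever $\hat{\val}(c) \subseteq \hat{\val}(\eps) = \triangle_{B}$, then for every \kl{term} $\term[3]$ we have $\hat{\val}(c \term[3]) = \hat{\val}(c) \compo \hat{\val}(\term[3]) \subseteq \hat{\val}(\term[3])$ and likewise $\hat{\val}(\term[3] c) \subseteq \hat{\val}(\term[3])$, because composing with a subset of the identity can only shrink a relation.

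With these in hand, the three \kl{inequations} follow by short calculations. For the first, absorption gives $b\term[1] \le \term[1]$ and $\widetilde{b}\term[2] \le \term[2]$, hence $b\term[1] \union \widetilde{b}\term[2] \le \term[1] \union \term[2]$ by monotonicity of $\union$. For the second, $b\term[1] \le \term[1]$ with monotonicity of $\bl^{*}$ gives $(b\term[1])^{*} \le \term[1]^{*}$; then right-absorption by $\widetilde{b} \le \eps$ yields $(b\term[1])^{*}\widetilde{b} \le (b\term[1])^{*} \le \term[1]^{*}$, which is exactly $\term[1]^{*_{b}} \le \term[1]^{*}$. For the third, the bound $\term[1]^{*_{b}} \le \term[1]^{*}$ just obtained together with left-monotonicity of $\compo$ gives $\term[1]^{+_{b}} = \term[1] \term[1]^{*_{b}} \le \term[1] \term[1]^{*} = \term[1]^{+}$.

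I do not expect a genuine obstacle: once the subidentity property of \kl{tests} is in place, each step is a one-line monotonicity or absorption argument, and that property is essentially already stated in the excerpt. The only point requiring minor care is to record the absorption fact for both left- and right-multiplication, since the $\bl^{*_{b}}$ and $\bl^{+_{b}}$ cases consume $\widetilde{b}$ on the right whereas the conditional consumes $b$ and $\widetilde{b}$ on the left.
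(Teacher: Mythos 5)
Your proposal is correct: the subidentity property of \kl{tests} ($b \le \eps$, hence also $\widetilde{b} \le \eps$ since $\widetilde{b} \in \mathrm{B}$) plus left/right absorption and monotonicity of $\union$, $\compo$, $\bl^{*}$ is exactly the routine argument the paper has in mind — indeed the paper states \Cref{prop: annotation elimination} without any proof, treating it as immediate. Your write-up simply supplies the omitted details, and each step (including the induction showing every test denotes a subset of $\triangle_B$ under $\REL_{\mathtt{test}}$) checks out.
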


\subsubsection{Semantics: deterministic structures}
In this paper, we mainly consider \kl{$\mathrm{PWP}_{\lop}$ terms} on 
deterministic structures \cite[Section 10.1]{kozenHoareLogicKleene2000}.
For a set $V \subseteq \vsig$,
we write $\mathtt{func}^{V}$ for the set of \kl{equations} given by:
\begin{align*}
    \mathtt{func}^{V} &\defeq \set{a^{\smile} a \le \eps \mid a \in V}.
\end{align*}
Particularly, we let $\mathtt{func} \defeq \mathtt{func}^{\vsig}$.
Note that, for every $\val \in \REL$, we have that
$\val \models a^{\smile} a \le \eps$ iff $\val(a)$ is functional (i.e., for all $x, y, z$, if $\tuple{x, y}, \tuple{x, z} \in \val(a)$, then $y = z$).
We then define
\[\DREL \defeq \REL_{\mathtt{func}}.\]
Namely, $\DREL$ is the class of all \kl{valuations} of \emph{deterministic} \kl{relational models}.

In \Cref{section: while}, we consider \emph{deterministic} \kl{$\mathrm{PWP}_{\lop}$ terms}, namely \kl{$\mathrm{PWP}_{\lop}$ terms} on $\DREL_{\mathtt{test}}$.

\subsection{Domino Problems}
A \intro*\kl{domino system}\footnote{This system is essentially equivalent to the system using Wang tiles \cite[p. 89]{borgerClassicalDecisionProblem1997}.} is a tuple $\mathcal{D} = \tuple{C, H, V}$ where
\begin{itemize}
    \item $C$ is a finite set (of colors/dominoes),
    \item $H \subseteq C^2$ is a binary relation on $C$ (for constraints on pairs of horizontally adjacent colors),
    \item $V \subseteq C^2$ is a binary relation on $C$ (for constraints on pairs of vertically adjacent colors).
\end{itemize}
Given a \kl{domino system} $\mathcal{D} = \tuple{C, H, V}$, for a map $\tau \colon \Z^2 \to C$, we say that
\begin{itemize}
    \item $\tau$ is a \intro*\kl{tiling} if, for all $\tuple{x, y} \in \Z^2$,
    $\tuple{\tau(x, y), \tau(x+1, y)} \in H$ and $\tuple{\tau(x, y), \tau(x, y+1)} \in V$,
    \item for $h, v \ge 1$, $\tau$ is \intro*\kl{$\tuple{h,v}$-periodic} if, for all $\tuple{x, y} \in \Z^2$,
$\tau(x, y) = \tau(x + h, y) = \tau(x , y + v)$,
    \item $\tau$ is \reintro*\kl{periodic} if $\tau$ is \kl{$\tuple{h,v}$-periodic} for some $h, v \ge 1$.
\end{itemize}
Below (\Cref{figure: tiling}) is an illustrative example of a \kl[$\tuple{h,v}$-periodic tiling]{($\tuple{4, 2}$-)periodic tiling}.
\begin{figure}[th]
    \centering
        \begin{tikzpicture}[baseline = -.5cm]
            \graph[grow up= .5cm, branch right = .5cm, nodes={font=\tiny}]{
            {m2_m1/{}, m2_0/{}, m2_1/{}, m2_2/{}, m2_3/{}, m2_4/{}, m2_5/{}, m2_6/{}, m2_7/{}, m2_8/{}, m2_9/{}, m2_10/{}} -!-
            {m1_m1/{}, m1_0/{}[mynode,fill = black], m1_1/{}[mynode, fill = red!60], m1_2/{}[mynode, fill = white], m1_3/{}[mynode, fill = blue!25], m1_4/{}[mynode, fill = black], m1_5/{}[mynode, fill = red!60], m1_6/{}[mynode, fill = white], m1_7/{}[mynode, fill = blue!25], m1_8/{}[mynode, fill = black], m1_9/{}[mynode, fill = red!60], m1_10/{}[mynode, fill = white], m1_11/{}} -!-
            {0_m1/{}, 0_0/{$o$}[mynode, fill = white], 0_1/{}[mynode, fill = blue!25], 0_2/{}[mynode, fill = black], 0_3/{}[mynode, fill = red!60], 0_4/{}[mynode, fill = white], 0_5/{}[mynode, fill = blue!25], 0_6/{}[mynode, fill = black], 0_7/{}[mynode, fill = red!60], 0_8/{}[mynode, fill = white], 0_9/{}[mynode, fill = blue!25], 0_10/{}[mynode, fill = black], 0_11/{}} -!-
            {1_m1/{}, 1_0/{}[mynode, fill = black], 1_1/{}[mynode, fill = red!60], 1_2/{}[mynode, fill = white], 1_3/{}[mynode, fill = blue!25], 1_4/{}[mynode, fill = black], 1_5/{}[mynode, fill = red!60], 1_6/{}[mynode, fill = white], 1_7/{}[mynode, fill = blue!25], 1_8/{}[mynode, fill = black], 1_9/{}[mynode, fill = red!60], 1_10/{}[mynode, fill = white], 1_11/{}} -!-
            {2_m1/{}, 2_0/{}[mynode, fill = white], 2_1/{}[mynode, fill = blue!25], 2_2/{}[mynode, fill = black], 2_3/{}[mynode, fill = red!60], 2_4/{}[mynode, fill = white], 2_5/{}[mynode, fill = blue!25], 2_6/{}[mynode, fill = black], 2_7/{}[mynode, fill = red!60], 2_8/{}[mynode, fill = white], 2_9/{}[mynode, fill = blue!25], 2_10/{}[mynode, fill = black], 2_11/{}} -!-
            {3_m1/{}, 3_0/{}[mynode, fill = black], 3_1/{}[mynode, fill = red!60], 3_2/{}[mynode, fill = white], 3_3/{}[mynode, fill = blue!25], 3_4/{}[mynode, fill = black], 3_5/{}[mynode, fill = red!60], 3_6/{}[mynode, fill = white], 3_7/{}[mynode, fill = blue!25], 3_8/{}[mynode, fill = black], 3_9/{}[mynode, fill = red!60], 3_10/{}[mynode, fill = white], 3_11/{}} -!-
            {4_m1/{}, 4_0/{}, 4_1/{}, 4_2/{}, 4_3/{}, 4_4/{}, 4_5/{}, 4_6/{}, 4_7/{}, 4_8/{}, 4_9/{}, 4_10/{}, 4_11/{}} -!-
            };
            \graph[use existing nodes, edges={color=black, pos = .5, earrow}, edge quotes={fill = white, inner sep=1pt,font= \scriptsize}]{
            \foreach \ya/\yb in {m2/m1, m1/0, 0/1, 1/2, 2/3, 3/4}{\foreach \xa in {0, 1, 2, 3, 4, 5, 6, 7, 8, 9, 10}{(\ya_\xa) -- (\yb_\xa);};};
            \foreach \ya in {m1, 0, 1, 2, 3}{\foreach \xa/\xb in {m1/0, 0/1, 1/2, 2/3, 3/4, 4/5, 5/6, 6/7, 7/8, 8/9, 9/10, 10/11}{(\ya_\xa) -- (\ya_\xb);};};
            };
            \begin{pgfonlayer}{foreground}
                \node[fit=(0_0)(0_3)(1_3)(1_0), draw = yellow, rectangle, inner sep = 0.1cm, line width = .5pt](back) {};
            \end{pgfonlayer}
        \end{tikzpicture} 
    \caption{A \kl[$\tuple{h,v}$-periodic tiling]{$\tuple{4, 2}$-periodic tiling} for the \kl{domino system} $\mathcal{D} = \tuple{C, H, V}$,
    where $C = \set{\begin{tikzpicture}[baseline = -.5ex]\graph{/{}[mynode, fill = white]};\end{tikzpicture}, \begin{tikzpicture}[baseline = -.5ex]\graph{/{}[mynode, fill = black]};\end{tikzpicture},
\begin{tikzpicture}[baseline = -.5ex]\graph{/{}[mynode, fill = blue!25]};\end{tikzpicture}, \begin{tikzpicture}[baseline = -.5ex]\graph{/{}[mynode, fill = red!75]};\end{tikzpicture}}$,
$H = \set{\begin{tikzpicture}[baseline = -.5ex]
    \graph[grow up= .5cm, branch right = .5cm, nodes={font=\tiny}]{{0/{}[mynode, fill = white], 1/{}[mynode, fill = blue!25]}};
    \graph[use existing nodes, edges={color=black, pos = .5, earrow}, edge quotes={fill = white, inner sep=1pt,font= \scriptsize}]{0 -- 1};
\end{tikzpicture}, \begin{tikzpicture}[baseline = -.5ex]
    \graph[grow up= .5cm, branch right = .5cm, nodes={font=\tiny}]{{0/{}[mynode, fill = blue!25], 1/{}[mynode, fill = black]}};
    \graph[use existing nodes, edges={color=black, pos = .5, earrow}, edge quotes={fill = white, inner sep=1pt,font= \scriptsize}]{0 -- 1};
\end{tikzpicture}, \begin{tikzpicture}[baseline = -.5ex]
    \graph[grow up= .5cm, branch right = .5cm, nodes={font=\tiny}]{{0/{}[mynode, fill = black], 1/{}[mynode, fill = red!75]}};
    \graph[use existing nodes, edges={color=black, pos = .5, earrow}, edge quotes={fill = white, inner sep=1pt,font= \scriptsize}]{0 -- 1};
\end{tikzpicture}, \begin{tikzpicture}[baseline = -.5ex]
    \graph[grow up= .5cm, branch right = .5cm, nodes={font=\tiny}]{{0/{}[mynode, fill = red!75], 1/{}[mynode, fill = white]}};
    \graph[use existing nodes, edges={color=black, pos = .5, earrow}, edge quotes={fill = white, inner sep=1pt,font= \scriptsize}]{0 -- 1};
\end{tikzpicture}}$, and
$V = \set*{\begin{tikzpicture}[baseline = 1.ex]
    \graph[grow up= .5cm, branch right = .5cm, nodes={font=\tiny}]{{0/{}[mynode, fill = white]} -!- {1/{}[mynode, fill = black]}};
    \graph[use existing nodes, edges={color=black, pos = .5, earrow}, edge quotes={fill = white, inner sep=1pt,font= \scriptsize}]{0 -- 1};
\end{tikzpicture}, \begin{tikzpicture}[baseline = 1.ex]
    \graph[grow up= .5cm, branch right = .5cm, nodes={font=\tiny}]{{0/{}[mynode, fill = black]} -!- {1/{}[mynode, fill = white]}};
    \graph[use existing nodes, edges={color=black, pos = .5, earrow}, edge quotes={fill = white, inner sep=1pt,font= \scriptsize}]{0 -- 1};
\end{tikzpicture}, \begin{tikzpicture}[baseline = 1.ex]
    \graph[grow up= .5cm, branch right = .5cm, nodes={font=\tiny}]{{0/{}[mynode, fill = blue!25]} -!- {1/{}[mynode, fill = red!75]}};
    \graph[use existing nodes, edges={color=black, pos = .5, earrow}, edge quotes={fill = white, inner sep=1pt,font= \scriptsize}]{0 -- 1};
\end{tikzpicture}, \begin{tikzpicture}[baseline = 1.ex]
    \graph[grow up= .5cm, branch right = .5cm, nodes={font=\tiny}]{{0/{}[mynode, fill = red!75]} -!- {1/{}[mynode, fill = blue!25]}};
    \graph[use existing nodes, edges={color=black, pos = .5, earrow}, edge quotes={fill = white, inner sep=1pt,font= \scriptsize}]{0 -- 1};
\end{tikzpicture}}$.
    The origin is denoted by $o$.}
    \label{figure: tiling}
\end{figure}
We recall the following undecidability results with respect to \kl{periodic tiling} and \kl{tiling}.
\begin{prop}[{\cite{bergerUndecidabilityDominoProblem1966,gurevichRemarksBergerPaper1972}}; see also {\cite[Theorem 3.1.7]{borgerClassicalDecisionProblem1997}}]
\label{prop: domino tiling undecidable}
    \begin{itemize}
        \item \intro*\kl{The periodic domino problem}---given a \kl{domino system} $\mathcal{D}$,
        does $\mathcal{D}$ have a \kl{periodic tiling}?---is $\mathrm{\Sigma}^{0}_{1}$-complete.

        \item \intro*\kl{The domino problem}---given a \kl{domino system} $\mathcal{D}$,
        does $\mathcal{D}$ have a \kl{tiling}?---is $\mathrm{\Pi}^{0}_{1}$-complete.
    \end{itemize}
\end{prop}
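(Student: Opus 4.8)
The plan is to treat this as a recalled classical fact and ultimately invoke \cite{bergerUndecidabilityDominoProblem1966,gurevichRemarksBergerPaper1972} (see also \cite[Theorem 3.1.7]{borgerClassicalDecisionProblem1997}); should one want a self-contained argument, I would split each claim into an easy membership (upper) bound and a hard hardness (lower) bound, and handle the two problems in parallel.

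For the upper bounds, both directions are routine. Membership of \kl{the periodic domino problem} in $\mathrm{\Sigma}^{0}_{1}$ follows by dovetailing over all candidate periods $\tuple{h, v}$ with $h, v \ge 1$: for each fixed pair there are only finitely many maps $\range{0, h-1} \times \range{0, v-1} \to C$, and it is decidable whether one of them induces a valid $\tuple{h, v}$-\kl{periodic} \kl{tiling} (simply check the horizontal and vertical constraints on the finite torus), so the existence of a \kl{periodic tiling} is semidecidable. Membership of \kl{the domino problem} in $\mathrm{\Pi}^{0}_{1}$ follows from a compactness (König's lemma) argument: a \kl{domino system} $\mathcal{D}$ tiles $\Z^2$ if and only if it tiles every finite square $\range{-n, n}^{2}$; since tileability of a fixed finite square is decidable, ``$\mathcal{D}$ fails to tile some finite square'' is semidecidable, and hence ``$\mathcal{D}$ tiles $\Z^2$'' lies in $\mathrm{\Pi}^{0}_{1}$.

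For the hardness bounds I would reduce from (the complement of) the halting problem by encoding Turing-machine computations into tiles, following Berger and Gurevich--Koryakov. Given a Turing machine $M$, I would build a \kl{domino system} $\mathcal{D}_{M}$ so that the rows of a \kl{tiling} record successive configurations of $M$ and the vertical constraints enforce one computation step per row. For \kl{the domino problem} ($\mathrm{\Pi}^{0}_{1}$-hardness) I would arrange that halting obstructs the tiling, so that $\mathcal{D}_{M}$ tiles $\Z^2$ exactly when $M$ never halts, giving a reduction from the ($\mathrm{\Pi}^{0}_{1}$-complete) non-halting problem. For \kl{the periodic domino problem} ($\mathrm{\Sigma}^{0}_{1}$-hardness) I would instead arrange that a halting (hence finite) computation folds into a finite rectangle, which tiles the plane to give a \kl{periodic tiling}, whereas a non-halting computation forbids periodicity, giving a reduction from the halting problem; the sharpest unifying statement here is that ``admits a \kl{periodic tiling}'' and ``admits no \kl{tiling}'' are recursively inseparable.

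The hard part will be forcing a \kl{tiling} to genuinely carry a computation with a well-defined initial configuration. Because $\Z^2$ is homogeneous and has no distinguished origin, naive adjacency constraints admit degenerate ``blank'' \kl{tilings} that simulate nothing, so one cannot simply declare where the computation starts. The standard---and genuinely delicate---remedy is Berger's aperiodic tile set (later simplified by Robinson), which superimposes a self-similar hierarchy of nested squares on every \kl{tiling} and thereby seeds computations at all scales; layering the computation tiles over this substrate is what makes the two reductions above go through, and is the only technically deep ingredient of the proof.
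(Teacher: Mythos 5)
Your proposal matches the paper exactly: the paper states this proposition as a recalled classical result, citing \cite{bergerUndecidabilityDominoProblem1966,gurevichRemarksBergerPaper1972} (see also \cite[Theorem 3.1.7]{borgerClassicalDecisionProblem1997}), and gives no proof of its own. Your supplementary sketch (dovetailing over periods and a K\"onig's-lemma compactness argument for the upper bounds; Turing-machine encodings layered over an aperiodic substrate, with the Gurevich--Koryakov recursive-inseparability statement, for the lower bounds) is an accurate outline of the standard arguments, though none of it is needed for the paper's purposes.
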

Additionally, we use the following $\mathrm{NP}$-hard problem in \kl{periodic tiling}.
\begin{prop}[a minor change from {\cite[B1]{harelEffectiveTransformationsInfinite1986}\cite[Theorem 7.2.1]{lewisElementsTheoryComputation1997}}]\label{prop: bounded periodic tiling}
    The \intro*\kl{bounded periodic domino problem}---given a \kl{domino system} $\mathcal{D}$ and a unary encoded $n \ge 1$,
    does $\mathcal{D}$ have an \kl[$\tuple{h,v}$-periodic tiling]{$\tuple{n,n}$-periodic tiling}?---is $\mathrm{NP}$-complete.
\end{prop}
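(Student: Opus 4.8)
The plan is to establish membership in $\mathrm{NP}$ and $\mathrm{NP}$-hardness separately, obtaining the latter as a minor adaptation of the classical $\mathrm{NP}$-completeness of square (bounded) tiling \cite{harelEffectiveTransformationsInfinite1986,lewisElementsTheoryComputation1997}.

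For membership, I would first note that a \kl[$\tuple{h,v}$-periodic tiling]{$\tuple{n,n}$-periodic tiling} $\tau$ is completely determined by its restriction $\sigma$ to the finite grid $\range{0, n-1}^2$, and that, conversely, every map $\sigma \colon \range{0, n-1}^2 \to C$ extends uniquely to an $\tuple{n,n}$-periodic map $\tau \colon \Z^2 \to C$. By periodicity, such a $\tau$ is a \kl{tiling} iff $\sigma$ satisfies, for every $\tuple{x, y} \in \range{0, n-1}^2$, the two local constraints $\tuple{\sigma(x, y), \sigma((x + 1) \bmod n, y)} \in H$ and $\tuple{\sigma(x, y), \sigma(x, (y + 1) \bmod n)} \in V$, where the modular indices encode the wraparound edges of the torus $\Z_n \times \Z_n$. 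Since $n$ is given in unary, this grid has only $n^2$ cells, so both the description of $\sigma$ and the number of constraints are polynomial in the input; hence a nondeterministic algorithm can guess $\sigma$ and check all constraints in polynomial time, placing the problem in $\mathrm{NP}$.

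For $\mathrm{NP}$-hardness, I would reduce from the bounded square tiling problem (given a tile set, a unary bound $s$, and a boundary condition such as a prescribed first row, decide whether the open $s \times s$ square admits a valid tiling), which is $\mathrm{NP}$-complete by the cited results. Given a square instance, I would build a \kl{domino system} $\mathcal{D} = \tuple{C, H, V}$ together with a unary $n = \mathrm{poly}(s)$ whose color set consists of the original interior colors plus a small family of \kl{fresh} delimiter colors, with $H$ and $V$ obtained from the original horizontal and vertical constraints augmented by matching rules for the delimiters, so that $\mathcal{D}$ admits an $\tuple{n,n}$-periodic tiling iff the square instance is solvable.

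The main obstacle is exactly the wraparound: because periodicity makes the torus $\Z_n \times \Z_n$ boundaryless, a naively translated construction can be defeated by tilings that wrap around and never isolate the finite square the reduction relies on, or that reuse the prescribed boundary in unintended positions. I would address this with delimiter colors that carve out an artificial rectangular frame inside the torus---using a delimiter-to-delimiter matching discipline together with the prescribed-row condition so that any periodic tiling is forced to contain one such frame enclosing an $s \times s$ region that can only be filled by a valid square tiling, and conversely every square tiling extends to a periodic one by filling the rest of the period with delimiters. Verifying that this frame cannot be circumvented by wraparound---so that periodic solvability is equivalent to square solvability---is the crux of the argument and is precisely the \emph{minor change} over the cited constructions.
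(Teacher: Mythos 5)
Your NP-membership argument is correct and complete: an $\tuple{n,n}$-periodic tiling is exactly determined by an assignment on the fundamental domain $\range{0,n-1}^2$ satisfying the torus constraints with modular wraparound, and since $n$ is given in unary this assignment can be guessed and checked in polynomial time. (The paper takes this direction for granted and only proves hardness.)

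The hardness part, however, has a genuine gap, and a more serious one than the crux you yourself flag as unverified. Your plan keeps the original colors and the original constraints among them, and only \emph{augments} the system with a small family of fresh delimiter colors and extra matching rules. But this operation is monotone: adding colors and adding allowed adjacencies can never destroy a tiling that already exists. Hence any instance whose original color set already admits an $\tuple{n,n}$-periodic tiling --- for example a constant tiling by a single color $a$ with $\tuple{a,a} \in H \cap V$ --- is mapped by your reduction to a yes-instance, irrespective of whether the prescribed-first-row $s \times s$ square instance is solvable. Concretely, take $C = \set{a,b}$, $H = V = \set{\tuple{a,a}}$, prescribed row $b, b, \dots, b$, and $s \ge 2$: this square instance is unsolvable, yet the constant-$a$ tiling survives in your augmented system, so the claimed equivalence fails. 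Consequently no delimiter-to-delimiter ``matching discipline'' can force your frame to appear; you would have to restrict the interior constraints or attach position information to the colors. The paper's proof does the latter: its colors are triples $\tuple{c, x, y}$ with $x, y \in \range{0, n-1}$, the relations $H'$ and $V'$ require the coordinates of neighbours to be consecutive modulo $n$, the original constraints are simply waived on the wraparound edges (the ``or $x' = 0$'' and ``or $y' = 0$'' clauses), and the prescribed row is enforced by admitting only the colors $\tuple{c_x, x, 0}$ at height $0$. With coordinates baked into the colors, alignment of any periodic tiling with the intended grid is automatic, and the wraparound problem you identify as the crux never arises.
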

\begin{proof}
    From \cite[Theorem 7.2.1]{lewisElementsTheoryComputation1997}, the following problem is $\mathrm{NP}$-complete:
    given a \kl{domino system} $\mathcal{D} = \tuple{C, H, V}$, a unary encoded $n \ge 1$, and $c_0, \dots, c_{n-1} \in C$,
    does $\mathcal{D}$ have a map $\tau \colon \range{0, n-1}^2 \to C$ satisfying the following conditions?
    \begin{itemize}
        \item $\tuple{\tau(x, y),\tau(x+1, y)} \in H$ for $x \in \range{0, n-2}$ and $y \in \range{0,n-1}$,
        \item $\tuple{\tau(x, y),\tau(x, y+1)} \in V$ for $x \in \range{0, n-1}$ and $y \in \range{0,n-2}$,
        \item $\tau(x, 0) = c_x$ for $x \in \range{0, n-1}$.
    \end{itemize}
    Given an input, we define the \kl{domino system} $\mathcal{D}' = \tuple{D, H', V'}$ as follows:
    \begin{align*}
        D &\defeq (C \times \range{0, n-1} \times \range{1, n-1}) \cup \set{\tuple{c_x, x, 0} \mid x \in \range{0, n-1}},\\
        H' &\defeq \set{\tuple{\tuple{c, x, y}, \tuple{c', x', y'}} \in D^2 \mid x' = (x + 1) \bmod n, y' = y, \mbox{ and } (\tuple{c, c'} \in H \mbox{ or } x' = 0)},\\
        V' &\defeq \set{\tuple{\tuple{c, x, y}, \tuple{c', x', y'}} \in D^2 \mid x' = x, y' = (y + 1) \bmod n, \mbox{ and } (\tuple{c, c'} \in V \mbox{ or } y' = 0)}.
    \end{align*}
    We then have that $\mathcal{D}$ has a map satisfying the conditions above
    iff $\mathcal{D}'$ has an \kl[$\tuple{h,v}$-periodic tiling]{$\tuple{n, n}$-periodic tiling}.
    Hence, this completes the proof.
\end{proof}

\subsection{Encoding of periodic grids}\label{subsection: grids}
Let $\aE, \aN, \aW, \aS$ be pairwise distinct \kl{variables}.
Let $\mathtt{grid}$ be the set of the following \kl{hypotheses}:
\begin{align*}
    a b&= \eps & \mbox{ for $\tuple{a,b} \in \set{\tuple{\aE, \aW}, \tuple{\aW, \aE}, \tuple{\aN, \aS}, \tuple{\aS, \aN} }$} \tag{bijective-inverse}\label{equation: bijective} \\
    \aE \aN \aW \aS &\ge \eps \tag{4-cycle} \label{equation: cycle}\\
    (\aE \union \aN \union \aW \union \aS)^* &\ge \top \tag{strongly connected} \label{equation: strongly connected}
\end{align*}
\begin{prop}\label{prop: grid}
    Let $\val \in \REL^{\mathrm{fin}}_{\mathtt{grid}}$.
    Then, for some $h, v \ge 1$,
    the \kl{valuation} $\val$ is isomorphic to a \kl{valuation} $\val_0$ such that
    \begin{itemize}
        \item $\val_0(\aE) = \set{\tuple{\tuple{x \bmod h, y \bmod v}, \tuple{x + 1 \bmod h, y \bmod v}} \mid x \in \range{0, h - 1}, y \in \range{0, v - 1}}$,
        \item $\val_0(\aN) = \set{\tuple{\tuple{x \bmod h, y \bmod v}, \tuple{x \bmod h, y + 1 \bmod v}} \mid x \in \range{0, h - 1}, y \in \range{0, v - 1}}$,
        \item $\val_0(\aW) = \val_0(\aE)^{\smile}$ and $\val_0(\aS) = \val_0(\aN)^{\smile}$.
    \end{itemize}
\end{prop}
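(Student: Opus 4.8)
The plan is to translate the three grid hypotheses into group-theoretic facts about the four relations $E \defeq \val(\aE)$, $N \defeq \val(\aN)$, $W \defeq \val(\aW)$, $S \defeq \val(\aS)$ on the finite underlying set $A$ of $\val$, and then to recognize $\val$ as the relations attached to a regular action of a two-generated finite abelian group. First I would unfold the bijective-inverse equations as $E \compo W = W \compo E = \triangle_{A}$ and $N \compo S = S \compo N = \triangle_{A}$. A relation possessing a two-sided inverse under $\compo$ (whose unit is $\triangle_{A}$) must be the graph of a bijection of $A$, its inverse being the converse relation: from $E \compo W = \triangle_{A}$ one reads off that $E$ is total and that each $W$-image is a singleton, and $W \compo E = \triangle_{A}$ then forces any two $E$-images of a point to coincide, so $E$ is single-valued. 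Hence $E$ and $N$ are permutations of $A$ with $W = E^{\smile} = E^{-1}$ and $S = N^{\smile} = N^{-1}$.

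Next I would exploit the 4-cycle hypothesis $\eps \le \aE \aN \aW \aS$, i.e.\ $\triangle_{A} \subseteq E \compo N \compo W \compo S$. The right-hand side is a composite of permutations, hence itself the graph of a bijection, and a bijection whose graph contains $\triangle_{A}$ equals $\triangle_{A}$; substituting $W = E^{-1}$ and $S = N^{-1}$ this says $E \compo N \compo E^{-1} \compo N^{-1} = \triangle_{A}$, i.e.\ $E$ and $N$ commute. Writing $G \defeq \langle E, N \rangle \le \mathrm{Sym}(A)$, the group $G$ is abelian. The strongly-connected hypothesis $\top \le (\aE \union \aN \union \aW \union \aS)^{*}$ states that any two points of $A$ are joined by an $\set{E,N,W,S}$-path, i.e.\ $G$ acts transitively on $A$. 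A faithful transitive abelian action is regular---all point-stabilizers coincide since $G$ is abelian, so their common value fixes every point and is trivial by faithfulness---hence, $A$ being finite, $G$ is a finite abelian group acting simply transitively. Fixing a base point $o \in A$ identifies $A$ with $G$ through $g \mapsto g(o)$, turning $E$ and $N$ into translation by the generators $\sigma_{E}, \sigma_{N}$.

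It remains to coordinatize $G$. Setting $h$ to be the order of $\sigma_{E}$ and $v$ the order of $\sigma_{N}$, the target is an isomorphism $G \cong \Z/h \times \Z/v$ sending $\sigma_{E} \mapsto (1,0)$ and $\sigma_{N} \mapsto (0,1)$; transporting the regular action along $g \mapsto g(o)$ and expressing each element in these coordinates then reproduces verbatim the stated formulas for $\val_0(\aE)$ and $\val_0(\aN)$, with $\val_0(\aW) = \val_0(\aE)^{\smile}$ and $\val_0(\aS) = \val_0(\aN)^{\smile}$ inherited from $W = E^{-1}$, $S = N^{-1}$. I expect this coordinatization to be the crux: it is equivalent to showing that $G$ is the internal direct product $\langle \sigma_{E} \rangle \oplus \langle \sigma_{N} \rangle$, i.e.\ that $\langle \sigma_{E} \rangle \cap \langle \sigma_{N} \rangle$ is trivial, so that the horizontal and vertical cycles do not interfere. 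The commuting-plus-transitivity argument on its own only yields $A \cong \Z^{2}/L$ for the relation lattice $L = \set{\tuple{m,n} \mid \sigma_{E}^{m} \sigma_{N}^{n} = \mathrm{id}}$, and the delicate point is to pin $L$ down to the rectangular lattice $h\Z \times v\Z$ rather than to an arbitrary finite-index sublattice; I would try to extract this independence from a careful choice of the generating data (and base point) before finally reading off $h$ and $v$.
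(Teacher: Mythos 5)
Your proposal runs parallel to the paper's own proof right up to the coordinatization step: the paper likewise derives from (bijective-inverse) that $\val(\aE),\val(\aN)$ are bijections with $\val(\aW),\val(\aS)$ their inverses, from (4-cycle) that they commute, and from (strongly connected) that the orbit map $f\colon\nat^{2}\to A$ determined by $\tuple{o,f(x,y)}\in\hat{\val}(\aE^{x}\aN^{y})$ is surjective --- this is exactly your simply transitive action of a finite two-generated abelian group. But the point you flag as the crux is not a gap you can close by a better choice of base point or generating data: it is a counterexample to \Cref{prop: grid} as stated. Let $A=\set{0,1}$ and let $\val$ send each of $\aE,\aN,\aW,\aS$ to the swap $\set{\tuple{0,1},\tuple{1,0}}$; this is your $\Z^{2}/L$ for the non-rectangular lattice $L=\set{\tuple{m,n}\mid m+n \text{ even}}$. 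All hypotheses of $\mathtt{grid}$ hold: each composite required by (bijective-inverse) equals $\triangle_{A}$, the 4-cycle composite equals $\triangle_{A}\supseteq\triangle_{A}$, and the reflexive-transitive closure of the swap is $A^{2}$. Yet any $\val_{0}$ of the prescribed shape with $hv=2$ has $h=1$ or $v=1$, so $\val_{0}(\aE)$ or $\val_{0}(\aN)$ is the identity relation; an isomorphism would transport this back to $\val(\aE)=\triangle_{A}$ or $\val(\aN)=\triangle_{A}$, which is false. (For an example with $\val(\aE)\neq\val(\aN)$, take $A=\Z/5\Z$ with $\val(\aE)$ translation by $1$ and $\val(\aN)$ translation by $2$.)

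The paper's proof breaks at precisely the step you singled out: the sentence asserting that ``there is some $v\ge 1$ such that \ldots\ $f(x,0)=f(x,v)$ for each $x\in\range{0,h-1}$, and $f(x,y)$ are pairwise distinct for $\tuple{x,y}\in\range{0,h-1}\times\range{0,v-1}$'' is unjustified, and in the two-point example it is false: there $h=2$, the condition $f(x,0)=f(x,v)$ forces $v$ to be even, and a $2\times v$ rectangle with $v\ge 2$ cannot map injectively into a two-element set. What your argument does establish is the correct weakening: $\val\cong\Z^{2}/L$ for some finite-index sublattice $L\le\Z^{2}$, with $\aE,\aN$ acting as translation by $(1,0),(0,1)$ and $\aW,\aS$ their converses. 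That weakening is all the paper actually needs. In \Cref{lem: connectivity'} one defines $\tau(x,y)$ as the color of the image of $(x,y)$ in $\Z^{2}/L$; this $\tau$ is a tiling and is $L$-invariant, hence $\tuple{h,v}$-periodic where $h,v\ge 1$ are the orders of $(1,0)+L$ and $(0,1)+L$ in $\Z^{2}/L$ (any $h,v$ with $h\Z\times v\Z\subseteq L$ work), so the reduction and the undecidability theorems survive. The honest fix is to restate \Cref{prop: grid} in this $\Z^{2}/L$ form (equivalently: there is a surjection $f\colon\Z^{2}\to A$ with $f(x+h,y)=f(x,y)=f(x,y+v)$ whose induced shifts realize $\aE,\aN$), and to rework its other uses --- notably the blow-up construction in \Cref{lem: reducing primitive tests}, which also relies on the rectangular coordinates --- in those terms.
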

\begin{proof}
    Let $\val \colon \vsig \to \wp(X^2)$.
    By (\ref{equation: bijective}), $\val(\aE)$ and $\val(\aN)$ are bijections and $\val(\aW)$ and $\val(\aS)$ are their inverse functions.
    By (\ref{equation: cycle}), we have that $\aE$ and $\aN$ are commutative, as follows:
    \[\val \models \aE \aN  = \aE \aN \aW \aS \aN \aE \ge \aN \aE = \aN \aE \aS \aW \aE \aN = (\aE \aN \aW \aS)^{\smile} \aE \aN \ge \aE \aN.\]
    Let $o \in X$ be any and let $f \colon \nat^2 \to X$ be the unique map such that $f(0, 0) = o$ and $\tuple{o, f(x, y)} \in \hat{\val}(\aE^{x} \aN^{y})$.
    Because $\val(\aE)$ is bijective and $X$ is finite, there is some $h \ge 1$ such that $f(0, 0) = f(h, 0)$
    and $f(x, 0)$ are pairwise distinct for $x\in\range{0, h-1}$.
    Moreover, because $\val(\aN)$ is bijective and $X$ is finite, there is some $v \ge 1$ such that
    $f(0, y) = f(h, y)$ for each $y\in\range{0, v-1}$,
    $f(x, 0) = f(x, v)$ for each $x\in\range{0, h-1}$,
    and $f(x, y)$ are pairwise distinct for $\tuple{x, y} \in \range{0, h-1} \times \range{0, v-1}$.
    Also, by (\ref{equation: strongly connected}),
    we have $\set{f(x, y) \mid x, y \in \nat}
    = \set{z \mid \tuple{o, z} \in \hat{\val}((\aE \union \aN)^*)}
    = \set{z \mid \tuple{o, z} \in \hat{\val}((\aE \union \aN \union \aW \union \aS)^*)}
    = \set{z \mid \tuple{o, z} \in \hat{\val}(\top)} = X$, and thus $f$ is surjective.
    Thus, the restricted map $f \restriction (\range{0, h-1} \times \range{0, v-1})$ is bijective.
    Hence, this completes the proof.
\end{proof}

\paragraph*{Encoding (periodic) tilings}
We consider the following \kl{valuations}.
\begin{defi}[from {\cite{goldblattWellstructuredProgramEquivalence2012}}]\label{defi: valuation of periodic tiling}
    Let $\mathcal{D} = \tuple{C, H, V}$ be a \kl{domino system}.
    For an \kl{$\tuple{h,v}$-periodic tiling} $\tau \colon \Z^2 \to C$,
    let $\val_{\mathcal{D}, \tau} \colon \vsig \to \wp((\range{0, h-1} \times \range{0,v-1})^2)$ be the \kl{valuation} defined by:
    \begin{align*}
        \val_{\mathcal{D}, \tau}(\aE) & \defeq \set{\tuple{\tuple{x \bmod h, y \bmod v}, \tuple{(x+1) \bmod h, y \bmod v}} \mid \tuple{x, y} \in \Z^2},
        \val_{\mathcal{D}, \tau}(\aW) \defeq \val_{\mathcal{D}, \tau}(\aE)^{\smile}, \\
        \val_{\mathcal{D}, \tau}(\aN) & \defeq \set{\tuple{\tuple{x \bmod h, y \bmod v}, \tuple{x \bmod h, (y+1) \bmod v}} \mid \tuple{x, y} \in \Z^2}, 
        \val_{\mathcal{D}, \tau}(\aS) \defeq \val_{\mathcal{D}, \tau}(\aN)^{\smile},\\
        \val_{\mathcal{D}, \tau}(c)   & \defeq \set{\tuple{\tuple{x \bmod h, y \bmod v}, \tuple{x \bmod h, y \bmod v}} \mid \tuple{x, y} \in \Z^2 \mbox{ and } \tau(x, y) = c} \mbox{ for $c \in C$}.
    \end{align*}
\end{defi}
Below (\Cref{figure: tiling encoding}) is an illustration of $\val_{\mathcal{D}, \tau}$ corresponding to the \kl{periodic tiling} in \Cref{figure: tiling} (where $\aW$ and $\aS$ are omitted).
\begin{figure}[th]
    \centering
 \begin{tikzpicture}[baseline = .0cm]
    \graph[grow up= 1.cm, branch right = 1.cm, nodes={font=\tiny}]{
    {0_0/{}[mynode, fill = white], 0_1/{}[mynode, fill = blue!25], 0_2/{}[mynode, fill = black], 0_3/{}[mynode, fill = red!75]} -!-
    {1_0/{}[mynode, fill = black], 1_1/{}[mynode, fill = red!75], 1_2/{}[mynode, fill = white], 1_3/{}[mynode, fill = blue!25]} -!-
    };
    \node[left = .3cm of 0_0](src){};
    \node[below = .3cm of 0_0](tgt){};
    \graph[use existing nodes, edges={color=black, pos = .5, earrow}, edge quotes={fill = white, inner sep=1pt,font= \scriptsize}]{
    \foreach \ya/\yb in {0/1}{\foreach \xa in {0, 1, 2, 3}{(\ya_\xa) ->["$\aN$"] (\yb_\xa); (\yb_\xa) ->["$\aN$", bend right = 30] (\ya_\xa);};};
    \foreach \ya in {0, 1}{\foreach \xa/\xb in {0/1, 1/2, 2/3}{(\ya_\xa) ->["$\aE$"] (\ya_\xb);};};
    \foreach \ya in {0}{\foreach \xa/\xb in {3/0}{(\ya_\xa) ->["$\aE$", bend left = 20] (\ya_\xb);};};
    \foreach \ya in {1}{\foreach \xa/\xb in {3/0}{(\ya_\xa) ->["$\aE$", bend right = 20] (\ya_\xb);};};
    };
\end{tikzpicture}
    \caption{The (finite) \kl{valuation} $\val_{\mathcal{D}, \tau}$ where $\tau$ is the \kl[$\tuple{h,v}$-periodic tiling]{$\tuple{4,2}$-periodic tiling} in \Cref{figure: tiling}.}
    \label{figure: tiling encoding}
\end{figure}

\section{Hypothesis Eliminations under REL}\label{section: hypothesis eliminations}
In this paper, we will use \intro*\kl{hypothesis eliminations}.
More precisely,
we consider that,
given a class $\algclass$ ($\subseteq \REL$, in this paper) of \kl{valuations}, a set $\Gamma$ of \kl{formulas}, and an \kl{equation} $\term[1] = \term[2]$,
we give an \kl{equation} $\term[1]' = \term[2]'$ so that the following equivalence hold:
\[\algclass_{\Gamma} \models \term[1] = \term[2] \qquad\Longleftrightarrow\qquad \algclass \models \term[1]' = \term[2]'.\]
Note that $\algclass_{\Gamma} \models \term[1] = \term[2]$ iff $\algclass \models (\bigwedge \Gamma) \to \term[1] = \term[2]$, in the left-hand side (when $\Gamma$ is finite).
Below we prepare some known \kl{hypothesis eliminations} under $\REL$ (see, e.g., also \cite{pousToolsCompletenessKleene2024} for \kl{hypothesis eliminations} in general \kl{Kleene algebras}; some of them are also applicable under $\REL$, but some are not).
We will use these \kl{hypothesis eliminations} mainly for showing undecidability results in this paper.

\subsection{Eliminating Hoare Hypothesis}\label{section: Hoare}
A \intro*\kl{Hoare hypothesis} is an \kl{equation} of the form $\term[3] \le \emp$.
We can eliminate \kl{Hoare hypotheses} as follows.
\begin{prop}[Cor.\ of {\cite{tarskiCalculusRelations1941}}]\label{prop: Hoare hypotheses}
    Let $\algclass \subseteq \REL$ be \emph{arbitrary}.
    For all \kl{terms} $\term[1], \term[2], \term[3]$, we have:
    \[\algclass_{\term[3] \le \emp} \models \term[1] \le \term[2] \quad\Longleftrightarrow\quad \algclass \models \term[1] \le \term[2] \union \top \term[3] \top.\]
\end{prop}
\begin{proof}
    For all $\val \in \REL$,
    we have $\hat{\val}(\top \term[3] \top) = \begin{cases}
        \hat{\val}(\emp) &  (\val \models \term[3] \le \emp)\\
        \hat{\val}(\top) &  (\val \not\models \term[3] \le \emp)
    \end{cases}$.
    Thus, 
    $\REL \models (\term[3] \le \emp \to \term[1] \le \term[2]) \leftrightarrow \term[1] \le \term[2] \union \top \term[3] \top$.
    (This equivalence can be viewed as a variant of the Schr{\"o}der--Tarski translation theorem \cite[p.\ 86]{tarskiCalculusRelations1941}.)
    Hence, this proposition holds.
\end{proof}
Additionally, multiple \kl{Hoare hypotheses} can be into one \kl{Hoare hypothesis} by the following equivalence:
$\REL \models (\term[1] \le \emp \land \term[2] \le \emp) \leftrightarrow \term[1] \union \term[2] \le \emp$.

Below we give some examples of \Cref{prop: Hoare hypotheses}.
\begin{ex}[{\cite[Theorem 50]{nakamuraExistentialCalculiRelations2023}}]\label{ex: u le x}
    For hypotheses of the form $\term[3] \le x$,
    by $\REL \models \term[3] \le x \leftrightarrow \term[3] \cap \com{x} = \emp$,
    we have:
    \[\algclass_{\term[3] \le x} \models \term[1] \le \term[2] \hspace{.5em}\Longleftrightarrow\hspace{.5em}
    \algclass \models \term[1] \le \term[2] \union \top (\term[3] \cap \com{x}) \top.\]
\end{ex}
\begin{ex}[{\cite[Sect.\ 2.1]{nakamuraUndecidabilityFO3Calculus2019}}]\label{ex: converse}
    For hypotheses of the form $z = x^{\smile}$,
    by $\REL \models z = x^{\smile} \leftrightarrow
    ((x \com{z})^{\lop} = \emp \land (\com{z} x)^{\lop} = \emp \land
    (z \com{x})^{\lop} = \emp \land (\com{x} z)^{\lop} = \emp)$,
    we have:
    \begin{align*}
       \algclass_{z = x^{\smile}} \models \term[1] \le \term[2]
       &\hspace{.5em}\Longleftrightarrow\hspace{.5em} \algclass \models \term[1] \le \term[2] \union \top ((x \com{z})^{\lop} \union (\com{z} x)^{\lop} \union
       (z \com{x})^{\lop} \union (\com{x} z)^{\lop}) \top.
    \end{align*}
    After pushing the converse operator inside as much as possible,
    we can eliminate the converse operator using such hypotheses.
    For instance, when $\algclass \subseteq \REL$ and $z$ is \kl{fresh} in $\algclass$ and $z \neq x$,
    we have:
    \begin{align*}
        \algclass \models x \le (x^{\smile} x x^{\smile})^{\smile}
        &\;\Leftrightarrow\; \algclass \models x \le x x^{\smile} x \tag{pushing $\bl^{\smile}$ inside}\\
        &\;\Leftrightarrow\; \algclass_{z = x^{\smile}} \models x \le x x^{\smile} x \tag{$\Rightarrow$: Trivial; $\Leftarrow$: $z$ is \kl{fresh}}\\
        &\;\Leftrightarrow\; \algclass_{z = x^{\smile}} \models x \le x z x \tag{By $z = x^{\smile}$}\\
        &\;\Leftrightarrow\; \algclass \models x \le x z x \union \top ((x \com{z})^{\lop} \union (\com{z} x)^{\lop} \union
        (z \com{x})^{\lop} \union (\com{x} z)^{\lop}) \top. \tag{\Cref{prop: Hoare hypotheses}}
    \end{align*}
    Thus, for $\set{\union, \compo, \com{x}, \bl^{\lop}} \subseteq S$,
    we can reduce the \kl{equational theory} of $S$-\kl{terms} to \kl[equational theory]{that} for $S \setminus \set{\bl^{\smile}}$.
\end{ex}
\begin{ex}\label{ex: u le x 2}
    For hypotheses of the form $\term[3] \le x$,
    by $\REL \models \term[3] \le x \leftrightarrow (\term[3] \com{x}^{\smile})^{\lop} = \emp$ with the \kl{hypothesis elimination} for $\bl^{\smile}$ in \Cref{ex: converse},
    we have:
    \begin{align*}
        \algclass_{\term[3] \le x} \models \term[1] \le \term[2]
        &\;\Longleftrightarrow\; \algclass \models \term[1] \le \term[2] \union \top (\term[3] \com{x}^{\smile})^{\lop} \top\\
        &\;\Longleftrightarrow\; \algclass \models \term[1] \le \term[2] \union \top ((\term[3] z)^{\lop} \union (x z)^{\lop} \union (z x)^{\lop} \union (\com{x}\, \com{z})^{\lop} \union (\com{z}\, \com{x})^{\lop}) \top,
    \end{align*}
    where $z$ is \kl{fresh} in $\algclass$ and $z \not\in \vsig(\term[1]) \cup \vsig(\term[2]) \cup \vsig(\term[3])$.
\end{ex}
\begin{ex}\label{ex: u le I}
    For hypotheses of the form $\term[3] \le \eps$,
    similar to \Cref{ex: u le x,ex: u le x 2}, we have:
    \[\algclass_{\term[3] \le \eps} \models \term[1] \le \term[2]
    \quad\Longleftrightarrow\quad \algclass \models \term[1] \le \term[2] \union \top (\term[3] \cap \com{\eps}) \top
    \quad\Longleftrightarrow\quad \algclass \models \term[1] \le \term[2] \union \top (\term[3] \com{\eps})^{\lop} \top. \]
\end{ex}

\begin{rem}\label{rem: Hoare KA}
    In \kl{Kleene algebras} (e.g., \cite{cohenHypothesesKleeneAlgebra1994} for general \kl{Kleene algebras}, \cite[Theorem 4.1]{kozenHoareLogicKleene2000} for $\REL$, \cite[Theorem 27]{hardinProofTheoryKleene2005}\cite[Theorem 3.2 (3.3)]{hardinModularizingElimination$r0$2005} for $\REL_{\Gamma}$), 
    when the \kl{variables} of $\term[1]$, $\term[2]$, and $\term[3]$ are in a finite set $A$,
    \kl{Hoare hypotheses} are usually eliminated in the following style (i.e., $\top$ has been replaced with $A^{*}$ from \Cref{prop: Hoare hypotheses}):
    \[\algclass_{\term[3] = \emp} \models \term[1] \le \term[2] \quad\Longleftrightarrow\quad \algclass \models \term[1] \le \term[2] \union A^{*} \term[3] A^{*}.\]
    For the cases of $\algclass = \REL$ and $\algclass = \REL_{\Gamma}$,
    this hypothesis elimination can be derived from \Cref{prop: Hoare hypotheses} (and \Cref{prop: submodel cover}); see \Cref{section: Hoare hypothesis KA}.
\end{rem}

\subsection{Hypothesis elimination via graph languages}\label{section: graph languages}
Let $\algclass'$ and $\algclass$ be such that $\algclass' \subseteq \algclass \subseteq \REL$.
For a \kl{term} $\term$, we say that $\algclass'$ is a \intro*\kl{witness-basis} of $\algclass$ for $\term$ if,
for every $\val \in \algclass$ and $\tuple{x, y} \in \hat{\val}(\term)$,
there is some non-empty set $B$ such that $(\val \restriction B) \in \algclass'$ and $\tuple{x, y} \in \widehat{\val \restriction B}(\term)$.
\begin{prop}\label{prop: submodel cover}
    Let $\algclass'$ and $\algclass$ be such that $\algclass' \subseteq \algclass \subseteq \REL$.
    For all $\mathrm{PCoR}_{\set{\bl^{*}, \com{\eps}, \com{x}}}$ \kl{terms} $\term[1]$ and $\term[2]$,
    if $\algclass'$ is a \kl{witness-basis} of $\algclass$ for $\term$,
    then we have:
    \[\algclass' \models \term[1] \le \term[2] \quad\Longleftrightarrow\quad \algclass \models \term[1] \le \term[2].\]
\end{prop}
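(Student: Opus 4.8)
The plan is to prove the two directions of the biconditional separately, with the right-to-left implication being immediate and the left-to-right implication carrying all the content. For $\Longleftarrow$, since $\algclass' \subseteq \algclass$, any inequation valid over the whole class $\algclass$ is a fortiori valid over the subclass $\algclass'$, so $\algclass \models \term[1] \le \term[2]$ yields $\algclass' \models \term[1] \le \term[2]$ with nothing to check. For $\Longrightarrow$, I would assume $\algclass' \models \term[1] \le \term[2]$, fix an arbitrary $\val \in \algclass$ (say $\val \colon \vsig \to \wp(A^2)$), and take $\tuple{x, y} \in \hat{\val}(\term[1])$, aiming to show $\tuple{x, y} \in \hat{\val}(\term[2])$. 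First I would invoke the \kl{witness-basis} hypothesis, applied to the left-hand term $\term[1] = \term$: it produces a non-empty $B$ with $\val \restriction B \in \algclass'$ and $\tuple{x, y} \in \widehat{\val \restriction B}(\term[1])$. Since $\val \restriction B \in \algclass'$ and $\algclass' \models \term[1] \le \term[2]$, this gives $\tuple{x, y} \in \widehat{\val \restriction B}(\term[2])$, and the only remaining task is to transfer this membership from the submodel back up to $\val$.

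The key lemma, and the only nontrivial step, is the monotonicity of submodel restriction in the $\term[2]$-direction: for every $\mathrm{PCoR}_{\set{\bl^{*}, \com{\eps}, \com{x}}}$ \kl{term} $\term[2]$ and $x, y \in B$, one has that $\tuple{x, y} \in \widehat{\val \restriction B}(\term[2])$ implies $\tuple{x, y} \in \hat{\val}(\term[2])$. I would prove this through the \kl{graph language} semantics of \Cref{prop: glang}. The pivotal observation is that for every letter $a \in \tilde{\vsig}_{\eps}$ we have $\widehat{\val \restriction B}(a) = \hat{\val}(a) \cap B^2$; this is clear for \kl{variables} and for $\eps$, and for the complement letters $\com{x}$ and $\com{\eps}$ it holds because the complement in the submodel is taken relative to $B^2$, so that $\widehat{\val \restriction B}(\com{x}) = B^2 \setminus (\val(x) \cap B^2) = (A^2 \setminus \val(x)) \cap B^2 = \hat{\val}(\com{x}) \cap B^2$, and likewise for $\com{\eps}$. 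Consequently $\const{G}(\val \restriction B, x, y)$ is exactly the \kl{induced subgraph} of $\const{G}(\val, x, y)$ on the vertex set $B$. Now, given $\tuple{x, y} \in \widehat{\val \restriction B}(\term[2])$, \Cref{prop: glang} supplies a \kl{graph} $\graph[2] \in \glang(\term[2])$ together with a \kl{graph homomorphism} $\graph[2] \homo \const{G}(\val \restriction B, x, y)$; post-composing with the inclusion $B \hookrightarrow A$ yields $\graph[2] \homo \const{G}(\val, x, y)$, because edge-preservation only gets easier as the target edge sets grow (each $a^{\const{G}(\val \restriction B, x, y)} = \hat{\val}(a) \cap B^2 \subseteq \hat{\val}(a) = a^{\const{G}(\val, x, y)}$) and the \kl{source}/\kl{target} $x, y$ are unchanged. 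By \Cref{prop: glang} again, this gives $\tuple{x, y} \in \hat{\val}(\term[2])$, completing the chain.

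I expect the main obstacle to be precisely the treatment of the complement letters $\com{x}$ and $\com{\eps}$ in this monotonicity lemma: naively one might fear that passing to a submodel creates new $\com{x}$-edges and so enlarges $\hat{\cdot}(\term[2])$ in an uncontrolled way, but the uniform identity $\widehat{\val \restriction B}(a) = \hat{\val}(a) \cap B^2$ across all letters is exactly what rules this out. This is also why the fragment $\mathrm{PCoR}_{\set{\bl^{*}, \com{\eps}, \com{x}}}$, in which complement reaches only atoms and is therefore absorbed into the alphabet $\tilde{\vsig}_{\eps}$, is the right setting for the argument; an unrestricted complement would break the induced-subgraph identity. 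Everything else, namely the trivial direction and the two applications of \Cref{prop: glang}, is routine bookkeeping.
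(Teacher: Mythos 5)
Your proposal is correct and follows essentially the same route as the paper's proof: the trivial direction by $\algclass' \subseteq \algclass$, and the nontrivial direction by applying the \kl{witness-basis} property to $\term[1]$, using $\algclass' \models \term[1] \le \term[2]$ on the submodel, and then lifting $\tuple{x,y} \in \widehat{\val \restriction B}(\term[2])$ back to $\hat{\val}(\term[2])$ via the \kl{graph homomorphism} from \Cref{prop: glang}. Your explicit verification that $\widehat{\val \restriction B}(a) = \hat{\val}(a) \cap B^2$ holds for the complemented letters $\com{x}, \com{\eps}$ (so that composing with the inclusion $B \hookrightarrow A$ is again a homomorphism) is exactly the detail the paper compresses into its final sentence.
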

\begin{proof}
    ($\Longleftarrow$):
    By $\algclass' \subseteq \algclass$.
    ($\Longrightarrow$):
    Suppose that $\tuple{x, y} \in \hat{\val}(\term)$ and $\val \in \algclass$.
    As $\algclass'$ is a \kl{witness-basis} of $\algclass$ for $\term$,
    we have that $(\val \restriction B) \in \algclass'$ and $\tuple{x, y} \in \widehat{\val \restriction B}(\term)$ for some $B$.
    By $\algclass' \models \term[1] \le \term[2]$,
    we have $\tuple{x, y} \in \widehat{\val \restriction B}(\term[2])$.
    By the \kl{graph homomorphism} from \Cref{prop: glang}, we have $\tuple{x, y} \in \hat{\val}(\term[2])$.
\end{proof}

\begin{ex}\label{ex: surjective}
    (See also \cite[Lem.\  21]{nakamuraExistentialCalculiRelations2023} when $\algclass = \REL$.)
    For a class $\algclass \subseteq \REL$ and a \kl{graph language} $\glang$, let
    \[\mathsf{Sur}_{\glang}(\algclass) \defeq \set*{ \val \in \algclass \;\middle|\; \begin{aligned}
    &\mbox{there are some $\graph[2] \in \glang$, $x$, $y$, and $h$ such that}\\
    &\mbox{\quad $h \colon \graph[2] \homo \const{\graph}(\val, x, y)$ and $h$ is surjective}
    \end{aligned}}.\]
    When $\algclass \subseteq \REL$ is \kl{submodel-closed},
    for all $\mathrm{PCoR}_{\set{\bl^{*}, \com{\eps}, \com{x}}}$ \kl{terms} $\term[1]$ and $\term[2]$,
    we have that
    $\mathsf{Sur}_{\glang(\term)}(\algclass)$ is a \kl{witness-basis} of $\algclass$ for $\term[1]$
    (Proof: Let $\tuple{x, y} \in \hat{\val}(\term[1])$ where $\val \in \algclass$.
    By \Cref{prop: glang}, there is a \kl{graph homomorphism} $h \colon \graph[2] \homo \const{G}(\val, x, y)$ for some $\graph[2] \in \glang(\term[1])$.
    Then, $(\val \restriction h(\graph[2])) \in \mathsf{Sur}_{\glang(\term)}(\algclass)$,
    as $\algclass$ is \kl{submodel-closed} and $h$ is a \emph{surjective} \kl{graph homomorphism} for $\graph[2] \homo \const{G}(\val \restriction h(\graph[2]), x, y)$.
    Also, $\tuple{x, y} \in \reallywidehat{\val \restriction h(\graph[2])}(\term[1])$ holds by the same $h$.).
    We thus have:
    \[\mathsf{Sur}_{\glang(\term)}(\algclass) \models \term[1] \le \term[2] \quad\Longleftrightarrow\quad \algclass \models \term[1] \le \term[2].\]
\end{ex}
In the sequel, we use \Cref{prop: submodel cover} in the style of \Cref{ex: surjective}.
(See \Cref{section: Hoare hypothesis KA} for another example of \Cref{prop: submodel cover}.)

In \Cref{prop: submodel cover}, for every \kl{formula} $\fml$ such that $\algclass' \models \fml$, by $\algclass' \subseteq \algclass_{\fml} \subseteq \algclass$, we have:
\[\algclass_{\fml} \models \term[1] \le \term[2] \quad\Longleftrightarrow\quad \algclass \models \term[1] \le \term[2].\]
Hence, \Cref{prop: submodel cover} can be viewed as a hypothesis elimination.
In particular, we have 
$\algclass^{\mathrm{fin}} \models \term[1] \le \term[2]$ iff $\algclass \models \term[1] \le \term[2]$,
because each \kl{graph} in $\glang(\term)$ is finite (and hence, $\mathsf{Sur}_{\glang(\term)}(\algclass) \subseteq \algclass^{\mathrm{fin}} \subseteq \algclass$).
\Cref{prop: upper bound} can be also viewed as a corollary of this hypothesis elimination.

\subsection{Eliminating substitution hypotheses}\label{section: elimination substitution}
We recall the hypothesis elimination via substitution.
The idea is found in, e.g., \cite{hardinEliminationHypothesesKleene2002}
for KAT and
\cite[Lem.\ 3.8 (iv) (v)]{pousToolsCompletenessKleene2024} for a framework of Kleene algebra with hypotheses (see also \Cref{section: substitution examples} for more details).
Based on substitutions, we give a hypothesis elimination for the form $x = \term[3]$.
Given a \kl{variable} $x$ and a \kl{term} $\term[3]$, for a \kl{term} $\term[1]$,
we write $\term[1][\term[3]/x]$ for the \kl{term} $\term$ in which each \kl{variable} $x$ has been replaced with a \kl{term} $\term[3]$.
For a \kl{valuation} $\val$, let $\val[\term[3]/x](y) \defeq \begin{cases}
        \hat{\val}(\term[3]) & (y = x)                         \\
        \hat{\val}(y)        & (\mbox{otherwise})
    \end{cases}$.
By easy induction on $\term[1]$, we have $\widehat{\val[\term[3]/x]}(\term[1]) = \hat{\val}(\term[1][\term[3]/x])$,
so $\val[\term[3]/x] \models \term[1] = \term[2]$ iff $\val \models \term[1][\term[3]/x] = \term[2][\term[3]/x]$.
For a class $\algclass[1]$ of \kl{valuations}, we write $\algclass[1][\term[3]/x] = \set{\val[\term[3]/x] \mid \val \in \algclass}$.
We then have:
\[\algclass[1][\term[3]/x] \models \term[1] = \term[2] \quad\Longleftrightarrow\quad \algclass[1] \models \term[1][\term[3]/x] = \term[2][\term[3]/x].\]
Under certain conditions, we have $\algclass[1][\term[3]/x] = \algclass_{x = \term[3]}$,
and thus we can eliminate the \kl{hypothesis} $x = \term[3]$, as follows.
\begin{prop}\label{prop: REL axiom = substitution}
    Let $\algclass[1]$ be a class of \kl{valuations}.\footnote{More generally, this proposition holds on any universal algebras,
    as it does not depend on $\algclass[1]$ nor $\term[3]$.}
    Let $x$ be a \kl{variable} and $\term[3]$ be a \kl{term}.
    If $\algclass[1] \models \term[3] = \term[3][\term[3]/x]$ and $\algclass[1][\term[3]/x] \subseteq \algclass[1]$,
    then $\algclass[1]_{x = \term[3]} = \algclass[1][\term[3]/x]$.
    Consequently, we have the following: 
    \[\algclass[1]_{x = \term[3]} \models \term[1] = \term[2] \quad\Longleftrightarrow\quad \algclass \models \term[1][\term[3]/x] = \term[2][\term[3]/x].\]
\end{prop}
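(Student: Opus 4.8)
The plan is to prove the set equality $\algclass[1]_{x = \term[3]} = \algclass[1][\term[3]/x]$ directly, by establishing the two inclusions, and then to read off the displayed equivalence by combining this equality with the substitution equivalence already recorded just above the statement, namely $\algclass[1][\term[3]/x] \models \term[1] = \term[2] \Leftrightarrow \algclass[1] \models \term[1][\term[3]/x] = \term[2][\term[3]/x]$. Throughout I will use the substitution identity $\widehat{\val[\term[3]/x]}(\term[1]) = \hat{\val}(\term[1][\term[3]/x])$ obtained by the easy induction on \kl{terms} mentioned before the proposition.

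For the inclusion $\algclass[1][\term[3]/x] \subseteq \algclass[1]_{x = \term[3]}$, I would take an arbitrary $\val[\term[3]/x]$ with $\val \in \algclass[1]$. Its membership in $\algclass[1]$ is exactly the second hypothesis $\algclass[1][\term[3]/x] \subseteq \algclass[1]$, so it only remains to check $\val[\term[3]/x] \models x = \term[3]$, i.e.\ $\val[\term[3]/x](x) = \widehat{\val[\term[3]/x]}(\term[3])$. By definition of $\val[\term[3]/x]$ the left-hand side equals $\hat{\val}(\term[3])$, while by the substitution identity the right-hand side equals $\hat{\val}(\term[3][\term[3]/x])$. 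These two agree by the first hypothesis $\algclass[1] \models \term[3] = \term[3][\term[3]/x]$, which settles this inclusion.

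For the reverse inclusion $\algclass[1]_{x = \term[3]} \subseteq \algclass[1][\term[3]/x]$, I would take $\val \in \algclass[1]_{x = \term[3]}$, so $\val \in \algclass[1]$ and $\val(x) = \hat{\val}(\term[3])$. The key observation is that such a $\val$ is a \emph{fixpoint} of the substitution: comparing $\val[\term[3]/x]$ with $\val$ argument by argument, they agree on every \kl{variable} other than $x$ by definition, and at $x$ one has $\val[\term[3]/x](x) = \hat{\val}(\term[3]) = \val(x)$ using $\val \models x = \term[3]$. Hence $\val = \val[\term[3]/x]$, and since $\val \in \algclass[1]$ this exhibits $\val$ as an element of $\algclass[1][\term[3]/x]$. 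Note this direction uses no hypothesis beyond the definitions.

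Finally, the displayed equivalence follows by rewriting: $\algclass[1]_{x = \term[3]} \models \term[1] = \term[2]$ coincides with $\algclass[1][\term[3]/x] \models \term[1] = \term[2]$ by the set equality just proved, and the latter is equivalent to $\algclass[1] \models \term[1][\term[3]/x] = \term[2][\term[3]/x]$ by the substitution equivalence stated before the proposition. I do not anticipate a genuine obstacle; the argument is purely set-theoretic and holds in any universal algebra (as the footnote observes). The only point needing care is bookkeeping: matching each inclusion to the hypothesis it actually consumes, so that the membership assumption $\algclass[1][\term[3]/x] \subseteq \algclass[1]$ and the fixpoint equation $\term[3] = \term[3][\term[3]/x]$ are both used precisely in the forward inclusion, while the reverse inclusion is unconditional.
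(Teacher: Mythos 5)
Your proof is correct and follows essentially the same route as the paper: one inclusion is the unconditional fixpoint observation ($\val \in \algclass[1]_{x = \term[3]}$ implies $\val = \val[\term[3]/x]$), the other combines the hypothesis $\algclass[1][\term[3]/x] \subseteq \algclass[1]$ with the substitution identity and $\algclass[1] \models \term[3] = \term[3][\term[3]/x]$, and the displayed equivalence then follows from the recorded substitution equivalence. Your version merely spells out the steps the paper labels as trivial, and your bookkeeping of which hypothesis each inclusion consumes matches the paper exactly.
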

\begin{proof}
    ($\subseteq$):
    Trivial, as $\val \in \algclass[1]_{x = \term[3]}$ implies $\val = \val[\term[3]/x]$.
    ($\supseteq$):
    Let $\val \in \algclass$.
    By $\val[\term[3]/x] \in \algclass[1][\term[3]/x] \subseteq \algclass$
    and $\widehat{\val[\term[3]/x]}({x = \term[3]}) \,=\, \hat{\val}({\term[3] = \term[3][\term[3]/x]}) \,=\, \const{true}$,
    we have $\val[\term[3]/x] \in \algclass_{x = \term[3]}$. %
\end{proof}
The condition ``$\algclass[1][\term[3]/x] \subseteq \algclass[1]$'' is crucial in \Cref{prop: REL axiom = substitution}
(for example, when $\algclass = \REL_{x y = \emp}$ and $\val \in \algclass$ satisfies $\val(x) = \emptyset$ and $\val(y) = \hat{\val}(\eps)$,
we have $\val[y/x] \not\in \algclass[1][y/x] \setminus \algclass$).
In particular, when $x$ is \kl{fresh} in $\algclass[1]$, we have $\algclass[1][\term[3]/x] \subseteq \algclass[1]$.

Below we give some examples of \Cref{prop: REL axiom = substitution} (see also \Cref{section: substitution examples}, for more examples).
\begin{ex}\label{ex: substitution id by loop}
    For hypotheses of the form $x \le \eps$ (where $x$ is a \kl{variable}),
    when $x \not\in \vsig(\Gamma)$,
    by $\REL \models x \le \eps \leftrightarrow x = x^{\lop}$ and $\REL \models x^{\lop} = x^{\lop}[x^{\lop}/x]$ (and $\REL_{\Gamma}[x^{\lop}/x] \subseteq \REL_{\Gamma}$ as $x$ is \kl{fresh} in $\REL_{\Gamma}$),
    we have:
    \[\REL_{\Gamma, x \le \eps} \models \term[1] = \term[2] \quad\Longleftrightarrow\quad \REL_{\Gamma} \models (\term[1] = \term[2])[x^{\lop}/x].\]
    Moreover, when $\term[2]$ is a $\PCoR_{\set{\bl^{*}, \com{\eps}, \com{x}}}$ \kl{term} and $\com{x}$ does not occur in $\term[2]$, we also have:
    \[\REL_{\Gamma, x \le \eps} \models \term[1] \le \term[2] \quad\Longleftrightarrow\quad \REL_{\Gamma} \models (\term[1] \le \term[2])[x^{\lop}/x] \quad\Longleftrightarrow\quad \REL_{\Gamma} \models \term[1][x^{\lop}/x] \le \term[2],\]
    as $\REL \models \term[2][x^{\lop}/x] \le \term[2]$ and $\REL_{x \le \eps} \models \term[1] = \term[1][x^{\lop}/x]$ (by easy induction on $\term[2]$ and $\term[1]$, respectively).
\end{ex}
\begin{ex}\label{ex: substitution id by dom}
    For hypotheses of the form $x \le \eps$,
    when $x \not\in \vsig(\Gamma)$,
    by $\REL \models x \le \eps \leftrightarrow x = x^{\dom}$ and $\REL \models x^{\dom} = x^{\dom}[x^{\dom}/x]$,
    we have:
    \[\REL_{\Gamma, x \le \eps} \models \term[1] = \term[2] \quad\Longleftrightarrow\quad \REL_{\Gamma} \models (\term[1] = \term[2])[x^{\dom}/x].\]
    When $\term[2]$ is a $\PCoR_{\set{\bl^{*}, \com{\eps}, \com{x}}}$ \kl{term} and $\com{x}$ does not occur in $\term[2]$, we also have:
    \[\REL_{\Gamma, x \le \eps} \models \term[1] \le \term[2] \quad\Longleftrightarrow\quad \REL_{\Gamma} \models (\term[1] \le \term[2])[x^{\dom}/x] \quad\Longleftrightarrow\quad \REL_{\Gamma} \models \term[1][x^{\dom}/x] \le \term[2].\]
\end{ex}

\begin{ex}[implicitly used in {\cite[Section 5]{nakamuraUndecidabilityPositiveCalculus2024}}]\label{ex: substitution tests}
    Based on \Cref{ex: substitution id by loop,ex: substitution id by dom}, we can encode \kl{tests} using the \kl{loop operator} ($\lop$) or the \kl{domain operator} ($\dom$).
    (See also, e.g., \Cref{ex: tests} (for $\REL$) 
    and \cite[Section 4.2]{pousToolsCompletenessKleene2024} (for general \kl{Kleene algebras}), for other encoding of tests.)
    We recall the set $\psig$ and $\mathtt{test}$ in \Cref{section: PWP with loop}.
    Suppose that $\psig = \set{p_0, \dots, p_{2n-1}}$
    and the fixpoint free involution $\tilde{\bl}$ satisfies $\tilde{p}_i = p_{(n+i) \bmod 2n}$.
    To eliminate $\mathtt{test}$,
    let $q_0, \dots, q_{n-1}$ be \kl{fresh} \kl{variables} in $\term[1] \le \term[2]$ and let $\Theta$ be the set of \kl{equations} given by:
    \[\Theta \defeq \set{p_i = q_i^{\lop}, p_{n+i} = \com{q}_i^{\lop} \mid i \in \range{0, n-1}}.\]
    We then have:
    \begin{align*}
        &\REL_{\mathtt{test}} \models \term[1] \le \term[2]\\
        &\Leftrightarrow \REL_{\mathtt{test}, \Theta} \models \term[1] \le \term[2] \tag{$\Rightarrow$: Trivial. $\Leftarrow$: By replacing $\val(q_i)$ with $\val(p_i)$ for each $\val \in \REL_{\mathtt{test}}$}\\
        &\Leftrightarrow \REL_{\Theta} \models \term[1] \le \term[2] \tag{$\Leftarrow$: Trivial. $\Rightarrow$: By $\REL_{\Theta} \models \mathtt{test}$ ($\bigstar$)} \\
        &\Leftrightarrow \REL \models (\term[1] \le \term[2])[q_0^{\lop}, \dots, q_{n-1}^{\lop}, \com{q}_0^{\lop}, \dots, \com{q}_{n-1}^{\lop}/p_0, \dots, p_{2n-1}]. \tag{\Cref{prop: REL axiom = substitution}} 
    \end{align*}
    Here, ($\bigstar$) is shown by $\REL \models q^{\lop} \cap \bar{q}^{\lop} = \emp \land q^{\lop} \union \bar{q}^{\lop} = \eps$.

    Additionally, there exist some minor changes in the reduction above.
    For example, we can replace $\bl^{\lop}$ with $\bl^{\dom}$ in the reduction above, by $\REL \models q^{\dom} \cap \bar{q}^{\dom} = \emp \land q^{\dom} \union \bar{q}^{\dom} = \eps$.
    Also, when $\term[2]$ is a $\PCoR_{\set{\bl^{*}, \com{\eps}, \com{x}}}$ \kl{term} and $\com{p}_0, \dots, \com{p}_{2n-1}$ do not occur in $\term[2]$, the reduction works even if the right-hand side term has been replaced with $\term[2][q_0, \dots, q_{n-1}, \com{q}_0, \dots, \com{q}_{n-1}/p_0, \dots, p_{2n-1}]$, based on \Cref{ex: substitution id by loop}.
\end{ex}

\begin{rem}\label{rem: id}
    For general \kl{Kleene algebras}, other hypotheses eliminations for $x \le \eps$ are given \cite[Section 4]{cohenHypothesesKleeneAlgebra1994}\cite[Theorem 2]{kozenKleeneAlgebraEquations2014}\cite[Lemma 3.8 (i) (and (ii))]{pousToolsCompletenessKleene2024}.
    However, it looks like that they (explicitly or implicitly) rely on the soundness of \kl{Kleene algebras} with hypotheses with respect to a closure model of (word) languages \cite[Theorem 2.4]{pousToolsCompletenessKleene2024}.
    This soundness fails with respect to  $\REL$, unfortunately.
    For instance, $\REL_{x \le \eps} \models x = xx$ holds,
    but the $\set{x \le \eps}$-closure \cite[Definition 2.2]{pousToolsCompletenessKleene2024} of $x$ and $xx$ are the languages given by regular expressions $x^* x x^*$ and $x^* x x^* x x^*$, respectively (they are clearly different sets), cf.\ \cite[p.\ 3]{hardinEliminationHypothesesKleene2002}.
    Nevertheless, \Cref{ex: substitution tests} (and \Cref{ex: tests}) work with respect to $\REL$,
    by using extra operators.
\end{rem}

\section{Eliminating Loop Hypothesis under Certain Conditions}\label{section: Hypothesis elimination using graph loops}
We say that a \intro*\kl{loop hypothesis} is an \kl{equation} of the form $\term[3] \ge \eps$.
In this section, we consider eliminating \kl{loop hypotheses}.
Multiple \kl{loop hypotheses} can be into one \kl{loop hypothesis} by the equivalence:
$\REL \models (\term[1]\ge \eps \land \term[2]\ge \eps) \leftrightarrow \term[1]^{\lop} \term[2]^{\lop} \ge \eps$
(also note that $\REL \models \term[3]^{\lop \lop} = \term[3]^{\lop}$).
We consider the following translation.\footnote{A similar translation can be found in \cite{cohenHypothesesKleeneAlgebra1994} for eliminating $x \le \eps$ in general Kleene algebras.}
\begin{defi}\label{defi: loop transformation}
    For $\PCoR_{\set{\bl^{*}, \com{\eps}, \com{x}}}$ \kl{terms} $\term$ and $\term[3]$,
    let $\Tr_{\term[3]}(\term)$ be the $\PCoR_{\set{\bl^{*}, \com{\eps}, \com{x}}}$ \kl{term} defined as follows:
    \begin{align*}
        \Tr_{\term[3]}(x)                       & = \term[3] x \term[3]  \ \mbox{ for $x \in \tilde{\vsig}_{\eps}$}, &
        \Tr_{\term[3]}(\term[1]^{\smile})       & = \Tr_{\term[3]}(\term[1])^{\smile},& 
        \Tr_{\term[3]}(\emp)                   & = \emp,                     \\
        \Tr_{\term[3]}(\term[1] \cap \term[2])  & = \Tr_{\term[3]}(\term[1]) \cap \Tr_{\term[3]}(\term[2]), &
        \Tr_{\term[3]}(\term[1] \union \term[2]) & = \Tr_{\term[3]}(\term[1]) \union \Tr_{\term[3]}(\term[2]),      \\
        \Tr_{\term[3]}(\term[1] \compo \term[2]) & = \Tr_{\term[3]}(\term[1]) \compo \Tr_{\term[3]}(\term[2]), &
        \Tr_{\term[3]}(\term[1]^*)             & = \Tr_{\term[3]}(\term[1])^{*} \term[3].
    \end{align*}
\end{defi}
Below we list some easy facts.
\begin{prop}\label{prop: loop transformation}
    For all $\PCoR_{\set{\bl^{*}, \com{\eps}, \com{x}}}$ \kl{terms} $\term[1]$, $\term[3]$, $\term[3]_1$, $\term[3]_2$, we have:
    \begin{enumerate}
        \item \label{prop: loop transformation monotone} $\REL \models \term[3]_1 \le \term[3]_2 \to \Tr_{\term[3]_1}(\term[1]) \le \Tr_{\term[3]_2}(\term[1])$,
        \item \label{prop: loop transformation id} $\REL \models \Tr_{\eps}(\term[1]) = \term[1]$,
        \item \label{prop: loop transformation sub} $\REL \models \Tr_{\term[3]^{\lop}}(\term[1]) \le \term[1]$,
        \item \label{prop: loop transformation sup} $\REL \models \term[3] \ge \eps \to \term[1] \le \Tr_{\term[3]^{\lop}}(\term[1])$.
    \end{enumerate}
\end{prop}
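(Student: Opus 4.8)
The plan is to establish all four parts by a single routine structural induction on the term $\term[1]$, using only that the relational operations $\compo, \union, \cap, \bl^{\smile}, \bl^{\kstar}$ are monotone with respect to $\subseteq$ in $\REL$ and that $\eps$ denotes the identity relation; no graph-language machinery is needed.

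For part~1 (monotonicity) I would fix $\val \in \REL$ with $\hat{\val}(\term[3]_1) \subseteq \hat{\val}(\term[3]_2)$ and induct on $\term[1]$. In the base case $\term[1] = x$ with $x \in \tilde{\vsig}_{\eps}$ we have $\Tr_{\term[3]_1}(x) = \term[3]_1 x \term[3]_1$ and $\Tr_{\term[3]_2}(x) = \term[3]_2 x \term[3]_2$, so the inclusion is immediate from monotonicity of composition. The cases $\union, \compo, \cap, \bl^{\smile}$ follow directly from the induction hypothesis, since each of these operations preserves $\subseteq$. The only case needing a moment's care is $\bl^{\kstar}$: here $\Tr_{\term[3]_i}(\term[1]^{\kstar}) = \Tr_{\term[3]_i}(\term[1])^{\kstar} \term[3]_i$, and the inclusion follows by combining the induction hypothesis, the monotonicity of $\bl^{\kstar}$ (namely $R \subseteq S$ implies $R^{\kstar} \subseteq S^{\kstar}$), and monotonicity of composition in the trailing factor $\term[3]_i$.

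For part~2 (the identity $\Tr_{\eps}(\term[1]) = \term[1]$) the same induction specialized to $\term[3] = \eps$ works: the base case uses $\eps x \eps = x$, the operation cases are definitional, and the star case reads $\Tr_{\eps}(\term[1]^{\kstar}) = \Tr_{\eps}(\term[1])^{\kstar} \eps = \term[1]^{\kstar} \eps = \term[1]^{\kstar}$ by the induction hypothesis. For part~3 the key observation is that $\term[3]^{\lop} = \term[3] \cap \eps \le \eps$, so $\term[3]^{\lop}$ always denotes a subrelation of the identity; the induction is then identical to the previous one with $\eps$ replaced by $\term[3]^{\lop}$ and every equality weakened to an inclusion (e.g.\ $\term[3]^{\lop} x \term[3]^{\lop} \le \eps x \eps = x$ in the base case, and the trailing $\term[3]^{\lop} \le \eps$ in the star case).

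Finally, I would obtain part~4 as a corollary of parts~1 and~2 rather than by a fresh induction. Under the hypothesis $\term[3] \ge \eps$ we have $\hat{\val}(\term[3]) \supseteq \hat{\val}(\eps)$, hence $\hat{\val}(\term[3]^{\lop}) = \hat{\val}(\term[3]) \cap \hat{\val}(\eps) = \hat{\val}(\eps)$, that is, $\term[3]^{\lop} = \eps$ holds under this hypothesis. Applying part~1 in both directions then gives $\Tr_{\term[3]^{\lop}}(\term[1]) = \Tr_{\eps}(\term[1])$, and part~2 rewrites the right-hand side as $\term[1]$; in particular $\term[1] \le \Tr_{\term[3]^{\lop}}(\term[1])$, as claimed. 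There is no genuine obstacle here—the only point requiring uniform attention across the three inductions is the trailing $\term[3]$-factor introduced by the star clause of $\Tr_{\term[3]}$, which in each case is controlled by the monotonicity of the Kleene star.
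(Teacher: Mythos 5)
Your proof is correct and follows essentially the same route as the paper: parts 1 and 2 by straightforward induction on $\term[1]$, and part 4 as a corollary of these two. The only (harmless) difference is part 3, which you prove by a separate induction, whereas the paper obtains it directly from parts 1 and 2 via $\term[3]^{\lop} \le \eps$, namely $\Tr_{\term[3]^{\lop}}(\term[1]) \le \Tr_{\eps}(\term[1]) = \term[1]$.
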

\begin{proof}
\ref{prop: loop transformation monotone}, \ref{prop: loop transformation id}:
By easy induction on $\term$.
\ref{prop: loop transformation sub}, \ref{prop: loop transformation sup}:
By \ref{prop: loop transformation monotone} and \ref{prop: loop transformation id}.
\end{proof}
We recall the \kl{graph languages} for $\PCoR_{\set{\bl^{*}, \com{\eps}, \com{x}}}$ (\Cref{section: graph languages}).
In this view, the translation $\Tr_{\term[3]^{\lop}}(\term)$ extends each vertex of \kl{graphs} of $\glang(\term[1])$ with some \kl{graph} of $\glang(\term[3]^{\lop})$, as follows.
\begin{defi}\label{defi: loop transformation graph}
    For a \kl{graph} $\graph[2]$ and a map $f$ from $\domain{\graph[2]}$ to \kl{graphs},
    the \kl{(graph) loop extension} of $\graph[2]$ with respect to $f$, written $\graph[2][f]$, is the \kl{graph} $\graph[2]$ in which $f(z)$ is glued to $\graph[2]$ by merging $\src^{f(z)}$, $\tgt^{f(z)}$, and $z$, for each $z \in \domain{\graph[2]}$.
    \Cref{figure: graph loop} gives an illustrative example where edge labels are omitted:
    \begin{figure}[ht]
        \begin{align*}
            \graph[2]    & = \begin{tikzpicture}[baseline = .5ex]
                                 \tikzstyle{mynodeg} = [mynode, fill= gray!20, draw, circle]
                                 \graph[grow right = .8cm, branch down = 2.5ex, nodes={font = \scriptsize}]{
                                 {1/{$1$}[mynodeg]}
                                 -!- {2/{$2$}[mynodeg, yshift = 4ex]}
                                 -!- {3/{$3$}[mynodeg]}
                                 };
                                 \node[left = 4pt of 1](1l){} edge[earrow, ->] (1);
                                 \node[right = 4pt of 3](3l){}; \path (3) edge[earrow, ->] (3l);
                                 \graph[use existing nodes, edges={color=black, pos = .5, earrow}, edge quotes={fill=white, inner sep=1pt,font= \scriptsize}]{
                                     1 -> 2 -> 3; 1 -> 3;
                                 };
                             \end{tikzpicture},                                   &
            \graph[2][f] & = \begin{tikzpicture}[baseline = .5ex]
                                 \tikzstyle{mynodeg} = [mynode, fill= gray!20, draw, circle]
                                 \graph[grow right = .8cm, branch down = 2.5ex, nodes={font = \scriptsize}]{
                                 {1/{$1$}[mynodeg]}
                                 -!- {2/{$2$}[mynodeg, yshift = 4ex]}
                                 -!- {3/{$3$}[mynodeg]}
                                 };
                                 \node[left = 4pt of 1](1l){} edge[earrow, ->] (1);
                                 \node[right = 4pt of 3](3l){}; \path (3) edge[earrow, ->] (3l);
                                 \node[above = .7em of 1, inner sep = 1pt](1e1){\scriptsize $f(1)$};
                                 \node[above = .7em of 2, inner sep = 1pt](2e2){\scriptsize $f(2)$};
                                 \node[above = .7em of 3, inner sep = 1pt](3e3){\scriptsize $f(3)$};
                                 \graph[use existing nodes, edges={color=black, pos = .5, earrow}, edge quotes={fill=white, inner sep=1pt,font= \scriptsize}]{
                                 1 -> 2 -> 3; 1 -> 3;
                                 2 -- [bend right] 2e2 ->[bend right] 2;
                                 1 -- [bend right] 1e1 ->[bend right] 1;
                                 3 -- [bend right] 3e3 ->[bend right] 3;
                                 };
                             \end{tikzpicture}.
        \end{align*}
        \vspace{-2ex}
        \caption{Illustrative example of \kl{loop extensions}.}
    \label{figure: graph loop}
    \end{figure}
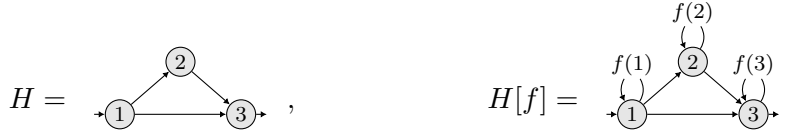
\end{defi}
Let $\glang_{\term[3]}(\term[1]) \defeq \set{\graph[2][f] \mid \mbox{$\graph[2] \in \glang(\term[1]) \ \land\  f \colon \domain{\graph[2]} \to \glang(\term[3]^{\lop})$}}$.
We have the following:
\begin{prop}\label{prop: glang u}
    Let $\val \in \REL$.
    For all $\mathrm{PCoR}_{\set{*, \com{\eps}, \com{x}}}$ \kl{terms} $\term[1]$ and $\term[3]$,
    we have $\hat{\val}(\Tr_{\term[3]^{\lop}}(\term[1])) = \hat{\val}(\glang_{\term[3]}(\term[1]))$.
\end{prop}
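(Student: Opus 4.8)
The plan is to reduce the identity to a statement about a \kl{submodel} and then prove that statement by induction on $\term[1]$. Write $d \defeq \term[3]^{\lop}$. Since $d = \term[3] \cap \eps$, the relation $\hat{\val}(d)$ is a sub-identity; let $P \defeq \set{w \mid \tuple{w, w} \in \hat{\val}(d)}$ be its domain, so that $\hat{\val}(d) = \triangle_{P}$. First I would give the right-hand side a homomorphism reading. Unfolding the \kl{loop extension} of \Cref{defi: loop transformation graph}, a pair $\tuple{x, y}$ lies in $\hat{\val}(\graph[2][f])$ exactly when there is a \kl{graph homomorphism} $h \colon \graph[2] \homo \const{G}(\val, x, y)$ together with, for each vertex $z$, a homomorphism of $f(z)$ witnessing $\tuple{h(z), h(z)} \in \hat{\val}(f(z))$. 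Taking the union over all $f \colon \domain{\graph[2]} \to \glang(d)$ and using \Cref{prop: glang} for $d$ (so that $\bigcup_{H \in \glang(d)} \hat{\val}(H) = \hat{\val}(d) = \triangle_{P}$), this says precisely that $\tuple{x, y} \in \hat{\val}(\glang_{\term[3]}(\term[1]))$ iff some $\graph[2] \in \glang(\term[1])$ maps into $\const{G}(\val, x, y)$ by a homomorphism whose image is contained in $P$. Reading such homomorphisms as homomorphisms into the \kl{submodel} $\val \restriction P$ and invoking \Cref{prop: glang} once more for $\term[1]$ yields $\hat{\val}(\glang_{\term[3]}(\term[1])) = \widehat{\val \restriction P}(\term[1])$ (when $P = \emptyset$ both sides are empty, as every \kl{graph} in a \kl{graph language} is non-empty, so this case is dispatched separately).

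It thus remains to prove $\hat{\val}(\Tr_{d}(\term[1])) = \widehat{\val \restriction P}(\term[1])$ by induction on $\term[1]$. For a \kl{variable} or constant $a \in \tilde{\vsig}_{\eps}$ we have $\Tr_{d}(a) = d\, a\, d$, whose value is $\triangle_{P} \compo \hat{\val}(a) \compo \triangle_{P} = \hat{\val}(a) \cap P^2 = \widehat{\val \restriction P}(a)$; the case $\emp$ is immediate. The cases for $\compo$, $\union$, $\cap$, and $\bl^{\smile}$ are routine: $\Tr_{d}$ commutes with each of these operators by \Cref{defi: loop transformation}, and $\widehat{\val \restriction P}$ is a homomorphism, so the identity propagates through by the induction hypothesis.

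The star case is the crux and the main obstacle, and it is exactly where the trailing factor in $\Tr_{d}(\term[1]^{*}) = \Tr_{d}(\term[1])^{*}\, d$ does its work. By the induction hypothesis, $R \defeq \hat{\val}(\Tr_{d}(\term[1])) = \widehat{\val \restriction P}(\term[1]) \subseteq P^2$. Evaluating $\bl^{*}$ in the full model $\val$, whose universe is some $U \supseteq P$, gives $\hat{\val}(\Tr_{d}(\term[1])^{*}) = R^{*} = \triangle_{U} \cup R^{+}$, where the reflexive part is the identity on the whole universe; composing on the right with $\hat{\val}(d) = \triangle_{P}$ trims $\triangle_{U}$ down to $\triangle_{P}$ while leaving $R^{+}$ untouched (as $R^{+} \subseteq P^2$), so $\hat{\val}(\Tr_{d}(\term[1])^{*}\, d) = \triangle_{P} \cup R^{+}$. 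This is precisely $(\widehat{\val \restriction P}(\term[1]))^{*}$ computed inside the \kl{submodel} $\val \restriction P$, where $\bl^{*}$ adds the identity $\triangle_{P}$ on the restricted universe, i.e.\ $\widehat{\val \restriction P}(\term[1]^{*})$. I expect this mismatch between the ambient identity $\triangle_{U}$ reintroduced by $\bl^{*}$ and the sub-identity $\triangle_{P}$ to be the only genuinely delicate point; at the level of \kl{graphs} it corresponds to the fact that gluing several $\term[3]^{\lop}$-loops at one vertex has the same effect under $\hat{\val}$ as gluing a single one, because $\hat{\val}(\term[3]^{\lop})$ is an idempotent sub-identity.
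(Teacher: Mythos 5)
Your proof is correct, but it takes a genuinely different route from the paper's. The paper proves \Cref{prop: glang u} by a direct induction on $\term[1]$ that mirrors the proof of \Cref{prop: glang}: it first shows that $\hat{\val}$ commutes with \kl{series-composition} and \kl{parallel-composition} of \kl{graphs}, and that unions over loop-assignment functions $f \colon \domain{\graph[2]} \to \glang(\term[3]^{\lop})$ distribute over these compositions, and then pushes the equality $\hat{\val}(\Tr_{\term[3]^{\lop}}(\term[1])) = \hat{\val}(\glang_{\term[3]}(\term[1]))$ through each syntactic case. You instead interpolate a third quantity, $\widehat{\val \restriction P}(\term[1])$ with $P$ the domain of the sub-identity $\hat{\val}(\term[3]^{\lop})$: the graph-language side is reduced to it by two black-box applications of \Cref{prop: glang} (once for $\term[3]^{\lop}$, once for $\term[1]$ over the \kl{submodel}), and the term side by a purely relational induction in which only the star case needs care---precisely where the trailing factor in $\Tr_{\term[3]}(\term[1]^{*}) = \Tr_{\term[3]}(\term[1])^{*}\term[3]$ trims the ambient identity $\triangle_{A}$ down to $\triangle_{P}$. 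Your route buys modularity (no need to redo the induction of \Cref{prop: glang} with loops attached) and a clean semantic reading---$\Tr_{\term[3]^{\lop}}$ is relativization to the domain of $\term[3]^{\lop}$---which incidentally explains the design of the star clause; the paper's route works uniformly at the graph level and avoids the two side conditions your argument incurs: (i) the degenerate case $P = \emptyset$, where $\val \restriction P$ is undefined and one must separately check (by an easy induction, since every atom translates to $\term[3]^{\lop} a\, \term[3]^{\lop}$ and the star clause ends in $\term[3]^{\lop}$) that $\hat{\val}(\Tr_{\term[3]^{\lop}}(\term[1])) = \emptyset$; and (ii) the fact that restriction commutes with the complemented atoms, i.e.\ $\widehat{\val \restriction P}(a) = \hat{\val}(a) \cap P^2$ for all $a \in \tilde{\vsig}_{\eps}$ including $\com{x}$ and $\com{\eps}$ (true because complement in the submodel is taken relative to $P^2$), which is what makes ``homomorphisms into $\const{G}(\val, x, y)$ with image in $P$'' the same as homomorphisms into $\const{G}(\val \restriction P, x, y)$. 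Both points are routine, so your argument stands as a valid alternative proof.
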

\begin{proof}
    By induction on $\term[1]$, in the same way as \Cref{prop: glang} \cite[Proposition 11]{nakamuraExistentialCalculiRelations2023}.
    (See \Cref{section: prop: glang u} for details.)
\end{proof}
Intuitively, the translation $\Tr_{\term[3]^{\lop}}(\term[1])$ encodes the \kl{loop hypothesis} $\term[3] \ge \eps$ ``\emph{partially}''.
Under \emph{certain conditions}, this translation corresponds to add $\term[3] \ge \eps$ exactly,
in that we have the following:
\begin{align*}
    \label{equation: loop HE} \algclass_{\term[3] \ge \eps} \models \term[1] \le \term[2] &\quad\Longleftrightarrow\quad \algclass \models \Tr_{\term[3]^{\lop}}(\term[1]) \le \term[2]. \tag{Loop HE}
\end{align*}
The direction $\Longleftarrow$ alway holds by \Cref{prop: loop transformation}.\ref{prop: loop transformation sup}.
For the direction $\Longrightarrow$, we need some conditions.
Below are counter-examples such that the direction $\Longrightarrow$ fails:
\begin{itemize}
    \item $\REL_{a b \ge \eps} \models \eps \le a a b b$, but $\REL \not\models \Tr_{(a b)^{\lop}}(\eps) \le a a b b$
    by the \kl{relational model}: \begin{tikzpicture}[baseline = -.5ex]
        \tikzstyle{mynodeg} = [mynode, fill= gray!20, draw, circle]
        \graph[grow right = .8cm, branch down = 4.ex, nodes={font = \scriptsize}]{
        {1/{}[mynodeg]} -!- {2/{}[mynodeg]}
        };
        \graph[use existing nodes, edges={color=black, pos = .5, earrow}, edge quotes={fill=white, inner sep=1pt,font= \scriptsize}]{
            1 ->["$a$", bend left] 2 ->["$b$", bend left] 1;
        };
    \end{tikzpicture}.
    \item $\REL_{\top \com{\eps} = \top, a \ge \eps} \models \eps \le \com{\eps} a \com{\eps}$, but $\REL_{\top \com{\eps} = \top} \not\models \Tr_{a^{\lop}}(\eps) \le \com{\eps} a \com{\eps}$ by the \kl{relational model}:
    \begin{tikzpicture}[baseline = -.5ex]
        \tikzstyle{mynodeg} = [mynode, fill= gray!20, draw, circle]
        \graph[grow right = .8cm, branch down = 4.ex, nodes={font = \scriptsize}]{
        {1/{}[mynodeg]} -!- {2/{}[mynodeg]}
        };
        \graph[use existing nodes, edges={color=black, pos = .5, earrow}, edge quotes={fill=white, inner sep=1pt,font= \scriptsize}]{
            1 ->["$a$", out = -150, in = 150, looseness = 15] 1;
        };
    \end{tikzpicture}.
\end{itemize}
Below, we give some sufficient conditions for the equivalence \eqref{equation: loop HE}.

\subsection{Condition 1: When all graph homomorphisms are surjective}\label{subsection: condition 1}
First, we give the following sufficient condition.
\begin{thm}\label{thm: hypothesis elimination using graph loops}
    Let $\algclass \subseteq \REL$.
    Let $\term[1]$, $\term[2]$, $\term[3]$ be $\PCoR_{\set{\bl^{*}, \com{\eps}, \com{x}}}$ \kl{terms}.
    Suppose that, for all $\val \in \algclass$, \kl{vertices} $o, o'$ of $\val$, $\graph[2] \in \glang(\term[1])$, and \kl{graph homomorphisms} $f \colon \graph[2] \homo \const{\graph}(\val, o, o')$ such that $\triangle_{f(\domain{\graph[2]})} \subseteq \hat{\val}(\term[3])$,
    we have that $f$ is \kl{surjective}.
    We then have:
    \[\algclass_{\term[3] \ge \eps} \models \term[1] \le \term[2] \quad\Longleftrightarrow\quad \algclass \models \Tr_{\term[3]^{\lop}}(\term[1]) \le \term[2].\]
\end{thm}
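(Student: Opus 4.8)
The plan is to prove the two inclusions of the stated equivalence separately; only the direction $\Longrightarrow$ uses the surjectivity hypothesis. The direction $\Longleftarrow$ is immediate from \Cref{prop: loop transformation}.\ref{prop: loop transformation sup}: if $\algclass \models \Tr_{\term[3]^{\lop}}(\term[1]) \le \term[2]$, then for every $\val \in \algclass_{\term[3] \ge \eps}$ we have $\hat{\val}(\term[1]) \subseteq \hat{\val}(\Tr_{\term[3]^{\lop}}(\term[1])) \subseteq \hat{\val}(\term[2])$, as desired. So I focus on $\Longrightarrow$.

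Assume $\algclass_{\term[3] \ge \eps} \models \term[1] \le \term[2]$, fix $\val \in \algclass$ with carrier $A$ (so $\val \colon \vsig \to \wp(A^2)$), and take a pair $\tuple{o, o'} \in \hat{\val}(\Tr_{\term[3]^{\lop}}(\term[1]))$; the goal is $\tuple{o, o'} \in \hat{\val}(\term[2])$. By \Cref{prop: glang u}, $\hat{\val}(\Tr_{\term[3]^{\lop}}(\term[1])) = \hat{\val}(\glang_{\term[3]}(\term[1]))$, so there are a base graph $\graph[2] \in \glang(\term[1])$, a map $f \colon \domain{\graph[2]} \to \glang(\term[3]^{\lop})$, and a \kl{graph homomorphism} $h \colon \graph[2][f] \homo \const{\graph}(\val, o, o')$ from the corresponding \kl{loop extension}.

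The crux is to restrict $h$ to the base graph. Set $h_0 \defeq h \restriction \domain{\graph[2]}$. Unfolding \Cref{defi: loop transformation graph}, the edges of $\graph[2]$ and its global source and target are preserved in $\graph[2][f]$, so $h_0$ is a \kl{graph homomorphism} $\graph[2] \homo \const{\graph}(\val, o, o')$. Furthermore, for each $z \in \domain{\graph[2]}$, restricting $h$ to the glued copy $f(z)$ — whose source and target are both the merged vertex $z$ — yields a homomorphism $f(z) \homo \const{\graph}(\val, h_0(z), h_0(z))$; hence $\tuple{h_0(z), h_0(z)} \in \hat{\val}(f(z)) \subseteq \hat{\val}(\term[3]^{\lop}) \subseteq \hat{\val}(\term[3])$ by \Cref{prop: glang}. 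This gives precisely $\triangle_{h_0(\domain{\graph[2]})} \subseteq \hat{\val}(\term[3])$.

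Now I invoke the hypothesis with $h_0$ in the role of the condition's homomorphism: since $\graph[2] \in \glang(\term[1])$ and $\triangle_{h_0(\domain{\graph[2]})} \subseteq \hat{\val}(\term[3])$, the map $h_0$ is \kl{surjective}, i.e.\ $h_0(\domain{\graph[2]}) = A$. Therefore $\hat{\val}(\eps) = \triangle_A \subseteq \hat{\val}(\term[3])$, that is $\val \models \term[3] \ge \eps$, so $\val \in \algclass_{\term[3] \ge \eps}$. At the same time, $h_0 \colon \graph[2] \homo \const{\graph}(\val, o, o')$ with $\graph[2] \in \glang(\term[1])$ gives $\tuple{o, o'} \in \hat{\val}(\term[1])$ by \Cref{prop: glang}; combining this with $\algclass_{\term[3] \ge \eps} \models \term[1] \le \term[2]$ yields $\tuple{o, o'} \in \hat{\val}(\term[2])$, which finishes the direction. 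I expect the main obstacle to be the bookkeeping in the restriction step — verifying via \Cref{defi: loop transformation graph} that $h_0$ is a well-defined homomorphism of the base graph with the correct source and target, and that each glued $f(z)$ indeed forces a $\term[3]$-self-loop at $h_0(z)$. These are exactly the two ingredients that let the surjectivity condition upgrade $\val$ into $\algclass_{\term[3] \ge \eps}$, thereby converting the ``partial'' encoding $\Tr_{\term[3]^{\lop}}$ into the genuine \kl{loop hypothesis} $\term[3] \ge \eps$.
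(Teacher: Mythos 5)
Your proof is correct and follows essentially the same route as the paper's: backward direction via \Cref{prop: loop transformation}.\ref{prop: loop transformation sup}, forward direction by extracting from \Cref{prop: glang u} a homomorphism of a graph in $\glang(\term[1])$ whose image carries $\term[3]$-self-loops, then invoking the surjectivity hypothesis to conclude $\val \models \term[3] \ge \eps$. The only difference is that you spell out the restriction of the homomorphism on the \kl{loop extension} $\graph[2][f]$ to the base graph $\graph[2]$, a bookkeeping step the paper's proof leaves implicit.
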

\begin{proof}
    ($\Longleftarrow$):
    By \Cref{prop: loop transformation}.\ref{prop: loop transformation sup}.
    ($\Longrightarrow$):
    Let $\tuple{o, o'} \in \hat{\val}(\Tr_{\term[3]^{\lop}}(\term))$ where $\val \in \algclass$ and $o, o'$ are \kl{vertices} of $\val$.
    By \Cref{prop: glang u}, there are a \kl{graph} $\graph[2] \in \glang(\term[1])$ and a \kl{graph homomorphism} $f \colon \graph[2] \homo \const{\graph}(\val, o, o')$ such that $\triangle_{f(\domain{\graph[2]})} \subseteq \hat{\val}(\term[3])$.
    By assumption, $f$ is \emph{\kl{surjective}}, and thus $\val \models \term[3] \ge \eps$.
    By $\val \in \algclass_{\term[3] \ge \eps}$, we have $\val \models \Tr_{\term[3]^{\lop}}(\term) \le \term \le \term[2]$.
\end{proof}
Note that the condition of \Cref{thm: hypothesis elimination using graph loops} is (anti-)monotone with respect to $\term[3]$:
if $\algclass \models \term[3]' \le \term[3]$ and we can apply \Cref{thm: hypothesis elimination using graph loops} for $\term[3]$, then we can apply it also for $\term[3]'$.
It is also (anti-)monotone with respect to $\algclass$: if $\algclass' \subseteq \algclass$ and we can apply \Cref{thm: hypothesis elimination using graph loops} for $\algclass$, then we can apply it also for $\algclass'$.
Moreover, if $\term[1]$ satisfies the condition of \Cref{thm: hypothesis elimination using graph loops}, then, for all $\term[1]'$ such that $\algclass \models \term[1]' \le \term[1]$,
the equivalence of \Cref{thm: hypothesis elimination using graph loops} holds.

Below we give some examples of \Cref{thm: hypothesis elimination using graph loops}.

\subsubsection{Example 1: On cycles}\label{section: loop hypotheses on cycles}
Let 
\begin{align*}
\algclass_{\mathtt{cyc}} &\defeq \REL_{a^{\smile} a \le \eps, a^* \ge \top},
\quad \tag*{(functionality of $a$, and strongly connected with respect to $a$)}\\
\term[1]_{\mathtt{cyc}} &\defeq (a^{+})^{\lop}. \quad \tag*{(cycles of $a$ edges)}
\end{align*}
We then have the following.
\begin{lem}\label{lem: cycle surjective}
    Let $\val \in \algclass_{\mathtt{cyc}}$,
    let $o$ be a \kl{vertex} in $\val$,
    let $f$ be a \kl{graph homomorphism} from a \kl{graph} $\graph[2] \in \glang(\term[1]_{\mathtt{cyc}})$ to $\const{G}(\val, o, o)$.
    Then, $f$ is \kl{surjective}.
\end{lem}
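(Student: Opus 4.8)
The plan is to first pin down the shape of the graphs in $\glang(\term[1]_{\mathtt{cyc}}) = \glang((a^{+})^{\lop})$, and then exploit the two defining hypotheses of $\algclass_{\mathtt{cyc}}$ one at a time. Unfolding the definitions, $\glang(a^{+}) = \bigcup_{n \ge 1}\glang(a^{n})$ consists of the directed $a$-paths $q_0 \overset{a}{\to} q_1 \overset{a}{\to} \dots \overset{a}{\to} q_n$ with source $q_0$ and target $q_n$; taking the parallel composition with $\glang(\eps)$, a single vertex that is both source and target, merges $q_0$ with $q_n$. Hence every $\graph[2] \in \glang(\term[1]_{\mathtt{cyc}})$ is a directed $a$-cycle $q_0 \overset{a}{\to} q_1 \overset{a}{\to} \dots \overset{a}{\to} q_{n-1} \overset{a}{\to} q_0$ of some length $n \ge 1$, with $\src^{\graph[2]} = \tgt^{\graph[2]} = q_0$ (for $n = 1$ this is a self-loop). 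I would record this structural observation as the first step.

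Since $f \colon \graph[2] \homo \const{G}(\val, o, o)$ is a \kl{graph homomorphism}, it must send the source/target $q_0$ to $o$ and send each edge $q_i \overset{a}{\to} q_{i+1}$ to an $a$-edge of $\val$, i.e.\ $\tuple{f(q_i), f(q_{i+1})} \in \val(a)$ (indices mod $n$). The next step uses the functionality hypothesis $a^{\smile} a \le \eps$: it makes $\val(a)$ a partial function, so from $f(q_0) = o$ the vertex $f(q_1)$ is forced to be the unique $a$-successor of $o$, and inductively $f(q_i)$ is the unique vertex with $\tuple{o, f(q_i)} \in \hat{\val}(a^{i})$. In particular $\tuple{o, o} \in \hat{\val}(a^{n})$, so the forward $a$-orbit of $o$ is periodic; using once more that $\val(a)$ is a partial function, this gives $\set{w \mid \tuple{o, w} \in \hat{\val}(a^{*})} = \set{f(q_0), \dots, f(q_{n-1})} = f(\domain{\graph[2]})$.

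Finally I would invoke strong connectivity $a^{*} \ge \top$: it yields $\hat{\val}(a^{*}) = \domain{\val}^2$, so for every \kl{vertex} $w$ of $\val$ we have $\tuple{o, w} \in \hat{\val}(a^{*})$, whence $w \in f(\domain{\graph[2]})$ by the previous step; thus $f$ is \kl{surjective}. The one delicate point — and the step I would write most carefully — is the identification of the image of $f$ with the full forward $a$-orbit of $o$: this is exactly where functionality of $\val(a)$ is essential, since it both forces $f(q_i)$ to be the unique $\hat{\val}(a^{i})$-reachable vertex from $o$ and guarantees that $a^{*}$-reachability from $o$ lands inside this finite periodic orbit rather than some larger set. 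The remaining work is routine unfolding of the graph-language semantics recalled in \Cref{prop: glang}.
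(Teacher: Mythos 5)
Your proof is correct and follows essentially the same route as the paper's: use the functionality hypothesis $a^{\smile}a \le \eps$ to show that the image of $f$ is exactly the forward $a$-orbit of $o$ (the paper phrases this via an auxiliary walk $g \colon \nat \to X$ through $f(\domain{\graph[2]})$ rather than via the explicit cycle shape of $\graph[2]$, but the content is identical), and then use $a^* \ge \top$ to conclude that this orbit is all of $X$. The only cosmetic difference is that you first spell out that every $\graph[2] \in \glang((a^{+})^{\lop})$ is a directed $a$-cycle, whereas the paper only needs that every vertex in $f(\domain{\graph[2]})$ has an outgoing $a$-edge; both observations serve the same purpose.
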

\begin{proof}
    Let $\val \colon \vsig \to \wp(X^2)$.
    For each $x \in f(\domain{\graph[2]})$, we have 
    $\# (\set{z \mid \tuple{x, z} \in \val(a)}) = 1$ by the functionality of $a$ (for $\le 1$) and the form of $\graph[2]$ (for $\ge 1$).
    From this, there (uniquely) exists a map $g \colon \nat \to X$ such that, for all $x \in \nat$, the following hold:
    \begin{gather*}
        g(0) = o, \qquad \tuple{g(x), g(x+1)} \in \val(a), \qquad g(x) \in f(\domain{\graph[2]}).
    \end{gather*}
    By the functionality of $a$,
    we have $\set{g(x) \mid x \in \nat} = \set{z \mid \tuple{o, z} \in \hat{\val}(a^*)}$.
    By $\val \models a^* \ge \top$, the right-hand side is equal to $X$, and thus $g$ is \kl{surjective}.
    Hence, $f$ is \kl{surjective}.
\end{proof}
Hence, we have obtained the following \kl{hypothesis elimination}, as an example of \Cref{thm: hypothesis elimination using graph loops}.
\begin{ex}\label{ex: cycle surjective}
    By \Cref{thm: hypothesis elimination using graph loops,lem: cycle surjective},
    for all $\algclass \subseteq \algclass_{\mathtt{cyc}}$ and $\PCoR_{\set{\bl^{*}, \com{\eps}, \com{x}}}$ \kl{terms} $\term[1]$, $\term[2]$ and $\term[3]$
    such that $\algclass \models \term[1] \le \term[1]_{\mathtt{cyc}}$,
    we have:
    \[\algclass_{\term[3] \ge \eps} \models \term[1] \le \term[2] \quad\Longleftrightarrow\quad \algclass \models \Tr_{\term[3]^{\lop}}(\term[1]) \le \term[2].\]
\end{ex}

\subsubsection{Example 2: On periodic grids}\label{section: loop hypotheses on grids}
We recall the \kl{hypotheses} $\mathtt{grid}$ in \Cref{subsection: grids}.
Let
\begin{align*}
    \algclass_{\mathtt{gr}} &\defeq \REL_{\aE \aW \le \eps, \aW \aE \le \eps, \aN \aS \le \eps, \aS \aN \le \eps, (\aE \union \aN \union \aW \union \aS)^* \ge \top}    \\
    \term[1]_{\mathtt{gr}} &\defeq (((\aN^{+})^{\lop} \aE)^{+})^{\lop}\\
    \term[3]_{\mathtt{gr}} &\defeq (\aE \aW)^{\lop} (\aW \aE)^{\lop} (\aN \aS)^{\lop} (\aS \aN)^{\lop} (\aE \aN \aW \aS)^{\lop}.
\end{align*}
Note that $(\algclass_{\mathtt{gr}})_{\term[3]_{\mathtt{gr}} \ge \eps} = \REL_{\mathtt{grid}}$ holds.
We then have the condition for \Cref{thm: hypothesis elimination using graph loops}, as follows.
This is the key lemma in \Cref{thm: undecidable while emptiness}.
\begin{lem}\label{lem: grid surjective}
    Let $\val \in \algclass_{\mathtt{gr}}$,
    let $o$ be a \kl{vertex} in $\val$,
    let $f$ be a \kl{graph homomorphism} from a \kl{graph} $\graph[2] \in \glang(\term[1]_{\mathtt{gr}})$ to $\const{G}(\val, o, o)$ such that $\triangle_{f(\domain{\graph[2]})} \subseteq \hat{\val}(\term[3]_{\mathtt{gr}})$.
    Then, $f$ is \kl{surjective}.
\end{lem}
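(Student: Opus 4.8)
The plan is to mirror the proofs of \Cref{lem: cycle surjective} and \Cref{prop: grid}. Writing $\val \colon \vsig \to \wp(X^2)$ and $Y \defeq f(\domain{\graph[2]})$, I would show that $o \in Y$, that $Y$ is closed under passing to $\aE$-, $\aN$-, $\aW$-, $\aS$-successors, and then conclude $Y = X$ from the strong connectivity hypothesis $(\aE \union \aN \union \aW \union \aS)^* \ge \top$ exactly as in the last displayed computation in the proof of \Cref{prop: grid} (closure plus $o\in Y$ gives $X \subseteq Y$, and $Y\subseteq X$ is trivial).

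First I would record the local consequences of $\triangle_{Y} \subseteq \hat{\val}(\term[3]_{\mathtt{gr}})$. Since $\term[3]_{\mathtt{gr}}$ is a composition of five subidentities (each of the form $\term^{\lop}$), the membership $\tuple{x,x}\in\hat\val(\term[3]_{\mathtt{gr}})$ forces $\tuple{x,x}$ into each factor; hence every $x \in Y$ satisfies $(\aE\aW)^{\lop}$, $(\aW\aE)^{\lop}$, $(\aN\aS)^{\lop}$, $(\aS\aN)^{\lop}$, and $(\aE\aN\aW\aS)^{\lop}$. Combining the first four with the partial-inverse hypotheses $\aE\aW\le\eps$, $\aW\aE\le\eps$, $\aN\aS\le\eps$, $\aS\aN\le\eps$ of $\algclass_{\mathtt{gr}}$, I would deduce that on $Y$ each of $\aE,\aN,\aW,\aS$ is total and functional: e.g.\ from $(\aE\aW)^{\lop}$ at $x$ there is $w$ with $\tuple{x,w}\in\aE$ and $\tuple{w,x}\in\aW$, and for any $\aE$-successor $y$ of $x$ the inequation $\aW\aE\le\eps$ gives $\tuple{w,y}\in\aW\aE\subseteq\triangle$, so $y=w$. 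Write $E(x),N(x),W(x),S(x)$ for these unique successors, noting $W,S$ are the partial inverses of $E,N$ on $Y$.

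Next I would read off the shape of $\graph[2]\in\glang(\term[1]_{\mathtt{gr}})$ from the definition of $\glang$: since $\term[1]_{\mathtt{gr}} = (((\aN^{+})^{\lop}\aE)^{+})^{\lop}$, the graph $\graph[2]$ is a cyclic sequence of ``columns'' $v_0,\dots,v_{m-1}$ with $\src^{\graph[2]}=\tgt^{\graph[2]}=v_0$ and an $\aE$-edge $v_i \to v_{(i+1) \bmod m}$, where at each $v_i$ there is an $\aN$-cycle of some length $k_i\ge 1$. Mapping through $f$ and using functionality of $\aN$ on $Y$, the $\aN$-cycle at $v_i$ becomes the full $\aN$-orbit $O_i \defeq \set{f(v_i), N(f(v_i)), \dots, N^{k_i-1}(f(v_i))}$, which closes up into a cycle; moreover $Y = \bigcup_{i} O_i$, since the vertices of $\graph[2]$ are exactly those of these $\aN$-cycles. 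In particular $Y$ is already closed under $\aN$ and (as each $O_i$ is an $N$-cycle) under $\aS$, and the $\aE$-edges give $E(f(v_i)) = f(v_{(i+1) \bmod m})$, so the column heads are closed under $\aE$.

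The hard part will be to propagate $\aE$-closure from the column heads to the whole orbits, i.e.\ to show $E(x)\in Y$ for every $x\in O_i$; this is where the $4$-cycle condition $(\aE\aN\aW\aS)^{\lop}$ enters, and it amounts to proving that $E$ and $N$ commute on $Y$. For $x \in Y$, unfolding $(\aE\aN\aW\aS)^{\lop}$ yields $a,b,c$ with $x \xrightarrow{\aE} a \xrightarrow{\aN} b \xrightarrow{\aW} c \xrightarrow{\aS} x$; functionality forces $a = E(x)$, and since $N(x)\in Y$ the inequations $\aS\aN\le\eps$ and $\aW\aE\le\eps$ identify $c = N(x)$ and then $b = E(N(x))$, so that $\tuple{E(x), E(N(x))}\in\aN$. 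I would then induct along each orbit with the strengthened hypothesis $E(N^{j}(f(v_i))) = N^{j}(f(v_{(i+1) \bmod m}))$: the base case $j=0$ is the $\aE$-edge above, and in the step, knowing $E(N^{j}(f(v_i)))\in Y$ lets me apply functionality of $\aN$ at $E(x)$ to upgrade $\tuple{E(x),E(N(x))}\in\aN$ to $E(N(x)) = N(E(x)) = N^{j+1}(f(v_{(i+1) \bmod m}))\in O_{(i+1)\bmod m}$. This gives $E(O_i)\subseteq O_{(i+1) \bmod m}\subseteq Y$, hence $\aE$-closure; since $E$ is then an injection of the finite set $Y$ into itself it is a bijection, so its partial inverse $W$ also maps $Y$ into $Y$, giving $\aW$-closure. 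With $o = f(v_0)\in Y$ and closure under all four successors, strong connectivity forces $Y = X$, so $f$ is surjective. The only delicate point is to organize the induction so that functionality is invoked only at vertices already known to lie in $Y$.
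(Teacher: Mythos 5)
Your proposal is correct and follows essentially the same route as the paper's proof: both first derive functionality and converse properties at image vertices from the $2$-cycle loops together with the inequations of $\algclass_{\mathtt{gr}}$, then use the $4$-cycle loop to propagate $\aE$-edges along the $\aN$-cycles by an induction that invokes functionality only at vertices already known to lie in the image, and finally conclude surjectivity from strong connectivity. The only difference is presentational: the paper packages the induction via an auxiliary grid map $g\colon \nat^2 \to X$ and inducts on the row index, whereas you induct directly along the $\aN$-orbits of the image set and handle $\aW$-closure by an injectivity-plus-finiteness argument.
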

\begin{proof}
    Let $\val \colon \vsig \to \wp(X^2)$.
    \begin{cla}\label{cla: function and converse'}
        Let $\tuple{a, b} \in \set{\tuple{\aE, \aW}, \tuple{\aW, \aE}, \tuple{\aN, \aS}, \tuple{\aS, \aN}}$.
        For all $\tuple{x, y} \in X^2$, if $x \in f(\domain{\graph[2]})$, then we have the following two:
        \begin{description}
            \item[\label{cla: function'}(Function)] $\#(\set{z \mid \tuple{x, z} \in \val(a)}) = 1$,
            \item[\label{cla: converse'}(Converse)] If $\tuple{x, y} \in \val(a)$, then $\tuple{y, x} \in \val(b)$.
        \end{description}
    \end{cla}
    \begin{claproof}
        Case $a = \aE$:
        By $\tuple{x, x} \in \hat{\val}((\aE \aW)^{\lop})$ ($\subseteq \hat{\val}(\term[3]_{\mathtt{gr}})$),
        there is $z$ such that $\tuple{x, z} \in \val(\aE)$ and $\tuple{z, x} \in \val(\aW)$.
        If $\tuple{x, z'} \in \val(\aE)$ for some $z' \neq z$, then by $\tuple{z, z'} \in \hat{\val}(\aW \aE)$,
        it contradicts to $\val \models \aW \aE \le \eps$.
        Thus, we have \textbf{\ref{cla: function'}}.
        \textbf{\ref{cla: converse'}} is immediate from \textbf{\ref{cla: function'}}.
        Case $a = \aN, \aW, \aS$: Similarly.
    \end{claproof}
    By $\tuple{o, o} \in \hat{\val}(\term[1]_{\mathtt{gr}})$ with \textbf{\ref{cla: function'}},
    there is a map $g \colon \nat^2 \to X$ such that, for all $x, y \in \nat$, the following hold:
    \begin{align*}
        g(0, 0) &= o, & \tuple{g(x, 0), g(x+1, 0)} &\in \val(\aE),
        & \tuple{g(x, y), g(x, y+1)} &\in \val(\aN), & g(x, y) &\in f(\domain{\graph[2]}).
    \end{align*}
    Moreover, we have the following.
    \begin{cla}\label{cla: aE'}
        For all $x, y \in \nat$, we have $\tuple{g(x, y), g(x + 1, y)} \in \val(\aE)$.
    \end{cla}
    \begin{claproof}
        By induction on $y$.
        Case $y = 0$:
        Clear.
        Case $y \ge 1$:
        By $\tuple{g(x, y), g(x, y)} \in \hat{\val}((\aE \aN \aW \aS)^{\lop})$ ($\subseteq \hat{\val}(\term[3]_{\mathtt{gr}})$),
        there exist $z$ and $z'$ such that $\tuple{g(x, y), z} \in \hat{\val}(\aE \aN)$, $\tuple{z, z'} \in \hat{\val}(\aW)$, and $\tuple{z', g(x, y)} \in \hat{\val}(\aS)$.
        By $\tuple{g(x, y), g(x + 1, y)} \in \val(\aE)$ (IH),
        $\tuple{g(x + 1, y), g(x + 1, y + 1)} \in \val(\aN)$, and 
        $\tuple{g(x, y + 1), g(x, y)} \in \val(\aS)$ (by \textbf{\ref{cla: converse'}})
        with \textbf{\ref{cla: function'}} with respect to $\aS$, $\aE$, and $\aN$,
        we have that $z = g(x + 1, y + 1)$ and $z' = g(x, y + 1)$.
        Thus, $\tuple{g(x + 1, y + 1), g(x, y + 1)} \in \val(\aW)$.
        Hence, $\tuple{g(x, y + 1), g(x + 1, y + 1)} \in \val(\aE)$ (by \textbf{\ref{cla: converse'}}).
    \end{claproof}
    By \textbf{\ref{cla: function'}} with respect to  $\aE$ and $\aN$,
    we have $\set{g(x, y) \mid x, y \in \nat}
    = \set{z \mid \tuple{o, z} \in \hat{\val}((\aE \union \aN)^*)}
    = \set{z \mid \tuple{o, z} \in \hat{\val}((\aE \union \aN \union \aW \union \aS)^*)}$.
    By $\val \models (\aE \union \aN \union \aW \union \aS)^* \ge \top$, the right-hand side is equal to $X$,
    and thus $g$ is \kl{surjective}.
    Hence, $f$ is \kl{surjective}.
\end{proof}

\begin{ex}\label{ex: grid surjective}
    By \Cref{thm: hypothesis elimination using graph loops,lem: grid surjective},
    for all $\algclass \subseteq \algclass_{\mathtt{gr}}$ and $\PCoR_{\set{\bl^{*}, \com{\eps}, \com{x}}}$ \kl{terms} $\term[1]$, $\term[2]$, and $\term[3]$ such that
    $\algclass \models \term[1] \le \term[1]_{\mathtt{gr}}$ and
    $\algclass \models \term[3] \le \term[3]_{\mathtt{gr}}$,
    we have:
    \[\algclass_{\term[3] \ge \eps} \models \term[1] \le \term[2] \quad\Longleftrightarrow\quad \algclass \models \Tr_{\term[3]^{\lop}}(\term[1]) \le \term[2].\]
\end{ex}

\subsection{Condition 2: Taking submodels for surjectivity}\label{subsection: condition 2}
Additionally, we give another sufficient condition for eliminating \kl{loop hypotheses},
by taking \kl{submodels} for obtaining a \kl{surjective} map.
\begin{thm}\label{thm: hypothesis elimination using graph loops 2}
    Let $\algclass \subseteq \REL$ and let $\term[1]$, $\term[2]$, and $\term[3]$ be $\PCoR_{\set{\bl^{*}, \com{\eps}, \com{x}}}$ \kl{terms}.
    When $\algclass$ is \kl{submodel-closed} and $\algclass \models \term[3]^{\lop} \le \Tr_{\term[3]^{\lop}}(\term[3]^{\lop})$, 
    we have:
    \[\algclass_{\term[3] \ge \eps} \models \term[1] \le \term[2] \quad\Longleftrightarrow\quad \algclass \models \Tr_{\term[3]^{\lop}}(\term[1]) \le \term[2].\]
\end{thm}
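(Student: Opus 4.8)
The direction $\Longleftarrow$ needs no hypothesis: if $\algclass \models \Tr_{\term[3]^{\lop}}(\term[1]) \le \term[2]$, then for every $\val \in \algclass_{\term[3] \ge \eps}$ we get $\hat{\val}(\term[1]) \subseteq \hat{\val}(\Tr_{\term[3]^{\lop}}(\term[1])) \subseteq \hat{\val}(\term[2])$ by \Cref{prop: loop transformation}.\ref{prop: loop transformation sup}. So the whole work lies in $\Longrightarrow$, and the plan is to run the argument of \Cref{thm: hypothesis elimination using graph loops}, but since the homomorphism need not be \kl{surjective} we replace ``$\val$ already satisfies $\term[3] \ge \eps$'' by ``a suitable \kl{submodel} of $\val$ does''. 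Concretely, assume $\algclass_{\term[3] \ge \eps} \models \term[1] \le \term[2]$, fix $\val \in \algclass$ and $\tuple{o, o'} \in \hat{\val}(\Tr_{\term[3]^{\lop}}(\term[1]))$. By \Cref{prop: glang u} there are $\graph[2] \in \glang(\term[1])$, a map $f \colon \domain{\graph[2]} \to \glang(\term[3]^{\lop})$, and a \kl{graph homomorphism} $h \colon \graph[2][f] \homo \const{G}(\val, o, o')$. Put $B_0 \defeq h(\domain{\graph[2]})$, the image of the non-glued vertices. Then $h \restriction \graph[2]$ witnesses $\tuple{o, o'} \in \widehat{\val \restriction B_0}(\term[1])$, and for each $z \in \domain{\graph[2]}$ the glued copy $f(z) \in \glang(\term[3]^{\lop})$ gives $\tuple{h(z), h(z)} \in \hat{\val}(\term[3]^{\lop})$ via \Cref{prop: glang}; thus every vertex of $B_0$ carries a $\term[3]^{\lop}$-loop in $\val$.

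Next I would close $B_0$ under loops. By the hypothesis $\algclass \models \term[3]^{\lop} \le \Tr_{\term[3]^{\lop}}(\term[3]^{\lop})$ and \Cref{prop: glang u}, for every $w$ with $\tuple{w, w} \in \hat{\val}(\term[3]^{\lop})$ there are $\graph[3]_{w} \in \glang(\term[3]^{\lop})$, a map $g_{w}$, and a homomorphism $h_{w} \colon \graph[3]_{w}[g_{w}] \homo \const{G}(\val, w, w)$; I fix one such choice per $w$. The crucial point is that the loops supplied by $g_{w}$ force every vertex of $h_{w}(\domain{\graph[3]_{w}})$ to carry a $\term[3]^{\lop}$-loop in $\val$ as well. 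Setting $\Phi(B) \defeq B \cup \bigcup_{w \in B} h_{w}(\domain{\graph[3]_{w}})$, the operator $\Phi$ is monotone and, as each $\graph[3]_{w}$ is finite, $B^{*} \defeq \bigcup_{n \in \nat} \Phi^{n}(B_0)$ is a fixpoint of $\Phi$ containing $B_0$; the invariant ``every vertex of the current set carries a $\term[3]^{\lop}$-loop in $\val$'' propagates from $B_0$ to all of $B^{*}$.

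Then the conclusion would follow from three checks. (1) $\val \restriction B^{*} \in \algclass$ since $\algclass$ is \kl{submodel-closed}. (2) $\val \restriction B^{*} \models \term[3] \ge \eps$: for $w \in B^{*}$ we have $h_{w}(\domain{\graph[3]_{w}}) \subseteq \Phi(B^{*}) = B^{*}$, so $h_{w} \restriction \graph[3]_{w}$ is a homomorphism into $\const{G}(\val \restriction B^{*}, w, w)$, giving $\tuple{w, w} \in \widehat{\val \restriction B^{*}}(\term[3]^{\lop}) \subseteq \widehat{\val \restriction B^{*}}(\term[3])$. (3) $\tuple{o, o'} \in \widehat{\val \restriction B^{*}}(\term[1])$, since the fixed homomorphism $h \restriction \graph[2]$ of the first paragraph has image $B_0 \subseteq B^{*}$. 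Both (2) and (3) rely on the observation — already used in \Cref{prop: submodel cover} — that whether a fixed vertex map is a homomorphism depends only on $\val$ and the images of the endpoints, not on the ambient universe (the defining conditions of the labels $x$, $\com{x}$, $\eps$, $\com{\eps}$, namely membership in $\val(x)$, non-membership in $\val(x)$, equality, and inequality of the endpoints, are all universe-independent), so a homomorphism survives both shrinking and enlarging the universe as long as its image stays inside. By (1)--(2), $\val \restriction B^{*} \in \algclass_{\term[3] \ge \eps}$, whence $\algclass_{\term[3] \ge \eps} \models \term[1] \le \term[2]$ and (3) give $\tuple{o, o'} \in \widehat{\val \restriction B^{*}}(\term[2])$; lifting this witnessing homomorphism back from $\val \restriction B^{*}$ to $\val$ (same universe-independence, now upward) yields $\tuple{o, o'} \in \hat{\val}(\term[2])$, exactly as in the proof of \Cref{prop: submodel cover}.

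The hard part is check (2), i.e.\ making $\term[3] \ge \eps$ true on a \kl{submodel} into which the witness of $\term[1]$ still maps. Restricting to the full image $h(\domain{\graph[2][f]})$ would fail, because a loop witnessing $\term[3]^{\lop}$ at an interior vertex can escape that set; the hypothesis $\term[3]^{\lop} \le \Tr_{\term[3]^{\lop}}(\term[3]^{\lop})$ is precisely what keeps the loop-closure $B^{*}$ within the loop-carrying vertices while simultaneously pulling every loop inside $B^{*}$ at the fixpoint. The only delicate bookkeeping is the universe-independence of the negative labels $\com{x}$ and $\com{\eps}$, which is what legitimises passing down to $\val \restriction B^{*}$ and back up to $\val$.
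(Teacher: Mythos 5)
Your proposal is correct and takes essentially the same approach as the paper's proof: the backward direction via \Cref{prop: loop transformation}.\ref{prop: loop transformation sup}, and the forward direction by iteratively closing the image of the witnessing homomorphism under chosen $\term[3]^{\lop}$-loop witnesses obtained from $\algclass \models \term[3]^{\lop} \le \Tr_{\term[3]^{\lop}}(\term[3]^{\lop})$ and \Cref{prop: glang u}, then taking the union (your fixpoint $B^{*}$ of $\Phi$ is exactly the paper's $B = \bigcup_{n \ge 0} B_n$) and invoking submodel-closedness before transferring the witness back to $\val$. The universe-independence bookkeeping you spell out explicitly is implicit in the paper's appeals to \Cref{prop: glang}.
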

\begin{proof}
    ($\Longleftarrow$):
    By \Cref{prop: loop transformation}.\ref{prop: loop transformation sup}.
    ($\Longrightarrow$):
    Let $\val \colon \vsig \to \wp(A^2)$ in $\algclass$.
    Suppose $\tuple{x, y} \in \hat{\val}(\Tr_{\term[3]^{\lop}}(\term[1]))$.
    We inductively define $B_n \subseteq A$ (where $n \ge 0$) as follows:
    \begin{itemize}
        \item Case $n = 0$:
        Let $\graph[2] \in \glang(\term[1])$ be a \kl{graph} such that
        there is a \kl{graph homomorphism} $h \colon \graph[2] \homo \const{\graph}(\val, x, y)$
        such that $\Delta_{h(\domain{\graph[2]})} \subseteq \hat{\val}(\term[3]^{\lop})$.
        Such an $\graph[2]$ exists by \Cref{prop: glang u}.
        We then let $B_0 \defeq h(\graph[2])$.
        \item Case $n + 1$:
        For each $z \in B_n$, by $\tuple{z, z} \in \hat{\val}(\term[3]^{\lop})$, we have $\tuple{z, z} \in \hat{\val}(\Tr_{\term[3]^{\lop}}(\term[3]^{\lop}))$.
        Let $\graph[3]_{z} \in \glang(\term[3])$ be a \kl{graph} such that
        there is a \kl{graph homomorphism} $h_z \colon \graph[3]_{z} \homo \const{\graph}(\val, z, z)$
        satisfying $\Delta_{h_z(\domain{\graph[3]_{z}})} \subseteq \hat{\val}(\term[3]^{\lop})$.
        Such $\graph[3]_{z}$ exists by \Cref{prop: glang u}.
        We then let $B_{n+1} \defeq \bigcup_{z \in B_{n}} h_{z}(\graph[3]_{z})$.
    \end{itemize}
    Finally, let $B \defeq \bigcup_{n \ge 0} B_n$.
    We then have:
    \begin{itemize}
        \item $\val \restriction B \models \term[3] \ge \eps$ (by the construction of $B$),
        and hence $\val \restriction B \in \algclass_{\term[3] \ge \eps}$ (as $\algclass$ is \kl{submodel-closed}),
        \item $\tuple{x, y} \in \widehat{\val \restriction B}(\term[1])$ (by the construction of $B_0$).
    \end{itemize}
    Thus, by the assumption with the \kl{graph homomorphism} from \Cref{prop: glang}, we have $\tuple{x, y} \in \hat{\val}(\term[2])$.
    Hence, this completes the proof.
\end{proof}
Below we give some examples of \Cref{thm: hypothesis elimination using graph loops 2}.
\begin{ex}\label{ex: size 1}
    Suppose that,
    for all $\val \in \algclass$,
    \kl{vertices} $o$ of $\val$,
    \kl{graphs} $\graph[2] \in \glang(\term[3]^{\lop})$,
    and \kl{graph homomorphisms} $f \colon \graph[2] \homo \const{\graph}(\val, o, o)$,
    the map $f$ maps to the same \kl{vertex} (i.e., $\# f(\domain{\graph[2]}) = 1$).
    Then, we have $\algclass \models \term[3]^{\lop} \le \Tr_{\term[3]^{\lop}}(\term[3]^{\lop})$
    and $\algclass \models \term[3] = \term[3]^{\lop}$ by the map $f$.
    Thus, when $\algclass \subseteq \REL$ is \kl{submodel-closed}, 
    we have:
    \[\algclass_{\term[3] \ge \eps} \models \term[1] \le \term[2] \quad\Longleftrightarrow\quad \algclass \models \Tr_{\term[3]^{\lop}}(\term[1]) \le \term[2].\]
    (Particularly,
    when $\algclass \models \term[3]^{\lop} = \term[3]$,
    the above is equivalent to $\algclass \models \Tr_{\term[3]}(\term[1]) \le \term[2]$
    (\Cref{prop: loop transformation}.\ref{prop: loop transformation monotone}).)
\end{ex}

\begin{ex}\label{ex: tests}
    We recall the set $\psig$ and $\mathtt{test}$ in \Cref{section: PWP with loop}.
    Using \Cref{ex: size 1}, we can give another hypothesis elimination for the hypotheses $\mathtt{test}$.
    Suppose that $\psig = \set{p_0, \dots, p_{2n-1}}$ and $\tilde{\bl}$ satisfies $\tilde{p}_i = p_{(n+i) \bmod 2n}$.
    Let us consider the following hypotheses (the set is similar to \cite[Section 4.2]{pousToolsCompletenessKleene2024}):
    \begin{align*}
        \mathtt{test}'_0 &\defeq \set{ p \tilde{p} \le \emp \mid p \in \psig},\\
        \mathtt{test}'_1 &\defeq \set{p \le \eps \mid p \in \psig},\\
        \mathtt{test}'_2 &\defeq \set{p \union \tilde{p} \ge \eps \mid p \in \psig}.
    \end{align*}
    Also, let $\mathtt{test}'_{S} \defeq \bigcup_{j \in S} \mathtt{test}'_{j}$ for $S \subseteq \set{0, 1, 2}$ and let $\mathtt{test}' \defeq \mathtt{test}'_{\set{0, 1, 2}}$ (clearly, $\REL_{\mathtt{test}'} = \REL_{\mathtt{test}}$ holds).
    Let $\term[3]_0 \defeq \top (\sum_{i = 0}^{n-1} p_i \tilde{p}_i) \top$.
    By combining hypotheses eliminations, we can eliminate all the hypotheses as follows: 
    \begin{align*}
        &\REL_{\mathtt{test}'} \models \term[1] \le \term[2] \Leftrightarrow \REL_{\mathtt{test}'_{\set{0, 1, 2}}} \models \term[1] \le \term[2] \\
        &\Leftrightarrow \REL_{\mathtt{test}'_{\set{1, 2}}} \models \term[1] \le \term[2] \union \term[3]_0 \tag{\Cref{prop: Hoare hypotheses}} \\
        &\Leftrightarrow \REL_{\mathtt{test}'_{\set{1}}} \models \Tr_{\bigcompo_{i = 0}^{n-1} p_i \union \tilde{p}_i}(\term[1]) \le \term[2] \union  \term[3]_0 \tag{\Cref{ex: size 1}}                                                      \\
        &\Leftrightarrow \REL \models
        \Tr_{\bigcompo_{i = 0}^{n-1} p_i \union \tilde{p}_i}(\term[1]) [p_0^{\lop}/p_0] \dots [p_{2n-1}^{\lop}/p_{2n-1}] \le \term[2] \union \term[3]_0. \tag*{(\Cref{ex: substitution id by loop})}
    \end{align*}
    Thus, we can encode \kl{tests} in $\PCoR_{\set{\bl^{*}}}$.
    Hence, we can reduce the \kl{equational theory} of $\PCoR_{\set{\bl^{*}}}$ with \kl{tests} to the \kl{equational theory} of $\PCoR_{\set{\bl^{*}}}$ (without \kl{tests}), via \kl{hypothesis eliminations} (cf.\ \cite{nakamuraDerivativesGraphsPositive2024} for encoding of tests in automata constructions).
    Additionally, we can replace $\bl^{\lop}$ with $\bl^{\dom}$ by using \Cref{ex: substitution id by dom} in the reduction above.
\end{ex}

Below is a slightly generalized version of \Cref{ex: size 1}.

\begin{ex}[a generalization of \Cref{ex: size 1}]\label{ex: symmetric graph}
    Let $\term[3]$ be a $\PCoR_{\set{\bl^{*}, \com{\eps}, \com{x}}}$ \kl{term} such that $\glang(\term[3]^{\lop})$ is closed under changing the source/target vertex.
    We then have $\REL \models \term[3]^{\lop} \le \Tr_{\term[3]^{\lop}}(\term[3]^{\lop})$.
    (Proof: Suppose $\tuple{x, x} \in \hat{\val}(\term[3]^{\lop})$.
    Let $h$ be a \kl{graph homomorphism} from a \kl{graph} $\graph[2] \in \glang(\term[3]^{\lop})$ to $\const{\graph}(\val, x, x)$.
    For each $z \in h(\domain{\graph[2]})$,
    by letting $\graph[3]_{z}$ be the \kl{graph} $\graph[2]$ in which the source/target vertex has been changed to a vertex $v$ such that $h(v) = z$,
    there exists a \kl{graph homomorphism} $h_z$ from $\graph[3]_{z}$ to $\const{\graph}(\val, z, z)$.
    By $\graph[3]_{z} \in \glang(\term[3]^{\lop})$, we have $\tuple{x, x} \in \hat{\val}(\Tr_{\term[3]^{\lop}}(\term[3]^{\lop}))$ (\Cref{prop: glang u}).)
    Thus by \Cref{thm: hypothesis elimination using graph loops 2},
    when $\algclass \subseteq \REL$ is \kl{submodel-closed},
    we have:
    \[\algclass_{\term[3] \ge \eps} \models \term[1] \le \term[2] \quad\Longleftrightarrow\quad \algclass \models \Tr_{\term[3]^{\lop}}(\term[1]) \le \term[2].\]
    This is a generalization of \Cref{ex: size 1}.
    For instance, $\term[3] = aa$, $aaa$, $a^{+}$, and $(a \union b)^{+}$ newly satisfies the condition.
\end{ex}

\subsection{Eliminating loop hypothesis in propositional while programs}\label{section: hypothesis elimination loop while}
Similar to $\Tr$ (\Cref{defi: loop transformation}), we define the translation $\Tr'$ on \kl{$\mathrm{PWP}_{\lop}$ terms}, as follows:
\begin{defi}\label{defi: loop transformation GKAT}
    For \kl{$\mathrm{PWP}_{\lop}$ terms} $\term$ and $\term[3]$,
    let $\Tr'_{\term[3]}(\term)$ be the \kl{$\mathrm{PWP}_{\lop}$ term} defined by:
    \begin{align*}
        \Tr'_{\term[3]}(x)                       & \defeq \term[3] x \term[3]  \ \mbox{ for $x \in B \cup \asig$},  &
        \Tr'_{\term[3]}(\term[1] \union_{b} \term[2]) & \defeq \Tr'_{\term[3]}(\term[1]) \union_{b} \Tr'_{\term[3]}(\term[2]),  \\
        \Tr'_{\term[3]}(\term[1] \compo \term[2]) & \defeq \Tr'_{\term[3]}(\term[1]) \compo \Tr'_{\term[3]}(\term[2]),   &     
        \Tr'_{\term[3]}(\term[1]^{*_{b}})             & \defeq \Tr'_{\term[3]}(\term[1])^{*_{b}}, &
        \Tr'_{\term[3]}(\term[1]^{\lop})             & \defeq \Tr'_{\term[3]}(\term[1])^{\lop}.
    \end{align*}
\end{defi}
\begin{prop}\label{prop: loop transformation GKAT}
    For all \kl{$\mathrm{PWP}_{\lop}$ terms} $\term[1]$ and $\term[3]$,
    we have $\REL_{\mathtt{test}} \models \Tr'_{\term[3]^{\lop}}(\term) = \Tr_{\term[3]^{\lop}}(\term)$.
\end{prop}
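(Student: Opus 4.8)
The plan is to prove the identity by structural induction on the $\mathrm{PWP}_{\lop}$ term $\term[1]$, exploiting that $\Tr'$ and $\Tr$ differ only in their treatment of the \emph{guards} $b,\widetilde{b}$ occurring in the derived operators $\union_{b}$ and $\bl^{*_{b}}$. The syntactic translation $\Tr'$ (\Cref{defi: loop transformation GKAT}) leaves these guards untouched, whereas $\Tr$ (\Cref{defi: loop transformation}), acting on the expansions $b\term[1]\union\widetilde{b}\term[2]$ and $(b\term[1])^{*}\widetilde{b}$, rewrites each guard $c$ into $\term[3]^{\lop}c\,\term[3]^{\lop}$. Since $\term[3]^{\lop}\le\eps$ and $\term[3]^{\lop}\term[3]^{\lop}=\term[3]^{\lop}$ hold in every $\REL$-model, the whole argument reduces to showing that these rewritten guards are absorbed back into the surrounding translated subterms. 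Throughout I work with the subidentity $\term[3]^{\lop}$ and use that, under $\mathtt{test}$, every \kl{test} denotes a subidentity, so tests commute and compose by intersection.

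Two lemmas carry the induction. First, a \textbf{guard invariant}, proved by a routine side induction on the $\PCoR$ structure and valid in all of $\REL$: $\Tr_{\term[3]^{\lop}}(\term[1])=\term[3]^{\lop}\,\Tr_{\term[3]^{\lop}}(\term[1])\,\term[3]^{\lop}$, i.e.\ the translation is guarded by $\term[3]^{\lop}$ at both ends (the base case is $\Tr_{\term[3]^{\lop}}(x)=\term[3]^{\lop}x\term[3]^{\lop}$ with idempotency of $\term[3]^{\lop}$, and the $\bl^{*}$ case uses $\Tr_{\term[3]^{\lop}}(\term[1]^{*})=\Tr_{\term[3]^{\lop}}(\term[1])^{*}\term[3]^{\lop}$ together with left-guardedness of $\Tr_{\term[3]^{\lop}}(\term[1])$). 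Second, a \textbf{test-collapse}: for every \kl{test} $b$, $\REL_{\mathtt{test}}\models\Tr_{\term[3]^{\lop}}(b)=\term[3]^{\lop}b\,\term[3]^{\lop}=\Tr'_{\term[3]^{\lop}}(b)$, obtained by distributing $\Tr$ over $\compo,\union$ and collapsing products of subidentities to intersections. Combining the two gives the key \emph{guard-absorption} step: since $\Tr_{\term[3]^{\lop}}(\term[1])=\term[3]^{\lop}\Tr_{\term[3]^{\lop}}(\term[1])$ and $b,\term[3]^{\lop}$ are commuting subidentities, $\Tr_{\term[3]^{\lop}}(b)\,\Tr_{\term[3]^{\lop}}(\term[1])=b\,\Tr_{\term[3]^{\lop}}(\term[1])$ under $\mathtt{test}$.

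With these in hand the induction is direct. The base cases $\term[1]=b$ (a test) and $\term[1]=x$ (an action) are exactly the test-collapse lemma and the shared clause $\term[3]^{\lop}x\term[3]^{\lop}$. For $\term[1]\compo\term[2]$ both translations are homomorphic, so the induction hypothesis closes it. For $\term[1]^{\lop}=\term[1]\cap\eps$ I use the guard invariant: $\Tr_{\term[3]^{\lop}}(\term[1]\cap\eps)=\Tr_{\term[3]^{\lop}}(\term[1])\cap\term[3]^{\lop}$, and left-guardedness with $\term[3]^{\lop}\le\eps$ gives $\Tr_{\term[3]^{\lop}}(\term[1])\cap\term[3]^{\lop}=\Tr_{\term[3]^{\lop}}(\term[1])\cap\eps$, matching $\Tr'_{\term[3]^{\lop}}(\term[1])^{\lop}$ after applying the induction hypothesis. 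For $\term[1]\union_{b}\term[2]=b\term[1]\union\widetilde{b}\term[2]$, guard-absorption turns $\Tr_{\term[3]^{\lop}}(b)\,\Tr_{\term[3]^{\lop}}(\term[1])$ back into $b\,\Tr_{\term[3]^{\lop}}(\term[1])$ (and similarly for $\widetilde{b}$), so the induction hypothesis rebuilds $\Tr'_{\term[3]^{\lop}}(\term[1])\union_{b}\Tr'_{\term[3]^{\lop}}(\term[2])$.

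The hard part will be the while case $\term[1]^{*_{b}}=(b\term[1])^{*}\widetilde{b}$. Applying $\Tr$ to this expansion yields, after guard-absorption on the loop body and the trailing-$\term[3]^{\lop}$ clause for $\bl^{*}$, the term $(b\,\Tr_{\term[3]^{\lop}}(\term[1]))^{*}\,\term[3]^{\lop}\widetilde{b}\,\term[3]^{\lop}$, whereas $\Tr'$ produces $(b\,\Tr'_{\term[3]^{\lop}}(\term[1]))^{*}\widetilde{b}$. By the induction hypothesis the loop bodies agree, so everything reduces to reconciling the guarded exit test $\term[3]^{\lop}\widetilde{b}\,\term[3]^{\lop}$ with the bare $\widetilde{b}$. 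Here I would use that the loop body $b\,\Tr_{\term[3]^{\lop}}(\term[1])$ is $\term[3]^{\lop}$-guarded on both ends, so (via the graph-language reading of \Cref{prop: glang u}) every vertex reached after at least one iteration already carries a $\term[3]^{\lop}$-loop, which absorbs the guards around $\widetilde{b}$ on the positive iterations; the zero-iteration summand is the most delicate point and must be discharged with the test axioms $b\union\widetilde{b}=\eps$ and $b\cap\widetilde{b}=\emp$. I expect this exit-test reconciliation to be the crux of the proof, with all remaining cases following mechanically from the guard invariant and test-collapse lemmas.
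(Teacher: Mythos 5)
Your scaffolding is sound as far as it goes, and it is a faithful expansion of what the paper compresses into the single phrase ``by easy induction'': the guard invariant $\Tr_{\term[3]^{\lop}}(\term[1])=\term[3]^{\lop}\,\Tr_{\term[3]^{\lop}}(\term[1])\,\term[3]^{\lop}$, the test-collapse $\REL_{\mathtt{test}}\models\Tr_{\term[3]^{\lop}}(b)=\term[3]^{\lop}b\,\term[3]^{\lop}$, and the resulting guard absorption are all valid, and they correctly dispatch the cases of tests, actions, $\compo$, $\bl^{\lop}$, and $\union_{b}$ exactly as you describe.

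However, the step you postpone in the while case --- reconciling the exit test ``with the test axioms'' --- is not merely delicate: it cannot be discharged, because the two sides genuinely differ there. Unfolding $\term[1]^{*_{b}}=(b\term[1])^{*}\widetilde{b}$ and applying your own lemmas, the zero-iteration summand of $\Tr_{\term[3]^{\lop}}(\term[1]^{*_{b}})$ is (modulo the subidentity laws) $\widetilde{b}\,\term[3]^{\lop}$, whereas the zero-iteration summand of $\Tr'_{\term[3]^{\lop}}(\term[1]^{*_{b}})$ is the bare $\widetilde{b}$; the axioms $b\cap\widetilde{b}=\emp$ and $b\union\widetilde{b}=\eps$ constrain only $b$ and $\widetilde{b}$ and cannot force a point satisfying $\widetilde{b}$ to carry a $\term[3]$-loop. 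Concretely, take $\term[1]=x^{*_{p}}$ and $\term[3]=y$ with $x,y\in\asig$, $p\in\psig$, and the one-point valuation $\val\in\REL_{\mathtt{test}}$ with $\val(x)=\val(y)=\val(p)=\emptyset$ and $\val(\tilde{p})=\set{\tuple{0,0}}$. Then $\hat{\val}(\Tr'_{y^{\lop}}(x^{*_{p}}))=\hat{\val}((p\,y^{\lop}x\,y^{\lop})^{*}\compo\tilde{p})=\set{\tuple{0,0}}$, while $\hat{\val}(\Tr_{y^{\lop}}(x^{*_{p}}))=\hat{\val}((y^{\lop}p\,y^{\lop}\compo y^{\lop}x\,y^{\lop})^{*}\compo y^{\lop}\compo y^{\lop}\tilde{p}\,y^{\lop})=\emptyset$, since every summand of the latter factors through $y^{\lop}=\emptyset$. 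So only the inequation $\REL_{\mathtt{test}}\models\Tr_{\term[3]^{\lop}}(\term[1])\le\Tr'_{\term[3]^{\lop}}(\term[1])$ is provable for arbitrary $\mathrm{PWP}_{\lop}$ terms; the stated equation holds only when every $\bl^{*_{b}}$ occurs to the right of a factor whose translation is right-guarded by $\term[3]^{\lop}$ --- as in the do-while $\bl^{+_{b}}$, which is the only form the paper's applications actually use (e.g.\ $\term[1]_{\mathtt{gr}}'$ in \Cref{thm: undecidable while emptiness} contains only $\bl^{+_{b}}$ loops). In other words, the gap you flag is unfixable by any induction, yours or the paper's: the unrestricted statement fails at precisely that point, and the remedy is either to weaken the proposition (to the inequation, or to left-guarded terms) or to repair \Cref{defi: loop transformation GKAT} by setting $\Tr'_{\term[3]}(\term[1]^{*_{b}})\defeq\Tr'_{\term[3]}(\term[1])^{*_{b}}\compo\term[3]$, which is still a $\mathrm{PWP}_{\lop}$ term and restores the missing trailing guard.
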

\begin{proof}
    By easy induction on $\term[1]$.
\end{proof}
Hence, we can use the hypotheses eliminations given in \Cref{subsection: condition 1,subsection: condition 2} also inside \kl{$\mathrm{PWP}_{\lop}$ terms}.
\section{Undecidability of Deterministic Propositional While Programs with Loop}\label{section: while}
The \intro*\kl{emptiness problem} of deterministic \kl{$\mathrm{PWP}_{\lop}$ terms} is the following problem:
given a \kl{$\mathrm{PWP}_{\lop}$ terms} $\term$, does $\DREL_{\mathtt{test}} \models \term \le \emp$?
In this section, we prove that this problem is $\mathrm{\Pi}^{0}_{1}$-hard (\Cref{thm: undecidable while emptiness}).

\subsection{Encoding periodic tilings using loop hypotheses}
Let $\aE, \aN, \aW, \aS \in \asig$.
Let $\mathcal{D} = \tuple{C, H, V}$ be a \kl{domino system} where $C \subseteq \psig$.
Let $C = \set{c_1, \dots, c_n}$ where $c_1, \dots, c_n$ are pairwise distinct.
Let $\Gamma_{\mathcal{D}}$ be the set of the following \kl{equations} in \Cref{figure: hypotheses GKAT}.
\begin{figure*}[ht]
    \begin{align*}
        \label{equation: serial-converse'} a b &\ge \eps  &&          \mbox{for $\tuple{a, b} \in \set{\tuple{\aE, \aW}, \tuple{\aW, \aE}, \tuple{\aN, \aS}, \tuple{\aS, \aN}}$} \span \tag{2-cycle}  \\
        \label{equation: cycle'} \aE \aN \aW \aS &\ge \eps  \span \tag{4-cycle}                                                                                                                      \\
        \label{equation: connected} (\aE \union \aN \union \aW \union \aS)^* &\ge \top \tag{strongly connected}\\
        \label{equation: color'} \term[3]_{c_1} \union_{c_1} (\term[3]_{c_2} \union_{c_2} \dots (\term[3]_{c_n} \union_{c_n} \emp) \dots) &\ge \eps     \tag{col$\ge$1 $+$ HV-con}\\
        \label{equation: color' le} \tilde{c}_i \union \tilde{c}_j &\ge \eps  && \mbox{for distinct $i, j \in \range{1, n}$}    \tag{col$\le$1}
    \end{align*}
    Here, $\term[3]_{c} \defeq (\aE (\sum H_c) \aW)(\aN (\sum V_c) \aS)$ for $c \in C$,
    where
    $H_c \defeq \set{d \mid \tuple{c, d} \in H}$ and
    $V_c \defeq \set{d \mid \tuple{c, d} \in V}$.
    \caption{The \kl{equation} set $\Gamma_{\mathcal{D}}$, where $\mathcal{D}$ is a \kl{domino system} $\tuple{C, H, V}$.}
    \label{figure: hypotheses GKAT}
\end{figure*}

In the following, we consider the class $\DREL_{\mathtt{test}, \Gamma_{\mathcal{D}}}^{\mathrm{fin}}$.
Note that $\DREL_{\mathtt{test}, \Gamma_{\mathcal{D}}}^{\mathrm{fin}} \subseteq \REL^{\mathrm{fin}}_{\Gamma_{\mathtt{grid}}}$ (recall \Cref{subsection: grids}) holds, because $\DREL_{ab \ge \eps} \models ab \le \eps$ holds by $\mathtt{func}$ as follows:
\begin{align*}
    \DREL_{ab \ge \eps} \models a b \le (a b)^{\smile} a b = b^{\smile} a^{\smile} a b \le \eps.
\end{align*}
Let $p, q \in \psig$ be \kl{fresh} \kl{primitive tests} (disjoint from $C$) and let 
\[\term[1]_{\mathtt{gr}}' \;\defeq\; (p ((q \aN^{+_{\tilde{q}}})^{\lop} \aE)^{+_{\tilde{p}}})^{\lop}.\]
We then have the following.
\begin{lem}\label{lem: connectivity'}
    Let $\mathcal{D}$ be a \kl{domino system}.
    We then have:
    \[\mbox{$\mathcal{D}$ has a \kl{periodic tiling}} \quad\iff\quad \mbox{$\DREL_{\mathtt{test}, \Gamma_{\mathcal{D}}}^{\mathrm{fin}} \not\models \term[1]_{\mathtt{gr}}' \le \emp$}.\]
\end{lem}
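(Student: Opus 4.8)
The plan is to prove the two directions separately, using \Cref{prop: grid} (available through the noted inclusion $\DREL_{\mathtt{test}, \Gamma_{\mathcal{D}}}^{\mathrm{fin}} \subseteq \REL^{\mathrm{fin}}_{\Gamma_{\mathtt{grid}}}$) as the structural backbone and \Cref{defi: valuation of periodic tiling} as the model construction. First I would record how the color hypotheses encode a tiling on the grid. Since the grid gives $\aE\aW = \aN\aS = \eps$, reading $\term[3]_c = (\aE (\sum H_c) \aW)(\aN (\sum V_c) \aS)$ at a vertex $x$ shows that $\tuple{x,x} \in \hat{\val}(\term[3]_c)$ holds exactly when the east neighbour of $x$ carries a colour in $H_c$ and the north neighbour a colour in $V_c$, i.e.\ the horizontal and vertical constraints for a cell of colour $c$. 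Unfolding the nested if-then-else in (col$\ge$1 + HV-con), the requirement $\tuple{x,x} \in \hat{\val}(\term[3]_{c_1} \union_{c_1}(\dots \union_{c_n} \emp))$ forces some colour $c$ to hold at $x$ with $\tuple{x,x}\in\hat{\val}(\term[3]_c)$ (otherwise the computation reaches $\emp$, which is impossible), and (col$\le$1) forces at most one colour per vertex; together they give exactly one colour per vertex, obeying the $H$- and $V$-constraints.

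For the direction $\Longleftarrow$, the hypothesis $\DREL_{\mathtt{test}, \Gamma_{\mathcal{D}}}^{\mathrm{fin}} \not\models \term[1]_{\mathtt{gr}}' \le \emp$ merely supplies some finite model $\val$ in the class, and the precise shape of $\term[1]_{\mathtt{gr}}'$ is not needed here. By \Cref{prop: grid}, $\val$ is isomorphic to a toroidal grid $\val_0$ over $\range{0, h-1} \times \range{0, v-1}$ for some $h, v \ge 1$. The colour analysis above then yields a total colouring $\tau_0$ of this torus respecting the $H$/$V$ constraints under the toroidal adjacency. Setting $\tau(x, y) \defeq \tau_0(x \bmod h, y \bmod v)$ produces an \kl{$\tuple{h,v}$-periodic} map $\tau \colon \Z^2 \to C$, and since $(x+1)\bmod h$ is exactly the toroidal east-successor of $x \bmod h$ (and likewise for north), the constraints transfer verbatim across the periodic boundary, so $\tau$ is a \kl{periodic tiling}.

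For the direction $\Longrightarrow$, given an \kl{$\tuple{h,v}$-periodic} \kl{tiling} $\tau$, I would take $\val_{\mathcal{D}, \tau}$ from \Cref{defi: valuation of periodic tiling} and additionally interpret the fresh tests by $\hat{\val}(q) \defeq \set{\tuple{\tuple{x,0}, \tuple{x,0}} \mid x}$ (the bottom row) and $\hat{\val}(p) \defeq \set{\tuple{\tuple{0,0}, \tuple{0,0}}}$ (the origin $o$). Membership in $\DREL_{\mathtt{test}, \Gamma_{\mathcal{D}}}^{\mathrm{fin}}$ is routine: the shifts are bijections so $\mathtt{func}$ holds; $p, q, c_i$ are subidentities so $\mathtt{test}$ holds; the grid hypotheses hold as in \Cref{defi: valuation of periodic tiling}; and the colour hypotheses hold because $\tau$ assigns exactly one colour per vertex and is a tiling. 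The substance is computing $\hat{\val}(\term[1]_{\mathtt{gr}}')$ at $o$: unfolding $\aN^{+_{\tilde{q}}} = \aN (\tilde{q} \aN)^{*} q$ and using functionality of $\aN$ shows it sends each bottom-row vertex $\tuple{x,0}$ back to $\tuple{x,0}$ along a full north cycle (the trailing $q$-test selecting the unique return point), so $(q \aN^{+_{\tilde{q}}})^{\lop}$ is the identity on the bottom row; hence $R \defeq (q \aN^{+_{\tilde{q}}})^{\lop} \aE$ is the east-successor on the bottom row, and $R^{+_{\tilde{p}}} = R (\tilde{p} R)^{*} p$ carries $o$ once around the full horizontal cycle back to $o$ (the $p$-test closing it at the origin). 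Therefore $\tuple{o, o} \in \hat{\val}((p R^{+_{\tilde{p}}})^{\lop}) = \hat{\val}(\term[1]_{\mathtt{gr}}')$, giving $\DREL_{\mathtt{test}, \Gamma_{\mathcal{D}}}^{\mathrm{fin}} \not\models \term[1]_{\mathtt{gr}}' \le \emp$.

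The step I expect to be the main obstacle is this loop computation in the $\Longrightarrow$ direction: carefully unfolding the two nested guarded iterations and verifying, from determinism together with the exact placement of $q$ on the bottom row and $p$ at the origin, that each guarded loop terminates after precisely one full cycle and returns to its starting vertex, including the degenerate cases $h = 1$ or $v = 1$. By contrast, the $\Longleftarrow$ direction is comparatively mechanical once \Cref{prop: grid} and the colour analysis are in hand.
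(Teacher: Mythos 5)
Your proposal is correct and follows essentially the same route as the paper's proof: for ($\Leftarrow$) it takes any valuation in the (necessarily non-empty) class, applies \Cref{prop: grid}, and reads off a unique colour per vertex from (col$\ge$1 $+$ HV-con) and (col$\le$1); for ($\Rightarrow$) it extends $\val_{\mathcal{D},\tau}$ from \Cref{defi: valuation of periodic tiling} with $p$ at the origin and $q$ on the bottom row and verifies $\tuple{o,o} \in \hat{\val}'(\term[1]_{\mathtt{gr}}')$ by traversing the vertical cycles and then the horizontal cycle, exactly as in the paper. Your extra care about the degenerate cases $h=1$ or $v=1$ and the explicit unfolding of $\bl^{+_{b}}$ is a finer level of detail than the paper's "by traversing vertices" but not a different argument.
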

\begin{proof}
    ($\Leftarrow$)
    Clearly, $\DREL_{\mathtt{test}, \Gamma_{\mathcal{D}}}^{\mathrm{fin}} \neq \emptyset$.
    Let $\mathcal{D} = \tuple{C, H, V}$ and let $C = \set{c_1, \dots, c_n}$ where $c_1, \dots, c_n$ are pairwise distinct.
    Let $\val \in \DREL_{\mathtt{test}, \Gamma_{\mathcal{D}}}^{\mathrm{fin}}$.
    By $\val \in \REL^{\mathrm{fin}}_{\mathtt{grid}}$,
    up to isomorphisms, without loss of generality,
    we can assume that $\val$ and $h, v \ge 1$ are the ones obtained by \Cref{prop: grid}.
    Let $\tau \colon \Z^2 \to C$ be the unique map such that $\tuple{\tuple{x \bmod h, y \bmod v}, \tuple{x \bmod h, y \bmod v}} \in \val(\tau(x, y))$, by (\ref{equation: color'})(\ref{equation: color' le}).
    By (\ref{equation: color'}), we have $\tuple{\tau(x, y), \tau(x + 1, y)} \in H$ and $\tuple{\tau(x, y), \tau(x, y + 1)} \in V$.
    Hence, $\tau$ is a \kl{tiling}.
    Also, $\tau$ is \kl{periodic} by $h$ and $v$.
    
    ($\Rightarrow$):
    Let $\tau$ be an \kl{$\tuple{h,v}$-periodic tiling}.
    By the form of $\val_{\mathcal{D}, \tau}$ (recall \Cref{defi: valuation of periodic tiling}), we have $\val_{\mathcal{D}, \tau} \in \DREL_{\mathtt{test}, \Gamma_{\mathcal{D}}}^{\mathrm{fin}}$ by a routine verification.
    Let \[\val'(a) \defeq \begin{cases}
            \triangle_{\set{\tuple{0, 0}}} & (a = p)    \\
            \triangle_{\set{\tuple{x, 0} \mid x \in \range{0, h - 1}}} & (a = q)    \\
            \hat{\val}_{\mathcal{D}, \tau}(\eps) \setminus \triangle_{\set{\tuple{0, 0}}} & (a = \tilde{p})    \\
            \hat{\val}_{\mathcal{D}, \tau}(\eps) \setminus \triangle_{\set{\tuple{x, 0} \mid x \in \range{0, h - 1}}} & (a = \tilde{q})    \\
            \val_{\mathcal{D}, \tau}(a)                                                              & (\mbox{otherwise}).
        \end{cases}\]
    Since $p$ and $q$ are \kl{fresh}, we have $\val' \in \DREL_{\mathtt{test}, \Gamma_{\mathcal{D}}}^{\mathrm{fin}}$.
    Also, we have $\tuple{\tuple{0, 0}, \tuple{0, 0}} \in \hat{\val}'((p \aE^{+_{\tilde{p}}})^{\lop})$ by traversing vertices of $\tuple{x, 0}$ as $x = 0, 1, \dots, h - 1, 0$.
    Similarly, for each $x$, we have 
    $\tuple{\tuple{x, 0}, \tuple{x, 0}} \in \hat{\val}'((q \aN^{+_{\tilde{q}}})^{\lop})$ by traversing vertices of $\tuple{x, y}$ as $y = 0, 1, \dots, v - 1, 0$.
    Thus, we have $\tuple{\tuple{0, 0}, \tuple{0, 0}} \in \hat{\val}'((p ((q \aN^{+_{\tilde{q}}})^{\lop} \aE)^{+_{\tilde{p}}})^{\lop}) = \hat{\val}'(\term[1]_{\mathtt{gr}}')$.
    Hence, $\val' \not\models \term[1]_{\mathtt{gr}}' \le \emp$.
\end{proof}

\subsection{Eliminating loop hypotheses}
By summarizing the \kl{loop hypotheses} occurring in $\Gamma_{\mathcal{D}}$, we define
\[\term[3]_{\mathcal{D}} \defeq (\aE \aW)^{\lop} (\aW \aE)^{\lop} (\aN \aS)^{\lop} (\aS \aN)^{\lop} (\aE \aN \aW \aS)^{\lop}  (\term[3]_{c_1} \union_{c_1} (\term[3]_{c_2} \union_{c_2} \dots ((\term[3]_{c_n} \union_{c_n} \emp) )))^{\lop}.\]
Note that $\DREL_{\mathtt{test}, \Gamma_{\mathcal{D}}}^{\mathrm{fin}} = \DREL^{\mathrm{fin}}_{\mathtt{test}, (\aE \union \aN \union \aW \union \aS)^* \ge \top, \term[3]_{\mathcal{D}} \ge \eps}$.
We then can eliminate the hypothesis $\term[3]_{\mathcal{D}} \ge \eps$, as follows.

\begin{figure}[t]
    \begin{align*}
        &  \begin{tikzpicture}[baseline = -1.5cm]
            \graph[grow down= 1.25cm, branch right = 1.25cm, nodes={, font = \scriptsize}]{
            {00/{}[mynode, draw, circle, yshift = 0ex], 01/{}[mynode, draw, circle, yshift = -2ex], 01R/, 02/{$\dots$}[xshift=-1cm]  , 03L/[xshift=-2cm], 03/{}[mynode, draw, circle, xshift=-2cm, yshift = 2ex]} -!-
            {10/{$\vdots$}[yshift = 0ex], 11/{$\vdots$}[, yshift = -1ex], 11R/, 12/{$\dots$}[xshift=-1cm,]  , 13L/[xshift=-2cm], 13/{$\vdots$}[xshift=-2cm, yshift = 2ex]} -!-
            {20/{}[mynode, draw, circle], 21/{}[mynode, draw, circle], 21R/, 22/{$\dots$}[xshift=-1cm]  , 23L/[xshift=-2cm], 23/{}[mynode, draw, circle, xshift=-2cm]} -!-
            {30/{}[mynode, draw, circle], 31/{}[mynode, draw, circle], 31R/, 32/{$\dots$}[xshift=-1cm]  , 33L/[xshift=-2cm], 33/{}[mynode, draw, circle, xshift=-2cm]}
            };
            \node[left = .3cm of 30](src){};
            \node[below = .3cm of 30](tgt){};
            \graph[use existing nodes, edges={color=black, pos = .5, earrow}, edge quotes={fill = white, inner sep=1pt,font= \scriptsize}]{
            src -> 30 -> tgt;
            \foreach \y in {3}{
            (\y0) ->["$\aE$"] (\y1) -> ["$\aE$"] (\y1R); (\y3L) -> ["$\aE$"] (\y3)->["$\aE$", bend left = 15] (\y0);
            };
            \foreach \x in {0, 1, 3}{
            (0\x) <-["$\aN$"] (1\x) <-["$\aN$"] (2\x) <- ["$\aN$"] (3\x) <-["$\aN$", bend left = 15] (0\x);
            };
            \foreach \x in {0, 1, 3}{\foreach \y in {0, 2, 3}{
            \y\x ->["$\graph_{\term[3]_{\mathcal{D}}}$", out = 0, in = 60, looseness = 25] \y\x;
            };};
            };
        \end{tikzpicture} 
        \qquad
        \begin{aligned}
           &\\[9ex]
           & \mbox{where each $\graph_{\term[3]_{\mathcal{D}}}$ is of}\\
           & \mbox{the form} \left(\begin{tikzpicture}[baseline = -.9cm]
                \graph[grow right= 1.cm, branch down = 1.cm, nodes={, font = \scriptsize}]{
                {/,W/[mynode, draw, circle], /} -!- {N/[mynode, draw, circle],O/[mynode, draw, circle],S/[mynode, draw, circle]} -!- {/,E/[mynode, draw, circle]}
                };
                \node[below left = .3cm and .2cm of O](src){};
                \node[below left = .2cm and .3cm of O](tgt){};
                \node[below right = -.2cm and .6cm of E,mynode, draw, circle](E'){};
                \node[above left = .1cm and .1cm of N, mynode, draw, circle](N'){};
                \path let \p1=(E'), \p2= (N') in node at (\x1,\y2) (E'N')[mynode, draw, circle]{};
                \graph[use existing nodes, edges={color=black, pos = .5, earrow}, edge quotes={fill = white, inner sep=1pt,font= \scriptsize}]{
                src -> O -> tgt;
                O ->["$\tilde{c}_1, \dots, \tilde{c}_{i-1},c_i$"{xshift = 2em, yshift = 0em}, out = -30, in = -60, looseness = 70] O;
                O ->["$G_{\term[3]_{c_i}}$"{xshift = 0.5em, yshift = 0em}, out = 30, in = 60, looseness = 70] O;
                O ->["$\aW$", bend right = 20] W -> ["$\aE$", bend right = 20] O;
                O ->["$\aE$", bend right = 20] E -> ["$\aW$", bend right = 20] O;
                O ->["$\aN$", bend right = 20] N -> ["$\aS$", bend right = 20] O;
                O ->["$\aS$", bend right = 20] S -> ["$\aN$", bend right = 20] O;
                O ->["$\aE$",pos = .7, bend right = 35] E' -> ["$\aN$", pos = .6] E'N' -> ["$\aW$"] N' ->["$\aS$",pos = .3, bend right] O;
                };
            \end{tikzpicture}\right)\\
            & \mbox{for some $i \in \range{1, n}$, and each $\graph_{\term[3]_{c}}$ is of}\\
            & \mbox{the form} \left(\begin{tikzpicture}[baseline = -.8cm]
                \graph[grow right= 1.cm, branch down = 1.cm, nodes={, font = \scriptsize}]{
                {N/[mynode, draw, circle],O/[mynode, draw, circle]} -!- {/,E/[mynode, draw, circle]}
                };
                \node[below left = .3cm and .2cm of O](src){};
                \node[below left = .2cm and .3cm of O](tgt){};
                \graph[use existing nodes, edges={color=black, pos = .5, earrow}, edge quotes={fill = white, inner sep=1pt,font= \scriptsize}]{
                src -> O -> tgt;
                O ->["$\aE$", bend right = 20] E -> ["$\aW$", bend right = 20] O;
                O ->["$\aN$", bend right = 20] N -> ["$\aS$", bend right = 20] O;
                E ->["$d$", out = -20, in = 20, looseness = 20] E;
                N ->["$d'$", out = -20, in = 20, looseness = 20] N;
                };
            \end{tikzpicture}\right)\\
            & \mbox{for some $d \in H_c$ and $d' \in V_c$.}
        \end{aligned}
    \end{align*}
    \caption{The form of \kl{graphs} in $\glang_{\term[3]_{\mathcal{D}}}(\term[1]_{\mathtt{gr}}')$
    where the pairs of \kl{vertices} connected with some $\psig$-labelled edge are all merged (note that each pair of \kl{vertices} connected with
    a $\psig$-labelled edge is mapped to the same \kl{vertex}, in $\DREL_{\mathtt{test}}$).}
    \label{figure: graphs}
\end{figure}
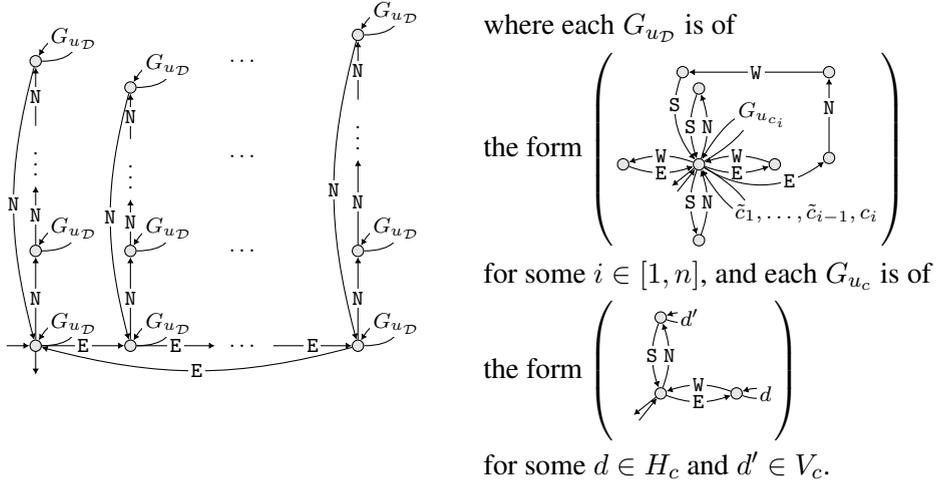
\begin{thm}\label{thm: undecidable while emptiness}
    The \kl{emptiness problem} of \kl{$\mathrm{PWP}_{\lop}$ terms} on $\DREL_{\mathtt{test}}$ (and also on $\DREL_{\mathtt{test}}^{\mathrm{fin}}$) is $\mathrm{\Pi}^{0}_{1}$-complete.
\end{thm}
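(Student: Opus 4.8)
The plan is to prove the two bounds separately. For membership in $\mathrm{\Pi}^{0}_{1}$ I would first observe that $\DREL_{\mathtt{test}} = \REL_{\mathtt{func}, \mathtt{test}}$ is \kl{submodel-closed} (restricting a functional relation keeps it functional, and both \kl{test} \kl{equations} survive $\bl \restriction B$ because $\cap$, $\cup$, and $\triangle$ all commute with intersecting by $B^2$) and that the membership $\val \in \DREL_{\mathtt{test}}$ is decidable in $\textsc{P}$. Since every \kl{$\mathrm{PWP}_{\lop}$ term} is a $\PCoR_{\set{\bl^{*}}}$ \kl{term}, the query $\term \le \emp$ is an instance of the \kl{equational theory} of $\PCoR_{\set{\bl^{*}, \com{\eps}, \com{x}}}$, so \Cref{prop: upper bound}.\ref{prop: upper bound 1} gives the $\mathrm{\Pi}^{0}_{1}$ upper bound. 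The finite-model version agrees with the general one by the finite-model property underlying \Cref{prop: submodel cover} (each \kl{graph} in $\glang(\term)$ is finite), so $\DREL_{\mathtt{test}}^{\mathrm{fin}}$ is in $\mathrm{\Pi}^{0}_{1}$ as well.

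For hardness I would reduce from the complement of \kl{the periodic domino problem}, which is $\mathrm{\Pi}^{0}_{1}$-hard by \Cref{prop: domino tiling undecidable}. Given a \kl{domino system} $\mathcal{D}$, I keep $\term[1]_{\mathtt{gr}}'$ and $\Gamma_{\mathcal{D}}$ from the set-up above and define the reduction output to be $\Tr'_{\term[3]_{\mathcal{D}}^{\lop}}(\term[1]_{\mathtt{gr}}')$, which by \Cref{prop: loop transformation GKAT} is a \kl{$\mathrm{PWP}_{\lop}$ term} equal to $\Tr_{\term[3]_{\mathcal{D}}^{\lop}}(\term[1]_{\mathtt{gr}}')$ over $\REL_{\mathtt{test}}$ and is clearly computable from $\mathcal{D}$. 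Via \Cref{lem: connectivity'} it then suffices to establish
\[ \DREL^{\mathrm{fin}}_{\mathtt{test}, \Gamma_{\mathcal{D}}} \models \term[1]_{\mathtt{gr}}' \le \emp \quad\Longleftrightarrow\quad \DREL_{\mathtt{test}} \models \Tr_{\term[3]_{\mathcal{D}}^{\lop}}(\term[1]_{\mathtt{gr}}') \le \emp, \]
whose left-hand side is the negation of ``$\mathcal{D}$ has a \kl{periodic tiling}''. Here I use the identity $\DREL^{\mathrm{fin}}_{\mathtt{test}, \Gamma_{\mathcal{D}}} = \DREL^{\mathrm{fin}}_{\mathtt{test},\, (\aE \union \aN \union \aW \union \aS)^* \ge \top,\, \term[3]_{\mathcal{D}} \ge \eps}$ noted above and fold the single \kl{loop hypothesis} $\term[3]_{\mathcal{D}} \ge \eps$ into the term.

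The crux is this folding, where the sufficient conditions of \Cref{section: Hypothesis elimination using graph loops} meet the \kl{submodel}-closedness of $\DREL_{\mathtt{test}}$. For the substantial direction, suppose $\DREL_{\mathtt{test}} \not\models \Tr_{\term[3]_{\mathcal{D}}^{\lop}}(\term[1]_{\mathtt{gr}}') \le \emp$, witnessed by $\tuple{o, o'} \in \hat{\val}(\Tr_{\term[3]_{\mathcal{D}}^{\lop}}(\term[1]_{\mathtt{gr}}'))$ with $\val \in \DREL_{\mathtt{test}}$. By \Cref{prop: glang u} there are $\graph[2] \in \glang(\term[1]_{\mathtt{gr}}')$ and a \kl{graph homomorphism} $f \colon \graph[2] \homo \const{G}(\val, o, o')$ with $\triangle_{f(\domain{\graph[2]})} \subseteq \hat{\val}(\term[3]_{\mathcal{D}}^{\lop})$. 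Restricting to the \kl{submodel} $\val \restriction f(\domain{\graph[2]})$ (legal, as $\DREL_{\mathtt{test}}$ is \kl{submodel-closed}) makes $f$ \kl{surjective}; surjectivity converts the per-vertex $\term[3]_{\mathcal{D}}$-loops into the genuine hypotheses $\Gamma_{\mathcal{D}}$ on the submodel — the $2$-cycle, $4$-cycle and colour conditions hold at every vertex, while surjectivity onto the traced image recovers $(\aE \union \aN \union \aW \union \aS)^* \ge \top$ — and the submodel is finite. Hence it lies in $\DREL^{\mathrm{fin}}_{\mathtt{test}, \Gamma_{\mathcal{D}}}$ and, by \Cref{prop: glang}, still witnesses $\term[1]_{\mathtt{gr}}' \not\le \emp$ there, so \Cref{lem: connectivity'} produces a \kl{periodic tiling}. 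Conversely, a \kl{periodic tiling} gives via \Cref{lem: connectivity'} a finite model in $\DREL_{\mathtt{test}}$ satisfying $\term[3]_{\mathcal{D}} \ge \eps$ and witnessing $\term[1]_{\mathtt{gr}}'$, which by \Cref{prop: loop transformation}.\ref{prop: loop transformation sup} also witnesses $\Tr_{\term[3]_{\mathcal{D}}^{\lop}}(\term[1]_{\mathtt{gr}}')$; this is the easy half.

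I expect the main obstacle to be the surjectivity/grid-closure verification, i.e. the analogue of \Cref{lem: grid surjective} over $\DREL_{\mathtt{test}}$ rather than over $\algclass_{\mathtt{gr}}$. I would re-run its inductive argument (the function and converse claims and the $\aE$-commutation) using determinism $\mathtt{func}$ in place of the auxiliary inequations $\aE \aW \le \eps, \dots$, so that the folded $\term[3]_{\mathcal{D}}$-loops together with functionality force the traced $\aE$/$\aN$ walks to close into a genuine torus grid; once the image is a torus, strong connectivity on the image is automatic and \Cref{thm: hypothesis elimination using graph loops} together with \Cref{ex: grid surjective} applies. The remaining technical point is to check that the colour constraints encoded by $\term[3]_{c}$ and the guarded unions $\union_{c}$ transfer correctly through this argument, so that the submodel indeed satisfies every \kl{equation} of $\Gamma_{\mathcal{D}}$.
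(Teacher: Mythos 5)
Your overall strategy tracks the paper closely: the same $\Pi^0_1$ upper bound via \Cref{prop: upper bound}, the same reduction term $\Tr'_{\term[3]_{\mathcal{D}}}(\term[1]_{\mathtt{gr}}')$, and \Cref{lem: connectivity'} as the bridge to the periodic domino problem, with the easy half handled exactly as in the paper through \Cref{prop: loop transformation}.\ref{prop: loop transformation sup}. The gap is in the substantial direction, where you compress the paper's two-step hypothesis elimination into a single submodel argument. You claim that restricting $\val$ to $B = f(\domain{\graph[2]})$ makes $f$ surjective and that this surjectivity ``converts the per-vertex $\term[3]_{\mathcal{D}}$-loops into the genuine hypotheses $\Gamma_{\mathcal{D}}$ on the submodel.'' Surjectivity does no such work. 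The loops $\tuple{z,z} \in \hat{\val}(\term[3]_{\mathcal{D}}^{\lop})$ hold in $\val$, but their witnessing vertices (the $\aE$-, $\aW$-, $\aN$-, $\aS$-neighbours and the coloured vertices of the gadgets in $\glang(\term[3]_{\mathcal{D}}^{\lop})$) need not lie in $B$: graphs in $\glang(\term[1]_{\mathtt{gr}}')$ contain only $\aE$-, $\aN$- and test-edges, so nothing in the definition of $B$ puts, say, the $\aW$-witness at $z$ inside $B$, and loop hypotheses are in general \emph{not} preserved under passing to submodels --- this is precisely the difficulty that \Cref{thm: hypothesis elimination using graph loops,thm: hypothesis elimination using graph loops 2} exist to circumvent. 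What rescues your argument in this specific situation is determinism: one must re-prove the claims of \Cref{lem: grid surjective} (Function, Converse, $\aE$-commutation) over $\DREL_{\mathtt{test}}$ to show that functionality forces every loop witness at a traced vertex to be another traced vertex, so that $B$ is a torus on which all of $\Gamma_{\mathcal{D}}$ holds. You gesture at this in your final paragraph, but that closure argument \emph{is} the entire content of the hard direction, not a residual ``verification,'' and your proposal asserts its conclusion in the middle paragraph without the argument that makes it true.

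Moreover, your closing appeal to \Cref{thm: hypothesis elimination using graph loops} together with \Cref{ex: grid surjective} is not available in your setting: \Cref{ex: grid surjective} requires $\algclass \subseteq \algclass_{\mathtt{gr}}$, i.e.\ strong connectivity built into the class, and the literal conclusion of \Cref{lem: grid surjective} (surjectivity onto the \emph{whole} model) is simply false over $\DREL_{\mathtt{test}}$, where a model may contain parts disconnected from the trace. The paper's ordering is designed to avoid ever needing ``loops survive restriction'': step $(\bigstar_1)$ eliminates $\term[3]_{\mathcal{D}} \ge \eps$ while staying inside $\DREL^{\mathrm{fin}}_{\mathtt{test}, (\aE \union \aN \union \aW \union \aS)^* \ge \top}$, where \Cref{lem: grid surjective} makes every loop-respecting homomorphism surjective onto the whole model, so the model itself (not a submodel) satisfies the loop hypothesis; only afterwards does step $(\bigstar_2)$ take submodels via \Cref{ex: surjective}, and there only strong connectivity --- inherited from the connectedness of the loop-extension graphs of \Cref{figure: graphs} --- has to survive the restriction. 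To repair your proof, either adopt that order, or genuinely carry out the $\DREL_{\mathtt{test}}$-analogue of the Function/Converse/commutation claims so that all loop witnesses are shown to lie in $B$; as written, the proposal has a real hole at its central step.
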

\begin{proof}
    (In $\mathrm{\Pi}^{0}_{1}$):
    By \Cref{prop: upper bound} with that $\DREL_{\mathtt{test}}$ is \kl{submodel-closed}.
    (Hardness):
    We have:
    \begin{align*}
         & \mbox{$\mathcal{D}$ does not have a \kl{periodic tiling}}                                                                 \\
         & \Leftrightarrow\quad \DREL^{\mathrm{fin}}_{\mathtt{test}, (\aE \union \aN \union \aW \union \aS)^* \ge \top, \term[3]_{\mathcal{D}} \ge \eps} \models \term[1]_{\mathtt{gr}}' \le \emp \tag{\Cref{lem: connectivity'}}                                                                                 \\
         & \Leftrightarrow\quad \DREL^{\mathrm{fin}}_{\mathtt{test}, (\aE \union \aN \union \aW \union \aS)^* \ge \top} \models \Tr'_{\term[3]_{\mathcal{D}}}(\term[1]_{\mathtt{gr}}') \le \emp   \tag{\Cref{ex: grid surjective} ($\bigstar_1$) with \Cref{prop: loop transformation GKAT}}                                                                                           \\
         & \Leftrightarrow\quad \DREL_{\mathtt{test}} \models \Tr'_{\term[3]_{\mathcal{D}}}(\term[1]_{\mathtt{gr}}') \le \emp.   \tag{\Cref{ex: surjective} ($\bigstar_2$)}
    \end{align*}
    For ($\bigstar_1$),
    by
    $\DREL^{\mathrm{fin}}_{\mathtt{test}, (\aE \union \aN \union \aW \union \aS)^* \ge \top} \subseteq \algclass_{\mathtt{gr}}$,
    $\DREL_{\mathtt{test}} \models \term[1]_{\mathtt{gr}}' \le \term[1]_{\mathtt{gr}}$, and
    $\DREL_{\mathtt{test}} \models \term[3]_{\mathcal{D}} \le \term[3]_{\mathtt{gr}}$, we can apply \Cref{ex: grid surjective}.
    For ($\bigstar_2$),
    we have $\mathsf{Sur}_{\glang_{\term[3]_{\mathcal{D}}}(\term[1]_{\mathtt{gr}}')}(\DREL_{\mathtt{test}})
    \subseteq \DREL^{\mathrm{fin}}_{\mathtt{test}, (\aE \union \aN \union \aW \union \aS)^* \ge \top}$,
    because each \kl{graph} of $\mathsf{Sur}_{\glang_{\term[3]_{\mathcal{D}}}(\term[1]_{\mathtt{gr}}')}(\DREL_{\mathtt{test}})$ is connected with respect to $\aE$, $\aN$, $\aW$, and $\aS$ (see also \Cref{figure: graphs}).
    Thus, we can apply \Cref{ex: surjective}.
    Hence, we can reduce from the complement of \kl{the periodic domino problem}, which is $\mathrm{\Pi}^{0}_{1}$-hard (\Cref{prop: domino tiling undecidable}).
    Additionally, $\DREL_{\mathtt{test}} \models \term[3]_{\mathcal{D}} \le \term[3]_{\mathtt{gr}}$
    iff $\DREL_{\mathtt{test}}^{\mathrm{fin}} \models \term[3]_{\mathcal{D}} \le \term[3]_{\mathtt{gr}}$, because $\mathsf{Sur}_{\glang_{\term[3]_{\mathcal{D}}}(\term[1]_{\mathtt{gr}}')}(\DREL_{\mathtt{test}})
    \subseteq \DREL^{\mathrm{fin}}_{\mathtt{test}} \subseteq \DREL_{\mathtt{test}}$.
\end{proof}

\subsection{Comparison to the undecidability of SDIPDL}
We first give the syntax of \kl{SDIPDL} \cite{goldblattWellstructuredProgramEquivalence2012} as a fragment of $\PCoR_{\set{\bl^{*}, -}}$.
We use the following abbreviations (the \kl{antidomain operator} $\bl^{\adom}$ is used for encoding the negation of formulas):
\begin{align*}
    \const{f} &\defeq \emp,&
    \const{t} &\defeq \eps,&
    \fml[1] \land \fml[2] &\defeq \fml[1] \compo \fml[2],&
    \fml[1] \lor \fml[2] &\defeq \fml[1] \union \fml[2],&
    \lnot \fml[1] &\defeq \fml[1]^{\adom},\\
    \langle \term \rangle \fml[1] &\defeq (\term \compo \fml[1])^{\dom},&
    [\term] \fml[1] &\defeq (\term \compo \fml[1]^{\adom})^{\adom},&
    \term[1] \union_{\fml} \term[2] &\defeq \fml \term[1] \union (\lnot \fml) \term[2],&
    \term[1]^{*_{\fml}} &\defeq (\fml \term[1])^* (\lnot \fml), &
    \fml[1]? &\defeq \fml[1].
\end{align*}

The sets of \intro*\kl{SDIPDL} formulas and programs, written $\mathrm{F}$ and $\mathrm{P}$,
are mutually defined as the minimal set of $\PCoR_{\set{\bl^{*}, -}}$ \kl{terms} satisfying the following:
\begin{gather*}
    \begin{prooftree}[separation = .5em]
        \hypo{p \in \psig}
        \infer1{p \in \mathrm{F}}
    \end{prooftree}
    \hspace{.8em}
    \begin{prooftree}[separation = .5em]
        \hypo{\mathstrut}
        \infer1{\const{f} \in \mathrm{F}}
    \end{prooftree}
    \hspace{.8em}
    \begin{prooftree}[separation = .5em]
        \hypo{\mathstrut}
        \infer1{\const{t} \in \mathrm{F}}
    \end{prooftree}
    \hspace{.8em}
    \begin{prooftree}[separation = .5em]
        \hypo{\fml[1] \in \mathrm{F}}
        \hypo{\fml[2] \in \mathrm{F}}
        \infer2{\fml[1] \land \fml[2] \in \mathrm{F}}
    \end{prooftree}
    \hspace{.8em}
    \begin{prooftree}[separation = .5em]
        \hypo{\fml[1] \in \mathrm{F}}
        \hypo{\fml[2] \in \mathrm{F}}
        \infer2{\fml[1] \lor \fml[2] \in \mathrm{F}}
    \end{prooftree}
    \hspace{.8em}
    \begin{prooftree}[separation = .5em]
        \hypo{\fml[1] \in \mathrm{F}}
        \infer1{\lnot \fml[1] \in \mathrm{F}}
    \end{prooftree}
    \hspace{.8em}
    \begin{prooftree}[separation = .5em]
        \hypo{\term[1] \in \mathrm{P}}
        \hypo{\fml[1] \in \mathrm{F}}
        \infer2{\langle \term[1] \rangle \fml[1] \in \mathrm{F}}
    \end{prooftree}
    \hspace{.8em}
    \begin{prooftree}[separation = .5em]
        \hypo{\term[1] \in \mathrm{P}}
        \hypo{\fml[1] \in \mathrm{F}}
        \infer2{[\term[1]]\fml[1] \in \mathrm{F}}
    \end{prooftree}\\
    \begin{prooftree}[separation = .5em]
        \hypo{a \in \asig}
        \infer1{a \in \mathrm{P}}
    \end{prooftree}
    \hspace{.8em}
    \begin{prooftree}[separation = .5em]
        \hypo{\term[1] \in \mathrm{P}}
        \hypo{\term[2] \in \mathrm{P}}
        \infer2{\term[1] \compo \term[2] \in \mathrm{P}}
    \end{prooftree}
    \hspace{.8em}
    \begin{prooftree}[separation = .5em]
        \hypo{\fml \in \mathrm{F}}
        \hypo{\term[1] \in \mathrm{P}}
        \hypo{\term[2] \in \mathrm{P}}
        \infer3{\term[1] \union_{\fml} \term[2] \in \mathrm{P}}
    \end{prooftree}
    \hspace{.8em}
    \begin{prooftree}[separation = .5em]
        \hypo{\fml \in \mathrm{F}}
        \hypo{\term[1] \in \mathrm{P}}
        \infer2{\term[1]^{*_{\fml}} \in \mathrm{P}}
    \end{prooftree}
    \hspace{.8em}
    \begin{prooftree}[separation = .5em]
        \hypo{\term[1] \in \mathrm{P}}
        \hypo{\term[2] \in \mathrm{P}}
        \infer2{\term[1] \cap \term[2] \in \mathrm{P}}
    \end{prooftree}
    \hspace{.8em}
    \begin{prooftree}[separation = .5em]
        \hypo{\fml \in \mathrm{F}}
        \infer1{\fml[1]? \in \mathrm{P}}
    \end{prooftree}.
\end{gather*}
Additionally, we use the following abbreviations:
$\fml[1] \to \fml[2] \defeq \lnot \fml[1] \lor \fml[2]$ and $\fml[1] \leftrightarrow \fml[2] \defeq (\fml[1] \to \fml[2]) \land (\fml[2] \to \fml[1])$ in \kl{SDIPDL} formulas.

For \kl{SDIPDL} formulas $\fml$, we write $\DREL_{\mathtt{test}} \models \fml$ if $\DREL_{\mathtt{test}} \models \fml = \const{t}$ holds.
We then have $\DREL_{\mathtt{test}} \models \term \le \emp$ iff $\DREL_{\mathtt{test}} \models [\term] \const{f}$.
Thus, \Cref{thm: undecidable while emptiness} can be rephrased as follows.
\begin{cor}\label{cor: undecidable PDL}
    The \kl{validity problem} (and also the \kl{finite validity problem}) is $\mathrm{\Pi}^{0}_{1}$-complete for formulas of the form $[\term] \const{f}$ in \kl{SDIPDL} where $\term$ is a \kl{$\mathrm{PWP}_{\lop}$ term} \emph{without any modal operators}.
\end{cor}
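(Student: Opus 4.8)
The plan is to obtain the corollary as a direct transcription of \Cref{thm: undecidable while emptiness}, using the two observations recorded just before the statement. First I would unfold $[\term]\const{f}$ in the \kl{SDIPDL} abbreviations. Since $\const{f} = \emp$, we get $\const{f}^{\adom} = \emp^{\adom} = ((\emp\top)^{-})^{\lop} = \top^{\lop} = \eps = \const{t}$, so that $[\term]\const{f} = (\term \compo \const{f}^{\adom})^{\adom} = \term^{\adom} = ((\term\top)^{-})^{\lop}$. Reading this relationally, for $\val \in \DREL_{\mathtt{test}}$ a pair $\tuple{x, x}$ lies in $\hat{\val}([\term]\const{f})$ precisely when $x$ is \emph{not} in the domain of $\hat{\val}(\term)$; hence $\DREL_{\mathtt{test}} \models [\term]\const{f} = \const{t}$ (that is, $= \eps$) holds iff $\hat{\val}(\term)$ has empty domain for every $\val$, i.e.\ iff $\DREL_{\mathtt{test}} \models \term \le \emp$. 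This is exactly the announced equivalence, and the identical computation works verbatim over $\DREL_{\mathtt{test}}^{\mathrm{fin}}$.

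Next I would verify that every \kl{$\mathrm{PWP}_{\lop}$ term} is an \kl{SDIPDL} program free of modal operators, by a short structural induction. A \kl{test} $b \in \mathrm{B}$ is built only from \kl{primitive tests} $p$ together with $\eps$, $\emp$, $\compo$, and $\union$, which under the \kl{SDIPDL} abbreviations are exactly $\const{t}$, $\const{f}$, $\land$, and $\lor$; hence $b$ is an \kl{SDIPDL} formula with no $\langle\cdot\rangle$ or $[\cdot]$. An \kl{action} $x \in \asig$ is an atomic program; composition is common to both; and the annotated constructors match their \kl{SDIPDL} counterparts, namely $\term[1] \union_b \term[2]$, $\term[1]^{*_b}$, and $\term[1]^{\lop} = \term[1] \cap \const{t}?$. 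The only conciliation needed is between the two complementation conventions: one checks $\DREL_{\mathtt{test}} \models \widetilde{b} = \lnot b$ for a \kl{test} $b$, both sides being $\eps$ restricted to the \kl{vertices} where $b$ fails (using $b \le \eps$ under $\mathtt{test}$). None of these clauses introduces a modal operator, so the image is modal-operator-free.

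Finally I would assemble the reduction: the two facts above identify the \kl{emptiness problem} of \kl{$\mathrm{PWP}_{\lop}$ terms} on $\DREL_{\mathtt{test}}$ (respectively $\DREL_{\mathtt{test}}^{\mathrm{fin}}$) with the \kl{validity problem} (respectively \kl{finite validity problem}) for \kl{SDIPDL} formulas of the form $[\term]\const{f}$ whose $\term$ is modal-operator-free, and the correspondence $\term \mapsto [\term]\const{f}$ transfers both the upper and the lower bound. Since the former problem is $\mathrm{\Pi}^{0}_{1}$-complete by \Cref{thm: undecidable while emptiness}, so is the latter, which is the claim.

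The main thing to get right — rather than a genuine obstacle — is the second step: confirming that $\bl^{\lop}$ and the two ``if/while'' operators genuinely live inside \kl{SDIPDL} without a hidden $\langle\cdot\rangle$ or $[\cdot]$, and that PWP's complemented \kl{test} $\widetilde{b}$ coincides with \kl{SDIPDL}'s $\lnot b = b^{\adom}$ over $\DREL_{\mathtt{test}}$. Everything else is a routine re-reading of \Cref{thm: undecidable while emptiness}.
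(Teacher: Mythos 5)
Your proof is correct and takes essentially the same approach as the paper: the paper's own justification is precisely the observation that $\DREL_{\mathtt{test}} \models \term \le \emp$ iff $\DREL_{\mathtt{test}} \models [\term]\const{f}$ (likewise over $\DREL_{\mathtt{test}}^{\mathrm{fin}}$), after which \Cref{thm: undecidable while emptiness} is simply rephrased. The details you supply --- unfolding $[\term]\const{f}$ to $\term^{\adom}$, and reconciling the syntactic complement $\widetilde{b}$ with SDIPDL's $\lnot b = b^{\adom}$ over $\DREL_{\mathtt{test}}$ so that \kl{$\mathrm{PWP}_{\lop}$ terms} really are modal-operator-free programs --- are exactly the steps the paper leaves implicit.
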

Hence, \Cref{thm: undecidable while emptiness} refines the undecidability result of \kl{SDIPDL} \cite{goldblattWellstructuredProgramEquivalence2012} with respect to the nesting of modal operators.
Notably, the diamond operator $\langle \bl \rangle$ does not appear in the fragment of \Cref{cor: undecidable PDL}.

\begin{rem}[On nested modal operators in {\cite{goldblattWellstructuredProgramEquivalence2012}}]\label{rem: nested modal operators}
    The reduction of \cite{goldblattWellstructuredProgramEquivalence2012} (in particular, the formula ``$\rho_{1}$'' \cite[p.\ 5]{goldblattWellstructuredProgramEquivalence2012}) contains nested box and diamond operators, by unfolding notations.
    Note that:
    (1) the formula ``$\mathsf{fix}(\term)$'' \cite[p.\ 3]{goldblattWellstructuredProgramEquivalence2012} is equivalent to the formula $\langle\term^{\lop}\rangle \const{t}$ on deterministic structures,
    (2) the formula $[\term^*] \fml$ abbreviates the formula $[\term^{*_{\fml}}] \const{f}$,
    so the used formula contains \kl{tests} containing modal operators.
\end{rem}

\begin{rem}[On the complexity difference from {\cite{goldblattWellstructuredProgramEquivalence2012}}]\label{rem: difference}
    The \kl{validity problem} is $\mathrm{\Pi}^{0}_{1}$-complete for the fragment of \Cref{cor: undecidable PDL},
    whereas the \kl{validity problem} is $\mathrm{\Pi}^{1}_{1}$-complete for full \kl{SDIPDL} \cite[Theorem 4.1]{goldblattWellstructuredProgramEquivalence2012}.
    The $\mathrm{\Pi}^{0}_{1}$ upper bound is derived from the finite model property (\Cref{prop: upper bound}).
    Indeed, \kl{SDIPDL} does not have the finite model property.
    For instance, when we consider only deterministic structures,
    the formula $p \land [a][a^{*}](\lnot p \land \langle (b a)^{\lop} a \rangle \const{t})$ is satiable only on infinite structures.
    (Proof: Let $\fml$ be the formula above.
    When $\tuple{x_0,x_0} \in \hat{\val}(\fml)$ for some $\val \in \DREL_{\mathtt{test}}^{\mathrm{fin}}$,
    there are $x_1, x_2, \dots$ such that
    $\tuple{x_0,x_0} \in \hat{\val}(p)$ and
    for each $i \ge 1$,
    $\tuple{x_i, x_i} \not\in \hat{\val}(p)$ (hence, $x_0 \neq x_i$),
    $\tuple{x_{i-1}, x_{i}} \in \hat{\val}(a)$, and $\tuple{x_{i}, x_{i-1}} \in \hat{\val}(b)$ (by the functionality of $a$).
    As $\val$ is finite,
    there is some $i \ge 1$ such that $x_i = x_j$ for some $1 \le j < i$ and $x_0, \dots, x_{i-1}$ are pairwise distinct.
    Then, $\tuple{x_{j-1}, x_i} \in \hat{\val}(a)$ and $\tuple{x_{i-1}, x_i} \in \hat{\val}(a)$, which contradicts to the functionality of $b$.
    Hence, $\fml$ is unsatisfiable on $\DREL_{\mathtt{test}}^{\mathrm{fin}}$.
    In contrast, $\fml$ is clearly satisfiable on $\DREL_{\mathtt{test}}$.)
    Additionally, note that the \kl{finite validity problem} is $\mathrm{\Pi}^{0}_{1}$-complete
    for both the fragment of \Cref{cor: undecidable PDL} and full \kl{SDIPDL} \cite[Theorem 4.4]{goldblattWellstructuredProgramEquivalence2012}.
\end{rem}

Finally, we summarize related complexity results in \Cref{table: complexity SDIPDL}.
(Note that each \kl{equation} $\term[1] = \term[2]$ can be embedded into the formula $[\term[1]]p \leftrightarrow [\term[2]]p$ preserving validity where $p$ is a fresh \kl{primitive test}; see, e.g., \cite[Section 5]{fischerPropositionalDynamicLogic1979}.
Similarly, each $\term[1] = \emp$ can be embedded into the formula $[\term[1]]\const{f}$ preserving validity.
Also, as an easy consequence of \Cref{prop: glang}, $\REL_{\mathtt{test}} \models \term[1] = \term[2]$ iff $\DREL_{\mathtt{test}} \models \term[1] = \term[2]$ for propositional while programs (without \kl{loop}) $\term[1]$ and $\term[2]$; see \Cref{section: DREL KAT} for a proof.)
\begin{table}
    \newcommand{\unimportant}[1]{\textcolor{gray}{\footnotesize #1}}
    \centering
    \begin{tabular}{c||c|c|c}
        programs $\backslash$ formulas &  \shortstack{\\ $[\term[1]]\const{f}$\\ {\footnotesize (emptiness: $\term[1] = 0$)} } &  \shortstack{\\ $[\term[1]]p \leftrightarrow [\term[2]]p$\\ {\footnotesize (equivalence: $\term[1] = \term[2]$)}} & any \\
        \hline\hline
        \shortstack{Propositional while programs} & \shortstack{\\ \unimportant{$\textsc{coNP}$-c}\\
        \unimportant{(cf.\ \cite[Section 5]{kozenHoareLogicKleene2000}\tablefootnote{``Cohen, E.\ 1999. Personal communication'' \cite[p.\ 73]{kozenHoareLogicKleene2000} (for KAT).})}
        } & \shortstack{\\ \unimportant{$\textsc{coNP}$-h and in $\textsc{PSpace}$}\\ \unimportant{(\cite[Proposition 5.11]{smolkaGuardedKleeneAlgebra2019})}} & \shortstack{\\  \unimportant{$\textsc{PSpace}$-c}\\  \unimportant{(SDPDL; \cite{halpernPropositionalDynamicLogic1983})}}\\
        \hline
        \shortstack{\kl[propositional while programs with loop]{Propositional while programs}\\
        \kl[propositional while programs with loop]{with loop} (\kl{$\mathrm{PWP}_{\lop}$ terms})} & \multicolumn{2}{c|}{\shortstack{\\ $\Pi^{0}_{1}$-c\\ (\Cref{thm: undecidable while emptiness})}} & \shortstack{\\  \unimportant{$\Pi^{1}_{1}$-c}\\  \unimportant{(\cite{goldblattWellstructuredProgramEquivalence2012})}}\\
        \hline
        \shortstack{\kl[propositional while programs with intersection]{Propositional while programs}\\
        \kl[propositional while programs with intersection]{with intersection}} & \multicolumn{2}{c|}{\shortstack{\\ $\Pi^{0}_{1}$-c\\ (\Cref{thm: undecidable while emptiness})}} & \shortstack{\\  \unimportant{$\Pi^{1}_{1}$-c}\\  \unimportant{(\kl{SDIPDL}; \cite{goldblattWellstructuredProgramEquivalence2012})}}
    \end{tabular}
    \caption{Complexity of the validity problem on the class of \kl{valuations} of \emph{deterministic} \kl{relational models} $\DREL_{\mathtt{test}}$
    (where the number of \kl{primitive tests} is not fixed).
    Here, $\term[1]$ and $\term[2]$ does not contain $\langle \bl\rangle$, $[\bl]$, or $p$.}
    \label{table: complexity SDIPDL}
\end{table}

\section{On the number of primitive tests and actions}\label{section: on the number of primitive tests and actions}
In this section,
we consider the fragments of \kl{$\mathrm{PWP}_{\lop}$ terms} obtained by restricting the number of \kl{primitive tests} and \kl{actions}.
(In the sequel, we count the number of \kl{primitive tests} as the number of involution pairs of \kl{primitive tests};
for instance, the \kl{term} $p \tilde{p} q$ has $2$ \kl{primitive tests} $p$ / $\tilde{p}$ and $q$.)

We say that a \kl{$\mathrm{PWP}_{\lop}$ term} $\term$ is \intro*\kl{primitive} if every \kl{test} occurring in $\term$ is a \kl{primitive test}.
Namely, the set of \intro*\kl{primitive $\mathrm{PWP}_{\lop}$ terms} is the minimal subset $\mathrm{W}$ of $\mathrm{PCoR}_{\set{\bl^{*}}}$ satisfying the following:
\begin{gather*}
    \begin{prooftree}[separation = .5em]
        \hypo{\mathstrut}
        \infer1{\eps \in \mathrm{W}}
    \end{prooftree}
    \hspace{.8em}
    \begin{prooftree}[separation = .5em]
        \hypo{\mathstrut}
        \infer1{\emp \in \mathrm{W}}
    \end{prooftree}
    \hspace{.8em}
    \begin{prooftree}[separation = .5em]
        \hypo{p \in \psig}
        \infer1{p \in \mathrm{W}}
    \end{prooftree}
    \hspace{.8em}
    \begin{prooftree}[separation = .5em]
        \hypo{x \in \asig}
        \infer1{x \in \mathrm{W}}
    \end{prooftree}
    \hspace{.8em}
    \begin{prooftree}[separation = .5em]
        \hypo{\term[1] \in \mathrm{W}}
        \hypo{\term[2] \in \mathrm{W}}
        \infer2{\term[1] \compo \term[2] \in \mathrm{W}}
    \end{prooftree}
    \hspace{.8em}
    \begin{prooftree}[separation = .5em]
        \hypo{\term[1] \in \mathrm{W}}
        \hypo{\term[2] \in \mathrm{W}}
        \hypo{p \in \psig}
        \infer3{\term[1] \union_{p} \term[2] \in \mathrm{W}}
    \end{prooftree}
    \hspace{.8em}
    \begin{prooftree}[separation = .5em]
        \hypo{\term[1] \in \mathrm{W}}
        \hypo{p \in \psig}
        \infer2{\term[1]^{*_{p}} \in \mathrm{W}}
    \end{prooftree}
    \hspace{.8em}
    \begin{prooftree}[separation = .5em]
        \hypo{\term[1] \in \mathrm{W}}
        \infer1{\term[1]^{\lop} \in \mathrm{W}}
    \end{prooftree}.
\end{gather*}
Note that the \kl{term} $\Tr'_{\term[3]_{\mathcal{D}}^{\lop}}(\term[1]_{\mathtt{gr}}')$ given in \Cref{thm: undecidable while emptiness} is a \kl{primitive $\mathrm{PWP}_{\lop}$ term}, by replacing $\sum P$ with $\eps \union_{p_1} (\eps \union_{p_2} (\dots (\eps \union_{p_{n}} \emp) \dots))$,
where $P = \set{p_1, \dots, p_n}$.
For simplicity, we consider only \kl{primitive $\mathrm{PWP}_{\lop}$ terms} in this section.
This restriction will be useful in \Cref{defi: reducing primitive tests}.

\subsection{Reducing to one primitive test}\label{section: reducing primitive tests}
In this subsection, we strengthen the $\Pi^{0}_{1}$-hardness result of \Cref{thm: undecidable while emptiness}, by fixing the number of \kl{primitive tests}.\footnote{%
When the number of colors is fixed, the number of instances of the \kl{periodic domino problem} is finite, and hence the \kl[periodic domino problem]{problem} is trivially decidable.
Hence, we cannot immediately bound the number of \kl{primitive tests} from the reduction of \Cref{thm: undecidable while emptiness}.}
Let $P = \set{p_1, \dots, p_n}$ be a set of \kl{primitive tests}.
Let $m \defeq 2n + 1$ (a sufficiently large integer).
Let $\mathtt{P}$ be a \kl{fresh} \kl{primitive test}.
Let $U$ be the \kl{primitive $\mathrm{PWP}_{\lop}$ term} given by:
\[U \defeq \mathtt{P} \aE \mathtt{P} \aW.\]
We will use $U$ for distinguishing original vertices and newly added vertices in the transformation of \kl{valuations} given later (cf.\ \Cref{figure: encoding}).

We then transform \kl{terms} over \kl{primitive tests} $P$ to \kl[terms]{those} over one \kl{primitive test} $\mathtt{P}$, as follows.
\begin{defi}\label{defi: reducing primitive tests}
    For \kl{primitive $\mathrm{PWP}_{\lop}$ terms} $\term$ over \kl{actions} $\set{\aE,\aN,\aW,\aS}$ and over \kl{primitive tests} $P = \set{p_1, \dots, p_n}$,
    let $\Tr^{(1)}(\term)$ be the \kl{primitive $\mathrm{PWP}_{\lop}$ term} (over one \kl{primitive test} $\mathtt{P}$) defined by:
    \begin{align*}
        \Tr^{(1)}(\eps)                       & \defeq U, &
        \Tr^{(1)}(\emp)                       & \defeq \emp, \\
               \Tr^{(1)}(p_i)                       & \defeq U \aE^{2i+1} \mathtt{P} \aW^{2i+1} U \mbox{ for $i \in \range{1, n}$},  &
        \Tr^{(1)}(a)                       & \defeq U a^{m} U \mbox{ for $a \in \set{\aE,\aN,\aW,\aS}$}, \\
        \Tr^{(1)}(\term[1] \compo \term[2]) & \defeq \Tr^{(1)}(\term[1]) \compo \Tr^{(1)}(\term[2]),  &
        \Tr^{(1)}(\term[1] \union_{p_i} \term[2]) & \defeq \aE^{2i+1}(\aW^{2i+1} \Tr^{(1)}(\term[1]) \union_{\mathtt{P}} \aW^{2i+1} \Tr^{(1)}(\term[2]) ),  \\
        \Tr^{(1)}(\term[1]^{\lop})           & \defeq \Tr^{(1)}(\term[1])^{\lop}, &
        \Tr^{(1)}(\term[1]^{*_{p_i}})          & \defeq \aE^{2i+1} (\aW^{2i+1} \Tr^{(1)}(\term[1]) \aE^{2i+1})^{*_{\mathtt{P}}} \aW^{2i+1}.
    \end{align*}
    Furthermore, for \kl{(quantifier-free) formulas} $\fml$ of \kl{primitive $\mathrm{PWP}_{\lop}$ terms} (over \kl{primitive tests} $\psig$),
    let $\Tr^{(1)}(\fml)$ be the \kl{formula} $\fml$ in witch
    each \kl{primitive $\mathrm{PWP}_{\lop}$ term} $\term[3]$ has been replaced with $\Tr^{(1)}(\term[3])$.
\end{defi}

Let $\aE', \aN', \aW', \aS'$ be \kl{fresh} \kl{actions} and let
\[\Theta \defeq \set{p_i = U\aE^{2i+1} \mathtt{P} \aW^{2i+1}U \mid i \in \range{1, n}} \cup \set{a' = Ua^{m}U \mid a \in \set{\aE, \aN, \aW, \aS}}.\]
We then have the following.

\begin{cla}\label{cla: reducing primitive tests}
    For all \kl{primitive $\mathrm{PWP}_{\lop}$ terms} $\term$ (over \kl{actions} $\set{\aE,\aN,\aW,\aS}$ and over \kl{primitive tests} $P = \set{p_1, \dots, p_n}$),
    we have:
    \[\REL_{\mathtt{test}, \aE \aW = \eps, \aW \aE = \eps, \Theta} \models \term[1]' = \Tr^{(1)}(\term[1]),\]
    where $\term[1]'$ is the \kl{term} $\term$ in which each $a \in \set{\aE,\aN,\aW,\aS}$ has been replaced with $a'$.
\end{cla}
\begin{claproof}
    By easy induction on $\term[1]$.
    Here, we use the following facts for the operators $+_{p_i}$ and $\bl^{*_{p_i}}$:
    \begin{itemize}
        \item \label{cla: reducing primitive tests 1}
        $\REL_{\mathtt{test}, \aE \aW = \eps, \aW \aE = \eps, \Theta} \models \aE^{2i+1} \tilde{\mathtt{P}} \aW^{2i+1} = \tilde{p}_i$.\\
        (We have $\REL_{\mathtt{test}, \aW \aE = \eps} \models
        p_i \compo \aE^{2i+1} \tilde{\mathtt{P}}\aW^{2i+1}
        = \aE^{2i+1} \mathtt{P}\aW^{2i+1} \compo \aE^{2i+1} \tilde{\mathtt{P}}\aW^{2i+1}
        = \emp$ and we have $\REL_{\mathtt{test}, \aE \aW = \eps} \models
        p_i \union \aE^{2i+1} \tilde{\mathtt{P}}\aW^{2i+1}
        = \aE^{2i+1} \mathtt{P}\aW^{2i+1} \union \aE^{2i+1} \tilde{\mathtt{P}}\aW^{2i+1}
        = \eps$.
        We thus have that $\REL_{\mathtt{test}, \aE \aW = \eps, \aW \aE = \eps, \Theta} \models
            \aE^{2i+1} \tilde{\mathtt{P}} \aW^{2i+1}
            = (p_i \union \tilde{p}_i) \aE^{2i+1} \tilde{\mathtt{P}} \aW^{2i+1}
            = \tilde{p}_i \aE^{2i+1} \tilde{\mathtt{P}} \aW^{2i+1} 
            = \tilde{p}_i p_i \union \aE^{2i+1} \tilde{\mathtt{P}} \aW^{2i+1}
            = \tilde{p}_i (p_i \union \aE^{2i+1} \tilde{\mathtt{P}} \aW^{2i+1})
            = \tilde{p}_i$.)
        \item \label{cla: reducing primitive tests 2}
        $\REL_{\mathtt{test}, \aE \aW = \eps, \aW \aE = \eps, \Theta} \models \aE^{2i+1}(\aW^{2i+1} \term[1] \union_{\mathtt{P}} \aW^{2i+1} \term[2]) = \term[1] \union_{p_i} \term[2]$, for all \kl{terms} $\term[1]$ and $\term[2]$.\\
        (Because $\REL_{\mathtt{test}, \aE \aW = \eps, \aW \aE = \eps, \Theta} \models \aE^{2i+1}(\aW^{2i+1} \term[1] \union_{\mathtt{P}} \aW^{2i+1} \term[2])
        = \aE^{2i+1} \mathtt{P} \aW^{2i+1} \term[1] \union \aE^{2i+1} \tilde{\mathtt{P}}\aW^{2i+1} \term[2]
        = p_i \term[1] \union \tilde{p}_i \term[2]
        = \term[1] \union_{p_i} \term[2]$.)
        \item \label{cla: reducing primitive tests 3}
        $\REL_{\mathtt{test}, \aE \aW = \eps, \aW \aE = \eps, \Theta} \models \aE^{2i+1}(\aW^{2i+1} \term[1] \aE^{2i+1})^{*_{\mathtt{P}}} \aW^{2i+1} = \term[1]^{*_{p_i}}$, for all \kl{terms} $\term[1]$.\\
        (Because $\REL_{\mathtt{test}, \aE \aW = \eps, \aW \aE = \eps, \Theta} \models \aE^{2i+1}(\aW^{2i+1} \term[1] \aE^{2i+1})^{*_{\mathtt{P}}} \aW^{2i+1}
        = \aE^{2i+1}(\mathtt{P} \aW^{2i+1} \term[1] \aE^{2i+1})^{*} \tilde{\mathtt{P}} \aW^{2i+1}
        = (\aE^{2i+1} \mathtt{P} \aW^{2i+1} \term[1])^{*} \aE^{2i+1} \tilde{\mathtt{P}} \aW^{2i+1}
        = (p_i \term[1])^{*} \tilde{p}_i
        = \term[1]^{*_{p_i}}$.) 
    \end{itemize}
    Hence, this completes the proof.
\end{claproof}
Using this, we have the following transformation.
\begin{lem}\label{lem: reducing primitive tests}
    For all \kl{quantifier-free formulas} $\fml[1]$ of \kl{primitive $\mathrm{PWP}_{\lop}$ terms}, we have:
    \begin{align*}
        \DREL^{\mathrm{fin}}_{\mathtt{test}, (\aE \union \aN \union \aW \union \aS)^* \ge \top, \term[3]_{\mathtt{gr}} \ge \eps} \models \fml[1]
        & \quad\Longleftrightarrow\quad \DREL^{\mathrm{fin}}_{\mathtt{test}, (\aE \union \aN \union \aW \union \aS)^* \ge \top, \term[3]_{\mathtt{gr}} \ge \eps} \models \Tr^{(1)}(\fml[1]).
    \end{align*}
\end{lem}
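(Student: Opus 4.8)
Throughout write $\algclass$ for the class $\DREL^{\mathrm{fin}}_{\mathtt{test}, (\aE \union \aN \union \aW \union \aS)^* \ge \top, \term[3]_{\mathtt{gr}} \ge \eps}$. The plan is to read $\Tr^{(1)}$ geometrically. By \Cref{prop: grid} a member of $\algclass$ is a finite toroidal grid, and $\Tr^{(1)}$ reinterprets such a grid at ``scale $m$'': the vertices both of whose coordinates are multiples of $m$ play the role of the original (coarse) vertices, the gadget $U = \mathtt{P}\aE\mathtt{P}\aW$ marks them, each coarse action $a$ is simulated by the $m$-step move $U a^m U$, and the bit $p_i$ at a coarse vertex is recorded by whether $\mathtt{P}$ holds $2i+1$ steps to its East. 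The algebraic backbone is \Cref{cla: reducing primitive tests}: after adding fresh actions $\aE', \aN', \aW', \aS'$ and the hypotheses $\Theta$, the Claim gives $\term[1]' = \Tr^{(1)}(\term[1])$ in $\REL_{\mathtt{test}, \aE\aW = \eps, \aW\aE = \eps, \Theta}$, where $\term[1]'$ renames each coarse action $a$ to $a'$. So the whole statement reduces to a correspondence between a coarse grid, on which $\fml[1]$ is evaluated over $p_1, \dots, p_n$ and the primed actions, and its $m$-fold blow-up, on which $\Tr^{(1)}(\fml[1])$ is evaluated over $\mathtt{P}$ and the unprimed actions.

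For the direction $\Longleftarrow$ I would use a blow-up $\beta$. Given a coarse torus $\val \in \algclass$ of size $h \times v$, build the fine torus $\beta(\val)$ of size $hm \times vm$, putting $\mathtt{P}$ on each coarse vertex and its East neighbour (so that $U$ fires exactly on the coarse vertices) and on the offset vertex $2i+1$ to the East exactly when $\hat{\val}(p_i)$ holds there. Extending $\beta(\val)$ by $a' \defeq \hat{\val}(a)$ and by the original $p_i$ produces a model of $\REL_{\mathtt{test}, \aE\aW=\eps, \aW\aE=\eps, \Theta}$ whose primed/$p_i$-reduct is $\val$ and whose unprimed/$\mathtt{P}$-reduct is $\beta(\val)$. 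By \Cref{cla: reducing primitive tests}, $\beta(\val)^{\mathrm{ext}} \models \fml[1]' \Longleftrightarrow \beta(\val)^{\mathrm{ext}} \models \Tr^{(1)}(\fml[1])$; since $\fml[1]'$ is just a renaming of $\fml[1]$ and $\Tr^{(1)}(\fml[1])$ mentions neither the primed actions nor the $p_i$, this collapses to $\val \models \fml[1] \Longleftrightarrow \beta(\val) \models \Tr^{(1)}(\fml[1])$. As $\beta(\val) \in \algclass$, assuming $\algclass \models \Tr^{(1)}(\fml[1])$ gives $\val \models \fml[1]$ for every coarse $\val$, hence $\algclass \models \fml[1]$.

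For the direction $\Longrightarrow$ I would go the other way, from a fine model to a coarse one, in the style of \Cref{ex: surjective}. Given $\val \in \algclass$ and a witness pair against an atom of $\Tr^{(1)}(\fml[1])$, restrict to the submodel generated by the witness (using \Cref{prop: submodel cover,prop: glang}). Because every $\Tr^{(1)}$-term is $U$-sandwiched and built from the scale-$m$ moves, this submodel should be a fine sub-torus that is itself the blow-up $\beta(\val_0)$ of a coarse torus $\val_0 \in \algclass$, whose tests $p_i$ are forced by the $\mathtt{P}$-offsets through \Cref{cla: reducing primitive tests}. Then $\val_0 \models \fml[1]$ by assumption, and transporting the witness back along the correspondence of the first paragraph yields $\val \models \Tr^{(1)}(\fml[1])$.

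The hard part is exactly this converse extraction: showing that the submodel generated by a witness really is a blow-up, i.e.\ that its $U$-marked vertices form a scale-$m$ sublattice closing up into a sub-torus on which $\mathtt{grid}$ and $\mathtt{test}$ hold. This fails for an unconstrained $\mathtt{P}$-pattern, so there is no blanket inverse of $\beta$; what makes it go through is that the terms occurring in the reduction (built from $\term[1]_{\mathtt{gr}}'$ and $\term[3]_{\mathcal{D}}$) are grid-detecting, so that along any witness the $U$-gadget together with the powers $a^m$ and $\aE^{2i+1}$ force a consistent scale-$m$ grid, exactly as the explicit witnessing graphs of \Cref{figure: graphs} display. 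Establishing this surjectivity/forcing — rather than the routine inductive identity of \Cref{cla: reducing primitive tests}, which only translates the coarse evaluation into the fine one once the blow-up structure is known — is the crux, and mirrors the use of \Cref{ex: grid surjective,ex: surjective} in \Cref{thm: undecidable while emptiness}.
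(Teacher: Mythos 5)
There are two genuine gaps, one in each direction.

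\textbf{Forward direction.} Your plan (extract, from a witness inside an arbitrary fine model, a submodel that ``is a blow-up $\beta(\val_0)$'') is not the paper's argument and cannot be made to work. The paper never extracts a coarse model from a fine one: given $\val$ in the class, it \emph{expands} $\val$ to a valuation of the $\Theta$-class by interpreting the fresh variables $p_1,\dots,p_n,\aE',\aN',\aW',\aS'$ as the terms that $\Theta$ prescribes (this preserves $\mathtt{test}$, functionality, and the unprimed grid hypotheses); the assumed validity of $\fml[1]'$, read through the renaming, applies to that expansion, \Cref{cla: reducing primitive tests} rewrites $\fml[1]'$ into $\Tr^{(1)}(\fml[1])$ there, and since $\Tr^{(1)}(\fml[1])$ mentions none of the fresh variables the conclusion descends to $\val$ itself. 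No witness extraction, surjectivity, or use of \Cref{ex: surjective} appears, and none can: as you concede, an arbitrary model of the class carries an arbitrary $\mathtt{P}$-pattern, so the submodel generated by a witness need not be a blow-up at all, and your fallback --- that the terms of the reduction are ``grid-detecting'' --- concerns only $\term[1]_{\mathtt{gr}}'$ and $\term[3]_{\mathcal{D}}$ from \Cref{thm: undecidable while emptiness}, whereas the lemma quantifies over \emph{all} quantifier-free formulas of primitive $\mathrm{PWP}_{\lop}$ terms. The step you call ``the crux'' is thus both unproven and not provable in the generality the statement requires; the actual proof of this direction is purely syntactic.

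\textbf{Backward direction.} Here you do follow the paper's blow-up, but the ``collapse'' step is unsound. Satisfaction of $\fml[1]'$ in $\beta(\val)^{\mathrm{ext}}$ is \emph{not} equivalent to satisfaction of $\fml[1]$ in $\val$, because the primed/$p_i$-reduct of $\beta(\val)^{\mathrm{ext}}$ lives on the blown-up domain and primitive $\mathrm{PWP}_{\lop}$ formulas are domain-sensitive through $\eps$ (and through $\widetilde{b}$, hidden in $\union_{b}$ and $\bl^{*_{b}}$). Concretely, every $\val$ in the class satisfies $\aE\aW=\eps$, yet $\beta(\val)^{\mathrm{ext}}\not\models \aE'\aW'=\eps$: the left-hand term evaluates to the diagonal on the embedded coarse vertices only, a proper subset of the fine diagonal. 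Hence \Cref{cla: reducing primitive tests} at the formula level cannot yield $\val \models \fml[1] \Leftrightarrow \beta(\val)\models\Tr^{(1)}(\fml[1])$. What the paper proves instead, by induction on primitive $\mathrm{PWP}_{\lop}$ terms $\term[1]$, is the cross-domain identity $\hat{\val}'(\Tr^{(1)}(\term[1])) = \set{\tuple{\iota(x),\iota(y)} \mid \tuple{x,y}\in\hat{\val}(\term[1]')}$, where $\iota$ embeds the coarse vertices as the scale-$m$ sublattice of the blow-up $\val'$; the base cases use the explicit $\mathtt{P}$-pattern (e.g.\ $\hat{\val}'(U)$ is the $\iota$-image of the coarse diagonal, and $\hat{\val}'(\Tr^{(1)}(p_i))$ is the $\iota$-image of $\hat{\val}(p_i)$), the Claim enters only as an ingredient, and the formula-level equivalence then follows from injectivity of $\iota$. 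This induction (or an equivalent statement relating evaluations across the two different domains) is exactly what your shortcut omits.
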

\begin{proof}
    Let $\fml[1]'$ and $\term[3]_{\mathtt{gr}}'$
    be the $\fml[1]$ and $\term[3]_{\mathtt{gr}}$
    in which each $a \in \set{\aE, \aN, \aW, \aS}$ has been replaced with $a'$.
    By renaming \kl{variables},
    the left-hand side is equivalent to $\DREL^{\mathrm{fin}}_{\mathtt{test}, (\aE' \union \aN' \union \aW' \union \aS')^* \ge \top, \term[3]_{\mathtt{gr}}' \ge \eps} \models \fml[1]'$.

    ($\Rightarrow$):
    By $\DREL^{\mathrm{fin}}_{\mathtt{test}, (\aE \union \aN \union \aW \union \aS)^* \ge \top, \term[3]_{\mathtt{gr}} \ge \eps, \Theta} \subseteq \DREL^{\mathrm{fin}}_{\mathtt{test}, (\aE' \union \aN' \union \aW' \union \aS')^* \ge \top, \term[3]_{\mathtt{gr}}' \ge \eps}$
    and \Cref{cla: reducing primitive tests},
    we have $\DREL^{\mathrm{fin}}_{\mathtt{test}, (\aE \union \aN \union \aW \union \aS)^* \ge \top, \term[3]_{\mathtt{gr}} \ge \eps, \Theta} \models \Tr^{(1)}(\fml[1])$.
    As $\aE'$, $\aN'$, $\aW'$, $\aS'$, and $p_i$ are not occurring in $\Tr^{(1)}(\fml)$,
    we have $\DREL^{\mathrm{fin}}_{\mathtt{test}, (\aE \union \aN \union \aW \union \aS)^* \ge \top, \term[3]_{\mathtt{gr}} \ge \eps} \models \Tr^{(1)}(\fml[1])$.

    ($\Leftarrow$):
    Let $\val \colon \vsig \to \wp(X^2)$ in $\DREL^{\mathrm{fin}}_{\mathtt{test}, (\aE' \union \aN' \union \aW' \union \aS')^* \ge \top, \term[3]_{\mathtt{gr}}' \ge \eps}$.
    By $\val \in \REL^{\mathrm{fin}}_{\mathtt{grid}}$,
    up to isomorphisms, without loss of generality,
    we can assume that $\val$ and $h, v \ge 1$ are the ones obtained by \Cref{prop: grid}
    (then, $X = \range{0, h - 1} \times \range{0, v - 1}$).
    Let $Y \defeq \range{0, mh - 1} \times \range{0, mv - 1}$ and let $\val' \colon \vsig \to \wp(Y^2)$ be the \kl{valuation} defined by
    \begin{align*}
        \val'(\mathtt{P}) &\defeq
        \set{\tuple{\tuple{mx, my}, \tuple{mx, my}} \mid 
        x \in \range{0, h - 1}, y \in \range{0, v - 1}}\\
        &\cup \set{\tuple{\tuple{mx + 1, my}, \tuple{mx + 1, my}} \mid 
        x \in \range{0, h - 1}, y \in \range{0, v - 1}}\\
        &\cup 
        \set{\tuple{\tuple{mx + 2i + 1, my}, \tuple{mx + 2i + 1, my}} \mid \tuple{x, y} \in \hat{\val}(p_i \aE)},\\
        \val'(\aE) &\defeq \set{\tuple{\tuple{mx + i, my + j}, \tuple{(mx + i + 1) \bmod h, my + j}} \mid \tuple{x, y} \in \hat{\val}(\aE),\; i, j \in \range{0, m-1}}, \\
        \val'(\aN) &\defeq \set{\tuple{\tuple{mx + i, my + j}, \tuple{mx + i, (my + j + 1) \bmod v}} \mid \tuple{x, y} \in \hat{\val}(\aN),\; i, j \in \range{0, m-1}}, \\
        \val'(\aW) &\defeq \val'(\aE)^{\smile}, \quad \val'(\aS) \defeq \val'(\aN)^{\smile},\\
        \val'(a) &\defeq \set{\tuple{\tuple{mx, my},\tuple{mx, my}} \mid \tuple{x, y} \in \hat{\val}(a)} \mbox{ for $a = p_i, \aE', \aN', \aW', \aS'$}.
    \end{align*}
    Below (\Cref{figure: encoding}) is an illustration of the \kl{graph} of the \kl{valuation} $\val'$ (where some edges are omitted).
    \begin{figure*}[h]
        \centering
        \begin{tabular}{ccc}
            $\val$ & & $\val'$\\
            \hfill
            \begin{tikzpicture}[baseline = .0cm]
                \graph[grow up= 3.cm, branch right = 3.cm, nodes={font=\tiny}]{
                {0/{}[mynode, fill = black, inner sep = .3em], 1/{}[mynode, fill = black, inner sep = .3em]}
                };
                \graph[use existing nodes, edges={color=black, pos = .5, earrow}, edge quotes={fill = white, inner sep=1pt,font= \scriptsize}]{
                (0) ->["$\aE$"] (1); 
                0 ->["$p_2, p_3$", out = 60, in = 120, looseness = 250, loop] 0;
                };
            \end{tikzpicture}
            &                
            \hfill
            {\Large $\leadsto$}
            \hfill
            &
            \begin{tikzpicture}[baseline = .0cm]
                \graph[grow up= 1.cm, branch right = 8.cm, nodes={font=\tiny}]{
                {0/{}[mynode, fill = black, inner sep = .3em], 1/{}[mynode, fill = black, inner sep = .3em]}
                };
                \path (0) edge [draw = white, opacity = 0] node[pos= 0.1, mynode, fill = black, inner sep = 1.5pt, opacity = 1](011){}(1);            
                \path (0) edge [draw = white, opacity = 0] node[pos= 0.2, mynode, fill = black, inner sep = 1.5pt, opacity = 1](012){}(1);            
                \path (0) edge [draw = white, opacity = 0] node[pos= 0.3, mynode, fill = black, inner sep = 1.5pt, opacity = 1](013){}(1);            
                \path (0) edge [draw = white, opacity = 0] node[pos= 0.4, mynode, fill = black, inner sep = 1.5pt, opacity = 1](014){}(1);            
                \path (0) edge [draw = white, opacity = 0] node[pos= 0.5, mynode, fill = black, inner sep = 1.5pt, opacity = 1](015){}(1);            
                \path (0) edge [draw = white, opacity = 0] node[pos= 0.6, mynode, fill = black, inner sep = 1.5pt, opacity = 1](016){}(1);            
                \path (0) edge [draw = white, opacity = 0] node[pos= 0.7, mynode, fill = black, inner sep = 1.5pt, opacity = 1](017){}(1);            
                \path (0) edge [draw = white, opacity = 0] node[pos= 0.8, mynode, fill = black, inner sep = 1.5pt, opacity = 1](018){}(1);            
                \path (0) edge [draw = white, opacity = 0] node[pos= 0.9, opacity = 1](019){$\dots$}(1);            
                \draw [decorate,decoration={brace,amplitude=5pt,mirror,raise=4ex}]
  ($(0.west)+(0,.4)$) -- ($(011.east)+(0,.4)$) node[below, midway, yshift=-5ex, font=\tiny]{(for $U$)};
                \node[below = 1.em of 012,align=left, font=\tiny](012b){(blank)};
                \path (012b) edge[dotted] ($(012.south)+(0,-.1)$); 
                \node[below = 1.em of 013,align=left, font=\tiny](013b){($p_1$)};
                \path (013b) edge[dotted] ($(013.south)+(0,-.1)$); 
                \node[below = 1.em of 014,align=left, font=\tiny](014b){(blank)};
                \path (014b) edge[dotted] ($(014.south)+(0,-.1)$); 
                \node[below = 1.em of 015,align=left, font=\tiny](015b){($p_2$)};
                \path (015b) edge[dotted] ($(015.south)+(0,-.1)$); 
                \node[below = 1.em of 016,align=left, font=\tiny](016b){(blank)};
                \path (016b) edge[dotted] ($(016.south)+(0,-.1)$); 
                \node[below = 1.em of 017,align=left, font=\tiny](017b){($p_3$)};
                \path (017b) edge[dotted] ($(017.south)+(0,-.1)$); 
                \node[below = 1.em of 018,align=left, font=\tiny](018b){(blank)};
                \path (018b) edge[dotted] ($(018.south)+(0,-.1)$); 
                \graph[use existing nodes, edges={color=black, pos = .5, earrow}, edge quotes={fill = white, inner sep=1pt,font= \scriptsize}]{
                (0) ->["$\aE$"] (011) ->["$\aE$"] (012) ->["$\aE$"] (013) ->["$\aE$"] (014)
                ->["$\aE$"] (014) ->["$\aE$"] (015) ->["$\aE$"] (016) ->["$\aE$"] (017) ->["$\aE$"] (018);
                0 ->["$\mathtt{P}$", out = 60, in = 120, looseness = 200, loop] 0;
                011 ->["$\mathtt{P}$", out = 60, in = 120, looseness = 200, loop] 011;
                015 ->["$\mathtt{P}$", out = 60, in = 120, looseness = 200, loop] 015;
                017 ->["$\mathtt{P}$", out = 60, in = 120, looseness = 200, loop] 017;
                };
            \end{tikzpicture}
            \hfill
            \phantom{ }
        \end{tabular}
        \caption{Illustration of $\val'$.}
        \label{figure: encoding}
    \end{figure*}
    By definition, $\val' \in \DREL^{\mathrm{fin}}_{\mathtt{test}, (\aE \union \aN \union \aW \union \aS)^* \ge \top, \term[3]_{\mathtt{gr}} \ge \eps, \Theta}$.
    We then have that,
    for all \kl{primitive $\mathrm{PWP}_{\lop}$ terms} $\term$,
    \[\widehat{\val}'(\Tr^{(1)}(\term)) = \set{ \tuple{\tuple{mx, my},\tuple{mx, my}} \mid \tuple{x, y} \in \hat{\val}(\term')},\]
    by easy induction on $\term$
    using $\hat{\val}'(U) = \set{\tuple{\tuple{mx, my},\tuple{mx, my}} \mid x \in \range{0, h - 1}, y \in \range{0, v - 1}}$ and \Cref{cla: reducing primitive tests},
    where $\term[1]'$ is the \kl{term} $\term$ in which each $a \in \set{\aE,\aN,\aW,\aS}$ has been replaced with $a'$.
    Thus, $\val \models \fml'$ iff $\val' \models \Tr^{(1)}(\fml)$.
    Hence, by the assumption, this completes the proof.
\end{proof}

Using this reduction, we can strengthen \Cref{thm: undecidable while emptiness} by reducing the number of \kl{primitive tests} to one, as follows.
This positively settles the conjecture left in the conference version \cite[Section 6]{nakamuraUndecidabilityPositiveCalculus2024}.
\begin{thm}\label{thm: reducing primitive tests}
    The \kl{emptiness problem} on $\DREL_{\mathtt{test}}$ (resp.\ $\DREL_{\mathtt{test}}^{\mathrm{fin}}$) is 
    $\Pi^{0}_{1}$-complete for \kl{$\mathrm{PWP}_{\lop}$ terms} where the number of \kl{actions} is (at most) $4$ and the number of \kl{primitive tests} is $1$.
\end{thm}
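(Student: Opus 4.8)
The plan is to reuse the reduction behind \Cref{thm: undecidable while emptiness} and to splice the single-\kl{primitive test} encoding $\Tr^{(1)}$ of \Cref{defi: reducing primitive tests} into it. Membership in $\Pi^{0}_{1}$ is inherited for free: the fragment in question is a subclass of all \kl{$\mathrm{PWP}_{\lop}$ terms} and $\DREL_{\mathtt{test}}$ (resp.\ $\DREL_{\mathtt{test}}^{\mathrm{fin}}$) is \kl{submodel-closed}, so \Cref{prop: upper bound} applies unchanged. For hardness I would reduce from the complement of \kl{the periodic domino problem} (\Cref{prop: domino tiling undecidable}), producing for each \kl{domino system} $\mathcal{D}$ with $C = \set{c_1, \dots, c_n}$ a \kl{primitive $\mathrm{PWP}_{\lop}$ term} $T_{\mathcal{D}}$ over the four \kl{actions} $\aE, \aN, \aW, \aS$ and the single \kl{primitive test} $\mathtt{P}$ with $\mathcal{D}$ admitting no \kl{periodic tiling} iff $\DREL_{\mathtt{test}} \models T_{\mathcal{D}} \le \emp$.

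Concretely, starting from \Cref{lem: connectivity'}, I would first separate the bundled \kl{loop hypothesis} $\term[3]_{\mathcal{D}} \ge \eps$ into its grid part $\term[3]_{\mathtt{gr}} \ge \eps$ and its colour part $\term[3]_{\mathtt{col}} \ge \eps$, where $\term[3]_{\mathtt{col}} \defeq \term[3]_{c_1} \union_{c_1} (\dots (\term[3]_{c_n} \union_{c_n} \emp) \dots)$, using that a product of \kl{loop} terms dominates $\eps$ iff each factor does. Keeping $\term[3]_{\mathtt{gr}} \ge \eps$ as a hypothesis, I would eliminate $\term[3]_{\mathtt{col}}^{\lop} \ge \eps$ by \Cref{ex: grid surjective}: on the retained class $\term[3]_{\mathtt{col}}^{\lop} \le \eps \le \term[3]_{\mathtt{gr}}$ and $\term[1]_{\mathtt{gr}}' \le \term[1]_{\mathtt{gr}}$ hold, so the surjectivity premise is supplied by \Cref{lem: grid surjective}. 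This gives
\[ \mbox{$\mathcal{D}$ admits no \kl{periodic tiling}} \iff \DREL^{\mathrm{fin}}_{\mathtt{test}, (\aE \union \aN \union \aW \union \aS)^{*} \ge \top, \term[3]_{\mathtt{gr}} \ge \eps} \models \Tr'_{\term[3]_{\mathtt{col}}^{\lop}}(\term[1]_{\mathtt{gr}}') \le \emp. \]
Since the right-hand side is an emptiness \kl{formula} of \kl{primitive $\mathrm{PWP}_{\lop}$ terms} over finitely many \kl{primitive tests}, \Cref{lem: reducing primitive tests} transports it, on the very same class, to the single-test \kl{formula} $\Tr^{(1)}(\Tr'_{\term[3]_{\mathtt{col}}^{\lop}}(\term[1]_{\mathtt{gr}}')) \le \emp$. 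It then remains to descend from this class to $\DREL_{\mathtt{test}}$, which I would attempt by wrapping with the grid \kl{loop} via $\Tr'_{\term[3]_{\mathtt{gr}}}$ (redundant on the grid class by \Cref{prop: loop transformation}) and taking the \kl{witness-basis} of \Cref{ex: surjective}, thereby obtaining a $T_{\mathcal{D}}$ using only $\aE, \aN, \aW, \aS$ and $\mathtt{P}$.

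The step I expect to be the main obstacle is exactly this final descent to $\DREL_{\mathtt{test}}$. The analogue of the surjectivity argument $(\bigstar_1)$ of \Cref{thm: undecidable while emptiness} fails here, because the scaling $\Tr^{(1)}$ replaces each unit move by an $m$-step chain $\aE^{m}, \aN^{m}$ running through scaffold \kl{vertices}: the witnessing \kl{graphs} of $\Tr^{(1)}(\term[1]_{\mathtt{gr}}')$ traverse only the fine \kl{vertices} lying on the visited rows and columns and miss the interior ones, so $\DREL_{\mathtt{test}} \not\models \Tr^{(1)}(\term[1]_{\mathtt{gr}}') \le \term[1]_{\mathtt{gr}}$ and \Cref{ex: grid surjective} no longer applies; moreover the \kl{vertices} pulled in by the outer grid \kl{loop} need not themselves carry grid \kl{loops}, so the naive \kl{witness-basis} submodel is not a grid. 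The resolution I would pursue is to replace the black-box \kl{loop hypothesis} elimination at this last step by a direct grid-reconstruction mirroring the forward direction of \Cref{lem: connectivity'}: from a \kl{graph homomorphism} witnessing $T_{\mathcal{D}}$ in a $\DREL_{\mathtt{test}}$ model, read off the periods $h, v$ from the closing-up of the traversal and the \kl{tiling} from the $\mathtt{P}$-markings along the scaffold, and rebuild a canonical finite grid (in the class of \Cref{prop: grid}) on which the reduced term is satisfied. The reverse direction (from a \kl{periodic tiling}, scale $\val_{\mathcal{D}, \tau}$ by the recipe of \Cref{lem: reducing primitive tests} and verify satisfiability) and the counting of \kl{actions} and \kl{primitive tests} are routine.
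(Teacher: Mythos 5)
Your overall architecture is the same as the paper's: membership in $\Pi^{0}_{1}$ via \Cref{prop: upper bound}, and hardness by splicing the one-test translation $\Tr^{(1)}$ of \Cref{defi: reducing primitive tests} (through \Cref{lem: reducing primitive tests}) into the reduction of \Cref{thm: undecidable while emptiness}, with loop-hypothesis eliminations and a witness-basis descent at the end. Your first two moves are sound, and at that stage even slightly cleaner than the paper's: because you keep $\term[3]_{\mathtt{gr}} \ge \eps$ in the class while eliminating the colour hypothesis, the preconditions of \Cref{ex: grid surjective} (namely $\algclass \subseteq \algclass_{\mathtt{gr}}$, $\term[3]_{\mathtt{col}}^{\lop} \le \term[3]_{\mathtt{gr}}$, and $\term[1]_{\mathtt{gr}}' \le \term[1]_{\mathtt{gr}}$) genuinely hold on that class, and \Cref{lem: reducing primitive tests} then applies verbatim to the resulting emptiness formula. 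The paper orders these steps differently: it keeps the colour hypothesis inside the formula as an antecedent, pushes $\Tr^{(1)}$ through the whole implication, rewrites the translated antecedent $\Tr^{(1)}(\term[3]_{\mathcal{D}}) \ge U$ back into a loop hypothesis via the term $\eps \union_{\tilde{\mathtt{P}}} \aE (\aW \union_{\tilde{\mathtt{P}}} \aW \Tr^{(1)}(\term[3]_{\mathcal{D}}))$ (step $(\bigstar)$), and then performs a \emph{single} elimination with the combined wrapping term $\term[3]_{\mathtt{gr}} \compo (\eps \union_{\tilde{\mathtt{P}}} \aE (\aW \union_{\tilde{\mathtt{P}}} \aW \Tr^{(1)}(\term[3]_{\mathcal{D}})))$, followed by \Cref{ex: surjective}. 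Up to this reordering the two routes coincide.

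The genuine gap is your final descent to $\DREL_{\mathtt{test}}$, and your sketch does not discharge it. You are right that $\DREL_{\mathtt{test}} \not\models \Tr^{(1)}(\term[1]_{\mathtt{gr}}') \le \term[1]_{\mathtt{gr}}$ (the intermediate vertices of the $\aE^{m}$-blocks need not carry $\aN$-cycles), so \Cref{ex: grid surjective} cannot be cited as a black box for the last elimination; note that this observation applies equally to the paper's own citation of \Cref{ex: grid surjective} at that point, which has to be read as re-running the surjectivity argument of \Cref{thm: hypothesis elimination using graph loops} and \Cref{lem: grid surjective} at the \emph{coarse} scale, where a block $U \aE^{m} U$ (resp.\ $U \aN^{m} U$) plays the role of a single edge and where it is essential that the wrapping term contains $\term[3]_{\mathtt{gr}}$, so that grid loops are attached to \emph{every} scaffold vertex of the translated witness. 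Your replacement step --- ``read off the periods and the tiling from the markings and rebuild a canonical grid'' --- is exactly this missing argument, and it is the hard part, not a routine verification: in a bare $\DREL_{\mathtt{test}}$ model one has only functionality, not the composition laws $\aE\aW \le \eps$, $\aS\aN \le \eps$, \dots\ of $\algclass_{\mathtt{gr}}$ that the proof of \Cref{lem: grid surjective} uses (e.g.\ in \Cref{cla: aE'}, to identify the vertex closing the $4$-cycle with the intended grid vertex $g(x,y+1)$). The loops supplied by the wrapping guarantee that loop \emph{witnesses} exist at each scaffold vertex, but nothing immediately forces those witnesses to lie on the scaffold or to assemble into a grid; consequently commutation of the coarse $\aE$/$\aN$ successor maps, well-definedness of the extracted $\tau \colon \Z^{2} \to C$, and the propagation of the horizontal colour constraints between adjacent columns (which the comb-shaped scaffold does not by itself enforce) all have to be proved rather than read off. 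Until you carry out this coarse-level analogue of \Cref{lem: grid surjective} (or supply another device in its place), the proposal is a plan rather than a proof.
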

\begin{proof}
    We recall the \kl{primitive $\mathrm{PWP}_{\lop}$ term} $\Tr'_{\term[3]_{\mathcal{D}}}(\term[1]_{\mathtt{gr}}')$ in \Cref{thm: undecidable while emptiness}.
    We then have:
    \begin{align*}
        & \mbox{$\mathcal{D}$ does not have a \kl{periodic tiling}}
        \;\Leftrightarrow\; \DREL^{\mathrm{fin}}_{\mathtt{test}, (\aE \union \aN \union \aW \union \aS)^* \ge \top, \term[3]_{\mathcal{D}} \ge \eps} \models \term[1]_{\mathtt{gr}}' \le \emp \tag{\Cref{lem: connectivity'}}                                                                                 \\
        & \Leftrightarrow\; \DREL^{\mathrm{fin}}_{\mathtt{test}, (\aE \union \aN \union \aW \union \aS)^* \ge \top, \term[3]_{\mathtt{gr}} \ge \eps, \term[3]_{\mathcal{D}} \ge \eps} \models \term[1]_{\mathtt{gr}}' \le \emp \tag{By $\REL \models \term[3]_{\mathtt{gr}} \ge \term[3]_{\mathcal{D}}$} \\
        & \Leftrightarrow\; \DREL^{\mathrm{fin}}_{\mathtt{test}, (\aE \union \aN \union \aW \union \aS)^* \ge \top, \term[3]_{\mathtt{gr}} \ge \eps} \models \Tr^{(1)}(\term[3]_{\mathcal{D}} \ge \eps \to \term[1]_{\mathtt{gr}}' \le \emp) \tag{\Cref{lem: reducing primitive tests}}     \\
        & \Leftrightarrow\; \DREL^{\mathrm{fin}}_{\mathtt{test}, (\aE \union \aN \union \aW \union \aS)^* \ge \top, \term[3]_{\mathtt{gr}} \ge \eps} \models \Tr^{(1)}(\term[3]_{\mathcal{D}}) \ge U \to \Tr^{(1)}(\term[1]_{\mathtt{gr}}') \le \emp            \tag{By definition of $\Tr^{(1)}$}                                                                  \\
        & \Leftrightarrow\; \DREL^{\mathrm{fin}}_{\mathtt{test}, (\aE \union \aN \union \aW \union \aS)^* \ge \top, \term[3]_{\mathtt{gr}} \ge \eps} \models \eps \union_{\tilde{\mathtt{P}}} \aE (\aW \union_{\tilde{\mathtt{P}}} \aW \Tr^{(1)}(\term[3]_{\mathcal{D}})) \ge \eps \to \Tr^{(1)}(\term[1]_{\mathtt{gr}}') \le \emp \tag{$\bigstar$}                                                                                 \\
        & \Leftrightarrow\; \DREL^{\mathrm{fin}}_{\mathtt{test}, (\aE \union \aN \union \aW \union \aS)^* \ge \top} \models \Tr'_{\term[3]_{\mathtt{gr}} \compo (\eps \union_{\tilde{\mathtt{P}}} \aE (\aW \union_{\tilde{\mathtt{P}}} \aW \Tr^{(1)}(\term[3]_{\mathcal{D}})))}(\Tr^{(1)}(\term[1]_{\mathtt{gr}}')) \le \emp   \tag{\Cref{ex: grid surjective}}                                                                                           \\
        & \Leftrightarrow\; \DREL_{\mathtt{test}} \models \Tr'_{\term[3]_{\mathtt{gr}} \compo (\eps \union_{\tilde{\mathtt{P}}} \aE (\aW \union_{\tilde{\mathtt{P}}} \aW \Tr^{(1)}(\term[3]_{\mathcal{D}})))}(\Tr^{(1)}(\term[1]_{\mathtt{gr}}')) \le \emp.   \tag{\Cref{ex: surjective}}
    \end{align*}
    Here, ($\bigstar$) is shown by
    $\REL_{\mathtt{test}, \aE \aW = \eps, \aW \aE = \eps} \models \term[3] \ge \mathtt{P} \aE \mathtt{P} \aW  \leftrightarrow \eps \union_{\tilde{\mathtt{P}}} \aE (\aW \union_{\tilde{\mathtt{P}}} \aW \term[3]) \ge \eps$ for any \kl{term} $\term[3]$.
    Because the \kl{primitive $\mathrm{PWP}_{\lop}$ term} $\Tr'_{\term[3]_{\mathtt{gr}} \compo (\eps \union_{\tilde{\mathtt{P}}} \aE (\aW \union_{\tilde{\mathtt{P}}} \aW \Tr^{(1)}(\term[3]_{\mathcal{D}})))}(\Tr^{(1)}(\term[1]_{\mathtt{gr}}'))$ contains four \kl{actions} $\aW, \aE, \aN, \aS$ and one \kl{primitive test} $\mathtt{P}$,
    this completes the proof.
\end{proof}
Particularly, when we can use the converse operator for \kl{actions}, the following holds.
\begin{cor}\label{cor: reducing primitive tests conv}
    The \kl{emptiness problem} on $\DREL_{\mathtt{test}}$ is 
    $\Pi^{0}_{1}$-complete for \kl{$\mathrm{PWP}_{\lop \breve{x}}$ terms} where the number of \kl{actions} is $2$ and the number of \kl{primitive tests} is $1$.
\end{cor}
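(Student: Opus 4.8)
The plan is to recycle the reduction of \Cref{thm: reducing primitive tests} and to trade the two \kl{actions} $\aW,\aS$ for the atomic converses $\aE^{\smile},\aN^{\smile}$. Let $\term[1]$ denote the \kl{$\mathrm{PWP}_{\lop}$ term} in the last line of the proof of \Cref{thm: reducing primitive tests}, so that a \kl{domino system} $\mathcal{D}$ has no \kl{periodic tiling} iff $\DREL_{\mathtt{test}} \models \term[1] \le \emp$; this $\term[1]$ uses exactly the four \kl{actions} $\aE,\aN,\aW,\aS$ and the single \kl{primitive test} $\mathtt{P}$. I would take $\term[1][\aE^{\smile}/\aW][\aN^{\smile}/\aS]$ as the target term: since the substitution only replaces the atomic \kl{actions} $\aW,\aS$ by atomic converses, the result is a \kl{$\mathrm{PWP}_{\lop \breve{x}}$ term} over the two \kl{actions} $\aE,\aN$ and the one \kl{primitive test} $\mathtt{P}$.

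Because $\aW$ and $\aS$ occur only inside $\term[1]$ and not in the \kl{hypotheses} $\mathtt{test}$, they are \kl{fresh} in $\DREL_{\mathtt{test}}$, so two applications of \Cref{prop: REL axiom = substitution} (with $x=\aW,\term[3]=\aE^{\smile}$ and then $x=\aS,\term[3]=\aN^{\smile}$) yield
\[
\DREL_{\mathtt{test},\, \aW = \aE^{\smile},\, \aS = \aN^{\smile}} \models \term[1] \le \emp \quad\Longleftrightarrow\quad \DREL_{\mathtt{test}} \models \term[1][\aE^{\smile}/\aW][\aN^{\smile}/\aS] \le \emp .
\]
Combining this with \Cref{thm: reducing primitive tests}, the reduction is finished once I show
\[
\DREL_{\mathtt{test}} \models \term[1] \le \emp \quad\Longleftrightarrow\quad \DREL_{\mathtt{test},\, \aW = \aE^{\smile},\, \aS = \aN^{\smile}} \models \term[1] \le \emp ,
\]
whose direction $\Rightarrow$ is immediate from the inclusion $\DREL_{\mathtt{test},\, \aW = \aE^{\smile},\, \aS = \aN^{\smile}} \subseteq \DREL_{\mathtt{test}}$.

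For the direction $\Leftarrow$ I would use \Cref{ex: surjective}: it suffices to prove $\mathsf{Sur}_{\glang(\term[1])}(\DREL_{\mathtt{test}}) \subseteq \DREL_{\mathtt{test},\, \aW = \aE^{\smile},\, \aS = \aN^{\smile}}$, for then $\mathsf{Sur}_{\glang(\term[1])}(\DREL_{\mathtt{test}})$ is a \kl{witness-basis} squeezed between the two classes (\Cref{prop: submodel cover}). So take a \kl{surjective} \kl{graph homomorphism} $h \colon \graph[2] \homo \const{G}(\val, o, o)$ with $\graph[2] \in \glang(\term[1])$ and set $B = h(\domain{\graph[2]})$. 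Since the outer transformation $\Tr'$ wraps every atomic subterm with a loop decoration containing $\term[3]_{\mathtt{gr}}$ (\Cref{prop: glang u,prop: loop transformation GKAT}), every \kl{vertex} of $\graph[2]$, hence every point of $B$, carries the loops $(\aE\aW)^{\lop},(\aW\aE)^{\lop},(\aN\aS)^{\lop},(\aS\aN)^{\lop}$ inside $\val \restriction B$. At a point $x\in B$, $(\aE\aW)^{\lop}$ supplies an $\aE$-successor $y$ with $\tuple{y,x}\in\val(\aW)$, while $(\aW\aE)^{\lop}$ supplies an $\aW$-successor $z$ with $\tuple{z,x}\in\val(\aE)$; by $\mathtt{func}$ these successors are the unique ones, which forces both $\aE^{\smile}\subseteq\aW$ and $\aW\subseteq\aE^{\smile}$ on $B$, i.e.\ $\aW=\aE^{\smile}$ on $\val \restriction B$, and symmetrically $\aS=\aN^{\smile}$. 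As $\mathtt{test}$ and $\mathtt{func}$ are \kl{submodel-closed}, $\val \restriction B \in \DREL_{\mathtt{test},\, \aW = \aE^{\smile},\, \aS = \aN^{\smile}}$, giving the inclusion.

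The $\Pi^{0}_{1}$ upper bound is \Cref{prop: upper bound} (with $\DREL_{\mathtt{test}}$ \kl{submodel-closed}), so the problem is $\Pi^{0}_{1}$-complete. The main obstacle is the last paragraph: one must verify that the converse equalities $\aW=\aE^{\smile}$ and $\aS=\aN^{\smile}$ hold \emph{globally} on each surjective-image \kl{submodel}---not merely as one inclusion or only on a proper subset---which is precisely where \emph{determinism} ($\mathtt{func}$) is indispensable, turning the one-sided $2$-cycle loops supplied by $\term[3]_{\mathtt{gr}}$ into genuine functional inverses.
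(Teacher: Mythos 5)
Your high-level plan (recycle \Cref{thm: reducing primitive tests}, trade $\aW,\aS$ for $\aE^{\smile},\aN^{\smile}$, get the $\Pi^{0}_{1}$ upper bound from \Cref{prop: upper bound}) matches the paper's intent, but both of your bridging equivalences have genuine gaps. The fatal one is the substitution step. Your claim that $\aW$ and $\aS$ are fresh in $\DREL_{\mathtt{test}}$ is false: $\DREL = \REL_{\mathtt{func}}$ with $\mathtt{func} = \mathtt{func}^{\vsig}$, so the defining hypotheses of $\DREL_{\mathtt{test}}$ constrain \emph{every} variable, in particular $\aW$ and $\aS$, to be functional. Hence the side condition $\algclass[1][\term[3]/x] \subseteq \algclass[1]$ of \Cref{prop: REL axiom = substitution} --- which the paper explicitly calls crucial --- fails: take $\val \in \DREL_{\mathtt{test}}$ with $\val(\aE)$ functional but not injective; then $\val[\aE^{\smile}/\aW](\aW) = \val(\aE)^{\smile}$ is not functional, so $\val[\aE^{\smile}/\aW] \notin \DREL_{\mathtt{test}}$. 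This is not just a broken citation: the direction of your displayed equivalence that the composite reduction actually needs, namely from $\DREL_{\mathtt{test},\, \aW = \aE^{\smile},\, \aS = \aN^{\smile}} \models \term[1] \le \emp$ to $\DREL_{\mathtt{test}} \models \term[1][\aE^{\smile}/\aW][\aN^{\smile}/\aS] \le \emp$, is exactly where this bites: a witness $\val \in \DREL_{\mathtt{test}}$ of the substituted term becomes, after substituting, a valuation satisfying the converse equalities but possibly violating determinism of $\aW,\aS$. To stay inside $\DREL$ one must first prove $\aE$ and $\aN$ are injective on a suitable submodel; injectivity is not an axiom of $\DREL$ and has to be extracted from the term (finiteness of the surjective image plus every vertex having both an outgoing and an incoming $\aE$-edge, making $\aE$ a permutation), i.e., by proving a converse analogue of \Cref{lem: grid surjective}.

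Your other bridge rests on a false premise as well. It is not true that ``every vertex of $\graph[2]$, hence every point of $B$, carries the loops $(\aE\aW)^{\lop},(\aW\aE)^{\lop},(\aN\aS)^{\lop},(\aS\aN)^{\lop}$'': by \Cref{defi: loop transformation graph} and \Cref{prop: glang u}, the loop extension decorates only the \emph{skeleton} vertices (those of the underlying graph in $\glang(\Tr^{(1)}(\term[1]_{\mathtt{gr}}'))$), while the internal vertices of the glued gadgets (the vertices written $W,E,N,S,E',N',E'N'$ in \Cref{figure: graphs}) carry no decoration of their own. At the image of such an internal vertex, functionality alone gives nothing, so $\aW = \aE^{\smile}$ on all of $B^2$ does not follow locally. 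The inclusion $\mathsf{Sur}_{\glang(\term[1])}(\DREL_{\mathtt{test}}) \subseteq \DREL_{\mathtt{test},\, \aW = \aE^{\smile},\, \aS = \aN^{\smile}}$ is in fact true, but only because \Cref{lem: grid surjective} shows that the skeleton map itself is surjective, so every vertex of the model is a skeleton image and therefore does carry the loops. Thus both of your nontrivial steps secretly depend on grid-surjectivity arguments you never invoke. This is also the structural difference from the paper: its proof performs the replacement $\aW \mapsto \aE^{\smile}$, $\aS \mapsto \aN^{\smile}$ \emph{inside} the reduction of \Cref{thm: reducing primitive tests}, over the classes where $\mathtt{func}$ and $\term[3]_{\mathtt{gr}} \ge \eps$ are still hypotheses (so the equalities genuinely hold in every model considered), and then re-runs the eliminations (\Cref{ex: grid surjective}, \Cref{ex: surjective}) for the converse-based term --- rather than substituting at the very end, after all hypotheses enforcing those equalities have been eliminated.
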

\begin{proof}
    Because we can eliminate $\aW$ and $\aS$ using the fact that $\aW = \aE^{\smile}$ and $\aS = \aN^{\smile}$ hold in the reduction of \Cref{thm: reducing primitive tests}.
\end{proof}

\begin{rem}\label{rem: reducing primitive tests}
    By \Cref{thm: reducing primitive tests,cor: reducing primitive tests conv},
    we can also show that
    each of \Cref{cor: undecidable PDL,cor: undecidable PCoR* func,cor: undecidable PCoR* difference,cor: undecidable PCoR* variable emptiness}
    is $\Pi^{0}_{1}$-complete even if the number of \kl{actions} is $4$ and the number of \kl{primitive tests} is $1$,
    and even if the number of \kl{actions} is $2$ and the number of \kl{primitive tests} is $1$ when we admit the converse operators.
\end{rem}

\subsection{On decidable fragments}
When the number of \kl{actions} and the number of \kl{primitive tests} are more restricted from \Cref{thm: reducing primitive tests},
is the \kl{emptiness problem} decidable?
Below, we give some decidable fragments.
\begin{prop}\label{prop: test-free decidable}
    The \kl{equational theory} on $\DREL_{\mathtt{test}}$ is decidable for \kl{$\mathrm{PWP}_{\lop \breve{x}}$ terms} where the number of \kl{primitive tests} is $0$.
\end{prop}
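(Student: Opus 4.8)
The plan is to exploit the fact that forbidding primitive tests ($\psig = \emptyset$) makes every guard trivial, which collapses all the genuine control flow and in particular removes every real occurrence of Kleene star. First I would note that $\psig = \emptyset$ gives $\mathtt{test} = \emptyset$, so $\DREL_{\mathtt{test}} = \DREL$, and that every \kl{test} $b \in \mathrm{B}$ is now built solely from $\eps, \emp, \compo, \union$. Reading $\eps$ as \emph{true}, $\emp$ as \emph{false}, $\compo$ as conjunction and $\union$ as disjunction, one sees that $\REL \models b = \eps$ or $\REL \models b = \emp$, and which of the two holds is computable in linear time by a bottom-up Boolean evaluation.

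Using these values I would eliminate all annotations. Each guarded choice simplifies to one of its branches, $\term[1] \union_{b} \term[2]$ becoming $\term[1]$ when $b = \eps$ and $\term[2]$ when $b = \emp$; and each while-loop collapses to a constant, since
\[\REL \models \term[1]^{*_{\eps}} = \term[1]^{*} \widetilde{\eps} = \term[1]^{*} \emp = \emp, \qquad \REL \models \term[1]^{*_{\emp}} = \emp^{*} \widetilde{\emp} = \emp^{*} \eps = \eps.\]
Applying these rewrites recursively produces, for any input \kl{$\mathrm{PWP}_{\lop \breve{x}}$ term} $\term$ over $\psig = \emptyset$, a term $\term'$ with $\REL \models \term = \term'$ (hence $\DREL \models \term = \term'$) that uses only $\eps$, $\emp$, $\compo$, $\union$, atomic converse $\bl^{\smile}$, and the \kl{loop operator} $\bl^{\lop}$, and \emph{no} Kleene star. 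Writing $\term[1]^{\lop} = \term[1] \cap \eps$, the term $\term'$ is a star-free $\PCoR_{\set{\bl^{*}, \com{\eps}, \com{x}}}$ \kl{term}, i.e.\ a $\PCoR_{\set{\com{\eps}, \com{x}}}$ \kl{term} in the sense required by the bounded model property.

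It then suffices to decide $\DREL \models \term[1]' = \term[2]'$ for such star-free terms, which reduces to the two inclusions $\term[1]' \le \term[2]'$ and $\term[2]' \le \term[1]'$. Here I would invoke \Cref{prop: upper bound}.\ref{prop: upper bound 2}: its hypotheses hold for $\algclass = \DREL$, because $\DREL$ is \kl{submodel-closed} (restricting a functional relation $\val(a)$ to $B^2$ keeps it functional, so $\val \restriction B \in \DREL$ whenever $\val \in \DREL$) and membership $\val \in \DREL$ is in $\textsc{P}$ (test $a^{\smile} a \le \eps$ for each $a$). Since both $\term[1]'$ and $\term[2]'$ are star-free, each inclusion is an instance of the $\textsc{coNP}$ \kl{inclusion problem} of \Cref{prop: upper bound}.\ref{prop: upper bound 2}; consequently the \kl{equational theory} is in $\textsc{coNP}$, and in particular decidable.

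The main thing to verify carefully is that the \kl{loop operator} and atomic converse stay inside the scope of the linearly bounded model property (\Cref{prop: bounded model property}.\ref{prop: bounded model property 2}). This is where the argument could slip: one must check that the inductive bound ``$\graph[2] \in \glang(\term')$ has at most $1 + \|\term'\|$ \kl{vertices}'' survives the $\cap$-clause used for $\term[1]^{\lop} = \term[1] \cap \eps$ (the parallel-composition with the single-vertex \kl{graph} of $\glang(\eps)$ adds no new \kl{vertices}) and the converse clause (which only swaps \kl{source} and \kl{target}). Once this is confirmed, together with the observation that the only syntactic source of $\bl^{*}$ is $\bl^{*_{b}}$—which is eliminated in the second step—the reduction to \Cref{prop: upper bound} is complete and decidability follows.
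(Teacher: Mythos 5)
Your proposal is correct and follows essentially the same route as the paper: with $\psig = \emptyset$ every \kl{test} collapses to $\eps$ or $\emp$, so $\bl^{*_{b}}$ (the only source of $\bl^{*}$) can be eliminated via $\REL \models \term[1]^{*_{\eps}} = \emp$ and $\REL \models \term[1]^{*_{\emp}} = \eps$, after which decidability follows from \Cref{prop: upper bound}.\ref{prop: upper bound 2}. Your additional checks (that $\DREL$ is \kl{submodel-closed} with membership in $\textsc{P}$, and that $\bl^{\lop}$ and atomic converse stay within the linearly bounded model property) are details the paper leaves implicit, not a different argument.
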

\begin{proof}
    In this case, we can eliminate the operator $\bl^{*_{b}}$, as $\REL \models \term^{*_{\emp}} = \eps$ and $\REL \models \term^{*_{\eps}} = \emp$,
    and thus we can eliminate $\bl^{*}$ from the given \kl{term}.
    Hence, by \Cref{prop: upper bound}.\ref{prop: upper bound 2}, this completes the proof.
\end{proof}
\begin{prop}\label{prop: one action decidable}
    The \kl{equational theory} on $\DREL_{\mathtt{test}}$ is decidable for \kl{$\mathrm{PWP}_{\lop \breve{x}}$ terms} where the number of \kl{actions} is $1$.
\end{prop}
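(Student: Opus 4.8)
The plan is to reduce the \kl{equational theory} over $\DREL_{\mathtt{test}}$ with a single \kl{action} to the (decidable) monadic second-order (MSO) theory of a class of bounded \kl{treewidth}. The crucial structural observation is that, because there is only one \kl{action} $a$ and $\DREL = \REL_{\mathtt{func}}$ forces $a$ to be functional (i.e.\ $a^{\smile} a \le \eps$), every \kl{valuation} $\val \in \DREL_{\mathtt{test}}$ with one action is a colored \kl{pseudoforest}: each \kl{vertex} has at most one outgoing $a$-edge, so every connected component of the underlying undirected graph of $\hat{\val}(a)$ contains at most one cycle, and hence has \kl{treewidth} at most $2$. The converse $a^{\smile}$ adds no new edges (it is interpreted as $\hat{\val}(a)^{\smile}$ over the same edge relation), and the \kl{primitive tests} become unary vertex colors $P_{p}$ subject to the MSO-expressible constraints coming from $\mathtt{test}$ (namely $P_{p} \cap P_{\tilde{p}} = \emptyset$ and $P_{p} \cup P_{\tilde{p}}$ exhausts the vertex set).

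First I would pass to finite models. Since $\DREL_{\mathtt{test}}$ is \kl{submodel-closed}, \Cref{prop: bounded model property} gives that $\DREL_{\mathtt{test}} \models \term[1] = \term[2]$ iff $\DREL_{\mathtt{test}}^{\mathrm{fin}} \models \term[1] = \term[2]$, so it suffices to decide validity over \emph{finite} colored pseudoforests satisfying the test constraints. Next I would translate each \kl{$\mathrm{PWP}_{\lop \breve{x}}$ term} $\term[1]$ (indeed each $\PCoR_{\set{\bl^{*}}}$ term over the single action) into an MSO formula $\varphi_{\term[1]}(x,y)$ with two free first-order variables over the signature consisting of the binary edge relation for $a$ and the unary color predicates: $a$ is the edge relation and $a^{\smile}$ swaps its arguments, $\compo$ is an existential quantifier, $\union$ is disjunction, $\eps$/$\emp$/tests are immediate, the \kl{loop operator} $\term[1]^{\lop}$ is the guard $x = y \wedge \varphi_{\term[1]}(x,y)$, and $\bl^{\kstar}$ uses the standard MSO definition of reflexive transitive closure (every edge-closed set containing $x$ contains $y$). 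The nested $\mathbf{if}$/$\mathbf{while}$ abbreviations unfold into these cases. Consequently $\DREL_{\mathtt{test}}^{\mathrm{fin}} \models \term[1] = \term[2]$ is equivalent to the validity, over finite colored pseudoforests with the test constraints, of the MSO sentence
\[ \forall x\, \forall y\; \bigl(\varphi_{\term[1]}(x,y) \leftrightarrow \varphi_{\term[2]}(x,y)\bigr). \]

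Finally I would invoke the decidability of MSO validity over finite structures of bounded \kl{treewidth} (Courcelle's theorem, via tree automata running on tree decompositions): one guards the sentence above by the MSO conditions expressing that the edge relation is functional and that the colors satisfy the test constraints, and then decides its validity over \emph{all} finite graphs of \kl{treewidth} at most $2$. Because functional colored structures are exactly the pseudoforests and these have \kl{treewidth} $\le 2$, validity over the guarded class coincides with validity over the intended $\DREL_{\mathtt{test}}^{\mathrm{fin}}$-models, and the latter validity is decidable. This yields the decision procedure.

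The main obstacle I expect is the bookkeeping in setting up the framework so that the quantification ranges over exactly the intended class: one must check that finite $\DREL_{\mathtt{test}}$-models with one action coincide (under the chosen encoding) with finite colored pseudoforests meeting the test constraints, and that the MSO translation of the term semantics is faithful—in particular the translations of $\bl^{\kstar}$ and of the nested $\bl^{*_{b}}$/$\bl^{\lop}$ constructs. Once this faithfulness and the \kl{treewidth} bound $\le 2$ are established, decidability follows from the standard MSO-on-bounded-treewidth machinery. Note that this argument breaks for two \kl{actions}, consistent with the $\Pi^{0}_{1}$-hardness of that case, precisely because two functional actions generate grid-like structures of unbounded \kl{treewidth}.
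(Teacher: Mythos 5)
Your proposal is correct and follows essentially the same route as the paper's proof: the observation that one functional action plus tests (subsets of the identity) makes every valuation a colored pseudoforest of \kl{treewidth} at most $2$, followed by an MSO encoding of the term semantics and an appeal to Courcelle-style decidability of MSO on bounded-treewidth structures. Your explicit reduction to finite models via the bounded model property and your term-by-term MSO translation are just slightly more detailed bookkeeping than the paper's sketch (which translates the whole quantifier-free formula à la Schr{\"o}der--Tarski), not a different argument.
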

\begin{proof}[Proof Sketch]
    In this case, the Gaifman \kl{graph} (the \kl{graph} in which edge labels and edge directions have been forgotten) of each \kl{valuation} in $\DREL_{\mathtt{test}}$ is a \intro*\kl{pseudoforest} (a graph in which each connected component has at most one cycle),
    because the single \kl{action} is a functional relation and each \kl{primitive test} is a subset of the identity relation.
    Each \kl{pseudoforest} has \intro*\kl{treewidth} at most $2$ (as cycles have \kl{treewidth} at most $2$ and trees have \kl{treewidth} at most $1$),
    and thus we have $\DREL_{\mathtt{test}} \models \term \le \term[2]$ (iff $\REL_{\mathtt{test}, \mathtt{func}} \models \term \le \term[2]$) iff $\REL_{\mathtt{test}, \mathtt{func}}^{\mathrm{tw} \le 2} \models \term \le \term[2]$.
    Here, we write $\REL^{\mathrm{tw} \le 2} \subseteq \REL$ for the class of \kl{valuations} whose Gaifman \kl{graph} has \kl{treewidth} at most $2$.
    Moreover, we can easily encode the \kl{quantifier-free formula} $(\mathtt{test} \land \mathtt{func}) \to \term \le \term[2]$ into a monadic second-order logic (MSO) formula, 
    as an analog of the standard translation from \kl{quantifier-free formulas} of the calculus of relations to first-order logic sentences (the Schr{\"o}der--Tarski translation) \cite{tarskiCalculusRelations1941}.
    Hence, the \kl{emptiness problem} is decidable,
    because the validity problem of MSO formulas in bounded treewidth structures is decidable \cite{courcelleMonadicSecondorderLogic1988}.
\end{proof}
Hence, for \kl{$\mathrm{PWP}_{\lop \breve{x}}$ terms}, we have the following dichotomy result.
\begin{cor}\label{cor: reducing primitive tests 2}
    For \kl{$\mathrm{PWP}_{\lop \breve{x}}$ terms} where the number of \kl{actions} is $n$ and the number of \kl{primitive tests} is $m$,
    the \kl{emptiness problem} on $\DREL_{\mathtt{test}}$ is 
    decidable if $n \le 1$ or $m = 0$, and
    $\Pi^{0}_{1}$-complete otherwise (i.e., $n \ge 2$ and $m \ge 1$).
\end{cor}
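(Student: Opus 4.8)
The plan is to obtain the dichotomy by assembling three facts already available: a uniform membership in $\Pi^{0}_{1}$, the two decidability results, and the hardness result with a fixed, minimal number of resources. First I would record the upper bound, uniformly in $n$ and $m$: the \kl{emptiness problem} on $\DREL_{\mathtt{test}}$ lies in $\Pi^{0}_{1}$. This is \Cref{prop: upper bound}.\ref{prop: upper bound 1} applied to $\algclass = \DREL_{\mathtt{test}} = \REL_{\mathtt{func}, \mathtt{test}}$, which is \kl{submodel-closed} and whose membership is decidable in $\textsc{P}$; every \kl{$\mathrm{PWP}_{\lop \breve{x}}$ term} is a $\PCoR_{\set{\bl^{*}, \com{\eps}, \com{x}}}$ \kl{term}, and $\term \le \emp$ is an instance of the \kl{equational theory}.

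For the decidable region I would branch on which hypothesis holds. If $m = 0$, decidability is exactly \Cref{prop: test-free decidable}. If $n \le 1$, decidability follows from \Cref{prop: one action decidable}: although that proposition is phrased for ``one \kl{action}'', its proof uses only that there is \emph{at most} one \kl{action}, so that each \kl{valuation} of $\DREL_{\mathtt{test}}$ has a \kl{pseudoforest} Gaifman \kl{graph} (a single functional \kl{action}, or none, gives out-degree at most one, and the \kl{primitive tests} only add self-loops); the resulting \kl{treewidth}-at-most-$2$ bound and the MSO encoding then apply unchanged, covering $n = 0$ and $n = 1$ together.

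For the hard region $n \ge 2$ and $m \ge 1$, I would argue by monotonicity in the resources. \Cref{cor: reducing primitive tests conv} already provides $\Pi^{0}_{1}$-hardness for \kl{$\mathrm{PWP}_{\lop \breve{x}}$ terms} using $2$ \kl{actions} and $1$ \kl{primitive test}. Reading the resource restriction as an upper bound (at most $n$ \kl{actions}, at most $m$ \kl{primitive tests}) rather than an exact count, every hardness instance of \Cref{cor: reducing primitive tests conv} is a legal \kl{term} of the target fragment whenever $n \ge 2$ and $m \ge 1$, simply leaving the surplus \kl{actions} and \kl{primitive tests} unused. Hence the identity map is a (trivial) reduction, and the emptiness problem of the larger fragment inherits $\Pi^{0}_{1}$-hardness; together with the upper bound this gives $\Pi^{0}_{1}$-completeness.

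No step presents a genuine obstacle, since each ingredient is in place. The only point requiring care is the hardness transfer: I must fix the convention that the stated bounds on the number of \kl{actions} and \kl{primitive tests} are maxima, so that a witness with $2$ \kl{actions} and $1$ \kl{primitive test} embeds into every fragment with $n \ge 2$ and $m \ge 1$; with that reading the two halves of the dichotomy close out the statement.
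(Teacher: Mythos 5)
Your proposal is correct and follows essentially the same route as the paper, whose proof simply cites \Cref{cor: reducing primitive tests conv} for $n \ge 2$, $m \ge 1$, \Cref{prop: test-free decidable} for $m = 0$, and \Cref{prop: one action decidable} for $n \le 1$. Your added remarks—the explicit $\Pi^{0}_{1}$ upper bound via \Cref{prop: upper bound}, the observation that the pseudoforest argument covers $n = 0$ as well as $n = 1$, and the ``at most'' reading of the resource bounds—are just elaborations of steps the paper leaves implicit.
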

\begin{proof}
    By \Cref{cor: reducing primitive tests conv} (for $n \ge 2$ and $m \ge 1$),
    \Cref{prop: test-free decidable} (for $m = 0$), and
    \Cref{prop: one action decidable} (for $n \le 1$).
\end{proof}
For pure \kl{$\mathrm{PWP}_{\lop}$ terms}, the following cases still remain open.
\begin{ques}\label{ques: emptiness}
    Is it (un)decidable for the \kl{emptiness problem} of \kl{propositional while programs with loop}
    where the number of \kl{actions} is $2$ or $3$
    and the number of \kl{primitive tests} is at least $1$?
\end{ques}

\subsection{On the coNP-hardness for the while-free fragment (where the number of primitive tests is fixed)}\label{section: NP-hardness}
Additionally,
we show that the \kl{emptiness problem} is $\textsc{coNP}$-complete
when the while operator $\bl^{*_{b}}$ does not occur and the number of primitive tests is fixed.
\begin{cor}\label{cor: NP-complete while emptiness}
    The \kl{emptiness problem} on $\DREL_{\mathtt{test}}$ is $\textsc{coNP}$-complete
    for \kl{$\mathrm{PWP}_{\lop}$ terms} without $\bl^{*_{b}}$ where the number of \kl{actions} is $4$ and the number of \kl{primitive tests} is $1$.
\end{cor}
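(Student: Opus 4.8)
The plan is to prove the two bounds separately: the $\textsc{coNP}$ upper bound follows directly from the machinery of \Cref{section: semantics}, and only the hardness requires fresh work.

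For the upper bound, I would observe that a \kl{$\mathrm{PWP}_{\lop}$ term} without $\bl^{*_{b}}$ is \emph{star-free}, since the only construct of \kl{$\mathrm{PWP}_{\lop}$ terms} that can introduce $\bl^{*}$ is $\bl^{*_{b}}$ (both \kl{tests} and the remaining constructs $\union_{b}$, $\compo$, $\bl^{\lop}$ are star-free). Hence each \kl{graph} in its \kl{graph language} has at most $1 + \|\term\|$ \kl{vertices}, by the very induction used in \Cref{prop: bounded model property}.\ref{prop: bounded model property 2}, routinely extended to the clauses $\cap$ and $\bl^{\lop}$, which only merge \kl{vertices}. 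So the linearly bounded model property holds for such \kl{terms}, and since $\DREL_{\mathtt{test}}$ is \kl{submodel-closed} with $\textsc{P}$-time membership and model checking, I would argue exactly as in \Cref{prop: upper bound}.\ref{prop: upper bound 2} that the complement (non-emptiness) is in $\textsc{NP}$ --- guess a \kl{valuation} $\val \in \DREL_{\mathtt{test}}$ of linear size and verify $\hat{\val}(\term) \neq \emptyset$ --- whence emptiness is in $\textsc{coNP}$. Note this half needs no restriction on the numbers of \kl{actions} or \kl{primitive tests}.

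For $\textsc{coNP}$-hardness I would reduce from the complement of the \kl{bounded periodic domino problem}, which is $\textsc{coNP}$-complete by \Cref{prop: bounded periodic tiling}. The idea is to rerun the reduction of \Cref{thm: reducing primitive tests} on a grid of \emph{known} size $n \times n$, so that each \kl{while} loop traversing the grid can be replaced by a \emph{fixed-length} iteration, keeping the \kl{term} free of $\bl^{*_{b}}$. Concretely, I would replace $\term[1]_{\mathtt{gr}}'$ by the star-free \kl{term}
\[\term[1]_{\mathtt{gr},n}' \defeq (((\aN^{n})^{\lop} \aE)^{n})^{\lop},\]
where the inner $\aN^{n}$ closes each column to a vertical period dividing $n$ and the outer $n$-fold iteration closes the rows to a horizontal period dividing $n$; since $n$ is encoded in unary, $\term[1]_{\mathtt{gr},n}'$ has polynomial size, and the \kl{tests} $p,q$ of $\term[1]_{\mathtt{gr}}'$ (which only detected loop termination) are no longer needed. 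First I would prove the bounded analogue of \Cref{lem: connectivity'}: $\mathcal{D}$ has an \kl[$\tuple{h,v}$-periodic tiling]{$\tuple{n,n}$-periodic tiling} iff $\DREL^{\mathrm{fin}}_{\mathtt{test}, (\aE \union \aN \union \aW \union \aS)^* \ge \top, \term[3]_{\mathcal{D}} \ge \eps} \not\models \term[1]_{\mathtt{gr},n}' \le \emp$, where $(\Rightarrow)$ builds $\val_{\mathcal{D}, \tau}$ (\Cref{defi: valuation of periodic tiling}) with $h=v=n$ and traverses the torus, and $(\Leftarrow)$ reads off a tiling of period dividing $n$ from the closed $n \times n$ traversal witnessing non-emptiness.

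Given this, I would reuse the elimination chain of \Cref{thm: reducing primitive tests} verbatim with $\term[1]_{\mathtt{gr},n}'$ in place of $\term[1]_{\mathtt{gr}}'$: the \kl{loop hypothesis} $\term[3]_{\mathcal{D}} \ge \eps$ is removed by \Cref{ex: grid surjective} (its conditions hold since $\REL \models \term[1]_{\mathtt{gr},n}' \le \term[1]_{\mathtt{gr}}$, by $\aN^{n} \le \aN^{+}$, monotonicity of $\bl^{\lop}$ and of powers, and $\REL \models \term[3]_{\mathcal{D}} \le \term[3]_{\mathtt{gr}}$), the strong-connectivity hypothesis is removed by \Cref{ex: surjective}, and the colour \kl{primitive tests} are collapsed to the single \kl{primitive test} $\mathtt{P}$ by $\Tr^{(1)}$ of \Cref{defi: reducing primitive tests}. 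The resulting \kl{term} uses exactly the four \kl{actions} $\aE, \aN, \aW, \aS$ and one \kl{primitive test} $\mathtt{P}$, and is free of $\bl^{*_{b}}$ because $\Tr'$ and $\Tr^{(1)}$ introduce no $\bl^{*_{b}}$ on star-free inputs (the sole clause of $\Tr^{(1)}$ producing $\bl^{*_{\mathtt{P}}}$, namely the one for $\bl^{*_{p_{i}}}$, is never triggered). I expect the main obstacle to be precisely the bounded analogue of \Cref{lem: connectivity'}/\Cref{lem: grid surjective}: one must verify that the fixed-length closed traversal $\term[1]_{\mathtt{gr},n}'$ genuinely forces an $n \times n$ torus (period dividing $n$) on $\DREL_{\mathtt{test}}$ under the grid hypotheses, rather than an unintended quotient --- the functionality coming from $\mathtt{func}$ together with the $2$- and $4$-cycle consequences carried by $\term[3]_{\mathcal{D}}$ are what should rule this out, just as in the proof of \Cref{lem: grid surjective}.
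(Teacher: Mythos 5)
Your proposal is correct, and structurally it is the same proof as the paper's: the $\textsc{coNP}$ upper bound comes from the linearly bounded model property argument behind \Cref{prop: upper bound} (a $\bl^{*_{b}}$-free $\mathrm{PWP}_{\lop}$ term is star-free), and hardness comes from the complement of the bounded periodic domino problem (\Cref{prop: bounded periodic tiling}) by replacing the unbounded traversals in $\term[1]_{\mathtt{gr}}'$ with $n$-fold ones and then rerunning the elimination chain of \Cref{thm: reducing primitive tests} verbatim. The only genuine divergence is the choice of bounded traversal term. The paper stays inside the test-guarded format: it defines $\term[1]^{(n)_{b}} \defeq (b \term[1])^{n}\tilde{b}$ and replaces the two $\bl^{+_{r}}$ in $\term[1]_{\mathtt{gr}}'$ by $\bl^{(n)_{r}}$, keeping the marker tests $p,q$; your term $(((\aN^{n})^{\lop}\aE)^{n})^{\lop}$ discards the tests entirely, which is legitimate because $n$-fold composition (with $n$ given in unary) is available in the $\bl^{*_{b}}$-free fragment. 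This changes the key verification slightly: the paper's markers pin the witnessing torus to periods related to $n$ exactly, whereas your term only forces $h \mid n$ and $v \mid n$ via functionality and \Cref{prop: grid} --- which still suffices, since an $\tuple{h,v}$-periodic tiling with $h, v \mid n$ is $\tuple{n,n}$-periodic. Your variant buys two things: the traversal needs no fresh primitive tests, and it sidesteps a guard-ordering subtlety in the paper's operator (read literally, $\aN^{(n)_{\tilde{q}}} = (\tilde{q}\aN)^{n}q$ prefixed by the test $q$ immediately asserts $\tilde{q}$ at a $q$-vertex; the intended reading is the do-while form $\aN(\tilde{q}\aN)^{n-1}q$). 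The paper's variant, in turn, is the more literal bounded analogue of $\term[1]_{\mathtt{gr}}'$, so its correctness argument is a word-for-word replay of \Cref{lem: connectivity'}, while yours requires the small divisibility observation you correctly identify as the main point to check.
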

\begin{proof}
    (in $\textsc{coNP}$):
    By \Cref{prop: upper bound}.
    (Hardness):
    We define the $n$-th iteration operator $\term^{(n)_{b}}$ as follows:
    \begin{align*}
        \term^{(n)_{b}} &\quad\defeq\quad (b t)^{n} \tilde{b}.
    \end{align*}
    Let $\term[1]_{\mathtt{gr}}'' \defeq (p ((q \aN^{(n)_{\tilde{q}}})^{\lop} \aE)^{(n)_{\tilde{p}}})^{\lop}$ (i.e., the \kl{term} $\term[1]_{\mathtt{gr}}'$ (used in \Cref{thm: undecidable while emptiness}) in which the two $\bl^{+_{r}}$ has been replaced with $\bl^{(n)_{r}}$).
    As an analog of \Cref{thm: undecidable while emptiness}, we have:
    \[\mbox{$\mathcal{D}$ does not have an \kl[$\tuple{h,v}$-periodic tiling]{$\tuple{n,n}$-periodic tiling}}                                                                
    \;\Leftrightarrow\; \DREL^{\mathrm{fin}}_{\mathtt{test}, (\aE \union \aN \union \aW \union \aS)^* \ge \top, \term[3]_{\mathcal{D}} \ge \eps} \models \term[1]_{\mathtt{gr}}'' \le \emp.\]
    Thus, in the same way as \Cref{thm: reducing primitive tests}, we have:
    \[\mbox{$\mathcal{D}$ does not have an \kl[$\tuple{h,v}$-periodic tiling]{$\tuple{n,n}$-periodic tiling}}                                                                
    \;\Leftrightarrow\; \DREL_{\mathtt{test}} \models
    \Tr'_{\term[3]_{\mathtt{gr}} \compo (\eps \union_{\tilde{\mathtt{P}}} \aE (\aW \union_{\tilde{\mathtt{P}}} \aW \Tr^{(1)}(\term[3]_{\mathcal{D}})))}(\Tr^{(1)}(\term[1]_{\mathtt{gr}}'')) \le \emp.\]
    Hence, this completes the proof.
\end{proof}

\begin{rem}\label{rem: NP-complete}
    \Cref{cor: NP-complete while emptiness} is trivial when the number of \kl{primitive tests} is not bounded.
    Because $\DREL_{\mathtt{test}} \models b \le \emp$ iff the propositional logic formula obtained from $b$ by replacing each $p, \tilde{p}, \union, \compo$ with $p, \lnot p, \lor, \land$ is unsatisfiable, there is a reduction from the unsatisfiability problem of propositional logic formulas, which is well-known $\textsc{coNP}$-complete \cite{cookComplexityTheoremprovingProcedures1971}.
\end{rem}

\section{Undecidability of PCoR* with Difference}\label{section: undecidable PCoR* with difference}
In this section, we show that the \kl{equational theory} of $\PCoR_{\set{\bl^{*}, \com{\eps}}}$ (on $\REL$) is $\mathrm{\Pi}^{0}_{1}$-complete (\Cref{cor: undecidable PCoR* difference}).
We prove it as a corollary of \Cref{thm: undecidable while emptiness} via \kl{hypothesis eliminations}.
\begin{cor}\label{cor: undecidable PCoR* func}
    The \kl{equational theory} of $\PCoR_{\set{\bl^{*}}}$ on $\DREL$ is $\mathrm{\Pi}^{0}_{1}$-complete.
    More precisely, the \kl{inclusion problem} $\term[1] \le \term[2]$ on $\DREL$, where $\term[1]$ is a $\set{\compo, \union, \bl^{*}, \bl^{\lop}}$-\kl{term} and $\term[2]$ is a $\set{\compo, \union, \top}$-\kl{term}, is $\mathrm{\Pi}^{0}_{1}$-complete.
\end{cor}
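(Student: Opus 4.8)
The plan is to establish the $\Pi^{0}_{1}$ upper bound from the finite model property and the $\Pi^{0}_{1}$-hardness by reducing from the \kl{emptiness problem} of \Cref{thm: undecidable while emptiness}, eliminating the standing test hypotheses $\mathtt{test}$ by the chain of \Cref{ex: tests}, but carried out over $\DREL$ rather than $\REL$. For the upper bound, I would note that $\DREL = \REL_{\mathtt{func}}$ is \kl{submodel-closed} (functionality is inherited by \kl{submodels}) with polynomial-time membership, and that every $\set{\compo, \union, \bl^{*}, \bl^{\lop}}$- and $\set{\compo, \union, \top}$-\kl{term} embeds into $\PCoR_{\set{\bl^{*}, \com{\eps}, \com{x}}}$ (via $\term^{\lop} = \term \cap \eps$ and $\top = \emp^{-}$); hence \Cref{prop: upper bound}.\ref{prop: upper bound 1} places the \kl{inclusion problem} in $\Pi^{0}_{1}$.

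For hardness, I would start from $\DREL_{\mathtt{test}} \models \term \le \emp$ for a \kl{$\mathrm{PWP}_{\lop}$ term} $\term$. The key observation is that $\term$ is already complement-free: $\tilde{p}$ denotes a separate involution-partner \kl{variable}, not a Boolean complement, so $\term$ is built only from $\compo$, $\union$, $\bl^{*}$, $\bl^{\lop}$ and the test/action \kl{variables}. I would then run the three eliminations of \Cref{ex: tests} with $\algclass = \DREL$: (i) eliminate $\mathtt{test}'_0 = \set{p \tilde{p} \le \emp}$ by \Cref{prop: Hoare hypotheses} (valid for arbitrary $\algclass$), moving $\term[3]_0 = \top(\sum_{i} p_i \tilde{p}_i)\top$ to the right; (ii) eliminate $\mathtt{test}'_2 = \set{p \union \tilde{p} \ge \eps}$ by \Cref{ex: size 1}, applying $\Tr_{\bigcompo_{i} p_i \union \tilde{p}_i}$ to the left, where $\DREL$ is \kl{submodel-closed} and, under the still-present $\mathtt{test}'_1$, the relevant loop \kl{term} is a subidentity so every \kl{graph homomorphism} collapses to one \kl{vertex}; (iii) eliminate $\mathtt{test}'_1 = \set{p \le \eps}$ by the substitution $[p_j^{\lop}/p_j]$ of \Cref{ex: substitution id by loop}, applied only on the left since the right-hand side is complement-free. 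The outcome is $\DREL \models \term[1]' \le \term[2]'$ with $\term[1]'$ the transformed, substituted $\term$ and $\term[2]' = \top(\sum_{i} p_i \tilde{p}_i)\top$.

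It then remains to verify the \kl{term} classes. Since $\term$, the loop translation $\Tr$ (\Cref{defi: loop transformation}), and the substitution $p_j \mapsto p_j^{\lop}$ never introduce a complement---again because $\tilde{p}$ is a \kl{variable}---$\term[1]'$ is a $\set{\compo, \union, \bl^{*}, \bl^{\lop}}$-\kl{term}; and $\term[2]'$ is manifestly a $\set{\compo, \union, \top}$-\kl{term} (with the leading $\emp$ of $\sum$ absorbable, as there is at least one \kl{primitive test}). Thus the \kl{emptiness problem} reduces to this \kl{inclusion problem}, yielding $\Pi^{0}_{1}$-hardness via \Cref{thm: undecidable while emptiness} (ultimately \Cref{prop: domino tiling undecidable}), and together with the upper bound the claimed $\Pi^{0}_{1}$-completeness.

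The main obstacle will be justifying that substitution step (iii) is sound over $\DREL$ rather than $\REL$: the freshness shortcut of \Cref{ex: substitution id by loop} fails here because every \kl{primitive test} $p_j$ occurs in $\mathtt{func}$, so $p_j \notin \vsig(\mathtt{func})$ does not hold. Instead I would invoke \Cref{prop: REL axiom = substitution} directly, whose genuine requirement is $\DREL[p_j^{\lop}/p_j] \subseteq \DREL$; this holds because $p_j^{\lop} \le \eps$ is a subidentity and hence functional, so replacing $\val(p_j)$ by $\hat{\val}(p_j^{\lop})$ keeps the \kl{valuation} in $\REL_{\mathtt{func}} = \DREL$ (using also the idempotence $\REL \models p_j^{\lop\lop} = p_j^{\lop}$). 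A secondary check is that the $\# f = 1$ condition of \Cref{ex: size 1} survives under $\mathtt{func}$, which it does since it depends only on $\mathtt{test}'_1$ and on \kl{submodel}-closedness, both available in $\DREL_{\mathtt{test}'_1}$.
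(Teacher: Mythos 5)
Your proof is correct and follows essentially the same route as the paper: the paper's own proof is exactly the one-line application of the \Cref{ex: tests} elimination chain (Hoare elimination, then \Cref{ex: size 1}, then the loop substitution) to the last line of \Cref{thm: undecidable while emptiness}, yielding the inclusion over $\DREL$ with right-hand side $\top(\sum_{i} p_i \tilde{p}_i)\top$, plus the \Cref{prop: upper bound} upper bound. Your extra care at step (iii)---invoking \Cref{prop: REL axiom = substitution} directly and checking $\DREL[p_j^{\lop}/p_j] \subseteq \DREL$ via functionality of subidentities, since the freshness shortcut of \Cref{ex: substitution id by loop} fails because every $p_j$ occurs in $\mathtt{func}$---is precisely the verification the paper leaves implicit when it applies \Cref{ex: tests} (stated for $\REL$) over $\DREL$.
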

\begin{proof}
    By eliminating $\mathtt{test}$ using \Cref{ex: tests}, the last line in \Cref{thm: undecidable while emptiness} is equivalent to the following:
    \[\DREL \models \Tr_{\bigcompo_{i = 0}^{n-1} p_i \union \tilde{p}_i}(\Tr'_{\term[3]_{\mathcal{D}}^{\lop}}(\term[1]_{\mathtt{gr}}')) [p_0^{\lop}/p_0] \dots [p_{2n-1}^{\lop}/p_{2n-1}] \le \top (\sum_{i = 0}^{n-1} p_i \tilde{p}_i) \top.\]
    Hence, this completes the proof.
\end{proof}
Moreover, by eliminating $\mathtt{func}$ using the argument of \Cref{ex: tests}, we also have the following.
\begin{cor}\label{cor: undecidable PCoR* difference}
    The \kl{equational theory} of $\PCoR_{\set{\bl^{*}, \com{\eps}}}$ on $\REL$ is $\mathrm{\Pi}^{0}_{1}$-complete.
    More precisely, the \kl{inclusion problem} $\term[1] \le \term[2]$ on $\REL$, where $\term[1]$ is a $\set{\compo, \union, \bl^{*}, \bl^{\lop}}$-\kl{term} and $\term[2]$ is a $\set{\compo, \union, \bl^{\lop}, \com{\eps}}$-\kl{term}, is $\mathrm{\Pi}^{0}_{1}$-complete.
\end{cor}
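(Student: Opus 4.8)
The statement has two halves, membership in $\mathrm{\Pi}^{0}_{1}$ and $\mathrm{\Pi}^{0}_{1}$-hardness. For membership I would simply invoke \Cref{prop: upper bound}.\ref{prop: upper bound 1}: $\REL$ is \kl{submodel-closed}, membership in $\REL$ is trivially decidable, and $\PCoR_{\set{\bl^{*}, \com{\eps}}} \subseteq \PCoR_{\set{\bl^{*}, \com{\eps}, \com{x}}}$, so the \kl{equational theory} is co-recursively enumerable. For hardness I would proceed exactly as for \Cref{cor: undecidable PCoR* func}, starting from \Cref{lem: connectivity'} (hence from the complement of \kl{the periodic domino problem}, $\mathrm{\Pi}^{0}_{1}$-complete by \Cref{prop: domino tiling undecidable}), and carry out one extra family of \kl{hypothesis eliminations} to pass from $\DREL$ to $\REL$, i.e.\ to remove $\mathtt{func}$.

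The one obstacle in removing $\mathtt{func}$ is that $a^{\smile} a \le \eps$ contains the converse operator, whereas the target fragments forbid $\bl^{\smile}$ and forbid $\com{x}$, so the general converse elimination of \Cref{ex: converse} is unavailable. The plan is to observe that functionality need not be imposed directly: writing $\Gamma' \defeq \set{\aE\aW \le \eps,\, \aW\aE \le \eps,\, \aN\aS \le \eps,\, \aS\aN \le \eps}$ — which is \emph{converse-free} — the hypotheses of $\Gamma'$ together with the $\ge \eps$ $2$-cycles already present in $\Gamma_{\mathcal{D}}$ force $\aE\aW = \aW\aE = \eps$ (and likewise for $\aN, \aS$), and an elementary computation shows this makes $\aE, \aN$ bijections with $\aW, \aS$ as their inverses, hence functional; since the remaining relevant variables (the colours and the fresh tests $p, q$) are sub-identity by $\mathtt{test}$, every variable occurring in $\term[1]_{\mathtt{gr}}'$ and in the hypotheses is functional. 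I would then argue that $\DREL^{\mathrm{fin}}_{\mathtt{test}, \Gamma_{\mathcal{D}}}$ and $\REL^{\mathrm{fin}}_{\mathtt{test}, \Gamma_{\mathcal{D}}, \Gamma'}$ agree on $\term[1]_{\mathtt{gr}}' \le \emp$: the inclusion $\subseteq$ uses $\DREL_{ab \ge \eps} \models ab \le \eps$, while the reverse uses that a model of the right-hand class restricts to a deterministic one on the occurring variables. This is the key step, and it also cleanly legitimises \Cref{ex: grid surjective}, since with $\Gamma'$ explicitly present the inclusion into $\algclass_{\mathtt{gr}}$ holds by fiat.

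With this replacement in hand I would run the same chain as in \Cref{thm: undecidable while emptiness}: eliminate the \kl{loop hypotheses} $\term[3]_{\mathcal{D}} \ge \eps$ by $\Tr'_{\term[3]_{\mathcal{D}}}$ (\Cref{ex: grid surjective}), drop strong connectivity by \Cref{ex: surjective}, and remove the loop/substitution parts of $\mathtt{test}$ as in \Cref{ex: tests}. The surviving hypotheses are then exactly the \kl{Hoare hypotheses}: the four $ab \le \eps$ of $\Gamma'$ and the $p\tilde{p} \le \emp$ part of $\mathtt{test}$. Each $ab \le \eps$ becomes $(ab\com{\eps})^{\lop} = \emp$ by \Cref{ex: u le I} — a \emph{converse-free}, sub-identity term — so, bundling all \kl{Hoare hypotheses} into a single $\term[3] \le \emp$ with $\term[3]$ a sub-identity ``badness'' detector (functionality violations $(ab\com{\eps})^{\lop}$ together with colour conflicts, both sub-identity and converse-free), \Cref{prop: Hoare hypotheses} moves it to the right-hand side as $\top\,\term[3]\,\top$.

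The last obstacle — and the one I expect to need the most care — is the $\top$, which is not in $\set{\compo, \union, \bl^{\lop}, \com{\eps}}$. Here I would use that the transformed left-hand term is a top-level \kl{loop}, hence below $\eps$ (the outer $\bl^{\lop}$ of $\term[1]_{\mathtt{gr}}'$ is preserved by $\Tr'$, $\Tr$ and the loop-substitution, so by \Cref{prop: loop transformation}.\ref{prop: loop transformation sub} the whole left-hand term is $\le \term[1]_{\mathtt{gr}}' \le \eps$), whence $\term[1] \le \top\,\term[3]\,\top$ iff $\term[1] \le \top\,\term[3]\,\top \cap \eps$. For sub-identity $\term[3]$ one then has the identity $\top\,\term[3]\,\top \cap \eps = \term[3] \union (\com{\eps}\,\term[3]\,\com{\eps})^{\lop}$ in every \kl{relational model} (both sides evaluate to $\eps$ when $\term[3] \neq \emp$ and to $\emp$ otherwise), and the right-hand side lies in $\set{\compo, \union, \bl^{\lop}, \com{\eps}}$. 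Assembling the pieces gives $\REL \models \term[1] \le \term[2]$ with $\term[1]$ a $\set{\compo, \union, \bl^{*}, \bl^{\lop}}$-\kl{term} and $\term[2]$ a $\set{\compo, \union, \bl^{\lop}, \com{\eps}}$-\kl{term}, equivalent to ``$\mathcal{D}$ has no \kl{periodic tiling}'', which completes the reduction; everything other than the $\Gamma'$-replacement and this $\top$-removal identity is bookkeeping parallel to \Cref{cor: undecidable PCoR* func}.
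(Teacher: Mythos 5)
Your proposal is correct and follows essentially the same route as the paper: the paper likewise replaces determinism by the converse-free hypotheses $\mathtt{func}' = \set{\aE\aW \le \eps, \aW\aE \le \eps, \aN\aS \le \eps, \aS\aN \le \eps}$ (justified exactly as you argue, via the $2$-cycle loop hypotheses making $\aE,\aN$ bijections with inverses $\aW,\aS$), re-uses the chain behind \Cref{cor: undecidable PCoR* func}, and eliminates $\mathtt{func}'$ as \kl{Hoare hypotheses} through the $(ab\,\com{\eps})^{\lop}$ encoding of \Cref{ex: u le I}, arriving at the same right-hand side $\top\,((\sum_{i} p_i \tilde{p}_i) \union (\aE\aW\com{\eps})^{\lop} \union (\aW\aE\com{\eps})^{\lop} \union (\aN\aS\com{\eps})^{\lop} \union (\aS\aN\com{\eps})^{\lop})\,\top$. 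The only difference is presentational: the paper leaves $\top$ in its displayed inequation (implicitly one rewrites $\top = \eps \union \com{\eps}$ and distributes to land in the claimed signature $\set{\compo, \union, \bl^{\lop}, \com{\eps}}$), whereas you remove $\top$ explicitly via the sub-identity identity $\top\,\term[3]\,\top \cap \eps = \term[3] \union (\com{\eps}\,\term[3]\,\com{\eps})^{\lop}$ --- a correct, if slightly more elaborate, way to finish.
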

\begin{proof}
    Let $\mathtt{func}' \defeq \set{\aE \aW \le \eps, \aW \aE \le \eps, \aN \aS \le \eps, \aS \aN \le \eps}$ and let $\asig = \set{\aE, \aN, \aW, \aS}$.
    Then $\DREL_{\term[3]_{\mathcal{D}} \ge \eps} = \REL_{\mathtt{func}', \term[3]_{\mathcal{D}} \ge \eps}$, as $\aE, \aN$ are bijective maps and $\aW, \aS$ are their inverse maps in both $\DREL_{\term[3]_{\mathcal{D}} \ge \eps}$ and $\REL_{\mathtt{func}', \term[3]_{\mathcal{D}} \ge \eps}$.
    By eliminating $\mathtt{func}'$ (using \Cref{ex: tests}), the \kl{equation} of \Cref{cor: undecidable PCoR* func} is transformed into the following:
    \begin{align*}
        \REL &\models \Tr_{\bigcompo_{i = 0}^{n-1} p_i \union \tilde{p}_i}(\Tr'_{\term[3]_{\mathcal{D}}^{\lop}}(\term[1]_{\mathtt{gr}}')) [p_0^{\lop}/p_0] \dots [p_{2n-1}^{\lop}/p_{2n-1}]\\
        &\qquad \le \top ((\sum_{i = 0}^{n-1} p_i \tilde{p}_i) \union (\aE \aW \com{\eps})^{\lop} \union (\aW \aE \com{\eps})^{\lop} \union (\aN \aS \com{\eps})^{\lop}\union (\aS \aN \com{\eps})^{\lop}) \top.  
    \end{align*}
    Hence, this completes the proof.
\end{proof}
Hence, we have that the \kl{equational theory} of $\PCoR_{\set{\bl^{*}, \com{\eps}}}$ is $\mathrm{\Pi}^{0}_{1}$-complete.

\subsection*{Remark: Comparison to the reduction from the universality problem of CFGs}\label{section: variable complements}
The \kl{equational theory} is $\mathrm{\Pi}^{0}_{1}$-complete for $\PCoR_{\set{\bl^{*}, \com{x}}}$ \cite[Theorem 50]{nakamuraExistentialCalculiRelations2023}.
This result can be shown as a corollary of \Cref{cor: undecidable PCoR* difference}.
We can give a (polynomial-time) reduction from the \kl{equational theory} of $\PCoR_{\set{\bl^{*}, \com{\eps}}}$ into \kl[equational theory]{that} of $\PCoR_{\set{\bl^{*}, \com{x}}}$, by encoding an equivalence relation using hypotheses and eliminating them (see \Cref{section: differnce encoding} for a detail).
Hence, in certain settings, extending with the \kl{difference constant} does not make the \kl{equational theory} strictly harder than extending with \kl{variable complements}.

The reduction of \cite[Theorem 50]{nakamuraExistentialCalculiRelations2023} is a reduction from the universality problem of context-free grammars (CFGs).
In a nutshell, each rewriting rule $x \leftarrow w$ (where, $x$ is a \kl{variable} and $w$ is a $\set{\compo}$-\kl{term}) of a given CFG is encoded as the hypothesis $w \le x$.
By $\REL \models w \le x \leftrightarrow w \cap \com{x} \le \emp$ (\Cref{ex: u le x}), we can eliminate these \kl{Hoare hypotheses}.
Using this approach, we can show the following undecidability.
(Additionally, this approach is also useful to show the undecidability with respect to language models $\LANG$ \cite{nakamuraFiniteRelationalSemantics2025}.)
\begin{prop}[Corollary of {\cite[Theorem 50]{nakamuraExistentialCalculiRelations2023}}, cf.\ \Cref{cor: undecidable PCoR* difference}]\label{prop: undecidable PCoR* variable}
    The \kl{inclusion problem} $\term[1] \le \term[2]$ on $\REL$, where $\term[1]$ is a $\set{\compo, \union, \bl^{*}}$-\kl{term} and $\term[2]$ is a $\set{\compo, \union, \bl^{\lop}, \com{x}}$-\kl{term}, is $\mathrm{\Pi}^{0}_{1}$-complete.
\end{prop}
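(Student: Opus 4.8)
The plan is to reduce from the universality problem of context-free grammars (CFGs), which is $\Pi^0_1$-complete, reconstructing the reduction behind \cite[Theorem 50]{nakamuraExistentialCalculiRelations2023} while tracking the exact shapes of the two \kl{terms}. The $\Pi^0_1$ upper bound is immediate: since every $\PCoR_{\set{\bl^{*}, \com{x}}}$ \kl{term} is a $\PCoR_{\set{\bl^{*}, \com{\eps}, \com{x}}}$ \kl{term} and $\REL$ is \kl{submodel-closed} with membership in $\textsc{P}$, \Cref{prop: upper bound}.\ref{prop: upper bound 1} applies.

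For hardness, given a CFG $G = \tuple{N, \Sigma, P, S}$ I would regard each terminal and each nonterminal as a \kl{variable} and encode each production $X \to \beta$ as the \kl{hypothesis} $\hat\beta \le X$, where $\hat\beta$ is the $\set{\compo}$-\kl{term} spelled by $\beta$ (working with $\Sigma^{+}$ in place of $\Sigma^{*}$ to avoid empty productions). Writing $\Gamma_G$ for this \kl{hypothesis} set and $\sigma \defeq (\sum_{a \in \Sigma} a)^{+}$, a $\set{\compo, \union, \bl^{*}}$-\kl{term}, the standard least-model argument over word structures gives $L(G) = \Sigma^{+}$ iff $\REL_{\Gamma_G} \models \sigma \le S$; note that $\sigma$ already has the desired left-hand form, with no \kl{loop operator}.

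It then remains to eliminate $\Gamma_G$ into the right-hand fragment $\set{\compo, \union, \bl^{\lop}, \com{x}}$. I would rewrite each $\hat\beta \le X$ via \Cref{ex: u le x 2}, which replaces the intersection of \Cref{ex: u le x} by a \kl{loop operator} and removes the converse through \kl{fresh} variables, so that neither $\cap$ nor $\bl^{\smile}$ survives and only \kl{variable complements} remain; collapsing the resulting \kl{Hoare hypotheses} into one and applying \Cref{prop: Hoare hypotheses} yields $\REL \models \sigma \le S \union \top\,(\cdots)\,\top$, where the elided subterm is built solely from $\set{\compo, \union, \bl^{\lop}, \com{x}}$. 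The final move is to delete both occurrences of $\top$ using the identity $\REL \models \top = w \union \com{w}$ for a \kl{fresh} \kl{variable} $w$, which is available inside $\set{\union, \com{x}}$; this places the whole right-hand side in $\set{\compo, \union, \bl^{\lop}, \com{x}}$, as required.

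The step I expect to be most delicate is the simultaneous bookkeeping rather than any single equivalence: verifying that chaining the eliminations of \Cref{section: hypothesis eliminations} (loop reformulation, converse removal, \kl{Hoare hypothesis} elimination, and the $\top = w \union \com{w}$ rewriting) never reintroduces a forbidden operator and is equivalence-preserving at each stage, together with the least-model correctness of the CFG encoding. The key enabling observation is precisely that the $\top$ produced by the generic \kl{Hoare hypothesis} elimination can be absorbed into the $\com{x}$-fragment through $\top = w \union \com{w}$. An alternative route, matching the ``cf.\ \Cref{cor: undecidable PCoR* difference}'' pointer, is to start from the $\com{\eps}$-reduction of \Cref{cor: undecidable PCoR* difference} and replace $\com{\eps}$ by $\com{z}$ for a \kl{fresh} $z$ forced to be an equivalence relation (indeed a congruence), eliminating that \kl{hypothesis} by a quotient construction; I expect the CFG route to be cleaner since its left-hand \kl{term} is already $\bl^{\lop}$-free.
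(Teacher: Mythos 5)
Your proposal is correct and takes essentially the same route as the paper: the paper's own proof likewise obtains hardness by the reduction from CFG universality (productions encoded as hypotheses $w \le x$) and eliminates those hypotheses with \Cref{ex: u le x 2} in place of \Cref{ex: u le x}, with the upper bound from \Cref{prop: upper bound}, exactly as you describe. Your explicit final rewriting of $\top$ as $w \union \com{w}$ is a detail the paper leaves implicit, but it is sound (the identity holds in every valuation of $\REL$) and is indeed needed to land in the stated $\set{\compo, \union, \bl^{\lop}, \com{x}}$ fragment, so it only makes the argument more complete.
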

\begin{proof}
    (In $\mathrm{\Pi}^{0}_{1}$):
    By \Cref{prop: upper bound}.
    (Hardness):
    By using the hypothesis elimination of \Cref{ex: u le x 2} instead of \Cref{ex: u le x} in the proof of \cite[Theorem 50]{nakamuraExistentialCalculiRelations2023}.
\end{proof}
Below we give an open question arisen from \Cref{cor: undecidable PCoR* difference} and \Cref{prop: undecidable PCoR* variable}.
\begin{ques}[cf.\ \Cref{cor: undecidable PCoR* difference}]
    Is it (un)decidable for the \kl{inclusion problem} $\term[1] \le \term[2]$ on $\REL$, where $\term[1]$ is a $\set{\compo, \union, \bl^{*}}$-\kl{term} and $\term[2]$ is a $\set{\compo, \union, \bl^{\lop}, \com{\eps}}$-\kl{term}?
\end{ques}

\section{On the emptiness problem}\label{section: emptiness}
In this section, we focus on the emptiness problem of fragments of $\PCoR_{\set{\bl^{*}, \com{\eps}, \com{x}}}$
on $\DREL$ and $\REL$.

\subsection{Undecidable fragments on DREL}\label{section: emptiness undecidability DREL}
By \Cref{thm: undecidable while emptiness},
we can show that the following undecidability result.
\begin{cor}\label{cor: undecidable PCoR* variable emptiness}
    The \kl{emptiness problem} of $\PCoR_{\set{\bl^{*}, \com{x}}}$ (precisely, $\set{\compo, \union, \bl^{*}, \bl^{\lop}, \com{x}}$-\kl{terms}) on $\DREL$ is $\mathrm{\Pi}^{0}_{1}$-complete.
\end{cor}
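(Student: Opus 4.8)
The plan is to read off the $\Pi^{0}_{1}$ upper bound from \Cref{prop: upper bound} and to obtain $\Pi^{0}_{1}$-hardness by recycling the reduction behind \Cref{thm: undecidable while emptiness}, discharging the hypotheses $\mathtt{test}$ by a \emph{substitution} that leaves the right-hand side equal to $\emp$ (so the instance stays a genuine \kl{emptiness problem}) while forcing $\com{x}$ to appear on the left. Crucially, the last line of that reduction is already in emptiness form, so no error-catching $\top(\dots)\top$ term on the right-hand side is needed, unlike in \Cref{cor: undecidable PCoR* func,cor: undecidable PCoR* difference}.

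For membership, $\term \le \emp$ is the instance $\term[2] = \emp$ of the inclusion problem, and $\DREL = \REL_{\mathtt{func}}$ is \kl{submodel-closed} (restricting a functional relation to $B^2$ keeps it functional) with membership decidable in $\textsc{P}$. Hence \Cref{prop: upper bound}.\ref{prop: upper bound 1}, applied to $\PCoR_{\set{\bl^{*}, \com{\eps}, \com{x}}} \supseteq \PCoR_{\set{\bl^{*}, \com{x}}}$, already places the problem in $\Pi^{0}_{1}$.

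For hardness, I would start from the final equivalence in the proof of \Cref{thm: undecidable while emptiness},
\[\mbox{$\mathcal{D}$ has no \kl{periodic tiling}} \quad\iff\quad \DREL_{\mathtt{test}} \models \Tr'_{\term[3]_{\mathcal{D}}^{\lop}}(\term[1]_{\mathtt{gr}}') \le \emp,\]
so that only $\mathtt{test}$ must be eliminated. Writing $\psig = \set{p_0, \dots, p_{2n-1}}$ with $\tilde{p}_i = p_{(n+i)\bmod 2n}$ and taking \kl{fresh} variables $q_0, \dots, q_{n-1}$, I would apply the substitution elimination of \Cref{ex: substitution tests}, replacing $p_i$ by $q_i^{\lop}$ and $p_{n+i}$ by $\com{q}_i^{\lop}$ throughout. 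Since the substitution leaves $\emp$ unchanged, this yields
\[\DREL_{\mathtt{test}} \models \Tr'_{\term[3]_{\mathcal{D}}^{\lop}}(\term[1]_{\mathtt{gr}}') \le \emp \;\iff\; \DREL \models \Tr'_{\term[3]_{\mathcal{D}}^{\lop}}(\term[1]_{\mathtt{gr}}')[q_0^{\lop}, \dots, \com{q}_{n-1}^{\lop}/p_0, \dots, p_{2n-1}] \le \emp.\]
After unfolding the annotated operators $\union_{b}$ and $\bl^{*_{b}}$, the substituted term is built from $\compo$, $\union$, $\bl^{*}$, $\bl^{\lop}$ (inherited from the $\mathrm{PWP}_{\lop}$ structure) together with $\com{x}$, which enters only through the variable complements $\com{q}_i$; that is, it is a $\set{\compo, \union, \bl^{*}, \bl^{\lop}, \com{x}}$-\kl{term}. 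As the \kl{periodic domino problem} is $\Sigma^{0}_{1}$-complete (\Cref{prop: domino tiling undecidable}), its complement is $\Pi^{0}_{1}$-complete, giving the desired $\Pi^{0}_{1}$-hardness.

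The step needing care — and the main obstacle — is that \Cref{ex: substitution tests}, stated for $\REL$, stays sound over $\DREL$; I would verify the hypotheses of \Cref{prop: REL axiom = substitution} with $\algclass[1] = \DREL$. The identity $\DREL \models q_i^{\lop} = q_i^{\lop}[q_i^{\lop}/p_i]$ is immediate since $p_i \notin \vsig(q_i^{\lop})$, and $\DREL[q_i^{\lop}/p_i] \subseteq \DREL$ holds because $q_i^{\lop}$ (and $\com{q}_i^{\lop}$) denote subidentities, which are functional, so overwriting $\val(p_i)$ by $\hat{\val}(q_i^{\lop})$ preserves $\mathtt{func}$. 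The two auxiliary equivalences of \Cref{ex: substitution tests} likewise transfer: given $\val \in \DREL_{\mathtt{test}}$, setting $\val(q_i) \defeq \val(p_i)$ produces a model in $\DREL$ of the set $\Theta$ of \Cref{ex: substitution tests}, and $\DREL_{\Theta} \models \mathtt{test}$ follows from $\REL \models q^{\lop} \cap \com{q}^{\lop} = \emp \,\land\, q^{\lop} \union \com{q}^{\lop} = \eps$. The only delicate bookkeeping is that the simultaneous substitution of the $2n$ test variables by the $n$ loops $q_i^{\lop}$ and their complements $\com{q}_i^{\lop}$ is compatible with the involution $\tilde{\bl}$ — which is exactly what the assignment $p_{n+i} \mapsto \com{q}_i^{\lop}$ enforces.
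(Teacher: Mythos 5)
Your proposal is correct and takes essentially the same route as the paper's own proof: the paper likewise starts from the last line of \Cref{thm: undecidable while emptiness} and eliminates $\mathtt{test}$ via the substitution elimination of \Cref{ex: substitution tests}, the right-hand side staying $\emp$ so the instance remains an emptiness problem, with membership in $\mathrm{\Pi}^{0}_{1}$ by \Cref{prop: upper bound}. Your explicit verification that \Cref{ex: substitution tests} transfers from $\REL$ to $\DREL$ (the conditions of \Cref{prop: REL axiom = substitution} hold because $q_i^{\lop}$ and $\com{q}_i^{\lop}$ denote subidentities, hence functional relations) is a worthwhile filling-in of a step the paper leaves implicit.
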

\begin{proof}
    By eliminating $\mathtt{test}$ (\Cref{ex: substitution tests}), the last line in \Cref{thm: undecidable while emptiness} is equivalent to the following:
    \[\DREL \models \Tr'_{\term[3]_{\mathcal{D}}^{\lop}}(\term[1]_{\mathtt{gr}}')[q_0^{\lop}, \dots, q_{n-1}^{\lop}, \com{q}_0^{\lop}, \dots \com{q}_{n-1}^{\lop}/p_0, \dots, p_{2n-1}] \le \emp.\]
    Hence, this completes the proof.
\end{proof}

Furthermore, by eliminating $\mathtt{test}$ using the \kl{difference constant} $\com{\eps}$ in the reduction of \Cref{thm: undecidable while emptiness},
we also have the following undecidability result.
\begin{cor}\label{cor: undecidable PCoR* difference emptiness}
    The \kl{emptiness problem} of $\PCoR_{\set{\bl^{*}, \com{\eps}}}$ (precisely, $\set{\compo, \union, \bl^{*}, \bl^{\lop}, \com{\eps}}$-\kl{terms}) on $\DREL$ is $\mathrm{\Pi}^{0}_{1}$-complete.
\end{cor}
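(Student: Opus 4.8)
The plan is to re-run the reduction of \Cref{thm: undecidable while emptiness}, but to discharge the test hypotheses $\mathtt{test}$ with the \kl{difference constant} $\com{\eps}$ while keeping the right-hand side $\emp$, exactly in the spirit of the $\com{x}$-elimination used for \Cref{cor: undecidable PCoR* variable emptiness} via \Cref{ex: substitution tests}. First I would set up a \emph{coupled} substitution $\sigma$: for each primitive test $s$ occurring in the reduction (the colours $c_1,\dots,c_n$ together with the auxiliary tests $p,q$) introduce a fresh \kl{action} $a_s$ and put $\sigma(s) \defeq a_s^{\lop}$ and $\sigma(\tilde s) \defeq (a_s\com{\eps})^{\lop}$. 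On $\DREL$ the ``easy'' halves of $\mathtt{test}$ then hold \emph{identically}: both $\sigma(s)$ and $\sigma(\tilde s)$ lie below $\eps$, and $\sigma(s)\cap\sigma(\tilde s) = \emp$ is automatic from the functionality of $a_s$ (a self-loop $(x,x)\in a_s$ cannot coexist with an off-diagonal $(x,y)\in a_s$). This is precisely where $\com{\eps}$ takes over the rôle that $\com{q}$ played in \Cref{ex: substitution tests}.

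The one genuinely new point — and the main obstacle — is the union axiom $s\union\tilde s \ge \eps$. Whereas $q^{\lop}\union\com{q}^{\lop} = \eps$ holds outright, here $\sigma(s)\union\sigma(\tilde s) = a_s^{\lop}\union(a_s\com{\eps})^{\lop}$ is just the domain of $a_s$, which equals $\eps$ only when $a_s$ is \emph{total}; totality is not identically expressible by $\com{\eps}$-terms over a fresh symbol, and the loop-HE conditions of \Cref{subsection: condition 1,subsection: condition 2} do not eliminate pure totality. I would sidestep this by retaining each union axiom as a \kl{loop hypothesis} $a_s^{\lop}\union(a_s\com{\eps})^{\lop}\ge\eps$ and folding it, along with the grid \kl{loop hypotheses}, into an enlarged witness
\[
\term[3]_{\mathcal D}' \;\defeq\; \sigma(\term[3]_{\mathcal D})\compo\bigcompo_{s}\bigl(a_s^{\lop}\union(a_s\com{\eps})^{\lop}\bigr),
\]
so that $\term[3]_{\mathcal D}'\ge\eps$ is equivalent to $\sigma(\term[3]_{\mathcal D})\ge\eps$ together with all test-union axioms. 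Every new conjunct is a sub-identity, hence $\hat{\val}(\term[3]_{\mathcal D}')\subseteq\hat{\val}(\sigma(\term[3]_{\mathcal D}))\subseteq\hat{\val}(\term[3]_{\mathtt{gr}})$ on $\DREL$. Consequently any \kl{graph homomorphism} that qualifies for the loop hypothesis elimination of $\term[3]_{\mathcal D}'$ (i.e.\ has $\triangle_{f(\cdot)}\subseteq\hat{\val}(\term[3]_{\mathcal D}')$) \emph{a fortiori} satisfies $\triangle_{f(\cdot)}\subseteq\hat{\val}(\term[3]_{\mathtt{gr}})$ and is therefore \kl{surjective} by \Cref{lem: grid surjective}. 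Thus the surjectivity premise of \Cref{thm: hypothesis elimination using graph loops} is inherited for free, and I can eliminate $\term[3]_{\mathcal D}'\ge\eps$ via the transformation $\Tr$ exactly as in \Cref{ex: grid surjective} (using \Cref{prop: loop transformation GKAT} to move between $\Tr$ and $\Tr'$). Importantly, since the $\com{\eps}$-encoded colours are a priori only partial tests, I must recheck that the tiling extraction of \Cref{lem: connectivity'} still works on the resulting witnesses; it does, because on a strongly connected grid the col$\ge1$ guard inside $\term[3]_{\mathcal D}$ forces each cell into some colour, making the encoded colours total there.

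Putting the pieces together, I would chain: (no \kl{periodic tiling}) $\iff$ the folded, $\sigma$-substituted inequation $\DREL^{\mathrm{fin}}_{(\aE\union\aN\union\aW\union\aS)^*\ge\top,\;\term[3]_{\mathcal D}'\ge\eps}\models\sigma(\term[1]_{\mathtt{gr}}')\le\emp$ (substitution plus \Cref{lem: connectivity'}), $\iff$ $\DREL^{\mathrm{fin}}_{(\aE\union\aN\union\aW\union\aS)^*\ge\top}\models\Tr_{\term[3]_{\mathcal D}'^{\lop}}(\sigma(\term[1]_{\mathtt{gr}}'))\le\emp$ (\Cref{ex: grid surjective}), $\iff$ $\DREL\models\Tr_{\term[3]_{\mathcal D}'^{\lop}}(\sigma(\term[1]_{\mathtt{gr}}'))\le\emp$ (\Cref{ex: surjective}, discharging $(\aE\union\aN\union\aW\union\aS)^*\ge\top$ by model restriction, as in \Cref{thm: undecidable while emptiness}). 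The final term is built only from $\compo$, $\union$, $\bl^{*}$, $\bl^{\lop}$ and $\com{\eps}$ — in particular it avoids $\top$ and $\bl^{\smile}$, since $\sigma$, the gadget $\term[3]_{\mathcal D}'$, and the loop transform $\Tr$ all stay inside this signature — so this yields $\mathrm{\Pi}^{0}_{1}$-hardness of the \kl{emptiness problem} for $\{\compo,\union,\bl^{*},\bl^{\lop},\com{\eps}\}$-\kl{terms} on $\DREL$. The matching $\mathrm{\Pi}^{0}_{1}$ upper bound is immediate from \Cref{prop: upper bound} (with $\DREL$ \kl{submodel-closed}), completing the proof.
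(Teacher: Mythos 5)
You have a genuine gap, and it sits exactly at the last link of your chain. Your complement encoding $\sigma(\tilde s)=(a_s\com{\eps})^{\lop}$ means that in every graph of $\glang(\sigma(\term[1]_{\mathtt{gr}}'))$ and of $\glang(\term[3]_{\mathcal{D}}'^{\lop})$, the witness vertex of a complement is attached to the rest of the graph only by an $a_s$-edge and a $\com{\eps}$-edge --- neither of which is a grid edge. Hence the graphs of $\Tr_{\term[3]_{\mathcal{D}}'^{\lop}}(\sigma(\term[1]_{\mathtt{gr}}'))$ are \emph{not} connected with respect to $\aE,\aN,\aW,\aS$, and a surjective homomorphic image of such a graph in a $\DREL$ valuation need not satisfy $(\aE\union\aN\union\aW\union\aS)^{*}\ge\top$: the images of those hanging vertices (which the $\com{\eps}$-edges force to be \emph{distinct} from their anchors) can be grid-isolated points of the model. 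So the inclusion $\mathsf{Sur}_{\glang(\cdot)}(\DREL)\subseteq\DREL^{\mathrm{fin}}_{(\aE\union\aN\union\aW\union\aS)^{*}\ge\top}$ that \Cref{ex: surjective} requires --- and which the proof of \Cref{thm: undecidable while emptiness} does obtain, because there every $\psig$-labelled edge collapses to a single vertex in $\DREL_{\mathtt{test}}$ (\Cref{figure: graphs}) --- fails, and your final equivalence ``$\iff \DREL\models\Tr_{\term[3]_{\mathcal{D}}'^{\lop}}(\sigma(\term[1]_{\mathtt{gr}}'))\le\emp$ by \Cref{ex: surjective}'' is unjustified. (Your earlier steps are essentially sound: disjointness from functionality, retaining the union axioms as loop hypotheses, and inheriting surjectivity from \Cref{lem: grid surjective} via $\term[3]_{\mathcal{D}}'\le\term[3]_{\mathtt{gr}}$ all work, the last one because in that step strong connectivity is still a hypothesis of the class.)

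This is precisely the trap the paper's own proof is engineered to avoid: there a complement is encoded as $(q_i\aW)^{\lop}$, so the witness vertex is reached by a $q_i$-edge and returns by an $\aW$-edge, keeping every auxiliary vertex attached to the grid skeleton by a grid edge; $\com{\eps}$ enters only once, through the single global hypothesis $\aW\com{\eps}\ge\eps$ (whose own graphs are likewise $\aW$-attached), which makes $\aW$ fixed-point-free and yields the disjointness $q_i^{\lop}\compo(q_i\aW)^{\lop}=\emp$. The price is restricting to $\tuple{h,v}$-periodic tilings with $h\ge 2$, which is harmless since the $h=1$ case is decidable. Your gap is plausibly repairable --- either by redirecting all complement witnesses into the grid as the paper does, or by bypassing \Cref{ex: surjective} and arguing directly that any finite surjective-image countermodel yields a periodic tiling from its grid component alone, even when the model is disconnected --- but as written the reduction does not go through. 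A secondary point you should also fix: in the converse (tiling-to-model) direction your encoding needs $h,v\ge 2$ (on a $1\times 1$ torus no $a_s$-edge to a different vertex exists), which requires doubling the periods of the given tiling.
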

\begin{proof}
    We recall the proof of \Cref{thm: undecidable while emptiness}.
    Let $\Gamma_{\mathcal{D}}'$ be the set $\Gamma_{\mathcal{D}}$ extended with the \kl{equation} $\aW \com{\eps} \ge \eps$.
    Note that $\DREL_{\aW \top \ge \eps} \models \aW \com{\eps} \ge \eps \leftrightarrow \aW^{\lop} \le \emp$, as $\aW$ is a function relation in $\DREL_{\aW \top \ge \eps}$.
    As an analog of \Cref{lem: connectivity'},
    $\mathcal{D}$ has an \kl{$\tuple{h,v}$-periodic tiling} for some $h \ge 2$ and $v \ge 1$ iff 
    $\DREL_{\mathtt{test}, \Gamma_{\mathcal{D}}'}^{\mathrm{fin}} \not\models \term[1]_{\mathtt{gr}}' \le \emp$.
    The left-hand side is still $\mathrm{\Pi}^{0}_{1}$-hard by \Cref{prop: domino tiling undecidable},
    because it is decidable whether $\mathcal{D}$ has an \kl{$\tuple{h,v}$-periodic tiling} for $h = 1$ and some $v \ge 1$.
    Let $\term[3]_{\mathcal{D}}' \defeq \term[3]_{\mathcal{D}} \compo (\aW \com{\eps})^{\lop}$.
    By $\REL \models \bigwedge \Gamma_{\mathcal{D}}' \leftrightarrow ((\aE \union \aN \union \aW \union \aS)^* \ge \top \land \term[3]_{\mathcal{D}}' \ge \eps)$,
    we have that $\DREL_{\mathtt{test}, \Gamma_{\mathcal{D}}'}^{\mathrm{fin}} \models \term[1]_{\mathtt{gr}}' \le \emp$
    iff $\DREL_{\mathtt{test}, (\aE \union \aN \union \aW \union \aS)^* \ge \top, \term[3]_{\mathcal{D}}' \ge \eps}^{\mathrm{fin}} \models \term[1]_{\mathtt{gr}}' \le \emp$.
    We now give a special hypothesis elimination of $\mathtt{test}$.
    Let $\psig = \set{p_0, \dots, p_{2n-1}}$, let $q_0, \dots, q_{n-1}$ be \kl{fresh} \kl{variables}, and let
    \begin{align*}
        \Theta &\;\defeq\; \set{p_i = q_i^{\lop}, p_{n+i} = (q_i \aW)^{\lop} \mid i \in \range{0, n-1}}, &
        \term[3]_{\mathcal{D}}'' &\;\defeq\; \term[3]_{\mathcal{D}}' \compo \bigcompo_{i = 0}^{n-1} (q_i (\eps \union \aW))^{\lop}.
    \end{align*}
    Intuitively, for $\val \in \DREL_{\term[3]_{\mathcal{D}}'' \ge \eps, \Theta}$,
    by $\val \models (q_i (\eps \union \aW))^{\lop} \ge \eps$ with the functionality,
    for each $x$, either one of $\tuple{x, x} \in \hat{\val}(q_i^{\lop})$ and $\tuple{x, x} \in \hat{\val}((q_i \aW)^{\lop})$ hold
    (we encode \kl{tests} by using this fact).
    We then have:
    \begin{align*}
        &\DREL^{\mathrm{fin}}_{\mathtt{test}, (\aE \union \aN \union \aW \union \aS)^* \ge \top, \term[3]_{\mathcal{D}}' \ge \eps} \models \term[1]_{\mathtt{gr}}' \le \emp\\
        &\Leftrightarrow \DREL^{\mathrm{fin}}_{\mathtt{test}, (\aE \union \aN \union \aW \union \aS)^* \ge \top, \Theta, \term[3]_{\mathcal{D}}'' \ge \eps} \models \term[1]_{\mathtt{gr}}' \le \emp \tag{$\Rightarrow$: Trivial. $\Leftarrow$: ($\bigstar_1$)}\\
        &\Leftrightarrow \DREL_{(\aE \union \aN \union \aW \union \aS)^* \ge \top, \Theta, \term[3]_{\mathcal{D}}'' \ge \eps}^{\mathrm{fin}} \models \term[1]_{\mathtt{gr}}' \le \emp \tag{$\Leftarrow$: Trivial. $\Rightarrow$: By ($\bigstar_2$)} \\
        &\Leftrightarrow \DREL^{\mathrm{fin}}_{(\aE \union \aN \union \aW \union \aS)^* \ge \top, \Theta} \models \Tr_{\term[3]_{\mathcal{D}}''}(\term[1]_{\mathtt{gr}}') \le \emp \tag{By ($\bigstar_3$)} \\
        &\Leftrightarrow \DREL^{\mathrm{fin}}_{(\aE \union \aN \union \aW \union \aS)^* \ge \top} \models \Tr_{\term[3]_{\mathcal{D}}''}(\term[1]_{\mathtt{gr}}')[q_0^{\lop}, \dots, q_{n-1}^{\lop}, (q_0 \aW)^{\lop}, \dots, (q_{n-1} \aW)^{\lop}/p_0, \dots, p_{2n-1}] \le \emp \tag{\Cref{prop: REL axiom = substitution}}\\ 
        &\Leftrightarrow \DREL \models \Tr_{\term[3]_{\mathcal{D}}''}(\term[1]_{\mathtt{gr}}')[q_0^{\lop}, \dots, q_{n-1}^{\lop}, (q_0 \aW)^{\lop}, \dots, (q_{n-1} \aW)^{\lop}/p_0, \dots, p_{2n-1}] \le \emp. \tag{\Cref{ex: surjective}} 
    \end{align*}
    Here, each ($\bigstar_i$) is shown as follows.

    ($\bigstar_1$): 
    for each $\val \in \DREL^{\mathrm{fin}}_{\mathtt{test}, (\aE \union \aN \union \aW \union \aS)^* \ge \top, \term[3]_{\mathcal{D}}' \ge \eps}$,
    let $\val'$ be the \kl{valuation} $\val$ in which each $\val(q_i)$ has been replaced with the function relation given by \[x \mapsto \begin{cases}
        x & (\mbox{if $\tuple{x, x} \in\val(p_i)$})\\
        \mbox{the \kl{vertex} } y \mbox{ s.t. $\tuple{y, x} \in \val(\aW)$} & (\mbox{otherwise}).
    \end{cases}\]
    (Such a $y$ uniquely exists, as $\val(\aW)$ is a bijection by $\val \in \REL_{\aW \aE = \eps, \aE \aW = \eps}$.)
    By construction with that $q_0, \dots, q_{n-1}$ are \kl{fresh}, we have $\val' \in \DREL^{\mathrm{fin}}_{\mathtt{test}, (\aE \union \aN \union \aW \union \aS)^* \ge \top, \Theta, \term[3]_{\mathcal{D}}'' \ge \eps}$.
    Thus $\val' \models \term[1]_{\mathtt{gr}}' \le \emp$.
    Since $q_0, \dots, q_{n-1}$ are \kl{fresh} in $\term[1]_{\mathtt{gr}}'$, we have $\val \models \term[1]_{\mathtt{gr}}' \le \emp$.

    ($\bigstar_2$):
    Each of $\mathtt{test}_i$ ($i = 1, 2, 3$) is shown as follows:
    \begin{itemize}
        \item For $\mathtt{test}_1$:
        By $\aW \com{\eps} \ge \eps$ and the functionality of $q_i$ and $\aW$,
        we have $\DREL_{\term[3]_{\mathcal{D}}'' \ge \eps} \models q_i^{\lop} \compo (q_i \aW)^{\lop} = (q_i \cap (q_i \aW))^{\lop} \le ((q_i \aW \com{\eps}) \cap (q_i \aW \eps))^{\lop} \le (q_i \aW (\com{\eps} \cap \eps))^{\lop} = \emp$.
        \item For $\mathtt{test}_2$:
        Clearly, we have $\REL \models q_i^{\lop} \le \eps \land (q_i \aW)^{\lop} \le \eps$.
        \item For $\mathtt{test}_3$:
        By $\DREL_{\term[3]_{\mathcal{D}}'' \ge \eps} \models q_i^{\lop} \union (q_i \aW)^{\lop} = (q_i (\eps \union \aW))^{\lop} \ge \eps$.
    \end{itemize}
   
    ($\bigstar_3$):
    By \Cref{ex: grid surjective} with $\DREL^{\mathrm{fin}}_{(\aE \union \aN \union \aW \union \aS)^* \ge \top, \Theta} \subseteq \algclass_{\mathtt{gr}}$ and $\DREL^{\mathrm{fin}}_{(\aE \union \aN \union \aW \union \aS)^* \ge \top, \Theta} \models \term[3]_{\mathcal{D}}'' \le \term[3]_{\mathcal{D}} \le \term[3]_{\mathtt{gr}}$.
\end{proof}

\subsection{Decidable fragments on DREL}\label{section: emptiness decidability DREL}
Below we give two decidable fragments on $\DREL$.
The following shows that the \kl{emptiness problem} is decidable for purely \emph{positive} fragments.
\begin{prop}\label{prop: decidable PCoR* DREL emptiness}
    The \kl{emptiness problem} on $\DREL$ (resp.\ $\REL$) is $\textsc{NC}^{1}$-complete for $\PCoR_{\set{\bl^{*}}}$.
\end{prop}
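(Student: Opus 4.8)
The plan is to reduce the \kl{emptiness problem} for $\PCoR_{\set{\bl^{*}}}$ to a Boolean formula evaluation, by first showing that emptiness is a purely syntactic condition on the \kl{graph language}. Concretely, I would prove that for every $\PCoR_{\set{\bl^{*}}}$ \kl{term} $\term$ and every $\algclass \in \set{\REL, \DREL}$ (and the finite restrictions),
\[
\algclass \models \term \le \emp \quad\Longleftrightarrow\quad \glang(\term) = \emptyset .
\]
The direction $\Longleftarrow$ is immediate from \Cref{prop: glang}: if $\glang(\term)=\emptyset$ then $\hat{\val}(\term)=\hat{\val}(\glang(\term))=\emptyset$ for every $\val$. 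For $\Longrightarrow$ I argue contrapositively, constructing a \kl{valuation} in $\algclass$ witnessing nonemptiness from any $\graph[2] \in \glang(\term)$. Since $\PCoR_{\set{\bl^{*}}}$ has no $\cap$, each such $\graph[2]$ is a finite path-like \kl{graph}; for $\REL$ the canonical \kl{relational model} $\val_{0}(a) \defeq a^{\graph[2]}$ on $\domain{\graph[2]}$ already works, as the identity map is a \kl{graph homomorphism} $\graph[2] \homo \const{G}(\val_{0}, \src^{\graph[2]}, \tgt^{\graph[2]})$, giving $\tuple{\src^{\graph[2]},\tgt^{\graph[2]}} \in \hat{\val}_{0}(\graph[2]) \subseteq \hat{\val}_{0}(\term)$.

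The main obstacle is the $\DREL$ case, where this canonical model need not be functional. Here I would quotient $\graph[2]$ by the least equivalence $\sim$ on $\domain{\graph[2]}$ closed under functionality (if $u \sim u'$ and $\tuple{u,v},\tuple{u',v'} \in a^{\graph[2]}$ then $v \sim v'$), computed to a fixpoint; this terminates since $\graph[2]$ is finite. Setting $\val_{0}(a) \defeq \set{\tuple{[u],[v]} \mid \tuple{u,v} \in a^{\graph[2]}}$ on the quotient makes every $\val_{0}(a)$ functional by construction, so $\val_{0} \in \DREL$, and the class map $q$ is a \kl{graph homomorphism} $\graph[2] \homo \const{G}(\val_{0}, [\src^{\graph[2]}], [\tgt^{\graph[2]}])$; thus $\tuple{[\src^{\graph[2]}],[\tgt^{\graph[2]}]} \in \hat{\val}_{0}(\graph[2]) \subseteq \hat{\val}_{0}(\term)$ and $\term$ is again nonempty. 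The careful point is checking that the fixpoint closure yields functionality of all relations simultaneously without destroying the witness; since passing to a homomorphic image only enlarges interpretations, the witness persists. All these models are finite, so the characterization also holds for $\REL^{\mathrm{fin}}$ and $\DREL^{\mathrm{fin}}$.

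Next I would test $\glang(\term) \neq \emptyset$ by a bottom-up nonemptiness bit $N(\term) \in \set{0,1}$, defined by $N(a)=N(\eps)=N(\top)=1$, $N(\emp)=0$, $N(\term[1]^{\smile})=N(\term[1])$, $N(\term[1]\compo\term[2])=N(\term[1])\land N(\term[2])$, $N(\term[1]\union\term[2])=N(\term[1])\lor N(\term[2])$, and $N(\term[1]^{*})=1$ (as $\eps \in \glang(\term[1]^{*})$). A routine induction along the definition of $\glang$ gives $\glang(\term)\neq\emptyset \iff N(\term)=1$, so $\algclass\models\term\le\emp$ iff $N(\term)=0$. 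Now $N(\term)$ is exactly the value of the monotone Boolean formula read off the term tree by replacing $\compo,\union$ with $\land,\lor$, treating $\bl^{\smile}$ as identity and $\bl^{*}$ as the constant $1$, and mapping the leaf $\emp$ to $0$ and the remaining leaves to $1$. Since the Boolean formula value problem is in $\textsc{NC}^{1}$ (Buss) and $\textsc{NC}^{1}$ is closed under complement, the emptiness problem lies in $\textsc{NC}^{1}$ for both $\REL$ and $\DREL$.

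Finally, for $\textsc{NC}^{1}$-hardness I would reduce from the Boolean formula value problem, which is $\textsc{NC}^{1}$-complete. After converting the input formula to negation normal form in $\textsc{AC}^{0}$ (so that negations act only on constant leaves, where they are resolved immediately), I map $\land \mapsto \compo$, $\lor \mapsto \union$, a true leaf to $\eps$, and a false leaf to $\emp$, obtaining a $\PCoR_{\set{\bl^{*}}}$ \kl{term} $\term[1]_{\phi}$ with $N(\term[1]_{\phi})=\mathrm{val}(\phi)$. Hence $\algclass \models \term[1]_{\phi} \le \emp$ iff $\phi$ evaluates to false, and, using closure of $\textsc{NC}^{1}$ under complement, the emptiness problem is $\textsc{NC}^{1}$-hard; combined with the membership argument this yields $\textsc{NC}^{1}$-completeness on both $\DREL$ and $\REL$.
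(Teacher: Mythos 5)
Your overall route coincides with the paper's: both proofs rest on the equivalence $\algclass \models \term \le \emp \Leftrightarrow \glang(\term) = \emptyset$ for $\algclass \in \set{\REL, \DREL}$ (via \Cref{prop: glang}), then decide $\glang(\term) \neq \emptyset$ by exactly the same bottom-up Boolean recursion ($\emp \mapsto 0$, other leaves and starred subterms $\mapsto 1$, $\compo \mapsto \land$, $\union \mapsto \lor$), place this in $\textsc{NC}^{1}$ via Boolean formula evaluation, and obtain hardness from the formula value problem. The one genuine divergence is how you prove the $\DREL$ direction of the equivalence: you determinize a witness graph $\graph[2] \in \glang(\term)$ by quotienting along the least equivalence closed under functionality. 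That congruence-closure argument is correct (the quotient relations are functional by construction, the projection is a graph homomorphism, and graphs of $\PCoR_{\set{\bl^{*}}}$ terms carry no negative labels that merging could violate), but it is heavier than needed: the paper simply takes the one-point valuation $\val_1$ with $\val_1(a) = \set{\tuple{o,o}}$ for every $a \in \vsig$, which lies in $\DREL$ because a single self-loop is functional, and observes that \emph{every} graph maps homomorphically into $\const{G}(\val_1, o, o)$. That single observation handles $\REL$, $\DREL$, and their finite restrictions uniformly, with no fixpoint computation; your construction would only pay off in a setting where collapsing all vertices to one point is forbidden.

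One step of your hardness argument needs repair. You reduce from the formula value problem \emph{with negations} and claim that conversion to negation normal form is $\textsc{AC}^{0}$ preprocessing. It is not in general: for each leaf one must compute the parity of the number of negations governing it, and by wrapping a leaf in an input-dependent alternation of $\neg(\cdot)$ and $(1 \land \cdot)$ contexts this encodes $\mathrm{PARITY}$, which is outside $\textsc{AC}^{0}$. Since $\textsc{NC}^{1}$-hardness is only meaningful under reductions weaker than $\textsc{NC}^{1}$ (DLOGTIME or $\textsc{AC}^{0}$), an $\textsc{NC}^{1}$-strength preprocessing step makes the claim vacuous. The repair is immediate and is what the paper does implicitly: reduce from the \emph{monotone} Boolean formula value problem (constants $0,1$ at the leaves, gates $\land,\lor$ only), which is itself $\textsc{NC}^{1}$-complete under DLOGTIME reductions, because the standard hardness construction can push negations onto the constant leaves at reduction time. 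Then $\set{\emp, \eps, \union, \compo}$-terms literally \emph{are} such formulas, no conversion is required, and your final appeal to closure of $\textsc{NC}^{1}$ under complement finishes the argument as before.
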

\begin{proof}
    (In $\textsc{NC}^{1}$):
    Let $\algclass = \REL$ or $\DREL$.
    Let $\val_1 \colon \vsig \to \wp(\set{o}^2)$ be the \kl{valuation} defined as $\val_1(a) \defeq \set{\tuple{o, o}}$ for $a \in \vsig$ (i.e., the \kl{graph} of $\val_1$ is the singleton \kl{graph} \begin{tikzpicture}[baseline = -.5ex]
        \graph[grow up= 3.cm, branch right = 3.cm, nodes={font=\tiny}]{
        {0/{$o$}[mynode, inner sep = .1em]}
        };
        \graph[use existing nodes, edges={color=black, pos = .5, earrow}, edge quotes={fill = white, inner sep=1pt,font= \scriptsize}]{
        0 ->["$\vsig$", out = 150, in = 210, looseness = 10] 0; 
        };
    \end{tikzpicture}).
    Clearly, $\val_1 \in \DREL$.
    Because every $\vsig$-labelled \kl{graph} is \kl{graph homomorphic} to $\const{G}(\val_1, o, o)$, we have that $\algclass \models \term \le \emp$ iff $\glang(\term) = \emptyset$,
    as a corollary of \Cref{prop: glang}.
    The right-hand side can be inductively calculated by the following properties:
    \begin{gather*}
        \glang(a) \neq \emptyset,\quad
        \glang(\eps) \neq \emptyset,\quad
        \glang(\emp) = \emptyset,\quad
        \glang(\term[1]^*) \neq \emptyset,\\
        \glang(\term[1] \compo \term[2]) \neq \emptyset \iff \glang(\term[1]) \neq \emptyset \mbox{ and } \glang(\term[2]) \neq \emptyset,\quad
        \glang(\term[1] \union \term[2]) \neq \emptyset \iff \glang(\term[1]) \neq \emptyset \mbox{ or } \glang(\term[2]) \neq \emptyset, \\
        \glang(\term[1]^{\smile}) \neq \emptyset \iff \glang(\term[1]) \neq \emptyset,\quad
        \glang(\term[1] \cap \term[2]) \neq \emptyset \iff \glang(\term[1]) \neq \emptyset \mbox{ and } \glang(\term[2]) \neq \emptyset.
    \end{gather*}
    Thus, we can reduce this problem into the evaluation of a fixed finite algebra (with $2$ values of ``$= \emptyset$'' and ``$\neq \emptyset$''),
    which is in $\textsc{NC}^{1}$ \cite[Corollary 1]{lohreyParallelComplexityTree2001}\cite[Theorem 8]{bussBooleanFormulaValue1987}.
    (Hardness):
    Because the \kl{emptiness problem} of $\set{\emp, \eps, \union, \compo}$-\kl{terms} is equal to the boolean formula value problem \cite{bussBooleanFormulaValue1987}, which is $\textsc{NC}^{1}$-hard.
\end{proof}
The following shows that the \kl{emptiness problem} is decidable for the intersection-free fragments.
\begin{prop}\label{prop: decidable KA emptiness}
    The \kl{emptiness problem} on $\DREL$ (resp.\ $\REL$) is $\textsc{NC}^{1}$-complete for $\set{\eps, \emp, \compo, \union, \bl^{*}, \bl^{\smile}, \com{\eps}, \com{x}}$-\kl{terms}.
\end{prop}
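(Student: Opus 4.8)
The plan is to show that, on this intersection-free fragment, the emptiness problem collapses to deciding whether the \kl{graph language} $\glang(\term)$ is empty---exactly as in \Cref{prop: decidable PCoR* DREL emptiness}---and, crucially, that this collapse persists on $\DREL$ because a single \emph{deterministic} \kl{relational model} embeds all the relevant \kl{graphs}. First I would record the key structural fact: since the signature $\set{\eps, \emp, \compo, \union, \bl^{*}, \bl^{\smile}, \com{\eps}, \com{x}}$ contains neither $\cap$ nor $\bl^{\lop}$ nor $\top$, every \kl{graph} in $\glang(\term)$ is a simple directed path (a word) over $\tilde{\vsig}_{\eps}$: it has pairwise distinct \kl{vertices}, no self-loops, and exactly one edge between consecutive vertices (oriented forward or backward according to $\bl^{\smile}$). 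This follows by an easy induction on $\term$ from the clauses in \Cref{section: graph languages}, using that $\compo$ and $\bl^{*}$ only perform \kl{series-composition} and $\bl^{\smile}$ only swaps the source and the target.

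Second, I would exhibit one universal deterministic model. Let $F$ be the free group on $\vsig$ and let $\val_{F}$ be the \kl{relational model} on $F$ with $\val_{F}(a) \defeq \set{\tuple{g, ga} \mid g \in F}$ for each $a \in \vsig$; each $\val_{F}(a)$ is a bijection, so $\val_{F} \in \DREL \subseteq \REL$. The main claim is that every path $\graph[2] \in \glang(\term)$ admits a \kl{graph homomorphism} into $\const{G}(\val_{F}, x, y)$ for suitable $x,y$, whence $\hat{\val}_{F}(\graph[2]) \neq \emptyset$. I would prove this by assigning group elements to the vertices of $\graph[2]$ from left to right: a positive edge ($\eps$, $a$, or $a^{\smile}$) deterministically propagates the value of the previous vertex (a forward $a$-edge multiplies by $a$, a backward one by $a^{-1}$, which is well defined precisely because $\val_{F}(a)$ is a bijection), while a negative edge ($\com{\eps}$ or $\com{a}$) lets me place the next vertex at any element avoiding the single forbidden value, possible since $F$ is infinite.

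The main obstacle---and the only place the intersection-freeness is really used---is to argue that negative constraints never clash. Because $\graph[2]$ is a path, it is a tree in which no two edges share an ordered pair of endpoints, so each edge constrains a distinct consecutive pair; satisfying each edge locally therefore suffices, and a forced positive propagation can at worst make two non-adjacent vertices coincide, which is harmless (a non-injective \kl{graph homomorphism} is still a \kl{graph homomorphism}) and can never put an $a$-edge and a $\com{a}$-edge, or an $\eps$-edge and a $\com{\eps}$-edge, on the same ordered pair. (With $\cap$ this fails: parallel $a$ and $\com{a}$ edges between the same pair are unsatisfiable.) Granting the embedding, I conclude the collapse: since $\val_{F} \in \DREL \subseteq \REL$, for every $\term$ in the fragment we have $\DREL \models \term \le \emp \iff \glang(\term) = \emptyset \iff \REL \models \term \le \emp$, by the graph-language characterization of \Cref{prop: glang}.

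Finally, for the $\textsc{NC}^{1}$ upper bound I would decide $\glang(\term) = \emptyset$ by evaluating a fixed two-valued algebra over the syntax tree of $\term$, extending the clauses of \Cref{prop: decidable PCoR* DREL emptiness} with $\glang(\com{\eps}) \neq \emptyset$, $\glang(\com{x}) \neq \emptyset$, and $\glang(\term[1]^{\smile}) \neq \emptyset \iff \glang(\term[1]) \neq \emptyset$; this is in $\textsc{NC}^{1}$ by the same tree-evaluation results \cite{lohreyParallelComplexityTree2001,bussBooleanFormulaValue1987}. The $\textsc{NC}^{1}$-hardness is inherited verbatim from the $\set{\emp, \eps, \union, \compo}$-subfragment, whose emptiness is the boolean formula value problem, as in \Cref{prop: decidable PCoR* DREL emptiness}.
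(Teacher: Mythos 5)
Your proposal is correct and takes essentially the same approach as the paper: both reduce the problem to checking $\glang(\term[1]) = \emptyset$ by exhibiting a single deterministic valuation into which every (path-shaped) graph of this intersection-free fragment maps homomorphically, then decide graph-language emptiness by evaluating a fixed two-valued algebra over the syntax tree in $\textsc{NC}^{1}$, with hardness inherited from the boolean formula value problem. The only difference is the witness model: the paper uses the finite two-element valuation $\val_2(a) = \set{\tuple{o,o},\tuple{o',o'}}$, where positive edges keep you at the same vertex and negative edges flip between $o$ and $o'$ (making the embedding immediate and also giving the collapse over $\DREL^{\mathrm{fin}}$ for free), whereas your infinite free-group model needs the greedy left-to-right assignment but works equally well.
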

\begin{proof}
    (In $\textsc{NC}^{1}$):
    Let $\algclass = \REL$ or $\DREL$.
    Let $\val_2 \colon \vsig \to \wp(\set{o,o'}^2)$ be the \kl{valuation} defined as $\val_2(a) \defeq \set{\tuple{o, o}, \tuple{o',o'}}$ for $a \in \vsig$, where $o \neq o'$
    (i.e., the \kl{graph} of $\val_2$ is the \kl{graph} \begin{tikzpicture}[baseline = -.5ex]
        \graph[grow up= 3.cm, branch right = .8cm, nodes={font=\tiny}]{
        {0/{$o$}[mynode, inner sep = .1em, minimum width = 1.2em], 1/{$o'$}[mynode, inner sep = .1em, minimum width = 1.2em]}
        };
        \graph[use existing nodes, edges={color=black, pos = .5, earrow}, edge quotes={fill = white, inner sep=1pt,font= \scriptsize}]{
        0 ->["$\vsig$", out = 150, in = 210, looseness = 8] 0; 
        1 ->["$\vsig$", out = 30, in = -30, looseness = 8] 1; 
        };
    \end{tikzpicture}).
    Clearly, $\val_2 \in \DREL$.
    Then, for all $\set{\eps, \emp, \compo, \union, \bl^{*}, \bl^{\smile}, \com{\eps}, \com{x}}$-\kl{terms} $\term$,
    every $\tilde{\vsig}_{\com{\eps}}$-labelled \kl{graph} in $\glang(\term)$ is \kl{graph homomorphic} to $\const{G}(\val_2, o, o)$ or $\const{G}(\val_2, o, o')$,
    by the form of the \kl{graph}.
    Thus, we have that $\algclass \models \term \le \emp$ iff $\glang(\term) = \emptyset$, as a corollary of \Cref{prop: glang}.
    Hence, in the same way as \Cref{prop: decidable PCoR* DREL emptiness}, this completes the proof.
    (Hardness):
    By the same proof as \Cref{prop: decidable PCoR* DREL emptiness}.
\end{proof}

\subsection{Decidability on REL}\label{section: emptiness decidability REL}
Below, we show that, the \kl{emptiness problem} for $\PCoR_{\set{\bl^{*}, \com{\eps}, \com{x}}}$ is decidable on $\REL$.
\begin{thm}\label{thm: decidable ECoR* emptiness}
    The \kl{emptiness problem} on $\REL$ is $\textsc{PSpace}$-complete for $\PCoR_{\set{\bl^{*}, \com{\eps}, \com{x}}}$ and $\PCoR_{\set{\bl^{*}, \com{x}}}$.
\end{thm}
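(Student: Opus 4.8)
The plan is to recast the \kl{emptiness problem} through the \kl{graph language} semantics of \Cref{prop: glang} and then decide it by a space-bounded search over \emph{satisfiable} graphs. First I would observe that $\REL \models \term \le \emp$ fails iff some \kl{graph} $\graph[2] \in \glang(\term)$ is \emph{satisfiable}, i.e.\ admits a \kl{graph homomorphism} into $\const{G}(\val, x, y)$ for some $\val \in \REL$ and \kl{vertices} $x,y$; this is immediate from \Cref{prop: glang}, exactly as in the positive case \Cref{prop: decidable PCoR* DREL emptiness}, but now non-trivial because $\com{x}$- and $\com{\eps}$-labelled edges can make a \kl{graph} unsatisfiable. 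A single finite \kl{graph} $\graph[2]$ over $\tilde{\vsig}_{\eps}$ is satisfiable iff its quotient under the equivalence $\equiv_0$ generated by the $\eps$-edges has (i) no $\com{\eps}$-edge inside a class and (ii) no ordered pair of vertex-classes joined by both an $x$- and a $\com{x}$-edge; indeed the $\equiv_0$-quotient is the coarsest admissible identification, and from it one reads off a witnessing $\val$ directly. This criterion is checkable in polynomial time per \kl{graph}.

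Next I would exploit that every \kl{graph} in $\glang(\term)$ is a two-terminal series-parallel \kl{graph} (built by series composition for $\compo$ and the Kleene star, by parallel composition for $\cap$, and by endpoint-swapping for $\bl^{\smile}$), so it carries a decomposition mirroring the syntax of $\term$ with stars unrolled. The key lemma is that satisfiability is testable \emph{compositionally} through a bounded \emph{interface}: for a subgraph with terminals $s,t$ it suffices to record whether $s \equiv_0 t$, the sets of edge-labels occurring between and within the two terminal-classes, and one bit flagging an already-detected internal clash. The heart of the correctness proof is the claim that any clash in the final \kl{graph} that crosses a composition must occur between edges both of whose endpoints are $\equiv_0$-equivalent to shared terminals; hence only terminal-class labels can ever participate in a future clash, while internal clashes are inherited through the flag. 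Granting this, series/parallel/converse composition and union act as fixed finite operations on interfaces, the Kleene star becomes a least fixpoint under series-composition starting from the $\eps$-interface, and $\term$ is satisfiable iff some interface reachable by a derivation of $\term$ carries the ``consistent'' flag.

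For the upper bound I would decide this reachability in polynomial space. Each interface is describable in $O(\|\term\|)$ bits, since only the $O(\|\term\|)$ letters of $\tilde{\vsig}_{\eps}$ occurring in $\term$ are relevant, although there are exponentially many interfaces; so rather than materialising the reachable set I would guess a target consistent interface and verify reachability by a recursive procedure following the syntax of $\term$ (existentially guessing the two sub-interfaces at each $\compo$- or $\cap$-node, and resolving each Kleene star by a Savitch-style reachability over the exponential interface graph). Since every frame stores only polynomially many interfaces and the recursion depth is polynomial, this is an alternating polynomial-time, hence $\textsc{PSpace}$, algorithm, and the argument is insensitive to the presence of $\com{\eps}$, covering both $\PCoR_{\set{\bl^{*}, \com{\eps}, \com{x}}}$ and $\PCoR_{\set{\bl^{*}, \com{x}}}$. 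For $\textsc{PSpace}$-hardness I would reduce from non-emptiness of an intersection of finite automata: conjuncts $\term[1]_1 \cap \dots \cap \term[1]_k$ describe the runs, while a position gadget such as $a_i \cap \com{a_1} \cap \dots \cap \com{a_{i-1}} \cap \com{a_{i+1}} \cap \dots$ forces each traversed step to commit to a single letter, so that a consistent \kl{graph} exists iff the automata share an accepted word. \textbf{The main obstacles} I anticipate are (a) proving the compositional interface lemma—in particular that a clash never ``escapes'' through an internal vertex, which relies on the disjointness of internal vertices across parallel branches—and (b) in the hardness reduction, forcing the separate conjuncts to run along a \emph{single} shared chain in an arbitrary, possibly non-deterministic, $\REL$-model, which is where the complemented letters $\com{x}$ are essential and which already succeeds without $\com{\eps}$.
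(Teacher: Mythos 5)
Your \textsc{PSpace} membership argument is, in essence, the paper's own proof. The paper likewise reduces non-emptiness over $\REL$ to the existence of a consistent graph in $\glang(\term[1])$ (a consistency criterion identical to yours, recalled from prior work on edge saturations), and likewise observes that it suffices to track, at every syntactic step, only the subgraph induced by the source and the target: it defines a modified series composition $\graph[1] \mathbin{\tilde{\compo}} \graph[2]$ that first checks consistency of the full composition and only then restricts to the two terminals, and it runs a nondeterministic polynomial-space search over the resulting one-or-two-vertex graphs, concluding by Savitch's theorem. Your ``interface'' (terminal classes, the labels within and between them, plus a clash flag) is the same data, and your key lemma---a clash crossing a composition can only involve edges whose endpoints are shared terminals, because internal vertices of distinct components remain disjoint---is exactly the induction behind the paper's claim that $\glang(\term[1])$ has a consistent graph iff $\tilde{\glang}(\term[1])$ has one. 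So the upper bound is sound, and essentially identical to the paper's.

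The hardness half has a genuine gap, and it is precisely the obstacle (b) that you name but do not resolve. In your term $\term[1]_1 \cap \dots \cap \term[1]_k$ the Kleene stars sit \emph{inside} the intersection, so every graph in its graph language is a parallel composition of $k$ paths sharing only the source and the target; intermediate vertices of distinct branches are distinct vertices of the graph. Hence, whenever the witnessing words have length at least two, no two edges from different branches connect the same ordered pair of vertices, so no cross-branch clash can ever arise---no matter how you decorate single letters with complements $\com{a}$. Your constructed term is therefore non-empty over $\REL$ as soon as each $L(A_i)$ is individually non-empty (for instance, the gadget encodings of $ab$ and $ba$ coexist in one consistent graph), so the reduction decides the wrong problem. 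Synchronization must be forced by the shape of the term, not by edge labels: this is why the paper reduces instead from propositional Hoare logic (Kozen), whose encoding yields a term of the form $b\,(\bigcap_{i=1}^{n}(\tilde{b}_i \mathtt{x} \union \mathtt{x}\,\tilde{c}_i))^{*}\,c$---a star \emph{of} an intersection, so that every star iteration is a series-composition point shared by all conjuncts, forcing all simulated constraints to read the same step of a single chain---and then eliminates the tests via substitution hypotheses (\Cref{prop: REL axiom = substitution}, \Cref{ex: substitution tests}) so as to land in $\PCoR_{\set{\bl^{*}, \com{x}}}$. To repair your reduction you would need the same inversion: one shared $(\cdot)^{*}$-chain whose single step is an intersection encoding one move of every automaton simultaneously, rather than an intersection of independently starred runs.
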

\begin{proof}
    (In $\textsc{PSpace}$):
    We recall the characterization using \emph{edge saturations} given in \cite[Theorem 18]{nakamuraExistentialCalculiRelations2023}.
    We say that a \kl{graph} $\graph[2]$ is \intro*\kl{consistent} if,
    for every $x, y \in \domain{\graph[2]}$,
    \begin{itemize}
        \item for each $a \in \vsig$, either $\tuple{x, y} \not\in a^{\graph[2]}$ or $\tuple{x, y} \not\in \com{a}^{\graph[2]}$ holds,
        \item if $x = y$, then $\tuple{x, y} \not\in \com{\eps}^{\graph[2]}$.
    \end{itemize}
    As a corollary of \cite[Theorem 18]{nakamuraExistentialCalculiRelations2023},
    we have that 
    \[\REL \not\models \term \le \emp \quad\iff\quad
    \mbox{there is some \kl{consistent} \kl{graph} in $\glang(\term)$.}\]
    To determine the right-hand side,
    it suffices to view only the source and target vertices in each inductive step of the construction of $\glang(\term)$.
    Let $\graph[1] \mathbin{\tilde{\compo}} \graph[2]$ be the \kl{subgraph} of the \kl{series-composition} $\graph[1] \compo \graph[2]$ with respect to the \kl{source} and \kl{target} vertex
    if the \kl{graph} $\graph[1] \compo \graph[2]$ is \kl{consistent},
    and undefined otherwise.
    For instance, if $\graph[1] = \left(\hspace{-.5em}\begin{tikzpicture}[baseline= 0.ex]
        \graph[grow up= 1.cm, branch right = .8cm, nodes={font=\tiny}]{
            {0/{}[mynode, fill = black, inner sep = .1em], 1/{}[mynode, fill = black, inner sep = .1em]};
        };
        \node[left = 4pt of 0](s){};\node[right = 4pt of 1](t){};
        \graph[use existing nodes, edges={color=black, pos = .5, earrow}, edge quotes={fill = white, inner sep=1pt,font= \scriptsize}]{
        (s) -> (0); (1) -> (t);
        (0) ->["$a$"] (1);
        0 ->["$a$", out = 60, in = 120, looseness = 50, loop] 0;
        1 ->["$b$", out = 60, in = 120, looseness = 50, loop] 1;
        };
    \end{tikzpicture}\hspace{-.5em}\right)$ and $\graph[2] = \left(\hspace{-.5em}\begin{tikzpicture}[baseline= 0.ex]
        \graph[grow up= 1.cm, branch right = .8cm, nodes={font=\tiny}]{
            {0/{}[mynode, fill = black, inner sep = .1em], 1/{}[mynode, fill = black, inner sep = .1em]};
        };
        \node[left = 4pt of 0](s){};\node[right = 4pt of 1](t){};
        \graph[use existing nodes, edges={color=black, pos = .5, earrow}, edge quotes={fill = white, inner sep=1pt,font= \scriptsize}]{
        (s) -> (0); (1) -> (t);
        (0) ->["$b$"] (1);
        0 ->["$b$", out = 60, in = 120, looseness = 50, loop] 0;
        1 ->["$\com{a}$", out = 60, in = 120, looseness = 50, loop] 1;
        };
    \end{tikzpicture}\hspace{-.5em}\right)$, then
    $\graph[1] \mathbin{\tilde{\compo}} \graph[2] = \left(\hspace{-.5em}\begin{tikzpicture}[baseline= 0.ex]
            \graph[grow up= 1.cm, branch right = .8cm, nodes={font=\tiny}]{
                {0/{}[mynode, fill = black, inner sep = .1em], 1/{}[mynode, fill = black, inner sep = .1em]};
            };
            \node[left = 4pt of 0](s){};\node[right = 4pt of 1](t){};
            \graph[use existing nodes, edges={color=black, pos = .5, earrow}, edge quotes={fill = white, inner sep=1pt,font= \scriptsize}]{
            (s) -> (0); (1) -> (t);
            0 ->["$a$", out = 60, in = 120, looseness = 50, loop] 0;
            1 ->["$\com{a}$", out = 60, in = 120, looseness = 50, loop] 1;
            };
        \end{tikzpicture}\hspace{-.5em}\right)$ and $\graph[2] \mathbin{\tilde{\compo}} \graph[1]$ is undefined
        (as $\graph[1] \compo \graph[2]$ is \kl{consistent} and $\graph[2] \compo \graph[1]$ is not \kl{consistent}).
    We then define the \kl{graph language} $\tilde{\glang}(\term)$ in the same way as $\glang(\term)$ (\Cref{section: graph languages}),
    where the definition of $\glang(\term[1] \compo \term[2])$ has been replaced with
    $\tilde{\glang}(\term[1] \compo \term[2]) \defeq \set{\graph[1] \mathbin{\tilde{\compo}} \graph[2] \mid \graph[1] \in \tilde{\glang}(\term[1]), \graph[2] \in \tilde{\glang}(\term[2]),
    \mbox{ and $\graph[1] \mathbin{\tilde{\compo}} \graph[2]$ is defined}}$.
    By easy induction on $\term$, we have that
    $\glang(\term)$ has some \kl{consistent} \kl{graph} iff $\tilde{\glang}(\term)$ has some \kl{consistent} \kl{graph}.
    By definition, each \kl{graph} in $\tilde{\glang}(\term)$ has one or two \kl{vertices}, so we can memorize it in a space polynomial to the number of \kl{variables}.
    Thus, there is a naive non-deterministic polynomial space algorithm to determine whether $\REL \not\models \term \le \emp$.
    By the Savitch's theorem ($\mathsf{NPSPACE} = \textsc{PSpace}$) \cite{savitchRelationshipsNondeterministicDeterministic1970}, this completes the proof.

    (Hardness):
    From the proof of \cite[Theorem 5.1]{kozenHoareLogicKleene2000}
    (for showing the $\textsc{PSpace}$-hardness of propositional Hoare logic), the following problem is $\textsc{PSpace}$-complete:
    Given a set $\psig = \set{p_0, \dots, p_{2m-1}}$ (where $p_0, \dots, p_{2m-1}$ are pairwise distinct) of \kl{primitive tests} and \kl{tests} $b_1, \dots, b_n, b, c_1, \dots, c_n, c$ over \kl{primitive tests} $\psig$,
    to decide whether $\REL_{\mathtt{test}} \models b_1 \mathtt{x} c_1 \le \emp \to \dots \to b_n \mathtt{x} c_n \le \emp \to b \mathtt{x}^* c \le \emp$,\footnote{Note that $\REL_{\mathtt{test}} \models b \mathtt{x}^{*_{(c)}} \le \emp
    \leftrightarrow b \mathtt{x}^* c \le \emp$, cf. \cite[Theorem 5.1]{kozenHoareLogicKleene2000}.} where $\mathtt{x}$ is a fixed \kl{action}.
    We then have:
    \begin{align*}
        &\REL_{\mathtt{test}} \models b_1 \mathtt{x} c_1 \le \emp \to \dots \to b_n \mathtt{x} c_n \le \emp \to b \mathtt{x}^* c \le \emp\\
        &\Longleftrightarrow
        \REL_{\mathtt{test}} \models \mathtt{x} \le \tilde{b}_1 \mathtt{x} \union \mathtt{x} \tilde{c}_1 \to \dots \to \mathtt{x} \le \tilde{b}_n \mathtt{x} \union \mathtt{x} \tilde{c}_n \to b \mathtt{x}^* c \le \emp
        \tag{By $\REL_{\mathtt{test}} \models b \mathtt{x} c \le \emp \leftrightarrow \mathtt{x} \le \tilde{b} \mathtt{x} \union \mathtt{x} \tilde{c}$}\\
        &\Longleftrightarrow
        \REL_{\mathtt{test}} \models \mathtt{x} \le \bigcap_{i = 1}^{n} (\tilde{b}_i \mathtt{x} \union \mathtt{x} \tilde{c}_i) \to b \mathtt{x}^* c \le \emp \tag{By $\REL \models (\term[3] \le \term[1] \land \term[3] \le \term[2]) \leftrightarrow \term[3] \le \term[1] \cap \term[2]$}\\
        &\Longleftrightarrow
        \REL_{\mathtt{test}} \models \mathtt{x} = \bigcap_{i = 1}^{n} (\tilde{b}_i \mathtt{x} \union \mathtt{x} \tilde{c}_i) \to b \mathtt{x}^* c \le \emp \tag{By $\REL_{\mathtt{test}} \models \mathtt{x} \ge \bigcap_{i = 1}^{n} (\tilde{b}_i \mathtt{x} \union \mathtt{x} \tilde{c}_i)$}\\
        &\Longleftrightarrow
        \REL_{\mathtt{test}} \models b (\bigcap_{i = 1}^{n} (\tilde{b}_i \mathtt{x} \union \mathtt{x} \tilde{c}_i))^* c \le \emp \tag{By eliminating the hypothesis (\Cref{prop: REL axiom = substitution})}\\
        &\Longleftrightarrow
        \REL \models (b (\bigcap_{i = 1}^{n} (\tilde{b}_i \mathtt{x} \union \mathtt{x} \tilde{c}_i))^* c)[q_0^{\lop}/p_0] \dots [q_{m-1}^{\lop}/p_{m-1}][\com{q}_0^{\lop}/p_m] \dots [\com{q}_{m-1}^{\lop}/p_{2m-1}] \le \emp. \tag{By eliminating $\mathtt{test}$ (\Cref{ex: substitution tests})}
    \end{align*}
    Here, $q_0, \dots, q_{m-1}$ are \kl{fresh} \kl{variables}.
    As the left-hand side term is a $\PCoR_{\set{\bl^{*}, \com{x}}}$-\kl{term}, this completes the proof.
\end{proof}
For the $\bl^{*}$-free fragment,
the \kl{emptiness problem} is $\textsc{coNP}$-complete, as follows.
\begin{cor}\label{cor: decidable ECoR* emptiness}
    The \kl{emptiness problem} on $\REL$ is $\textsc{coNP}$-complete for $\PCoR_{\set{\com{\eps}, \com{x}}}$.
\end{cor}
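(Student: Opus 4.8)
The plan is to prove the two bounds separately. For the upper bound, I would invoke \Cref{prop: upper bound}.\ref{prop: upper bound 2} with $\algclass = \REL$: since $\REL$ is \kl{submodel-closed} and the membership $\val \in \REL$ is trivially in $\textsc{P}$, the \kl{inclusion problem} $\term[1] \le \term[2]$ is in $\textsc{coNP}$ whenever $\term[1]$ is a $\PCoR_{\set{\com{\eps}, \com{x}}}$ \kl{term} and $\term[2]$ is a $\PCoR_{\set{\bl^{*}, \com{\eps}, \com{x}}}$ \kl{term}. Taking $\term[2] \defeq \emp$ (trivially such a \kl{term}), the \kl{emptiness problem} $\term \le \emp$ falls in $\textsc{coNP}$. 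Concretely this rests on the linearly bounded model property (\Cref{prop: bounded model property}.\ref{prop: bounded model property 2}): if $\REL \not\models \term \le \emp$, one guesses a \kl{valuation} $\val$ on at most $1 + \|\term\|$ \kl{vertices} together with a pair witnessing $\hat{\val}(\term) \neq \emptyset$, and verifies it in $\textsc{P}$ by model checking.

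For $\textsc{coNP}$-hardness, I would give a polynomial-time reduction from the unsatisfiability problem of Boolean formulas, which is $\textsc{coNP}$-complete, in the spirit of \Cref{rem: NP-complete}. Given a Boolean formula $\fml$ over propositional variables $p_1, \dots, p_n$, I would first put $\fml$ into negation normal form, then associate to each $p_i$ a distinct variable $x_i$ and translate $\fml$ into a \kl{term} $\bterm_{\fml}$ by
\[
p_i \mapsto x_i \cap \eps, \qquad \lnot p_i \mapsto \com{x_i} \cap \eps, \qquad \fml[1] \land \fml[2] \mapsto \bterm_{\fml[1]} \cap \bterm_{\fml[2]}, \qquad \fml[1] \lor \fml[2] \mapsto \bterm_{\fml[1]} \union \bterm_{\fml[2]}.
\]
Each $\bterm_{\fml}$ is a $\PCoR_{\set{\com{\eps}, \com{x}}}$ \kl{term} (using only $\cap$, $\union$, $\eps$, and \kl{variable complements}), and by construction $\hat{\val}(\bterm_{\fml}) \subseteq \hat{\val}(\eps)$ for every $\val \in \REL$.

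The key observation is that, for each \kl{vertex} $v$ of a \kl{valuation} $\val$, exactly one of $\tuple{v, v} \in \hat{\val}(x_i \cap \eps)$ and $\tuple{v, v} \in \hat{\val}(\com{x_i} \cap \eps)$ holds, so $v$ induces a truth assignment $A_v$ with $A_v(p_i) = \const{true}$ iff $\tuple{v, v} \in \val(x_i)$; a routine induction on $\fml$ then yields $\tuple{v, v} \in \hat{\val}(\bterm_{\fml})$ iff $A_v \models \fml$. From this I would conclude $\REL \models \bterm_{\fml} \le \emp$ iff $\fml$ is unsatisfiable: if some $A$ satisfies $\fml$, a single-\kl{vertex} \kl{valuation} with $\val(x_i)$ chosen according to $A$ witnesses $\hat{\val}(\bterm_{\fml}) \neq \emptyset$; conversely, if $\fml$ is unsatisfiable then every $A_v$ falsifies $\fml$, so $\hat{\val}(\bterm_{\fml}) = \emptyset$ for all $\val$, using that $\bterm_{\fml}$ contains only diagonal pairs.

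The main obstacle is soundness over \emph{all} of $\REL$ rather than only single-\kl{vertex} models: this is exactly why the literals must be intersected with $\eps$, forcing $\bterm_{\fml}$ to be a \kl{test} (a subset of the diagonal). Without this restriction, off-diagonal pairs produced by $\cap$ could make $\hat{\val}(\bterm_{\fml})$ nonempty even for unsatisfiable $\fml$, breaking the reduction. Once this is in place, the equivalence is immediate, and combining it with the $\textsc{coNP}$ upper bound above yields $\textsc{coNP}$-completeness.
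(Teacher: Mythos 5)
Your proof is correct, and it follows the same overall plan as the paper's (a $\textsc{coNP}$ upper bound via small models, and $\textsc{coNP}$-hardness by reduction from Boolean unsatisfiability), but both halves take a different route, and in the hardness half yours is the more careful one. For the upper bound, the paper reuses the algorithm of \Cref{thm: decidable ECoR* emptiness} (the $\tilde{\glang}$ construction tracking only \kl{sources} and \kl{targets}), observing that without $\bl^{*}$ the nondeterministic procedure terminates in polynomial time; your appeal to \Cref{prop: upper bound}.\ref{prop: upper bound 2} with $\algclass = \REL$ is an equally valid and arguably more direct justification. The substantive difference is in the hardness argument. The paper disposes of it in one line: the \kl{emptiness problem} of $\set{\union, \compo, \com{x}}$-\kl{terms} ``is equal to'' the Boolean unsatisfiability problem. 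Read literally, this does not work: over $\REL$, \emph{every} $\set{\union, \compo, \com{x}}$-\kl{term} is nonempty, since each \kl{graph} in its \kl{graph language} is a \kl{consistent} path. For instance, $x \compo \com{x}$ corresponds to the unsatisfiable formula $p \land \lnot p$, yet taking $\val(x) = \set{\tuple{1,1}}$ on the set $\set{1,2}$ gives $\tuple{1,2} \in \hat{\val}(x \com{x})$, so $\REL \not\models x\com{x} \le \emp$. Translating $\land$ into $\compo$ (or $\cap$) simulates conjunction only after all literals are forced onto the diagonal, and that is exactly what your intersections with $\eps$ accomplish: your literals $x_i \cap \eps$ and $\com{x_i} \cap \eps$ are sub-identity relations, the induced assignment $A_v$ is well defined and total at every \kl{vertex}, and your induction then goes through, giving $\REL \models \bterm_{\fml} \le \emp$ iff $\fml$ is unsatisfiable. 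In effect, your reduction is the $\REL$-analogue of \Cref{rem: NP-complete}, with $\cap\,\eps$ playing the role that the $\mathtt{test}$ hypotheses play there, and your closing remark about why this relativization is indispensable identifies precisely what the paper's one-liner glosses over (its intended term class presumably should have included $\bl^{\lop}$ or $\eps$ together with $\cap$). A cosmetic point: your terms use only $\com{x}$ and never $\com{\eps}$, so you in fact establish hardness already for the smaller fragment $\PCoR_{\set{\com{x}}}$, which of course implies the stated result.
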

\begin{proof}
    (In $\textsc{coNP}$):
    By the same algorithm as \Cref{thm: decidable ECoR* emptiness}.
    When Kleene-star $\bl^{*}$ does not occur, the non-deterministic algorithm terminates in polynomial time.
    (Hardness):
    Because the \kl{emptiness problem} of $\set{\union, \compo, \com{x}}$-\kl{terms} is equal to the boolean unsatisfiability problem \cite{bussBooleanFormulaValue1987}, which is $\textsc{coNP}$-hard.
\end{proof}
For the $\com{x}$-free fragment,
the \kl{emptiness problem} is $\textsc{NC}^{1}$-complete, as follows.
\begin{cor}\label{cor: decidable PCoR* emptiness difference}
    The \kl{emptiness problem} on $\REL$ is $\textsc{NC}^{1}$-complete for $\PCoR_{\set{\bl^{*}, \com{\eps}}}$.
\end{cor}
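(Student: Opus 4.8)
The plan is to adapt the finite-algebra argument of \Cref{prop: decidable PCoR* DREL emptiness,prop: decidable KA emptiness} to the full positive fragment with $\cap$ and $\com{\eps}$, using the consistency characterization behind \Cref{thm: decidable ECoR* emptiness}. By \Cref{prop: glang} together with that characterization, $\REL \not\models \term \le \emp$ iff $\glang(\term)$ (equivalently $\tilde{\glang}(\term)$) contains a \emph{consistent} \kl{graph}. The key simplification is that for a $\com{x}$-free \kl{term} the first consistency condition (no $a$/$\com{a}$ clash) is vacuous, so a reduced \kl{graph} is consistent iff it carries \emph{no $\com{\eps}$-labelled self-loop}. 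Hence all variable and $\eps$ labels, and even the \emph{number} of variables, become irrelevant: the only data governing consistency is the $\com{\eps}$-edge pattern on the (at most two) \kl{source}/\kl{target} \kl{vertices}.

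First I would fix a finite abstraction. Each consistent reduced \kl{graph} is sent to one of three states: $E$ (one \kl{vertex}, i.e.\ \kl{source} $=$ \kl{target}), $N$ (two \kl{vertices}, no $\com{\eps}$-edge between them), and $C$ (two \kl{vertices} with a $\com{\eps}$-edge, in either or both directions); inconsistent \kl{graphs} collapse to an absorbing $\bot$. The abstract value of a \kl{term} is the subset $S(\term) \subseteq \set{E, N, C}$ of states realised by consistent \kl{graphs} in $\tilde{\glang}(\term)$, so the algebra lives on the fixed $8$-element lattice $\wp(\set{E, N, C})$. I would assign base values $S(\emp) = \emptyset$, $S(\eps) = \set{E}$, $S(\com{\eps}) = \set{C}$, and $S(a) = S(\top) = \set{N}$ (a single variable edge and the edgeless two-pointed \kl{graph} of $\top$ both abstract to $N$).

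The core step is to verify that the operations act as fixed maps on $\wp(\set{E, N, C})$ commuting with this abstraction. Union is set union; converse is the identity (the collapsed states are symmetric). For intersection I would use that parallel composition glues \kl{sources} and \kl{targets}: composing with $E$ forces \kl{source} $=$ \kl{target}, turning any $\com{\eps}$-edge of the other factor into a self-loop, so $E \sqcap C = \bot$ while $E \sqcap E = E \sqcap N = E$, and on two-\kl{vertex} states $\sqcap$ is the join with $N < C$. For composition I would observe that the shared middle \kl{vertex} is discarded by $\tilde{\compo}$, carrying away every $\com{\eps}$-edge, so $E \odot \sigma = \sigma \odot E = \sigma$ and any product of two-\kl{vertex} states is $N$; since no variable clash can arise, $\odot$ never yields $\bot$. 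Finally $\bl^{*}$ is the corresponding star, which stabilises after two iterations: $S(\term^{*}) = \set{E} \cup S(\term)$, augmented by $N$ once $S(\term)$ meets $\set{N, C}$. A routine induction on $\term$ (mirroring that of \Cref{prop: glang}) then gives $S(\term) = $ the set of consistent states of $\tilde{\glang}(\term)$, whence $\REL \models \term \le \emp \iff S(\term) = \emptyset$. Computing $S(\term)$ is thus the evaluation of a \emph{fixed} finite algebra over the parse tree of $\term$, which is in $\textsc{NC}^{1}$ by \cite{bussBooleanFormulaValue1987,lohreyParallelComplexityTree2001}. The matching $\textsc{NC}^{1}$-hardness is inherited from the $\set{\emp, \eps, \union, \compo}$-\kl{term} case (the Boolean formula value problem) already used in \Cref{prop: decidable PCoR* DREL emptiness}.

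The main obstacle is the soundness and completeness of the collapse, i.e.\ verifying that tracking only the $\com{\eps}$-structure on at most two \kl{vertices} suffices. The delicate points are the case analysis for $\cap$ (an identification creates a self-loop exactly when one factor is $E$ and the other has a $\com{\eps}$-edge) and for $\compo$ (the middle \kl{vertex} absorbs all $\com{\eps}$-edges, so consistency is never lost and the reduced result is always $N$ or an endpoint copy). Once these two clauses are pinned down, the $\bl^{*}$ and the overall induction are immediate.
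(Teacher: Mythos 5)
Your proposal is correct and follows essentially the same route as the paper: the paper's proof likewise combines the consistency characterization and the reduced language $\tilde{\glang}$ from \Cref{thm: decidable ECoR* emptiness}, drops the (irrelevant) variable labels via a modified $\tilde{\glang}'$ with edgeless two-pointed graphs for variables, and reduces to evaluating a \emph{fixed} finite algebra over the parse tree (in $\textsc{NC}^{1}$ by Lohrey/Buss), with hardness inherited from the Boolean formula value problem as in \Cref{prop: decidable PCoR* DREL emptiness}. The only difference is cosmetic: the paper works with the powerset algebra over the $18$ reduced $\com{\eps}$-labelled graphs (a $2^{18}$-element algebra) without spelling out the operation tables, whereas you collapse further to the three consistent states $\set{E, N, C}$ and verify the tables explicitly, which is a sound refinement of the same argument.
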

\begin{proof}
    (In $\textsc{NC}^{1}$):
    We recall the proof of \Cref{thm: decidable ECoR* emptiness}.
    Similarly,
    we define $\tilde{\glang}'(\term)$ as the same as $\tilde{\glang}(\term)$,
    where the definition of $\tilde{\glang}'(x)$ has been replaced with $\tilde{\glang}'(x) \defeq \set{\hspace{-.5em}\begin{tikzpicture}[baseline= -.5ex]
        \graph[grow up= 1.cm, branch right = .8cm, nodes={font=\tiny}]{
            {0/{}[mynode, fill = black, inner sep = .1em], 1/{}[mynode, fill = black, inner sep = .1em]};
        };
        \node[left = 4pt of 0](s){};\node[right = 4pt of 1](t){};
        \graph[use existing nodes, edges={color=black, pos = .5, earrow}, edge quotes={fill = white, inner sep=1pt,font= \scriptsize}]{
        (s) -> (0); (1) -> (t);
        };
    \end{tikzpicture}\hspace{-.5em}}$ for $x \in \vsig$.
    Because $\com{x}$ does not appear in $\PCoR_{\set{\bl^{*}, \com{\eps}}}$, we have that ($\REL \models \term \le \emp$ iff) $\tilde{\glang}(\term)$ has a \kl{consistent} \kl{graph} iff $\tilde{\glang}'(\term)$ has a \kl{consistent} \kl{graph}.
    The universe of $\tilde{\glang}'(\term)$ is bounded by a fixed finite set (of $2^{1^2} + 2^{2^2} = 18$ $\com{\eps}$-labelled \kl{graphs} up to \kl{graph isomorphisms}).
    Thus, we can reduce this problem into the evaluation of a fixed finite algebra (with $2^{18}$ values),
    which is in $\textsc{NC}^{1}$ \cite[Corollary 1]{lohreyParallelComplexityTree2001}.
    (Hardness):
    By the same proof as \Cref{prop: decidable PCoR* DREL emptiness}.
\end{proof}

\subsection{Remark on the coNP-hardness when the number of primitive tests is fixed}
Additionally, 
we have the following two $\textsc{coNP}$-hardness results by a similar argument as \Cref{cor: NP-complete while emptiness}.
(When the number of \kl{primitive tests} is not fixed, they are trivially $\textsc{coNP}$-complete in the same way as \Cref{rem: NP-complete}.)
\begin{cor}\label{cor: undecidable PCoR variable emptiness}
    The \kl{emptiness problem} on $\DREL$ is $\textsc{coNP}$-complete
    for $\PCoR_{\set{\com{x}}}$ \kl{terms} (more precisely, $\set{\compo, \union, \bl^{\lop}, \com{x}}$-\kl{terms}) such that
    the number of \kl{actions} is $4$ and the number of \kl{primitive tests} is $1$.
\end{cor}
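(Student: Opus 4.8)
The plan is to obtain both bounds by assembling results already established, the only genuinely new work being careful bookkeeping of which operators survive a test-elimination step.

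For membership in $\textsc{coNP}$ I would invoke \Cref{prop: upper bound}.\ref{prop: upper bound 2}. Since $\DREL$ is \kl{submodel-closed} and the membership test $\val \in \DREL$ (functionality of each relation) is decidable in $\textsc{P}$, and since every $\set{\compo, \union, \bl^{\lop}, \com{x}}$-\kl{term} is $\bl^{*}$-free and hence enjoys the linearly bounded model property (\Cref{prop: bounded model property}.\ref{prop: bounded model property 2}, the loop operator $\bl^{\lop}$ being definable as $\bl \cap \eps$ and not increasing the vertex count), the \kl{inclusion problem} $\term[1] \le \emp$ is decidable in $\textsc{coNP}$.

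For $\textsc{coNP}$-hardness the idea is to reduce from the complement of the \kl{bounded periodic domino problem}, which is $\textsc{coNP}$-hard because the problem itself is $\mathrm{NP}$-complete (\Cref{prop: bounded periodic tiling}). I would start from the reduction underlying \Cref{cor: NP-complete while emptiness}: given a \kl{domino system} $\mathcal{D}$ and a unary $n$, it produces a \emph{while-free} \kl{primitive $\mathrm{PWP}_{\lop}$ term} $\term[3] \defeq \Tr'_{\term[3]_{\mathtt{gr}} \compo (\eps \union_{\tilde{\mathtt{P}}} \aE (\aW \union_{\tilde{\mathtt{P}}} \aW \Tr^{(1)}(\term[3]_{\mathcal{D}})))}(\Tr^{(1)}(\term[1]_{\mathtt{gr}}''))$ over the four \kl{actions} $\aE, \aN, \aW, \aS$ and the single \kl{primitive test} $\mathtt{P}$, such that $\mathcal{D}$ has no \kl{$\tuple{n,n}$-periodic tiling} iff $\DREL_{\mathtt{test}} \models \term[3] \le \emp$.

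The final step is to eliminate the single \kl{test} $\mathtt{P}$ by substitution, exactly as in \Cref{ex: substitution tests} with $n = 1$ (this is legitimate on $\DREL$ since $\DREL[q^{\lop}/q] \subseteq \DREL$ for a \kl{fresh} $q$): introducing a \kl{fresh} \kl{variable} $q$ and replacing $\mathtt{P}$ by $q^{\lop}$ and $\tilde{\mathtt{P}}$ by $\com{q}^{\lop}$ yields $\DREL_{\mathtt{test}} \models \term[3] \le \emp$ iff $\DREL \models \term[3][q^{\lop}/\mathtt{P}, \com{q}^{\lop}/\tilde{\mathtt{P}}] \le \emp$. The main (though mild) obstacle, and really the only content of this corollary, is the verification that after expanding each $\term[1] \union_{b} \term[2]$ into $b\,\term[1] \union \widetilde{b}\,\term[2]$ and applying this substitution, the resulting \kl{term} lies in the fragment $\set{\compo, \union, \bl^{\lop}, \com{x}}$: no $\bl^{*}$ appears (the source \kl{term} is while-free, so its only iteration is the $n$-fold \kl{composition} hidden inside $\bl^{(n)_{b}}$), the complement enters only through $\com{q}$, the occurrences of $\bl^{\lop}$ come from the \kl{loop operators} of the source \kl{term} together with $q^{\lop}$ and $\com{q}^{\lop}$, and precisely the four \kl{actions} $\aE, \aN, \aW, \aS$ and the one new \kl{variable} $q$ (the single \kl{primitive test}) occur. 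This gives a polynomial-time reduction and hence the required $\textsc{coNP}$-hardness.
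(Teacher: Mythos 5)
Your proposal is correct and takes essentially the same approach as the paper's proof: the upper bound is \Cref{prop: upper bound}.\ref{prop: upper bound 2}, and your hardness argument (composing the while-free, one-primitive-test reduction of \Cref{cor: NP-complete while emptiness} with the substitution-based test elimination of \Cref{ex: substitution tests}) is exactly what the paper compresses into ``replacing each $\bl^{+_{b}}$ with $\bl^{(n)_{b}}$ in the proof of \Cref{cor: undecidable PCoR* variable emptiness} (similar to \Cref{cor: NP-complete while emptiness})''. The only difference is presentational: you apply the $\Tr^{(1)}$ and $\bl^{(n)_{b}}$ steps before eliminating the single test pair by substitution rather than after, which is immaterial.
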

\begin{proof}
    (In $\textsc{coNP}$):
    By \Cref{prop: upper bound}.
    (Hardness):
    By replacing each $\bl^{+_{b}}$ with $\bl^{(n)_{b}}$
    in the proof of \Cref{cor: undecidable PCoR* variable emptiness} (similar to \Cref{cor: NP-complete while emptiness}).
\end{proof}

\begin{cor}\label{cor: undecidable PCoR difference emptiness}
    The \kl{emptiness problem} on $\DREL$ is $\textsc{coNP}$-complete
    for $\PCoR_{\set{\com{\eps}}}$ \kl{terms} (more precisely, $\set{\compo, \union, \bl^{\lop}, \com{\eps}}$-\kl{terms}) such that
    the number of \kl{actions} is $4$ and the number of \kl{primitive tests} is $1$.
\end{cor}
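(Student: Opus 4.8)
The plan is to establish the two bounds separately, reusing the reduction behind \Cref{cor: undecidable PCoR* difference emptiness} but cutting out the Kleene star so that the complexity drops from $\Pi^{0}_{1}$ to $\textsc{coNP}$. For membership in $\textsc{coNP}$ I would appeal directly to \Cref{prop: upper bound}.\ref{prop: upper bound 2}: a $\set{\compo, \union, \bl^{\lop}, \com{\eps}}$-term is a $\PCoR_{\set{\com{\eps}, \com{x}}}$-term with no occurrence of $\bl^{*}$ (recall $\term[3]^{\lop} = \term[3] \cap \eps$), and $\DREL$ is \kl{submodel-closed} with $\textsc{P}$-membership, so the linearly bounded model property yields a $\textsc{coNP}$ decision procedure for $\term \le \emp$.

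For $\textsc{coNP}$-hardness, I would reduce from the complement of the \kl{bounded periodic domino problem}, which is $\textsc{coNP}$-complete because the problem itself is $\textsc{NP}$-complete (\Cref{prop: bounded periodic tiling}). Following \Cref{cor: NP-complete while emptiness}, I would replace each while-style iteration $\bl^{+_{b}}$ occurring in $\term[1]_{\mathtt{gr}}'$ by the bounded $n$-th iteration $\bl^{(n)_{b}} = (b\,\bl)^{n}\tilde{b}$ (with $n$ the unary input bound), giving the star-free term $\term[1]_{\mathtt{gr}}'' = (p ((q \aN^{(n)_{\tilde{q}}})^{\lop} \aE)^{(n)_{\tilde{p}}})^{\lop}$; the bounded traversal pins the horizontal and vertical periods to $n$, so the analog of \Cref{lem: connectivity'} reads: $\mathcal{D}$ has an \kl[$\tuple{h,v}$-periodic tiling]{$\tuple{n,n}$-periodic tiling} iff $\DREL^{\mathrm{fin}}_{\mathtt{test}, (\aE\union\aN\union\aW\union\aS)^*\ge\top,\ \term[3]_{\mathcal{D}}\ge\eps} \not\models \term[1]_{\mathtt{gr}}'' \le \emp$. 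To fix four \kl{actions} and one \kl{primitive test} I would then pass through the single-test reduction $\Tr^{(1)}$ of \Cref{thm: reducing primitive tests}, and eliminate the one remaining $\mathtt{test}$ hypothesis for $\mathtt{P}$ by the difference-constant encoding of \Cref{cor: undecidable PCoR* difference emptiness}: the substitution $[\,q^{\lop}/\mathtt{P},\ (q\aW)^{\lop}/\tilde{\mathtt{P}}\,]$, justified by \Cref{prop: REL axiom = substitution} together with the grid hypothesis $\aW\com{\eps}\ge\eps$, the loop-hypothesis elimination of \Cref{ex: grid surjective}, and the surjectivity elimination of \Cref{ex: surjective}. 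Since every $\bl^{+_{b}}$ has become a star-free $\bl^{(n)_{b}}$ and the connectivity hypothesis $(\aE\union\aN\union\aW\union\aS)^*\ge\top$ is discharged by surjectivity rather than kept in the term, the output is a genuine $\set{\compo, \union, \bl^{\lop}, \com{\eps}}$-term over the four \kl{actions} $\aE,\aN,\aW,\aS$ and the single test variable $q$, and the final equivalence lands in $\DREL \models (\text{term}) \le \emp$.

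The step I expect to be the crux is checking that these three transformations compose cleanly. The difference-constant test-elimination works only when $\aW$ is an irreflexive bijection on the grid, so that $q^{\lop}$ and $(q\aW)^{\lop}$ are disjoint and jointly cover $\eps$; in \Cref{cor: undecidable PCoR* difference emptiness} this forced the side condition $h \ge 2$, but after the $\Tr^{(1)}$ scaling (which widens the grid by a factor at least $3$) irreflexivity of $\aW$ is automatic, so no restriction on the domino instance is needed. I would still have to verify the bounded-operator analog of \Cref{cla: reducing primitive tests} (that $\Tr^{(1)}$ commutes with $\bl^{(n)_{p_i}}$) and confirm that the surjectivity premises of \Cref{ex: grid surjective} and \Cref{ex: surjective} continue to hold on the scaled, bounded grid, so that the whole chain of \kl{hypothesis eliminations} remains an equivalence and the final term never leaves the fragment $\set{\compo, \union, \bl^{\lop}, \com{\eps}}$.
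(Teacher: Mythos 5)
Your proposal is correct and matches the paper's intended proof: the paper's own argument is a two-line sketch --- $\textsc{coNP}$ membership via \Cref{prop: upper bound}, and hardness ``by replacing each $\bl^{+_{b}}$ with $\bl^{(n)_{b}}$ in the proof of \Cref{cor: undecidable PCoR* difference emptiness} (similar to \Cref{cor: NP-complete while emptiness})'' --- which is precisely the combination you spell out of bounded iteration, the single-test reduction $\Tr^{(1)}$, and the difference-constant test elimination. The side conditions you flag (irreflexivity of $\aW$ becoming automatic after the $\Tr^{(1)}$ scaling, commutation of $\Tr^{(1)}$ with $\bl^{(n)_{b}}$, and preservation of the surjectivity premises of \Cref{ex: grid surjective} and \Cref{ex: surjective}) are exactly the details the paper leaves implicit.
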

\begin{proof}
    (In $\textsc{coNP}$):
    By \Cref{prop: upper bound}.
    (Hardness):
    By replacing each $\bl^{+_{b}}$ with $\bl^{(n)_{b}}$
    in the proof of \Cref{cor: undecidable PCoR* difference emptiness} (similar to \Cref{cor: NP-complete while emptiness}).
\end{proof}

\section{Conclusion and future directions}\label{section: conclusion}
We have shown that
the \kl{emptiness problem} of deterministic \kl{propositional while programs with loop} is $\mathrm{\Pi}^{0}_{1}$-complete (\Cref{thm: undecidable while emptiness}), and moreover even if the number of \kl{primitive tests} and \kl{actions} are fixed (\Cref{thm: reducing primitive tests}).
Using this, we have shown that the \kl{equational theory} of $\PCoR_{\set{\bl^{*}, \com{\eps}}}$ \kl{terms} on $\DREL$ is $\mathrm{\Pi}^{0}_{1}$-complete (\Cref{cor: undecidable PCoR* difference}) via \kl{hypothesis eliminations}.
Moreover, we have shown that the \kl{emptiness problem} of $\PCoR_{\set{\bl^{*}, \com{\eps}, \com{x}}}$ \kl{terms} with respect to $\REL$ is decidable and $\textsc{PSpace}$-complete (\Cref{thm: decidable ECoR* emptiness}).
In \Cref{table: complexity}, we summarize results for fragments of $\PCoR_{\set{\bl^{*}, \com{\eps}, \com{x}}}$ with respect to $\REL$ and $\DREL$.
For the cases marked with ``$?$'', the tight complexity (and the decidability in some cases) is open.
A future work is to fill the complexity gaps in the table.
The author conjecture that the approach for \Cref{thm: decidable ECoR* emptiness} (focusing only on \kl{sources} and \kl{targets}) would work for some open cases.

Additionally, for deterministic \kl{propositional while programs with loop},
it would be interesting to consider whether the undecidability result holds for more restricted syntaxes of \kl{propositional while programs with graph loop}, e.g., by bounding the nesting of \kl{graph loop operator} and \kl{Kleene star}, cf.\  the fork theorem \cite{harelFolkTheorems1980,kozenKleeneAlgebraTests1997}.

\begin{table}[t]
    \newcommand{\unimportant}[1]{\textcolor{gray}{\footnotesize #1}}
    \centering
    \subfloat[][$\algclass = \REL$]{
        \scalebox{.73}{
        \begin{tabular}{c||c||c|c||c|c||c|c|c|c}
            $t$ $\backslash$ $s$ & $s = \emp$ & ${\set{}}$ & ${\set{\bl^{*}}}$ & ${\set{\com{\eps}}}$ & ${\set{\bl^{*}, \com{\eps}}}$  & ${\set{\com{x}}}$  & ${\set{\bl^{*}, \com{x}}}$ & ${\set{\com{\eps}, \com{x}}}$  & ${\set{\bl^{*}, \com{\eps}, \com{x}}}$\\
            \hline
            \hline
            ${\set{}}$ & \multirow{2}{*}{\shortstack{\\[-.5ex]$\textsc{NC}^{1}$-c\\(Cor.\ \ref{cor: decidable PCoR* emptiness difference})}} & \multicolumn{8}{c}{\multirow{4}{*}{\shortstack{\\[-.5ex]\unimportant{$\textsc{coNP}$-c}\\ \unimportant{\cite[Theorem 24]{nakamuraExistentialCalculiRelations2023}}}}} \\
            \cline{1-1}
            ${\set{\com{\eps}}}$ & \\
            \cline{1-2}
            ${\set{\com{x}}}$ & \multirow{2}{*}{\shortstack{\\[-.5ex] $\textsc{coNP}$-c\\ (Cor.\ \ref{cor: decidable ECoR* emptiness})}} \\
            \cline{1-1}
            \shortstack{\\[-.5ex]${\set{\com{\eps}, \com{x}}}$} & \\
            \cline{1-2}\cline{3-10}
            \shortstack{\\[.5ex] ${\set{\bl^{*}}}$} & \multirow{2}{*}{\shortstack{\\[-.5ex]$\textsc{NC}^{1}$-c\\(Cor.\ \ref{cor: decidable PCoR* emptiness difference})}} & \shortstack{\\[0.2ex] \unimportant{$\textsc{coNP}$-h}\\ \unimportant{in $\textsc{ExpSpace}$} (?)} & \shortstack{\\[0.2ex]\unimportant{$\textsc{ExpSpace}$-c}\\\unimportant{\cite{brunetPetriAutomataKleene2015,brunetPetriAutomata2017,nakamuraPartialDerivativesGraphs2017,nakamuraDerivativesGraphsPositive2024}}} & \multicolumn{2}{c||}{\multirow{4}{*}{\shortstack{$\mathrm{\Pi}^{0}_{1}$-c \\ {(Cor.\ \ref{cor: undecidable PCoR* difference})}}}} & \multicolumn{4}{c}{\multirow{4}{*}{\shortstack{\unimportant{$\mathrm{\Pi}^{0}_{1}$-c}\\ \unimportant{\cite[Theorem 50]{nakamuraExistentialCalculiRelations2023}}}}} \\[.5ex]
            \cline{1-1}\cline{3-4}
            \shortstack{\\[.5ex]${\set{\bl^{*}, \com{\eps}}}$} & & \multirow{1}{*}{\shortstack{\\[-.5ex] \unimportant{$\textsc{coNP}$-h}\\ \unimportant{in $\mathrm{\Pi}^{0}_{1}$} (?)}} & \multirow{3}{*}{\shortstack{\\[-.5ex] \unimportant{$\textsc{ExpSpace}$-h}\\ \unimportant{in $\mathrm{\Pi}^{0}_{1}$} (?)}} & \multicolumn{2}{c||}{} & \multicolumn{4}{c}{}\\[1.5ex]
            \cline{1-2}\cline{3-3}
            \shortstack{\\[-.5ex]${\set{\bl^{*}, \com{x}}}$}  & \multirow{2}{*}{\shortstack{\\[-.5ex] $\textsc{PSpace}$-c\\ (Thm.\ \ref{thm: decidable ECoR* emptiness})}}&  \multirow{2}{*}{\shortstack{\\[-.5ex] \unimportant{$\textsc{PSpace}$-h}\\ \unimportant{in $\mathrm{\Pi}^{0}_{1}$} (?)}}  &  & \multicolumn{2}{c||}{} & \multicolumn{4}{c}{}\\
            \cline{1-1}
            \shortstack{\\[-.5ex]${\set{\bl^{*}, \com{\eps}, \com{x}}}$} & &  &  & \multicolumn{2}{c||}{} &\multicolumn{4}{c}{}
        \end{tabular}}
    }\\
    \subfloat[][$\algclass = \DREL$]{
        \scalebox{.82}{
    \begin{tabular}{c||c||c|c||c|c||c|c|c|c}
        $t$ $\backslash$ $s$ & $s = \emp$ & $\set{}$ & ${\set{\bl^{*}}}$ & ${\set{\com{\eps}}}$ & ${\set{\bl^{*}, \com{\eps}}}$  & ${\set{\com{x}}}$  & ${\set{\bl^{*}, \com{x}}}$ & ${\set{\com{\eps}, \com{x}}}$  & ${\set{\bl^{*}, \com{\eps}, \com{x}}}$\\
        \hline\hline
        ${\set{}}$ &  \multirow{1}{*}{\shortstack{\\[-.3ex]$\textsc{NC}^{1}$-c\\(Prop.\ \ref{prop: decidable PCoR* DREL emptiness})}} & \multicolumn{8}{c}{\multirow{4}{*}{\shortstack{\\[-.5ex] \unimportant{$\textsc{coNP}$-c}\\ \unimportant{(Prop.\ \ref{prop: upper bound} and \cite[Theorem 24]{nakamuraExistentialCalculiRelations2023})}}}} \\[1.8ex]
        \cline{1-2}
        \shortstack{\\[1.5ex] ${\set{\com{\eps}}}$} & \multirow{1}{*}{\shortstack{\\[-1.5ex] $\textsc{coNP}$-c\\ (Cor.\ \ref{cor: undecidable PCoR variable emptiness})}}\\[1.5ex]
        \cline{1-2}
        ${\set{\com{x}}}$ & \multirow{2}{*}{\shortstack{\\[-.3ex] $\textsc{coNP}$-c\\ (Cor.\ \ref{cor: undecidable PCoR difference emptiness})}}\\
        \cline{1-1}
        ${\set{\com{\eps}, \com{x}}}$ & \\
        \hline
        \shortstack{\\[1.5ex] ${\set{\bl^{*}}}$}  &   \multirow{1}{*}{\shortstack{\\[-1.5ex]$\textsc{NC}^{1}$-c\\(Prop.\ \ref{prop: decidable PCoR* DREL emptiness})}} &  \multicolumn{8}{c}{\multirow{6}{*}{\shortstack{\\[-.5ex] $\Pi^{0}_{1}$-c\\ (Cor.\ \ref{cor: undecidable PCoR* func})}}}\\[1.5ex]
        \cline{1-2}
        \shortstack{\\[1.5ex]${\set{\bl^{*}, \com{\eps}}}$} & \multirow{1}{*}{\shortstack{\\[-1.5ex] $\mathrm{\Pi}^{0}_{1}$-c\\ (Cor.\ \ref{cor: undecidable PCoR* difference emptiness})}} \\[1.5ex]
        \cline{1-2}
        \shortstack{\\[-.5ex]${\set{\bl^{*}, \com{x}}}$} & \multirow{2}{*}{\shortstack{\\[-.5ex] $\mathrm{\Pi}^{0}_{1}$-c\\ {(Cor.\ \ref{cor: undecidable PCoR* variable emptiness})}}}  \\
        \cline{1-1}
        \shortstack{\\[-.5ex]${\set{\bl^{*}, \com{\eps}, \com{x}}}$} & 
    \end{tabular}}
    }
    \caption{On the complexity of $\algclass \models \term[1] \le \term[2]$ (where the number of \kl{primitive tests} is not fixed).
    Here, each $X \subseteq \set{\bl^{*}, \com{\eps}, \com{x}}$ denotes the \kl{term} set $\PCoR_{X}$.}
    \label{table: complexity}
\end{table}

\bibliographystyle{fundam}
\bibliography{main}

\appendix 
\section{On \Cref{rem: Hoare KA}: $\top$ vs $A^*$ in Hoare hypothesis eliminations}\label{section: Hoare hypothesis KA}
In this section, we give the \kl{Hoare hypothesis} elimination using $A^*$ (of \Cref{rem: Hoare KA}) via \Cref{prop: Hoare hypotheses}.
In the sequel, we consider \kl{Kleene lattice} (\kl{KL}) terms: the class of $\set{\eps, \emp, \compo, \union, \bl^{*}, \cap}$-\kl{terms}.
We use the following property of \kl{KL} \kl{terms}.
\begin{prop}\label{prop: KL graph}
    Let $A$ be a set.
    Let $\term$ be a \kl{KL} \kl{term} with $\vsig(\term) \subseteq A$, and let $\graph[2] \in \glang(\term)$ be a \kl{graph}.
    Then, every vertex in $\graph[2]$ has an $A^*$-path from the \kl{source} and has an $A^*$-path to the \kl{target}.
\end{prop}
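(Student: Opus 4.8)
The plan is to prove the claim by structural induction on the \kl{KL} \kl{term} $\term$. The statement is itself an invariant that propagates through the graph operations, so no strengthening is needed. Throughout, note that since $\term$ is built only from $\set{\eps, \emp, \compo, \union, \bl^{*}, \cap}$—in particular without converse or complement—every edge occurring in a \kl{graph} of $\glang(\term)$ carries a label in $\vsig(\term) \subseteq A$. Hence any directed path in such a \kl{graph} is automatically an $A^{*}$-path, and it suffices to show that in each $\graph[2] \in \glang(\term)$ the \kl{source} reaches every \kl{vertex} and every \kl{vertex} reaches the \kl{target} along directed edges.

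For the base cases: if $\term = \emp$ then $\glang(\emp) = \emptyset$ and the claim holds vacuously; if $\term = \eps$ then $\glang(\eps)$ is the one-\kl{vertex} \kl{graph} whose \kl{source} equals its \kl{target}, for which the empty path suffices; and if $\term = x$ for a \kl{variable} $x$, then $\glang(x)$ is the two-\kl{vertex} \kl{graph} with a single $x$-edge from \kl{source} to \kl{target}, where the required paths are either empty or that single edge. In the inductive step I would distinguish cases by the outermost constructor. The union case $\term = \term[2] \union \term[3]$ is immediate, since $\glang(\term[2] \union \term[3]) = \glang(\term[2]) \cup \glang(\term[3])$ and $\vsig(\term[2]), \vsig(\term[3]) \subseteq A$, so the induction hypothesis applies to whichever factor produced $\graph[2]$. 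The intersection case $\term = \term[2] \cap \term[3]$ is also direct: $\graph[2]$ is obtained by gluing \kl{graphs} $\graph[1] \in \glang(\term[2])$ and $\graph[3] \in \glang(\term[3])$ along a common \kl{source} and a common \kl{target}, every \kl{vertex} lies in one factor, and that factor's internal paths from its \kl{source} and to its \kl{target}—which persist in the glued \kl{graph}—now begin and end at the shared \kl{source} and \kl{target}.

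The series-composition case $\term = \term[2] \compo \term[3]$ carries the only real content. Here $\graph[2]$ is formed by identifying the \kl{target} of some $\graph[1] \in \glang(\term[2])$ with the \kl{source} of some $\graph[3] \in \glang(\term[3])$; write $m$ for this merged \kl{vertex}, so that the \kl{source} of $\graph[2]$ is that of $\graph[1]$ and its \kl{target} is that of $\graph[3]$. For a \kl{vertex} in the $\graph[1]$-part, the induction hypothesis for $\graph[1]$ supplies a path from the \kl{source} to it and a path from it to $m$, and the induction hypothesis for $\graph[3]$ supplies a path from $m$ to the \kl{target}; concatenating gives the two required paths. The \kl{vertices} of the $\graph[3]$-part are handled symmetrically, using that $m$ is reachable from the \kl{source} through $\graph[1]$. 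Finally, for the \kl{Kleene star} $\term = \term[2]^{*}$ I would invoke $\glang(\term[2]^{*}) = \bigcup_{n} \glang(\term[2]^{n})$ and run an auxiliary induction on $n$: since $\term[2]^{n}$ is assembled from $\eps$ and $n$ copies of $\term[2]$ using $\compo$, the $\eps$ base case together with the series-composition step already established propagates the invariant to every $\glang(\term[2]^{n})$, while $\vsig(\term[2]^{n}) = \vsig(\term[2]) \subseteq A$ keeps all labels within $A$. The main obstacle is purely organizational: correctly chaining paths through the merged \kl{vertex} $m$ in the composition case and reducing the star to its finite powers, rather than any substantive difficulty.
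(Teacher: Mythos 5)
Your proof is correct and takes exactly the approach the paper intends: the paper's entire proof of this proposition reads ``By easy induction on $\term$,'' and your case analysis (vacuous for $\emp$, trivial for $\eps$ and variables, immediate for $\union$ and $\cap$, path concatenation through the merged vertex for $\compo$, and reduction of $\bl^{*}$ to finite powers handled by the composition case) is precisely that induction spelled out. The preliminary observation that all edge labels lie in $A$, so directed reachability suffices, is also the right way to discharge the $A^{*}$-path condition.
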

\begin{proof}
    By easy induction on $\term$.
\end{proof}
For a \kl{valuation} $\val \colon \vsig \to \wp(Z^2)$, sets $X, Y \subseteq Z$,
and a \kl{term} $\term[3]$, 
the $\term[3]$-\intro*\kl{interval} $[X, Y]_{\val, \term[3]}$ is defined as
\[ [X, Y]_{\val, \term[3]} \defeq \set{z \in Z \mid \exists x \in X, y \in Y, \tuple{x, z} \in \hat{\val}(\term[3]) \land \tuple{z, y} \in \hat{\val}(\term[3])}.\]
Namely, $[X, Y]_{\val, \term[3]}$ is the set of vertices having an $\term[3]$-path from some vertex in $X$ and an $\term[3]$-path to some vertex in $Y$.
Using \Cref{prop: KL graph}, we have the following.
\begin{prop}\label{prop: interval}
    Let $Z$ be a set.
    Let $\val \colon \vsig \to \wp(Z^2)$ in $\REL$.
    Let $X, Y \subseteq Z^2$ and $x_0, y_0 \in [X, Y]_{\val, A^*}$.
    For all \kl{KL} \kl{terms} $\term[1]$ with $\vsig(\term[1]) \subseteq A$,
    we have:
    \[\tuple{x_0, y_0} \in \hat{\val}(\term[1]) \quad\Longrightarrow\quad \tuple{x_0, y_0} \in \reallywidehat{(\val \restriction [X, Y]_{\val, A^*})}(\term[1]).\]
\end{prop}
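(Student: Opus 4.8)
The plan is to lift the statement to the level of graph languages via \Cref{prop: glang}, and then to show that the witnessing graph homomorphism already factors through the submodel on $B \defeq [X, Y]_{\val, A^*}$. First I would apply \Cref{prop: glang}: since $\tuple{x_0, y_0} \in \hat{\val}(\term[1])$, there exist a graph $\graph[2] \in \glang(\term[1])$ and a graph homomorphism $f \colon \graph[2] \homo \const{G}(\val, x_0, y_0)$, so in particular $f(\src^{\graph[2]}) = x_0$ and $f(\tgt^{\graph[2]}) = y_0$.

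The key step is to prove $f(\domain{\graph[2]}) \subseteq B$. Fix any $z \in \domain{\graph[2]}$. By \Cref{prop: KL graph} (using $\vsig(\term[1]) \subseteq A$), there is an $A^*$-path in $\graph[2]$ from $\src^{\graph[2]}$ to $z$ and an $A^*$-path from $z$ to $\tgt^{\graph[2]}$. Since a graph homomorphism preserves each edge label, $f$ sends these to $A^*$-paths in $\const{G}(\val, x_0, y_0)$, giving $\tuple{x_0, f(z)} \in \hat{\val}(A^*)$ and $\tuple{f(z), y_0} \in \hat{\val}(A^*)$. On the other hand, $x_0, y_0 \in B$ supplies some $x \in X$ with $\tuple{x, x_0} \in \hat{\val}(A^*)$ and some $y \in Y$ with $\tuple{y_0, y} \in \hat{\val}(A^*)$. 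Composing these $A^*$-paths (by transitivity of $A^*$-reachability) yields $\tuple{x, f(z)} \in \hat{\val}(A^*)$ and $\tuple{f(z), y} \in \hat{\val}(A^*)$, which is exactly the condition $f(z) \in B$.

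Finally, with $f(\domain{\graph[2]}) \subseteq B$ and $x_0, y_0 \in B$ (the latter by hypothesis), the same map $f$ is a graph homomorphism $f \colon \graph[2] \homo \const{G}(\val \restriction B, x_0, y_0)$: for each edge $\tuple{u,v} \in a^{\graph[2]}$ the image $\tuple{f(u), f(v)}$ lands in $B^2$, hence lies in $\widehat{(\val \restriction B)}(a) = \hat{\val}(a) \cap B^2$, and the source/target conditions are inherited. Therefore $\tuple{x_0, y_0} \in \reallywidehat{(\val \restriction B)}(\graph[2])$, and by \Cref{prop: glang} applied to $\val \restriction B$ I conclude $\tuple{x_0, y_0} \in \reallywidehat{(\val \restriction B)}(\term[1])$, as required.

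I expect the only genuine obstacle to be the reachability bookkeeping of the middle step, which is entirely carried by \Cref{prop: KL graph}: that structural fact about the series--parallel graphs of \kl{KL} \kl{terms} (every vertex lies on a \kl{source}-to-\kl{target} $A^*$-path) is precisely what forbids $f$ from escaping $B$. The remaining path concatenation is routine, and the intersection operator needs no special treatment, since \Cref{prop: KL graph} already subsumes the parallel-composition case.
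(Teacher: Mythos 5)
Your proof is correct and follows essentially the same route as the paper's: invoke \Cref{prop: glang} to obtain a witnessing graph homomorphism, use \Cref{prop: KL graph} to conclude that its image is contained in $[X, Y]_{\val, A^*}$, and then reuse the same homomorphism for the submodel. The only difference is that you spell out the reachability/concatenation bookkeeping that the paper compresses into one sentence, which is a faithful elaboration rather than a different argument.
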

\begin{proof}
    By $\tuple{x_0, y_0} \in \hat{\val}(\term[1])$ with \Cref{prop: glang}, there is a \kl{graph homomorphism} $h \colon \graph[2] \homo \const{G}(\val, x, y)$ for some $\graph[2] \in \glang(\term[1])$.
    By the form of $\graph[2]$ (\Cref{prop: KL graph}) with $h(x_0), h(y_0) \in [X, Y]_{\val, A^*}$,
    we have $h(\graph[2]) \subseteq [X, Y]_{\val, A^*}$.
    Thus, $\tuple{x_0, y_0} \in \reallywidehat{\val \restriction [X, Y]_{\val, A^*}}(\term)$ still holds, by the same $h$.
\end{proof}
We also prepare the following fact.
\begin{prop}\label{prop: point extension}
    Let $\algclass \subseteq \REL$.
    For all \kl{terms} $\term[1], \term[2]$, when $x$ and $y$ are \kl{fresh} in $\algclass \models \term[1] \le \term[2]$, we have:
    \[\algclass \models \term[1] \le \term[2] \quad\Longleftrightarrow\quad \algclass_{x \le \eps, y \le \eps} \models x \term[1] y \le x \term[2] y.\]
\end{prop}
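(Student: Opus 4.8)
The forward implication ($\Longrightarrow$) is pure monotonicity and needs no freshness. If $\algclass \models \term[1] \le \term[2]$ and $\val \in \algclass_{x \le \eps, y \le \eps} \subseteq \algclass$, then $\hat{\val}(\term[1]) \subseteq \hat{\val}(\term[2])$; pre-composing with $\hat{\val}(x)$ and post-composing with $\hat{\val}(y)$ preserves the inclusion, so $\hat{\val}(x\term[1]y) \subseteq \hat{\val}(x\term[2]y)$, as required.

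The substance lies in the converse ($\Longleftarrow$), for which the plan is a \emph{pointwise witness} argument. Assume $\algclass_{x \le \eps, y \le \eps} \models x\term[1]y \le x\term[2]y$, fix $\val \in \algclass$, and suppose $\tuple{a,b} \in \hat{\val}(\term[1])$; the goal is $\tuple{a,b} \in \hat{\val}(\term[2])$. I would define $\val'$ to agree with $\val$ everywhere except $\val'(x) \defeq \set{\tuple{a,a}}$ and $\val'(y) \defeq \set{\tuple{b,b}}$. Since $x$ and $y$ are fresh in $\algclass$, this reassignment keeps the valuation in $\algclass$, and as $\val'(x), \val'(y)$ are subidentities we get $\val' \in \algclass_{x \le \eps, y \le \eps}$; since moreover $x, y \notin \vsig(\term[1]) \cup \vsig(\term[2])$, the reassignment leaves the interpretations unchanged, that is, $\hat{\val}'(\term[1]) = \hat{\val}(\term[1])$ and $\hat{\val}'(\term[2]) = \hat{\val}(\term[2])$.

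The key computation is that a singleton subidentity acts as a source/target selector: for any relation $R$, the relation $\set{\tuple{a,a}} \compo R \compo \set{\tuple{b,b}}$ equals $\set{\tuple{a,b}}$ if $\tuple{a,b} \in R$ and $\emptyset$ otherwise. Applying this with $R = \hat{\val}(\term[1])$ gives $\tuple{a,b} \in \hat{\val}'(x\term[1]y)$; the assumed inclusion then yields $\tuple{a,b} \in \hat{\val}'(x\term[2]y)$, and applying the selector identity with $R = \hat{\val}(\term[2])$ forces $\tuple{a,b} \in \hat{\val}(\term[2])$. As $\val$, $a$, and $b$ were arbitrary, $\algclass \models \term[1] \le \term[2]$.

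I do not expect a genuine obstacle: the argument is short, and the only delicate point is the appeal to freshness, which is used essentially and twice — once to guarantee that the reassigned witness $\val'$ remains in $\algclass$, and once to guarantee that $\term[1]$ and $\term[2]$ are insensitive to the values of $x$ and $y$. Dropping freshness in $\algclass$ would allow $\val'$ to escape the class and break the converse, exactly the phenomenon flagged after \Cref{prop: REL axiom = substitution}.
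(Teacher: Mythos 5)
Your proof is correct, but your converse direction takes a genuinely different route from the paper's. The paper argues globally: from $\algclass_{x \le \eps, y \le \eps} \models x \term[1] y \le x \term[2] y$ it passes to the smaller class $\algclass_{x = \eps, y = \eps}$, where $x$ and $y$ denote the full identity, so that $x \term[1] y$ and $x \term[2] y$ collapse semantically to $\term[1]$ and $\term[2]$ and $\algclass_{x = \eps, y = \eps} \models \term[1] \le \term[2]$ falls out equationally; freshness is then used once, to transfer this back to $\algclass$ (reassign $\val(x), \val(y)$ to $\hat{\val}(\eps)$: membership in $\algclass$ and the values of $\term[1], \term[2]$ are unaffected). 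You instead argue pointwise, instantiating $x$ and $y$ at the \emph{smallest} useful subidentities --- the selectors $\set{\tuple{a,a}}$ and $\set{\tuple{b,b}}$ for each witness pair $\tuple{a,b} \in \hat{\val}(\term[1])$ --- and running the selector computation. Both arguments use freshness in the same dual role (preservation of membership under reassignment, and insensitivity of $\term[1], \term[2]$), and your reassignment to singleton subidentities is exactly the move the paper itself makes, and justifies by freshness, in the proof of \Cref{lem: connectivity'}, so that appeal is legitimate. What the paper's global version buys is brevity --- a one-line equational collapse with no element-level reasoning --- and it shows in passing that the stronger hypotheses $x = \eps$, $y = \eps$ already suffice; what your pointwise version buys is that it stays entirely at the level of relational semantics and makes explicit the intended role of $x$ and $y$ as source/target markers, needing only singleton values for them rather than the full identity.
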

\begin{proof}
    ($\Longleftarrow$):
    By
    $\algclass_{x = \eps, y = \eps} \models x \term[1] y \le x \term[2] y$
    iff $\algclass_{x = \eps, y = \eps} \models \term[1] \le \term[2]$
    iff $\algclass \models \term[1] \le \term[2]$ ($x, y$ are fresh).
    ($\Longrightarrow$):
    By the congruence law.
\end{proof}
By applying \Cref{prop: interval,prop: point extension} to \Cref{prop: Hoare hypotheses}, we have the \kl{Hoare hypothesis} elimination of \Cref{rem: Hoare KA}.
\begin{thm}\label{thm: Hoare hypothesis elimination interval}
    Let $A$ be a finite set.
    Let $\algclass \subseteq \REL$ be such that $\algclass$ is closed under taking \kl{submodels} with respect to $A^*$-\kl{intervals}.
    Let $\term[1]$ be a \kl{KL} \kl{terms} with $\vsig(\term[1]) \subseteq A$ and let $\term[2], \term[3]$ be $\PCoR_{\set{\bl^{*}, \com{\eps}, \com{x}}}$ \kl{terms}.
    Then,
    \[\algclass_{\term[3] \le \emp} \models \term[1] \le \term[2] \quad\Longleftrightarrow\quad \algclass \models \term[1] \le \term[2] \union A^* \term[3] A^*.\]
\end{thm}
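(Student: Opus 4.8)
The plan is to deduce the statement from \Cref{prop: Hoare hypotheses}, reducing everything to the problem of replacing $\top$ by $A^{*}$ in the composition $\top \term[3] \top$. By \Cref{prop: Hoare hypotheses} (applicable since $\algclass \subseteq \REL$) we already have
\[\algclass_{\term[3] \le \emp} \models \term[1] \le \term[2] \quad\Longleftrightarrow\quad \algclass \models \term[1] \le \term[2] \union \top \term[3] \top,\]
so it suffices to prove the equivalence
\[\algclass \models \term[1] \le \term[2] \union \top \term[3] \top \quad\Longleftrightarrow\quad \algclass \models \term[1] \le \term[2] \union A^{*} \term[3] A^{*}.\]
The implication $\Longleftarrow$ is immediate: for every $\val \in \REL$ we have $\hat{\val}(A^{*}) \subseteq \hat{\val}(\top)$, hence $\hat{\val}(A^{*} \term[3] A^{*}) \subseteq \hat{\val}(\top \term[3] \top)$, so the right-hand bound dominates the left-hand one pointwise.

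For the nontrivial implication $\Longrightarrow$, I would argue at the level of a single witness. Fix $\val \in \algclass$ and $\tuple{x_0, y_0} \in \hat{\val}(\term[1])$; the goal is $\tuple{x_0, y_0} \in \hat{\val}(\term[2] \union A^{*} \term[3] A^{*})$. Put $B \defeq [\set{x_0}, \set{y_0}]_{\val, A^{*}}$, the $A^{*}$-\kl{interval} between $x_0$ and $y_0$. Since $\term[1]$ is a \kl{KL} \kl{term} with $\vsig(\term[1]) \subseteq A$, we have $\hat{\val}(\term[1]) \subseteq \hat{\val}(A^{*})$ (every \kl{graph} in $\glang(\term[1])$ carries an $A^{*}$-path from its \kl{source} to its \kl{target}, by \Cref{prop: KL graph}); consequently both $x_0, y_0 \in B$. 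Then \Cref{prop: interval}, applied with $X = \set{x_0}$ and $Y = \set{y_0}$, gives $\tuple{x_0, y_0} \in \reallywidehat{\val \restriction B}(\term[1])$, and because $\algclass$ is closed under taking \kl{submodels} with respect to $A^{*}$-\kl{intervals}, $\val \restriction B \in \algclass$. Applying the hypothesis to $\val \restriction B$ yields $\tuple{x_0, y_0} \in \reallywidehat{\val \restriction B}(\term[2]) \cup \reallywidehat{\val \restriction B}(\top \term[3] \top)$.

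I would finish by a case split. In the first case, the inclusion \kl{graph homomorphism} $\const{G}(\val \restriction B, x_0, y_0) \homo \const{G}(\val, x_0, y_0)$, which preserves all labels of $\tilde{\vsig}_{\eps}$ exactly as in the proof of \Cref{prop: submodel cover}, transports $\tuple{x_0, y_0} \in \reallywidehat{\val \restriction B}(\term[2])$ back to $\tuple{x_0, y_0} \in \hat{\val}(\term[2])$. In the second case, using $\reallywidehat{\val \restriction B}(\top) = B^2$, there are $z, w \in B$ with $\tuple{z, w} \in \reallywidehat{\val \restriction B}(\term[3]) \subseteq \hat{\val}(\term[3])$ (again by the inclusion homomorphism, now for $\term[3]$); but $z, w \in B$ means precisely $\tuple{x_0, z} \in \hat{\val}(A^{*})$ and $\tuple{w, y_0} \in \hat{\val}(A^{*})$, whence $\tuple{x_0, y_0} \in \hat{\val}(A^{*} \term[3] A^{*})$. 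Either way $\tuple{x_0, y_0} \in \hat{\val}(\term[2] \union A^{*} \term[3] A^{*})$, as required.

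The main obstacle, and the point where the \kl{KL} restriction on $\term[1]$ and the bound $\vsig(\term[1]) \subseteq A$ are essential, is confining the witness of $\term[1]$ to the \kl{interval} $B$, i.e.\ the combination of \Cref{prop: interval} with the connectivity statement \Cref{prop: KL graph}. If $\term[1]$ were permitted operators such as $\top$, $\com{x}$, or $\com{\eps}$, the \kl{graphs} of $\glang(\term[1])$ could contain \kl{vertices} with no $A^{*}$-path to the endpoints, the witness would escape $B$, and the replacement of $\top$ by $A^{*}$ would break. (If one prefers genuine endpoint markers first, \Cref{prop: point extension} with fresh $x \le \eps$ and $y \le \eps$ can pin $x_0, y_0$ before forming the \kl{interval}, but this is not strictly necessary for the argument above.)
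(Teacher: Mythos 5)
Your proof is correct, and it reaches the result by a more direct route than the paper's own argument, although both hinge on the same two ingredients: Proposition \ref{prop: Hoare hypotheses} to reduce to replacing $\top \term[3] \top$ by $A^* \term[3] A^*$, and the combination of Proposition \ref{prop: KL graph} with Proposition \ref{prop: interval} to confine a witness of $\term[1]$ to an $A^*$-\kl{interval}. The paper works at the level of classes of valuations: it first pins the endpoints with fresh \kl{variables} $x \le \eps$, $y \le \eps$ via Proposition \ref{prop: point extension}, precisely so that the relevant \kl{interval} $[X_{\val}, Y_{\val}]_{\val, A^*}$ depends only on $\val$ and not on a particular witness pair; it then packages the interval restriction as a \kl{witness-basis} and invokes Proposition \ref{prop: submodel cover} on both sides, finishing with the identities $x \top = x A^*$ and $\top y = A^* y$ over the restricted class. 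You instead argue pointwise: for each witness $\tuple{x_0, y_0}$ you form the \kl{interval} between the singletons $\set{x_0}$ and $\set{y_0}$ (your verification that $x_0, y_0$ lie in it, via $\hat{\val}(\term[1]) \subseteq \hat{\val}(A^*)$, is the step that makes this legitimate), restrict, apply the hypothesis in the \kl{submodel}, and transport back through the inclusion \kl{graph homomorphism} with a case split, where membership of the intermediate points $z, w$ in the interval is exactly what turns $\top$ into $A^*$. This avoids Propositions \ref{prop: point extension} and \ref{prop: submodel cover} entirely and is shorter and more self-contained; what the paper's class-level framing buys is uniformity with the rest of its hypothesis-elimination machinery (the same witness-basis pattern is reused, e.g., in Example \ref{ex: surjective} and Corollary \ref{cor: Hoare hypothesis elimination}), whereas your argument would need to be repeated or re-abstracted for such variants.
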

\begin{proof}
    Let $x$ and $y$ be distinct fresh \kl{variables} in $\algclass_{\term[3] \le \emp} \models \term[1] \le \term[2]$.
    By \Cref{prop: Hoare hypotheses,prop: point extension}, it suffices to show the following:
    \[\algclass_{x \le \eps, y \le \eps} \models x \term[1] y \le x (\term[2] \union \top \term[3] \top) y \quad\Longleftrightarrow\quad \algclass_{x \le \eps, y \le \eps} \models x \term[1] y \le x (\term[2] \union A^* \term[3] A^*) y.\]
    Let $\algclass' \defeq \set{\val \restriction [X_{\val}, Y_{\val}]_{\val, A^*} \mid \val \in \algclass_{x \le \eps, y \le \eps}}$
    where 
    $X_{\val}$ and $Y_{\val}$ are such that $\Delta_{X_{\val}} = \hat{\val}(x)$ and $\Delta_{Y_{\val}} = \hat{\val}(y)$.
    Then, $\algclass' \subseteq \algclass$ (as $\algclass$ is closed under taking \kl{submodels} with respect to $A^*$-\kl{intervals}).
    Also, $\algclass'$ is a \kl{witness-basis} of $\algclass$ for $x \term y$, by applying \Cref{prop: interval} to $\term$.
    Thus by applying \Cref{prop: submodel cover} to both sides of the above equivalence, it suffices to show the following:
    \[\algclass' \models x \term[1] y \le x (\term[2] \union \top \term[3] \top) y \quad\Longleftrightarrow\quad \algclass' \models x \term[1] y \le x (\term[2] \union A^* \term[3] A^*) y.\]
    Then by the definition of $[X_{\val}, Y_{\val}]_{\val, A^*}$, we have $\algclass' \models x \top = x A^* \land \top y = A^* y$.
    Hence, this completes the proof.
\end{proof}

\begin{rem}
    \Cref{thm: Hoare hypothesis elimination interval} 
    does not hold for arbitrary class $\algclass \subseteq \REL$, cf.\ \Cref{prop: Hoare hypotheses}.
    For example, let us consider the class $\algclass \defeq \set{\val}$ where the \kl{valuation} $\val$ is given by the following \kl{graph}:
    $\begin{tikzpicture}[baseline = -.5ex]
            \graph[grow up= 1.cm, branch right = 1.cm, nodes={font=\tiny}]{
            {0/{$0$}[mynode, fill = gray!30], 1/{$1$}[mynode, fill = gray!30], 2/{$2$}[mynode, fill = gray!30]}
            };
            \graph[use existing nodes, edges={color=black, pos = .5, earrow}, edge quotes={fill = white, inner sep=1pt,font= \scriptsize}]{
            (0) ->["$a$"] (1) ->["$a$"] (2);
            };
        \end{tikzpicture}$.
    Then, $\algclass_{a a \le \emp} \models a \le \emp$ (by $\algclass_{a a \le \emp} = \emptyset$) and $\algclass \not\models a \le \emp \union A^* a a A^*$ where $A =\set{a}$ (by $\tuple{0,1} \in \hat{\val}(a) \setminus \hat{\val}(A^* a a A^*)$).
\end{rem}

\subsection{On REL with hypotheses}\label{subsection: Hoare Hypothesis elimination hypotheses}
\Cref{thm: Hoare hypothesis elimination interval} can apply to the \kl{Hoare hypothesis} elimination for $\REL_{\Gamma}$, as follows.
\begin{cor}\label{cor: Hoare hypothesis elimination}
    Let $A$ be a finite set.
    Let $\term[2]_i', \term[1]$ be \kl{KL} \kl{terms} with $\vsig(\term[2]_i'), \vsig(\term[1]) \subseteq A$
    and let $\term[1]_i', \term[2], \term[3]$ be $\PCoR_{\set{\bl^{*}, \com{\eps}, \com{x}}}$ \kl{terms} where $i$ ranges over an indexed set $I$,
    and let $\Gamma \defeq \set{\term[1]_i' \le \term[2]_i' \mid i \in I}$.
    We have:
    \[\REL_{\Gamma, \term[3] \le \emp} \models \term[1] \le \term[2] \quad\Longleftrightarrow\quad \REL_{\Gamma} \models \term[1] \le \term[2] \union A^* \term[3] A^*.\]
\end{cor}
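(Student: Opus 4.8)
The plan is to obtain this as the instance $\algclass \defeq \REL_{\Gamma}$ of \Cref{thm: Hoare hypothesis elimination interval}. In that theorem the target inequation must carry a \kl{KL} \kl{term} over $A$ on the left and $\PCoR_{\set{\bl^{*}, \com{\eps}, \com{x}}}$ \kl{terms} on the right, which is exactly the shape of $\term[1] \le \term[2]$ here (with the \kl{Hoare hypothesis} term $\term[3]$). So the whole content of the corollary reduces to verifying the side condition of \Cref{thm: Hoare hypothesis elimination interval}, namely that $\REL_{\Gamma}$ is closed under taking \kl{submodels} with respect to $A^{*}$-\kl{intervals}; once this is checked, the displayed equivalence is literally the conclusion of \Cref{thm: Hoare hypothesis elimination interval}.

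For the closure I would fix $\val \in \REL_{\Gamma}$ and an $A^{*}$-\kl{interval} $B \defeq [X, Y]_{\val, A^{*}}$, and show $\val \restriction B \models \term[1]_i' \le \term[2]_i'$ for each $i \in I$ by a three-step chase on a pair $\tuple{x, y} \in \widehat{\val \restriction B}(\term[1]_i')$. First, passing to a \kl{submodel} can only shrink the interpretation: for every $\PCoR_{\set{\bl^{*}, \com{\eps}, \com{x}}}$ \kl{term} $\term$ and $x, y \in B$, $\tuple{x, y} \in \widehat{\val \restriction B}(\term)$ implies $\tuple{x, y} \in \hat{\val}(\term)$. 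This is the contrapositive of the monotonicity already used inside the proof of \Cref{prop: bounded model property}, and it follows from \Cref{prop: glang}: a \kl{graph homomorphism} into $\const{G}(\val \restriction B, x, y)$ takes values in $B$ and is simultaneously a \kl{graph homomorphism} into $\const{G}(\val, x, y)$, because each atomic label $a \in \tilde{\vsig}_{\eps}$ satisfies $\widehat{\val \restriction B}(a) = \hat{\val}(a) \cap B^{2}$ --- crucially this holds for the complemented labels $\com{x}$ and $\com{\eps}$ as well, since their complements are taken inside $B^{2}$. Applying this to $\term[1]_i'$ gives $\tuple{x, y} \in \hat{\val}(\term[1]_i')$; then $\val \models \Gamma$ gives $\tuple{x, y} \in \hat{\val}(\term[2]_i')$; and finally, since $\term[2]_i'$ is a \kl{KL} \kl{term} with $\vsig(\term[2]_i') \subseteq A$ and $x, y \in B$, \Cref{prop: interval} pushes this back into the \kl{submodel}, yielding $\tuple{x, y} \in \widehat{\val \restriction B}(\term[2]_i')$. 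Hence $\val \restriction B \in \REL_{\Gamma}$, as required.

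The step doing the real work is the last one, and it is exactly where the two-sided shape of the \kl{hypotheses} in $\Gamma$ is exploited: the right-hand sides $\term[2]_i'$ must be \kl{KL} \kl{terms} over $A$, so that every \kl{graph} in $\glang(\term[2]_i')$ is $A^{*}$-connected to both its \kl{source} and its \kl{target} (\Cref{prop: KL graph}) and therefore survives restriction to the $A^{*}$-\kl{interval}, whereas the left-hand sides $\term[1]_i'$ may be arbitrary $\PCoR_{\set{\bl^{*}, \com{\eps}, \com{x}}}$ \kl{terms} because on that side only the free shrinking direction is needed. The main obstacle to guard against is precisely this asymmetry: one must not apply \Cref{prop: interval} to the whole inequation at once, since the push-back along an $A^{*}$-\kl{interval} is sound only for \kl{KL} \kl{terms}; the chase has to be arranged so that \Cref{prop: interval} is invoked for $\term[2]_i'$ alone, while $\term[1]_i'$ is handled by the elementary monotonicity coming from \Cref{prop: glang}.
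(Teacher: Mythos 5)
Your proof is correct and takes essentially the same approach as the paper: both reduce the corollary to the side condition of \Cref{thm: Hoare hypothesis elimination interval} and verify that $\REL_{\Gamma}$ is closed under taking submodels with respect to $A^{*}$-intervals, via the graph-homomorphism monotonicity of \Cref{prop: glang} applied to the left-hand sides $\term[1]_i'$ and \Cref{prop: interval} applied to the KL right-hand sides $\term[2]_i'$. Your write-up merely spells out the paper's one-line verification in more detail, and correctly pins down which proposition handles which side of each hypothesis.
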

\begin{proof}
    By using the \kl{graph homomorphism} from \Cref{prop: glang} and \Cref{prop: interval} for $\term[1]_i'$,
    we have:
    $\val \models \term[1]_i' \le \term[2]_i' \Longrightarrow \val \restriction [X, Y]_{\val, A^*} \models \term[1]_i' \le \term[2]_i'$.
    Thus, $\REL_{\Gamma}$ is closed under taking \kl{submodels} with respect to $A^*$-\kl{intervals}.
    This completes the proof by \Cref{thm: Hoare hypothesis elimination interval}.
\end{proof}
\Cref{cor: Hoare hypothesis elimination} is an extension of the \kl{Hypothesis elimination} known in \kl{KA} \kl{terms} (see \cite[Theorem 27]{hardinProofTheoryKleene2005}\cite[Theorem 3.2 (3.3)]{hardinModularizingElimination$r0$2005} for $\REL_{\Gamma}$); below is a slightly simplified version of \Cref{cor: Hoare hypothesis elimination} for \kl{KL} \kl{terms}.
\begin{cor}\label{cor: Hoare hypothesis elimination KL}
    Let $A$ be a finite set.
    Let $\Gamma$ be a set of \kl{KL} \kl{equations} and let $\term[1], \term[2], \term[3]$ be \kl{KL} \kl{terms}.
    When $\vsig(\Gamma), \vsig(\term[1]) \subseteq A$, we have:
    \[\REL_{\Gamma, \term[3] \le \emp} \models \term[1] \le \term[2] \quad\Longleftrightarrow\quad \REL_{\Gamma} \models \term[1] \le \term[2] \union A^* \term[3] A^*.\]
\end{cor}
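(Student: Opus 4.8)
The plan is to derive \Cref{cor: Hoare hypothesis elimination KL} as a direct specialization of \Cref{cor: Hoare hypothesis elimination}: both statements have the same overall shape, and the \kl{KL} version merely restricts $\term[2]$, $\term[3]$, and the members of $\Gamma$ to \kl{KL} \kl{terms}. The only structural mismatch is that here $\Gamma$ is a set of \emph{equations}, whereas \Cref{cor: Hoare hypothesis elimination} takes a family of inequations of a prescribed shape, so the heart of the argument is to recast the equational hypotheses as admissible inequations and then check the side conditions.

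First I would split each equation. Set
\[\Gamma' \defeq \set{\sigma \le \rho,\ \rho \le \sigma \mid (\sigma = \rho) \in \Gamma},\]
and note that $\REL_{\Gamma} = \REL_{\Gamma'}$, hence $\REL_{\Gamma, \term[3] \le \emp} = \REL_{\Gamma', \term[3] \le \emp}$: by the abbreviation $\sigma \le \rho \equiv \sigma \union \rho = \rho$, a \kl{valuation} satisfies $\sigma = \rho$ exactly when it satisfies both $\sigma \le \rho$ and $\rho \le \sigma$. Next I would verify that $\Gamma'$, together with the conclusion $\term[1] \le \term[2]$ and the \kl{Hoare hypothesis} $\term[3] \le \emp$, fits the format of \Cref{cor: Hoare hypothesis elimination}. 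For an inequation $\sigma \le \rho$ (and symmetrically $\rho \le \sigma$) in $\Gamma'$, the larger side $\rho$ (resp.\ $\sigma$) is a \kl{KL} \kl{term} with variables in $A$, since $\vsig(\Gamma) \subseteq A$, which is what is required of the bounds $\term[2]_i'$; the smaller side is a \kl{KL} \kl{term}, hence a $\PCoR_{\set{\bl^{*}, \com{\eps}, \com{x}}}$ \kl{term}, as required of the $\term[1]_i'$. Likewise $\term[1]$ is a \kl{KL} \kl{term} with $\vsig(\term[1]) \subseteq A$, and $\term[2], \term[3]$ are \kl{KL} \kl{terms}, hence $\PCoR_{\set{\bl^{*}, \com{\eps}, \com{x}}}$ \kl{terms}. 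Applying \Cref{cor: Hoare hypothesis elimination} then yields
\[\REL_{\Gamma', \term[3] \le \emp} \models \term[1] \le \term[2] \iff \REL_{\Gamma'} \models \term[1] \le \term[2] \union A^{*} \term[3] A^{*},\]
and rewriting $\REL_{\Gamma'}$ as $\REL_{\Gamma}$ gives the claim.

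I do not expect a deep obstacle, since the argument is essentially bookkeeping layered on \Cref{cor: Hoare hypothesis elimination}. The main point requiring care will be to treat the intersection operator $\cap$ occurring in \kl{KL} \kl{terms} as a $\PCoR_{\set{\bl^{*}, \com{\eps}, \com{x}}}$ operator, so that the embedding of \kl{KL} \kl{terms} into $\PCoR_{\set{\bl^{*}, \com{\eps}, \com{x}}}$ \kl{terms} is legitimate and the \kl{graph language} semantics underlying \Cref{cor: Hoare hypothesis elimination} applies; and to confirm that \emph{both} directions of each equation remain within the \kl{KL}-and-$A$ restriction imposed on the bounds $\term[2]_i'$, which holds precisely because both sides of a \kl{KL} equation are \kl{KL} \kl{terms} over $A$. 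Once these are in place, the specialization is immediate.
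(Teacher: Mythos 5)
Your proposal is correct and takes essentially the same approach as the paper: the paper's proof is literally the one-line citation ``By \Cref{cor: Hoare hypothesis elimination}'', and your splitting of each equation $\sigma = \rho$ into the two inequations $\sigma \le \rho$ and $\rho \le \sigma$, together with the check that both sides of a \kl{KL} equation over $A$ qualify simultaneously as lower bounds (being $\PCoR_{\set{\bl^{*}, \com{\eps}, \com{x}}}$ \kl{terms}) and as upper bounds (being \kl{KL} \kl{terms} over $A$), is exactly the bookkeeping the paper leaves implicit.
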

\begin{proof}
    By \Cref{cor: Hoare hypothesis elimination}.
\end{proof}

\subsection{On the converse extension}\label{subsection: Hoare Hypothesis elimination converse}
As a corollary of \Cref{thm: Hoare hypothesis elimination interval},
when we extend \kl{KL} \kl{terms} $\term$ with converse ($\bl^{\smile}$) (so-called \kl{Kleene allegory} \kl{terms}), we have the following \kl{Hoare hypothesis} elimination.
\begin{cor}\label{thm: Hoare hypothesis elimination interval KAL}
    Let $A$ be a finite set.
    Let $\breve{A} \defeq \set{a, a^{\smile} \mid a \in A}$.
    Let $\algclass \subseteq \REL$ be such that $\algclass$ is closed under taking \kl{submodels} with respect to $\breve{A}^*$-\kl{intervals}.
    Let $\term[1]$ be a \kl{Kleene allegory} \kl{terms} with $\vsig(\term[1]) \subseteq A$ and let $\term[2], \term[3]$ be $\PCoR_{\set{\bl^{*}, \com{\eps}, \com{x}}}$ \kl{terms}.
    Then,
    \[\algclass_{\term[3] \le \emp} \models \term[1] \le \term[2] \quad\Longleftrightarrow\quad \algclass \models \term[1] \le \term[2] \union \breve{A}^{*} \term[3] \breve{A}^{*}.\]
\end{cor}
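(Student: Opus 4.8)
The plan is to deduce the statement from \Cref{thm: Hoare hypothesis elimination interval} by reducing \kl{Kleene allegory} \kl{terms} over $A$ to \kl{KL} \kl{terms} over a doubled, converse-free alphabet. First I would push the converse operator down to the atoms: using the identities $\eps^{\smile} = \eps$, $\emp^{\smile} = \emp$, $(\term[1] \compo \term[2])^{\smile} = \term[2]^{\smile} \compo \term[1]^{\smile}$, $(\term[1] \union \term[2])^{\smile} = \term[1]^{\smile} \union \term[2]^{\smile}$, $(\term[1] \cap \term[2])^{\smile} = \term[1]^{\smile} \cap \term[2]^{\smile}$, $(\term[1]^{*})^{\smile} = (\term[1]^{\smile})^{*}$, and $\term[1]^{\smile\smile} = \term[1]$ (all valid on $\REL$), every \kl{Kleene allegory} \kl{term} $\term[1]$ with $\vsig(\term[1]) \subseteq A$ rewrites to an equivalent \kl{term} whose converses occur only on \kl{variables}. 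Introducing a \kl{fresh} \kl{variable} $\bar{a}$ for each $a \in A$ and replacing each $a^{\smile}$ by $\bar{a}$ yields a \kl{KL} \kl{term} $\term[1]^{\flat}$ over the finite set $A^{\flat} \defeq A \cup \set{\bar{a} \mid a \in A}$.

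Next I would set up the matching class. For $\val \in \algclass$ let $\val^{\flat}$ extend $\val$ by $\val^{\flat}(\bar{a}) \defeq \hat{\val}(a)^{\smile}$, and put $\algclass^{\flat} \defeq \set{\val^{\flat} \mid \val \in \algclass}$. A straightforward induction gives $\widehat{\val^{\flat}}(\term[1]^{\flat}) = \hat{\val}(\term[1])$, and since the $\bar{a}$ are \kl{fresh} we also have $\widehat{\val^{\flat}}(\term[2]) = \hat{\val}(\term[2])$ and $\widehat{\val^{\flat}}(\term[3]) = \hat{\val}(\term[3])$. The crucial bookkeeping is that $A^{\flat}$-edges in $\val^{\flat}$ are exactly $\breve{A}$-edges in $\val$, so $\widehat{\val^{\flat}}((\sum A^{\flat})^{*}) = \hat{\val}((\sum \breve{A})^{*})$ and, consequently, the $(A^{\flat})^{*}$-\kl{intervals} of $\val^{\flat}$ coincide with the $\breve{A}^{*}$-\kl{intervals} of $\val$.

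Then I would check the hypothesis of \Cref{thm: Hoare hypothesis elimination interval}, namely that $\algclass^{\flat}$ is closed under taking \kl{submodels} with respect to $(A^{\flat})^{*}$-\kl{intervals}. This is where I expect the only real work: I would verify that restriction commutes with the $\flat$-extension, $(\val \restriction B)^{\flat} = \val^{\flat} \restriction B$, which reduces to $(\val(a) \cap B^{2})^{\smile} = \val(a)^{\smile} \cap B^{2}$; combined with the interval correspondence above and the assumed $\breve{A}^{*}$-interval closure of $\algclass$, this gives the required closure of $\algclass^{\flat}$.

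Finally, applying \Cref{thm: Hoare hypothesis elimination interval} to $\algclass^{\flat}$, $A^{\flat}$, $\term[1]^{\flat}$, $\term[2]$, $\term[3]$ yields the equivalence $\algclass^{\flat}_{\term[3] \le \emp} \models \term[1]^{\flat} \le \term[2]$ iff $\algclass^{\flat} \models \term[1]^{\flat} \le \term[2] \union (\sum A^{\flat})^{*} \term[3] (\sum A^{\flat})^{*}$, and translating both sides back through the semantic identities of the previous step turns this into the claimed $\algclass_{\term[3] \le \emp} \models \term[1] \le \term[2]$ iff $\algclass \models \term[1] \le \term[2] \union \breve{A}^{*} \term[3] \breve{A}^{*}$. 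The main obstacle is thus not the algebra but ensuring that the submodel/interval framework of \Cref{thm: Hoare hypothesis elimination interval} transports cleanly along the converse-elimination, which hinges entirely on the commutation of restriction with converse.
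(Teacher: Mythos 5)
Your proposal is correct and takes essentially the same route as the paper: the paper's own proof is a two-line sketch that says exactly this — push the converse operator inside as far as possible, view each $a^{\smile}$ (for $a \in A$) as a fresh variable, and apply \Cref{thm: Hoare hypothesis elimination interval}. The details you work out (the doubled alphabet $A^{\flat}$, the class $\algclass^{\flat}$ with $\val^{\flat}(\bar{a}) = \hat{\val}(a)^{\smile}$, the correspondence of $(A^{\flat})^{*}$-intervals with $\breve{A}^{*}$-intervals, and the commutation of restriction with converse) are precisely the bookkeeping the paper leaves implicit, and they all check out.
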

\begin{proof}[Proof Sketch]
    By pushing the converse operator inside as much as possible and viewing each $a^{\smile}$ (where $a \in A$) as a \kl{variable},
    we can apply \Cref{thm: Hoare hypothesis elimination interval}.
\end{proof}

\section{More examples of substitution hypotheses}\label{section: substitution examples}
In this section, we give more examples of \kl{substitution hypotheses} (\Cref{prop: REL axiom = substitution}).
\begin{ex}[{\cite[Lem. 3.8 (v)]{pousToolsCompletenessKleene2024}}]\label{ex: substitution transitive}
    Consider $xx \le x$ where $x$ is a \kl{variable}.
    Let $\Gamma$ be a set of \kl{equations} such that $x \not\in \vsig(\Gamma)$.
    Then $x$ is \kl{fresh} in $\REL_{\Gamma}$.
    Thus we have:
    \[\REL_{\Gamma, xx \le x} \models \term[1] = \term[2] \quad\Longleftrightarrow\quad \REL_{\Gamma} \models (\term[1] = \term[2])[x^+/x],\]
    because $\REL \models x x \le x \leftrightarrow x = x^{+}$
    and $\REL \models x^{+} = x^{+}[x^{+}/x]$.
\end{ex}
\begin{ex}[{\cite[Lem. 3.8 (iv)]{pousToolsCompletenessKleene2024}}]\label{ex: substitution transitive 2}
    Consider $x \term[3] \le x$ where $x$ is a \kl{variable} and $x \not\in \vsig(\term[3])$.
    When $x \not\in \vsig(\Gamma)$, we have:
    \[\REL_{\Gamma, x\term[3] \le x} \models \term[1] = \term[2] \quad\Longleftrightarrow\quad \REL_{\Gamma} \models (\term[1] = \term[2])[x \term[3]^*/x],\]
    because $\REL \models x \term[3] \le x \leftrightarrow x = x \term[3]^*$
    and $\REL \models x \term[3]^* = (x \term[3]^*)[x \term[3]^*/x]$.
\end{ex}
\begin{ex}[{\cite[Lem. 3.4]{hardinEliminationHypothesesKleene2002}}]\label{ex: substitution unused code}
    Consider $b x = b$ where $x$ is a \kl{variable} and $b$ is a \kl{test} (here, $x \not\in \vsig(b)$).
    When $x \not \in \vsig(\Gamma)$, we have:
    \[\REL_{\mathtt{test}, \Gamma, b x = b} \models \term[1] = \term[2] \quad\Longleftrightarrow\quad \REL_{\mathtt{test}, \Gamma} \models (\term[1] = \term[2])[\tilde{b} x \union b/x],\]
    because $\REL_{\mathtt{test}} \models b x = b \leftrightarrow x = \tilde{b} x \union b$ \cite[Lem. 2.1]{hardinEliminationHypothesesKleene2002} and $\REL \models \tilde{b} x \union b = (\tilde{b} x \union b)[\tilde{b} x \union b/x]$.
\end{ex}

\section{A reduction from the difference constant into variable complements}\label{section: differnce encoding}
In this section,
we show that we can eliminate the difference constant $\com{\eps}$ using variable complements.
Let $\mathtt{Q}$ be a special \kl{variable}.
We recall the translation $\Tr$ given in \Cref{defi: loop transformation}.
For each \kl{valuation} $\val$ such that $\val(\mathtt{Q})$ is an equivalence relation,
we write $[x]_{\sim_{\mathtt{Q}}}$ for the equivalence class of $x$ with respect to the equivalence relation $\val(\mathtt{Q})$
and write $\val/\mathtt{Q}$ for the \intro*\kl{quotient} of $\val$ with respect to the equivalence relation $\hat{\val}(\mathtt{Q})$,
i.e., $(\val/\mathtt{Q})(a) \defeq \set{\tuple{[x]_{\sim_{\mathtt{Q}}}, [y]_{\sim_{\mathtt{Q}}}} \mid \tuple{x, y} \in \val(a)}$.
We say that $\val$ is \emph{consistent} for a \kl{variable} set $A$ if $\hat{\val}((\mathtt{Q} a \mathtt{Q}) \cap (\mathtt{Q} \com{a} \mathtt{Q})) = \emptyset$ for each $a \in A$.
\begin{lem}\label{lem: id}
    Let $\val \in \REL$.
    Suppose that $\val(\mathtt{Q})$ is an equivalence relation and $\val$ is consistent for a set $A$.
    For all \kl{terms} $\term[1]$ such that $\vsig(\term[1]) \subseteq A$, we have:
    \[\tuple{[x]_{\sim_{\mathtt{Q}}}, [y]_{\sim_{\mathtt{Q}}}} \in \widehat{\val/{\mathtt{Q}}}(\term[1]) \quad\iff\quad \tuple{x, y} \in \hat{\val}(\Tr_{\mathtt{Q}}(\term[1])).\]
\end{lem}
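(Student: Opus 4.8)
The plan is to prove the biconditional for every $\term[1]$ with $\vsig(\term[1]) \subseteq A$ by a single structural induction on $\term[1]$, establishing both directions simultaneously. Fix $\val \colon \vsig \to \wp(B^2)$ as in the hypotheses, abbreviate $\sim \,\defeq\, \sim_{\mathtt{Q}}$, and let $B_{\mathtt{Q}}$ be the set of $\sim$-classes, so that $\val/\mathtt{Q}$ is a \kl{relational model} over $B_{\mathtt{Q}}$. The induction will repeatedly use that, since $\val(\mathtt{Q})$ is an equivalence relation, $\tuple{u, w} \in \val(\mathtt{Q})$ holds iff $[u]_{\sim} = [w]_{\sim}$, and $\hat{\val}(\mathtt{Q}\mathtt{Q}) = \hat{\val}(\mathtt{Q})$. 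A convenient auxiliary fact, provable by the same induction, is that each $\hat{\val}(\Tr_{\mathtt{Q}}(\term[1]))$ is \emph{$\mathtt{Q}$-saturated}, i.e. $\hat{\val}(\mathtt{Q} \compo \Tr_{\mathtt{Q}}(\term[1]) \compo \mathtt{Q}) = \hat{\val}(\Tr_{\mathtt{Q}}(\term[1]))$; this is immediate from the leading and trailing $\mathtt{Q}$'s introduced at the atoms (and the trailing $\mathtt{Q}$ in the $\bl^{*}$ clause) together with idempotency and symmetry of $\val(\mathtt{Q})$, and it is what makes the choice of class representatives harmless below.

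For the base cases: for a \kl{variable} $a \in A$, both $\tuple{[x]_{\sim}, [y]_{\sim}} \in \widehat{\val/\mathtt{Q}}(a)$ and $\tuple{x, y} \in \hat{\val}(\mathtt{Q} a \mathtt{Q})$ unfold to the statement ``there exist $x' \sim x$ and $y' \sim y$ with $\tuple{x', y'} \in \val(a)$'', using that the $\mathtt{Q}$-neighbours of a \kl{vertex} are exactly the members of its $\sim$-class; for $\eps$ the identity $\hat{\val}(\Tr_{\mathtt{Q}}(\eps)) = \hat{\val}(\mathtt{Q}\mathtt{Q}) = \hat{\val}(\mathtt{Q})$ reduces both sides to $[x]_{\sim} = [y]_{\sim}$; and $\emp$ is trivial. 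The complement atom $\com{a}$ (for $a \in A$) is the one case where the hypotheses genuinely bite, and I would treat it last.

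The inductive steps for $\compo$, $\union$, $\cap$, and $\bl^{\smile}$ follow directly from the induction hypothesis: the Boolean cases are pointwise, and for $\compo$ and $\bl^{\smile}$ one passes between a witnessing intermediate \kl{vertex} and its $\sim$-class by picking an arbitrary representative. The $\bl^{*}$ case, where $\Tr_{\mathtt{Q}}(\term[2]^{*}) = \Tr_{\mathtt{Q}}(\term[2])^{*} \compo \mathtt{Q}$, is handled by chaining the induction hypothesis along a path and absorbing the trailing $\mathtt{Q}$ with reflexivity: a $\widehat{\val/\mathtt{Q}}(\term[2]^{n})$-path between classes lifts, via representatives and keeping the prescribed endpoints $x, y$, to a $\hat{\val}(\Tr_{\mathtt{Q}}(\term[2]))^{n}$-path, and conversely the final $\mathtt{Q}$ accounts for moving to a $\sim$-equivalent endpoint. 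The main obstacle is precisely the $\com{a}$ case: $\widehat{\val/\mathtt{Q}}(\com{a})$ is the set-theoretic complement of $(\val/\mathtt{Q})(a)$ in $B_{\mathtt{Q}}^{2}$, so membership asserts that \emph{no} representatives $x' \sim x$, $y' \sim y$ satisfy $\tuple{x', y'} \in \val(a)$, whereas $\tuple{x, y} \in \hat{\val}(\Tr_{\mathtt{Q}}(\com{a})) = \hat{\val}(\mathtt{Q} \com{a} \mathtt{Q})$ asserts that \emph{some} such representatives lie in $\val(\com{a}) = B^2 \setminus \val(a)$. These universal and existential readings coincide exactly under the hypotheses: the consistency of $\val$ for $a$ gives $\hat{\val}(\mathtt{Q} a \mathtt{Q}) \cap \hat{\val}(\mathtt{Q} \com{a} \mathtt{Q}) = \emptyset$, while reflexivity of $\val(\mathtt{Q})$ gives $\hat{\val}(\mathtt{Q} a \mathtt{Q}) \cup \hat{\val}(\mathtt{Q} \com{a} \mathtt{Q}) = \hat{\val}(\mathtt{Q}(a \union \com{a})\mathtt{Q}) = B^2$, so $a$ is constant on each class-block and the two readings agree; combining this with the already-proved variable case for $a$ closes the induction. (Note $\com{\eps}$ does not occur in $\term[1]$ here: in the reduction it is first replaced by the \kl{variable complement} $\com{\mathtt{Q}}$, to which the $\com{a}$ argument applies with $a = \mathtt{Q}$.)
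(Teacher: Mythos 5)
Your proof is correct and takes essentially the same route as the paper: the paper's own proof of this lemma is literally just ``By easy induction on $\term[1]$,'' and your structural induction supplies exactly the missing details, with the hypotheses entering precisely where they must (symmetry/reflexivity/transitivity of $\val(\mathtt{Q})$ throughout, and consistency only in the $\com{a}$ case, where it turns the existential reading of $\hat{\val}(\mathtt{Q}\com{a}\mathtt{Q})$ into the complement of $\hat{\val}(\mathtt{Q}a\mathtt{Q})$). Your closing observation that $\com{\eps}$ must be excluded as an atom is also correct and worth recording, since the equivalence genuinely fails for $\term[1] = \com{\eps}$ on any non-singleton equivalence class, so the lemma's scope is terms whose complements are variable complements $\com{a}$ with $a \in A$ (with $\com{\eps}$ handled in the reduction via $\com{\mathtt{Q}}$).
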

\begin{proof}
    By easy induction on $\term$.
\end{proof}

\begin{prop}\label{prop: id}
    Let $\algclass \subseteq \REL$ and $A \subseteq \vsig$ be a set.
    Suppose that, for all $\val \in \algclass$, $\val(\mathtt{Q})$ is an equivalence relation and $\val$ is consistent for $A$.
    When $\algclass$ is closed under \kl{quotients} with respect to $\mathtt{Q}$,
    we have that, for all $\term[1], \term[2]$ such that $\vsig(\term[1]) \cup \vsig(\term[2]) \subseteq A$,
    \[\algclass_{\mathtt{Q} = \eps} \models \term[1] \le \term[2] \quad\iff\quad \algclass \models \Tr_{\mathtt{Q}}(\term[1]) \le \Tr_{\mathtt{Q}}(\term[2]).\]
\end{prop}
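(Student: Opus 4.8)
The plan is to prove both inclusions by transporting membership facts across the quotient map $x \mapsto [x]_{\sim_{\mathtt{Q}}}$, using \Cref{lem: id} as the bridge between evaluating $\Tr_{\mathtt{Q}}(\bl)$ in $\val$ and evaluating the untranslated term in the quotient $\val/\mathtt{Q}$. The single semantic fact I would record first is that, for every $\val \in \algclass$, the relation $(\val/\mathtt{Q})(\mathtt{Q})$ is exactly the identity on the set of equivalence classes: since $\val(\mathtt{Q})$ is an equivalence relation, $\tuple{x, y} \in \val(\mathtt{Q})$ holds iff $[x]_{\sim_{\mathtt{Q}}} = [y]_{\sim_{\mathtt{Q}}}$, so $(\val/\mathtt{Q})(\mathtt{Q}) = \triangle$ and hence $\val/\mathtt{Q} \models \mathtt{Q} = \eps$. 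Because $\algclass$ is closed under \kl{quotients} with respect to $\mathtt{Q}$ we also have $\val/\mathtt{Q} \in \algclass$, and these two facts together give $\val/\mathtt{Q} \in \algclass_{\mathtt{Q} = \eps}$.

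For the direction $\Longrightarrow$, I would fix $\val \in \algclass$ and $\tuple{x, y} \in \hat{\val}(\Tr_{\mathtt{Q}}(\term[1]))$. Applying \Cref{lem: id} to $\term[1]$ (legitimate since $\val(\mathtt{Q})$ is an equivalence relation, $\val$ is consistent for $A$, and $\vsig(\term[1]) \subseteq A$) yields $\tuple{[x]_{\sim_{\mathtt{Q}}}, [y]_{\sim_{\mathtt{Q}}}} \in \widehat{\val/\mathtt{Q}}(\term[1])$. By the paragraph above, $\val/\mathtt{Q} \in \algclass_{\mathtt{Q} = \eps}$, so the hypothesis $\algclass_{\mathtt{Q} = \eps} \models \term[1] \le \term[2]$ gives $\tuple{[x]_{\sim_{\mathtt{Q}}}, [y]_{\sim_{\mathtt{Q}}}} \in \widehat{\val/\mathtt{Q}}(\term[2])$, and applying \Cref{lem: id} now to $\term[2]$ (using $\vsig(\term[2]) \subseteq A$) returns $\tuple{x, y} \in \hat{\val}(\Tr_{\mathtt{Q}}(\term[2]))$, as desired.

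For the direction $\Longleftarrow$, I would fix $\val \in \algclass_{\mathtt{Q} = \eps} \subseteq \algclass$, so that $\val(\mathtt{Q}) = \triangle$ and every class $[x]_{\sim_{\mathtt{Q}}}$ is the singleton $\set{x}$. Then the quotient map is a bijection and $\widehat{\val/\mathtt{Q}}(\term)$ corresponds to $\hat{\val}(\term)$ under it; combining this bijection with \Cref{lem: id} shows $\hat{\val}(\Tr_{\mathtt{Q}}(\term[i])) = \hat{\val}(\term[i])$ for $i = 1, 2$. The assumed inclusion $\hat{\val}(\Tr_{\mathtt{Q}}(\term[1])) \subseteq \hat{\val}(\Tr_{\mathtt{Q}}(\term[2]))$ then reduces to $\hat{\val}(\term[1]) \subseteq \hat{\val}(\term[2])$, and since $\val$ was arbitrary in $\algclass_{\mathtt{Q} = \eps}$ this establishes $\algclass_{\mathtt{Q} = \eps} \models \term[1] \le \term[2]$.

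The individual steps are all short, so the main point demanding care is the first one: verifying that quotienting by the equivalence relation $\val(\mathtt{Q})$ collapses $\mathtt{Q}$ exactly to the identity (so that $\val/\mathtt{Q}$ genuinely lands in $\algclass_{\mathtt{Q} = \eps}$), and that it is precisely the closure-under-\kl{quotients} hypothesis that licenses $\val/\mathtt{Q} \in \algclass$. This is the conceptual heart of the argument: the translation $\Tr_{\mathtt{Q}}$ simulates, inside an arbitrary $\val$, the behaviour of the term on the quotient in which $\mathtt{Q}$ has been forced to be the identity relation.
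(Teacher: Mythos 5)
Your proof is correct and takes essentially the same route as the paper: your forward direction is exactly the paper's argument (closure under \kl{quotients} plus \Cref{lem: id}, together with the observation that $(\val/\mathtt{Q})(\mathtt{Q})$ is the identity, so $\val/\mathtt{Q} \in \algclass_{\mathtt{Q} = \eps}$). The only cosmetic difference is in the easy direction, where the paper directly notes that $\Tr_{\mathtt{Q}}$ is semantically the identity translation once $\mathtt{Q} = \eps$ holds, whereas you recover the same equality $\hat{\val}(\Tr_{\mathtt{Q}}(\term[1])) = \hat{\val}(\term[1])$ via the singleton-class quotient isomorphism and \Cref{lem: id}.
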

\begin{proof}
    ($\Longleftarrow$):
    Because $\algclass \models \mathtt{Q} = \eps \to \term[3][\mathtt{Q}/\eps] = \term[3]$ for all $\term[3]$.
    ($\Longrightarrow$):
    By \Cref{lem: id} with $\val/{\mathtt{Q}} \in \algclass_{\mathtt{Q} = \eps}$ for all $\val \in \algclass$ (as $\algclass$ is closed under \kl{quotients} with respect to $\mathtt{Q}$ and $\val/{\mathtt{Q}} \models \mathtt{Q} = \eps$).
\end{proof}

\begin{thm}\label{thm: id}
    For $\set{\union, \compo, \bl^{+}, \bl^{\lop}} \subseteq S$,
    there is a polynomial-time reduction from the \kl{equational theory} for $S$-\kl{terms} to \kl[equational theory]{that} for the term class $((S \setminus \set{\com{\eps}}) \cup \set{\com{x}})$-\kl{terms}, with respect to $\REL$.
\end{thm}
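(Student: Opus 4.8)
The plan is to trade the difference constant $\com{\eps}$ for the variable complement $\com{\mathtt{Q}}$ of a single fresh \kl{variable} $\mathtt{Q}$ that is forced to behave like $\eps$, and then to discharge the resulting hypothesis $\mathtt{Q} = \eps$ by the \kl{quotient} machinery of \Cref{prop: id}. It suffices to treat inclusions $\term[1] \le \term[2]$ (an equation is a pair of inclusions, and an inclusion is itself an equation). First I would fix a fresh \kl{variable} $\mathtt{Q}$ and let $\term[1]^{\dagger}, \term[2]^{\dagger}$ be $\term[1], \term[2]$ with every occurrence of $\com{\eps}$ replaced by $\com{\mathtt{Q}}$; these lie in $((S \setminus \set{\com{\eps}}) \cup \set{\com{x}})$. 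Since $\mathtt{Q}$ is \kl{fresh} and $\hat{\val}(\com{\mathtt{Q}}) = \hat{\val}(\com{\eps})$ whenever $\val \models \mathtt{Q} = \eps$, a routine argument gives $\REL \models \term[1] \le \term[2]$ iff $\REL_{\mathtt{Q} = \eps} \models \term[1]^{\dagger} \le \term[2]^{\dagger}$.

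The engine is \Cref{prop: id}, which eliminates $\mathtt{Q} = \eps$ through the translation $\Tr_{\mathtt{Q}}$ (\Cref{defi: loop transformation}), provided the ambient class makes $\mathtt{Q}$ an equivalence relation, is consistent, and is closed under \kl{quotients} with respect to $\mathtt{Q}$. Accordingly I would take $\algclass \defeq \REL_{\Gamma}$, where $\Gamma$ forces $\mathtt{Q}$ to be an equivalence relation: reflexivity $\mathtt{Q} \ge \eps$, transitivity $\mathtt{Q} \compo \mathtt{Q} \le \mathtt{Q}$, and symmetry, encoded as the \kl{loop hypothesis} $(\mathtt{Q} \com{\mathtt{Q}})^{\lop} \le \emp$ (one checks $\REL \models (\mathtt{Q} \com{\mathtt{Q}})^{\lop} \le \emp \leftrightarrow \mathtt{Q} = \mathtt{Q}^{\smile}$, so that no $\bl^{\smile}$ is needed). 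Under $\mathtt{Q} = \eps$ all of $\Gamma$ holds, so $\algclass_{\mathtt{Q} = \eps} = \REL_{\mathtt{Q} = \eps}$. For consistency I would observe that the only complemented \kl{variable} occurring in $\term[1]^{\dagger}, \term[2]^{\dagger}$ is $\mathtt{Q}$, so by inspection of the proof of \Cref{lem: id} (consistency for a \kl{variable} $a$ is used only in the $\com{a}$ case) it suffices that every $\val \in \algclass$ is consistent for $\set{\mathtt{Q}}$; and this is automatic once $\mathtt{Q}$ is an equivalence relation, since a pair lying in both $\mathtt{Q}$ and $\mathtt{Q} \compo \com{\mathtt{Q}} \compo \mathtt{Q}$ would, by transitivity and symmetry, place a $\mathtt{Q}$-edge on a $\com{\mathtt{Q}}$-pair. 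Closure under \kl{quotients} is immediate: $\val / \mathtt{Q}$ interprets $\mathtt{Q}$ as the identity, hence again lies in $\algclass$. \Cref{prop: id} then gives $\REL_{\mathtt{Q} = \eps} \models \term[1]^{\dagger} \le \term[2]^{\dagger}$ iff $\REL_{\Gamma} \models \Tr_{\mathtt{Q}}(\term[1]^{\dagger}) \le \Tr_{\mathtt{Q}}(\term[2]^{\dagger})$.

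It remains to discharge $\Gamma$ while staying inside the target operator set. The crucial observation is that $\top$ is already available there, since $\top = z \union \com{z}$ for a fresh \kl{variable} $z$; this lets me use the $\top$-based eliminations without leaving $((S \setminus \set{\com{\eps}}) \cup \set{\com{x}})$. Concretely I would peel off the \kl{Hoare hypotheses} first: transitivity $\mathtt{Q} \compo \mathtt{Q} \le \mathtt{Q}$ via \Cref{ex: u le x 2} and symmetry $(\mathtt{Q}\com{\mathtt{Q}})^{\lop} \le \emp$ via \Cref{prop: Hoare hypotheses}, each moving a correction term of the form $\top (\dots) \top$ (built from $\compo, \union, \bl^{\lop}, \com{x}$) onto the right-hand side; since these hold for an arbitrary base class, reflexivity survives. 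Finally I would eliminate the \kl{loop hypothesis} $\mathtt{Q} \ge \eps$ by \Cref{ex: size 1}: because $\glang(\mathtt{Q}^{\lop})$ is a single-vertex \kl{graph}, every \kl{graph homomorphism} collapses to one vertex, so over the \kl{submodel-closed} class $\REL$ the hypothesis is removed by applying $\Tr_{\mathtt{Q}^{\lop}}$ to the left-hand term. Reading $\bl^{+}$ as primitive, both $\Tr_{\mathtt{Q}}$ and $\Tr_{\mathtt{Q}^{\lop}}$ preserve the operator set of their input (a starred or plussed subterm only acquires a trailing $\compo\, \mathtt{Q}$, resp.\ $\compo\, \mathtt{Q}^{\lop}$), so the result is a genuine $((S \setminus \set{\com{\eps}}) \cup \set{\com{x}})$-inclusion, and every step is computable in polynomial time.

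The main obstacle is bookkeeping the operator set throughout: ensuring that forcing $\mathtt{Q}$ to be an equivalence relation and then discharging those hypotheses never requires $\bl^{\smile}$, $\cap$, or $\com{\eps}$ in the output. Two devices make this work: (i) expressing symmetry and $\top$ using only $\bl^{\lop}$, $\union$, and $\com{x}$, so that converse- and universality-based eliminations stay in range; and (ii) restricting the needed consistency to $\mathtt{Q}$ alone, which both makes it automatic and avoids introducing $\cap$ through consistency hypotheses for the other \kl{variables}. Verifying closure under \kl{quotients} for $\REL_{\Gamma}$ and respecting the elimination order (Hoare hypotheses before the loop hypothesis) are the remaining points that need care.
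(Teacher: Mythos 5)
You follow the same overall route as the paper's proof: introduce a fresh variable $\mathtt{Q}$, trade $\com{\eps}$ for $\com{\mathtt{Q}}$, force $\mathtt{Q}$ to be an equivalence relation, discharge $\mathtt{Q} = \eps$ by the quotient argument (\Cref{prop: id}), and then remove the equivalence-relation hypotheses by Hoare- and loop-style eliminations. Several of your choices are genuine simplifications: substituting $\com{\mathtt{Q}}$ for $\com{\eps}$ \emph{before} translating makes explicit what the paper leaves implicit in how $\Tr_{\mathtt{Q}}$ treats $\com{\eps}$; your converse-free symmetry encoding $(\mathtt{Q}\com{\mathtt{Q}})^{\lop} \le \emp$ is in fact the correct form (the paper's $\mathtt{i}_2$ carries a spurious converse) and lets you skip the paper's concluding converse-elimination step via \Cref{ex: converse}; and eliminating transitivity as a Hoare-type hypothesis (\Cref{ex: u le x 2}) instead of by the substitution $\mathtt{Q} = \mathtt{Q}^{+}$ (\Cref{ex: substitution transitive}) is equally sound, as is your ordering of the eliminations and the use of \Cref{ex: size 1} for $\mathtt{Q} \ge \eps$.

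The genuine gap is that your argument covers only the case $\com{x} \notin S$, whereas the theorem quantifies over every $S \supseteq \set{\union, \compo, \bl^{+}, \bl^{\lop}}$, so $S$ may contain both $\com{\eps}$ and $\com{x}$. Your consistency step rests on the assertion that the only complemented variable in $\term[1]^{\dagger}, \term[2]^{\dagger}$ is $\mathtt{Q}$; when $\com{x} \in S$ this is false, and consistency for the other variables is then not a dispensable technicality but exactly what makes the quotient correspondence sound at the $\com{a}$ clause. Concretely, take the valuation $\val$ over $\set{0,1}$ with $\val(\mathtt{Q}) = \set{0,1}^{2}$ and $\val(a) = \set{\tuple{0,0}}$: it satisfies all of your $\Gamma$, its quotient $\val/\mathtt{Q}$ is a single point carrying an $a$-loop, so $\widehat{\val/\mathtt{Q}}(\com{a}) = \emptyset$, yet $\hat{\val}(\Tr_{\mathtt{Q}}(\com{a})) = \hat{\val}(\mathtt{Q}\com{a}\mathtt{Q}) \ni \tuple{0,0}$ via the $\com{a}$-edge $\tuple{0,1}$; thus \Cref{lem: id} fails. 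Accordingly the equivalence you extract from \Cref{prop: id} fails too: for $s = a^{\lop} \compo \com{a}^{\lop}$ we have $\REL_{\mathtt{Q}=\eps} \models s \le \emp$ but $\REL_{\Gamma} \not\models \Tr_{\mathtt{Q}}(s) \le \emp$, witnessed by this $\val$. This is precisely why the paper imposes the hypotheses $\mathtt{i}_4$ (consistency $(\mathtt{Q} a \mathtt{Q}) \cap (\mathtt{Q}\com{a}\mathtt{Q}) = \emp$ for \emph{every} variable $a$ of the input) and then eliminates them as Hoare hypotheses --- the step whose correction terms bring converse into play and necessitate the final \Cref{ex: converse} reduction. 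To prove the theorem as stated you must re-introduce these hypotheses and discharge their correction terms inside the target operator set; as written, your proof establishes the theorem only for $S$ without $\com{x}$ (which does suffice for the application in \Cref{section: variable complements}). A smaller point, shared with the paper: on $\bl^{\lop}$-subterms the literal definition of $\Tr_{\mathtt{Q}}$ (via $\term[3]^{\lop} = \term[3] \cap \eps$) produces $\cap$, so your operator-set-preservation claim additionally needs the observation that, under $\Gamma$, $\Tr_{\mathtt{Q}}(\term[3]) \cap \mathtt{Q}\mathtt{Q}$ may be rewritten as $\Tr_{\mathtt{Q}}(\term[3])^{\lop} \compo \mathtt{Q}$.
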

\begin{proof}
    Let $A = \set{a_1, \dots, a_n}$ be a finite set (of \kl{variables}) such that $\vsig(\term[1]) \cup \vsig(\term[2]) \subseteq A$
    and let $\mathtt{Q}$ be a fresh \kl{variable}.
    Let us consider the following hypotheses:
    \begin{align*}
        \mathtt{i}_0 &\defeq \set{\mathtt{Q} \le \eps},\\
        \mathtt{i}_1 &\defeq \set{\mathtt{Q} \ge \eps},\\
        \mathtt{i}_2 &\defeq \set{(\mathtt{Q} \com{\mathtt{Q}}^{\smile})^{\lop} = \emp }, \tag*{($\REL \models (\mathtt{Q} \com{\mathtt{Q}}^{\smile})^{\lop} = \emp \leftrightarrow \mathtt{Q} = \mathtt{Q}^{\smile}$)}\\
        \mathtt{i}_3 &\defeq \set{\mathtt{Q} = \mathtt{Q}^{+}}, \tag*{($\REL \models \mathtt{Q} = \mathtt{Q}^{+} \leftrightarrow \mathtt{Q} \mathtt{Q} \le \mathtt{Q}$; \Cref{ex: substitution transitive})}\\
        \mathtt{i}_4 &\defeq \set{(a \mathtt{Q} \com{a}^{\smile} \mathtt{Q})^{\lop} = \emp \mid a \in A}. \tag*{($\REL_{\mathtt{i}_{\set{0, 1, 2, 3}}} \models (a \mathtt{Q} \com{a}^{\smile} \mathtt{Q})^{\lop} = \emp \leftrightarrow (\mathtt{Q} a \mathtt{Q}) \cap (\mathtt{Q} \com{a} \mathtt{Q}) = \emp$)}
    \end{align*}
    We also let $\mathtt{i}_{S} \defeq \bigcup_{j \in S} \mathtt{i}_j$ for $S \subseteq \set{0,1,2,3,4}$.
    By construction, for each $\val \in \REL$, $\val \models \mathtt{i}_{\set{1, 2, 3, 4}}$ iff $\val(\mathtt{Q})$ is an equivalence relation (by $\mathtt{i}_{1,2,3}$) and $\val$ is consistent for $A$ (by $\mathtt{i}_{4}$).
    Let $\term[3]_{\set{2, 4}} \defeq (\mathtt{Q} \mathtt{Q}^{\smile})^{\lop} \union \sum_{a \in A} (\mathtt{Q} a \mathtt{Q} \com{a}^{\smile} \mathtt{Q})^{\lop}$.
    We then have:
    \begin{align*}
        & \REL \models \term[1] \le \term[2]\\
        & \Leftrightarrow \REL_{\mathtt{i}_{\set{0, 1, 2, 3, 4}}} \models \term[1] \le \term[2] \tag{by replacing $\val(\mathtt{Q})$ with the identity relation, as $\mathtt{Q}$ is fresh}\\
        & \Leftrightarrow \REL_{\mathtt{i}_{\set{1, 2, 3, 4}}} \models \Tr_{\mathtt{Q}}(\term[1]) \le \Tr_{\mathtt{Q}}(\term[2]) \tag{\Cref{prop: id}}\\
        & \Leftrightarrow \REL_{\mathtt{i}_{\set{1, 3}}} \models \Tr_{\mathtt{Q}}(\term[1]) \le \Tr_{\mathtt{Q}}(\term[2]) \union \top \term[3]_{\set{2,4}} \top \tag{\Cref{prop: Hoare hypotheses}}\\
        & \Leftrightarrow \REL_{\mathtt{i}_{\set{3}}} \models \mathtt{T}_{\mathtt{Q}^{\lop}}(\Tr_{\mathtt{Q}}(\term[1])) \le \Tr_{\mathtt{Q}}(\term[2]) \union \top \term[3]_{\set{2,4}} \top \tag{\Cref{thm: hypothesis elimination using graph loops 2}}\\
        & \Leftrightarrow \REL \models \mathtt{T}_{\mathtt{Q}^{\lop}}(\Tr_{\mathtt{Q}}(\term[1]))[\mathtt{Q}^{+}/\mathtt{Q}] \le (\Tr_{\mathtt{Q}}(\term[2]) \union \top \term[3]_{\set{2,4}} \top)[\mathtt{Q}^{+}/\mathtt{Q}]. \tag{\Cref{ex: substitution transitive}}
    \end{align*}
    Thus, we can reduce the \kl{equational theory} for $S$-\kl{terms} to the \kl{equational theory} for $((S \setminus \set{\com{\eps}}) \cup \set{\com{x}, \bl^{\smile}})$-\kl{terms},
and thus to \kl[equational theory]{that} for $((S \setminus \set{\com{\eps}}) \cup \set{\com{x}})$-\kl{terms} by the reduction of \Cref{ex: converse}.
\end{proof}

\section{Detailed proof of \Cref{prop: glang u}}\label{section: prop: glang u}
(The following proof is in the same way as \cite[Prop.\ 8 and 11]{nakamuraExistentialCalculiRelations2023}.
Similar arguments can be found, e.g., in \cite{andrekaEquationalTheoryUnionfree1995,brunetPetriAutomata2017,pousPositiveCalculusRelations2018}.)

We use the two notations for \kl{graphs}, \kl{series-composition} ($\compo$) and \kl{parallel-composition} ($\cap$):
\begin{align*}
       G \compo H
        & \quad\defeq\quad \begin{tikzpicture}[baseline = -.5ex]
                        \graph[grow right = 1.cm, branch down = 2.5ex, nodes={mynode, font = \scriptsize}]{
                        {s1/{}[draw, circle]}
                        -!- {c/{}[draw, circle]}
                        -!- {t1/{}[draw, circle]}
                        };
                        \node[left = 4pt of s1](s1l){} edge[earrow, ->] (s1);
                        \node[right = 4pt of t1](t1l){}; \path (t1) edge[earrow, ->] (t1l);
                        \graph[use existing nodes, edges={color=black, pos = .5, earrow}, edge quotes={fill=white, inner sep=1pt,font= \scriptsize}]{
                        s1 ->["$G$"] c;
                        c ->["$H$"] t1;
                        };
                 \end{tikzpicture}
        & G \cap H
        & \quad\defeq\quad \begin{tikzpicture}[baseline = -.5ex]
                        \graph[grow right = 1.cm, branch down = 2.5ex, nodes={mynode, font = \scriptsize}]{
                        {s1/{}[draw, circle]}
                        -!- {t1/{}[draw, circle]}
                        };
                        \node[left = 4pt of s1](s1l){} edge[earrow, ->] (s1);
                        \node[right = 4pt of t1](t1l){}; \path (t1) edge[earrow, ->] (t1l);
                        \graph[use existing nodes, edges={color=black, pos = .5, earrow}, edge quotes={fill=white, inner sep=1pt,font= \scriptsize}]{
                        s1 ->["$G$", bend left = 25] t1;
                        s1 ->["$H$", bend right = 25] t1;
                        };
                 \end{tikzpicture}.
\end{align*}

\begin{prop}\label{lem: glang val op}
       Let $\val \in \REL$ and $\graph[1], \graph[2]$ be \kl{graphs}.
       \begin{align*}
              \label{lem: glang val cap} \hat{\val}(\graph[1] \cap \graph[2])     & = \hat{\val}(\graph[1]) \cap \hat{\val}(\graph[2]) \tag{\Cref{lem: glang val op}$\cap$}     \\
              \label{lem: glang val compo} \hat{\val}(\graph[1] \compo \graph[2]) & = \hat{\val}(\graph[1]) \compo \hat{\val}(\graph[2]) \tag{\Cref{lem: glang val op}$\compo$}
       \end{align*}
\end{prop}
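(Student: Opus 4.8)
The plan is to prove each of the two equalities by setting up a correspondence between a single graph homomorphism out of the composite graph and a pair of graph homomorphisms out of its two components. Throughout I write $\const{G}(\val, x, y)$ for the valuation graph, and I recall that, by definition, $\tuple{x, y} \in \hat{\val}(\graph[2])$ holds exactly when there is a graph homomorphism $\graph[2] \homo \const{G}(\val, x, y)$. Both claims then reduce to a ``split/glue'' argument on homomorphisms, where the only content is tracking how the identified vertices of the composite graph are handled.

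For the parallel-composition identity I would first unfold the left-hand side: $\tuple{x, y} \in \hat{\val}(\graph[1] \cap \graph[2])$ means there is a homomorphism $f \colon \graph[1] \cap \graph[2] \homo \const{G}(\val, x, y)$. Since $\graph[1] \cap \graph[2]$ is the disjoint union of $\graph[1]$ and $\graph[2]$ with the two sources identified and the two targets identified, restricting $f$ along the two inclusions $\graph[i] \to \graph[1] \cap \graph[2]$ yields homomorphisms $f_i \colon \graph[i] \homo \const{G}(\val, x, y)$, using that $f$ sends $\src \mapsto x$ and $\tgt \mapsto y$, so both restrictions share the same source $x$ and target $y$. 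Conversely, from $f_1 \colon \graph[1] \homo \const{G}(\val, x, y)$ and $f_2 \colon \graph[2] \homo \const{G}(\val, x, y)$ I would glue to a single map on $\graph[1] \cap \graph[2]$; this is well-defined because $f_1$ and $f_2$ agree on the identified sources (both send them to $x$) and on the identified targets (both send them to $y$), and the edge condition is inherited componentwise. This gives the first equality.

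For the series-composition identity the extra ingredient is the intermediate vertex. Here $\graph[1] \compo \graph[2]$ is the disjoint union of $\graph[1]$ and $\graph[2]$ with $\tgt^{\graph[1]}$ identified to $\src^{\graph[2]}$; call the identified vertex $m$, so that the source is $\src^{\graph[1]}$ and the target is $\tgt^{\graph[2]}$. Given $f \colon \graph[1] \compo \graph[2] \homo \const{G}(\val, x, z)$, I would set $y \defeq f(m)$; then the restrictions give $f_1 \colon \graph[1] \homo \const{G}(\val, x, y)$ and $f_2 \colon \graph[2] \homo \const{G}(\val, y, z)$ (since $f_1$ sends $\tgt^{\graph[1]} = m \mapsto y$ and $f_2$ sends $\src^{\graph[2]} = m \mapsto y$), whence $\tuple{x, y} \in \hat{\val}(\graph[1])$, $\tuple{y, z} \in \hat{\val}(\graph[2])$, and so $\tuple{x, z} \in \hat{\val}(\graph[1]) \compo \hat{\val}(\graph[2])$. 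Conversely, from witnesses $f_1$ and $f_2$ of $\tuple{x, y} \in \hat{\val}(\graph[1])$ and $\tuple{y, z} \in \hat{\val}(\graph[2])$, the two maps agree on $m$ (both send it to $y$), so they glue to a homomorphism $\graph[1] \compo \graph[2] \homo \const{G}(\val, x, z)$, giving the second equality.

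The only delicate point, and the place where I would be most careful, is the bookkeeping of vertex identifications: I must check that each glued map is genuinely well-defined (the two component homomorphisms agree on every identified vertex) and is indeed a homomorphism (preserving source, target, and all edges). For $\cap$ the agreement is on both source and target; for $\compo$ it is on the single middle vertex $m$, and one must record that its image $y$ is exactly the shared intermediate value appearing in the relational composition $\hat{\val}(\graph[1]) \compo \hat{\val}(\graph[2])$. Both checks are routine once the identifications are spelled out, so I expect no genuine difficulty beyond this bookkeeping.
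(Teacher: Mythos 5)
Your proof is correct and follows essentially the same route as the paper's: both directions of each equality are handled by the same split/glue correspondence, restricting a homomorphism out of the composite graph to its two components, and conversely taking the union of two component homomorphisms after checking agreement on the identified vertices (source and target for $\cap$; the middle vertex, whose image is the intermediate point $z$ of the relational composition, for $\compo$). No gaps; the bookkeeping you flag is exactly the content of the paper's own verification.
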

\begin{proof}
       (\ref{lem: glang val cap}):
       It suffices to prove that for every $x, y$, the following are equivalent:
       \begin{itemize}
        \item $\exists f, f \colon (\graph[1] \cap \graph[2]) \homo \const{\graph}(\val, x, y)$;
        \item $\exists f_{\graph[1]}, f_{\graph[2]},\  f_{\graph[1]} \colon \graph[1] \homo \const{\graph}(\val, x, y) \land f_{\graph[2]} \colon \graph[2] \homo \const{\graph}(\val, x, y)$.
       \end{itemize}
       $\Rightarrow$:
       By letting $f_{\graph[1]} = \set{\tuple{x', f(x')} \mid x' \in |\graph[1]|}$ and $f_{\graph[2]} = \set{\tuple{x', f(x')} \mid x' \in |H|}$.
       $\Leftarrow$:
       By letting $f = f_{\graph[1]} \cup f_{\graph[2]}$.
       Note that $f_{\graph[1]}(\src^{\graph[1]}) = x = f_{\graph[2]}(\src^{\graph[2]})$ and $f_{\graph[1]}(\tgt^{\graph[1]}) = y = f_{\graph[2]}(\tgt^{\graph[2]})$; so $f$ is indeed a map.

       (\ref{lem: glang val compo}):
       It suffices to prove that for every $x, y$, the following are equivalent:
       \begin{itemize}
        \item $\exists f, f \colon (\graph[1] \compo \graph[2]) \homo \const{\graph}(\val, x, y)$;
        \item $\exists z, \exists f_{\graph[1]}, f_{\graph[2]},\  f_{\graph[1]} \colon \graph[1] \homo \const{\graph}(\val, x, z) \land f_{\graph[2]} \colon \graph[2] \homo \const{\graph}(\val, z, y)$.
       \end{itemize}
       $\Rightarrow$:
       By letting $z = f(\tgt^{\graph[1]})$, $f_{\graph[1]} = \set{\tuple{x', f(x')} \mid x' \in |\graph[1]|}$, and $f_{\graph[2]} = \set{\tuple{x', f(x')} \mid x' \in |\graph[2]|}$.
       $\Leftarrow$:
       By letting $f = f_{\graph[1]} \cup f_{\graph[2]}$.
       Note that $f_{\graph[1]}(\tgt^{\graph[1]}) = z = f_{\graph[2]}(\src^{\graph[2]})$; so $f$ is indeed a map.
\end{proof}

\begin{prop}\label{lem: glang val op u}
    Let $\val \in \REL$ and $\graph[1], \graph[2]$ be \kl{graphs}.
    Let $\glang$ be a set of \kl{graphs}.
    \begin{align*}
           \label{lem: glang val cap u} \bigcup_{\substack{f \colon \domain{\graph[1]} \to \glang,\\
           g \colon \domain{\graph[2]} \to \glang}} \hat{\val}(\graph[1][f] \cap \graph[2][g])
           &= \bigcup_{\substack{h \colon \domain{\graph[1] \cap \graph[2]} \to \glang}}          \hat{\val}((\graph[1] \cap \graph
           [2])[h]) \tag{\Cref{lem: glang val op u}$\cap$}     \\
           \label{lem: glang val compo u} \bigcup_{\substack{f \colon \domain{\graph[1]} \to \glang,\\
           g \colon \domain{\graph[2]} \to \glang}} \hat{\val}(\graph[1][f] \compo \graph[2][g])
           &= \bigcup_{\substack{h \colon \domain{\graph[1] \compo \graph[2]} \to \glang}}          \hat{\val}((\graph[1] \compo \graph
           [2])[h]) \tag{\Cref{lem: glang val op u}$\compo$}    
    \end{align*}
\end{prop}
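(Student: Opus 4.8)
The plan is to reduce both identities to \Cref{lem: glang val op} by passing to a suitable submodel. Write $D \defeq \set{w \mid \tuple{w, w} \in \hat{\val}(\glang)}$ for the set of vertices of $\val$ that can carry a $\glang$-loop, and for a single \kl{graph} $\graph[2]$ abbreviate $L(\graph[2]) \defeq \bigcup_{f \colon \domain{\graph[2]} \to \glang} \hat{\val}(\graph[2][f])$. The central step is the identity
\[ L(\graph[2]) \;=\; \widehat{\val \restriction D}(\graph[2]), \]
where the right-hand side is read as a set of pairs in $D^2 \subseteq \domain{\val}^2$. (If $D = \emptyset$ everything is vacuous: each \kl{graph} has a non-empty vertex set, so no homomorphism can have image in $D$, and all four unions appearing below are empty; I therefore assume $D \neq \emptyset$ in what follows.)

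First I would establish this identity by unfolding the \kl{loop extension}. A \kl{graph homomorphism} $\Phi \colon \graph[2][f] \homo \const{G}(\val, x, y)$ restricts to a homomorphism $\psi \defeq \Phi \restriction \domain{\graph[2]} \colon \graph[2] \homo \const{G}(\val, x, y)$, and, for each $z \in \domain{\graph[2]}$, the copy of $f(z)$ glued at $z$ maps under $\Phi$ to a homomorphism $f(z) \homo \const{G}(\val, \psi(z), \psi(z))$, i.e.\ $\tuple{\psi(z), \psi(z)} \in \hat{\val}(f(z))$; conversely any such data assemble into one $\Phi$. Hence $\tuple{x, y} \in L(\graph[2])$ iff there is a homomorphism $\psi \colon \graph[2] \homo \const{G}(\val, x, y)$ with $\psi(z) \in D$ for every $z$ (choosing, for each $z$, some $f(z) \in \glang$ witnessing $\psi(z) \in D$). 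Since a homomorphism into $\const{G}(\val, x, y)$ whose image is contained in $D$ is exactly a homomorphism into $\const{G}(\val \restriction D, x, y)$, this is precisely $\tuple{x, y} \in \widehat{\val \restriction D}(\graph[2])$.

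With this identity both displayed equalities follow formally. For the $\cap$-equality, the $\cap$-part of \Cref{lem: glang val op} gives $\hat{\val}(\graph[1][f] \cap \graph[2][g]) = \hat{\val}(\graph[1][f]) \cap \hat{\val}(\graph[2][g])$; taking the union over the independent choices of $f$ and $g$ turns the left-hand side into $L(\graph[1]) \cap L(\graph[2])$. On the right-hand side, the central identity gives $\bigcup_h \hat{\val}((\graph[1] \cap \graph[2])[h]) = \widehat{\val \restriction D}(\graph[1] \cap \graph[2])$, and applying \Cref{lem: glang val op} now to the \kl{valuation} $\val \restriction D \in \REL$ rewrites this as $\widehat{\val \restriction D}(\graph[1]) \cap \widehat{\val \restriction D}(\graph[2]) = L(\graph[1]) \cap L(\graph[2])$, so the two sides agree. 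The $\compo$-equality is identical, replacing $\cap$ by $\compo$ and using that composition distributes over unions (the shared intermediate vertex automatically lies in $D$). The only delicate point is the bookkeeping in the central identity: one must check that the glued copies of the $f(z)$ touch the base graph only through the single merged vertex $z$, so that a homomorphism of $\graph[2][f]$ splits cleanly into $\psi$ together with independent loop-witnesses, and that restricting the codomain to $D$ loses no homomorphisms. Everything else is distribution of $\cap$ and $\compo$ over set unions and a direct appeal to \Cref{lem: glang val op}.
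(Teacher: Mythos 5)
Your proof is correct, but it takes a genuinely different route from the paper's. The paper disposes of both identities with a two-line re-indexing argument that never leaves the level of graphs and never invokes \Cref{lem: glang val op}: for $\supseteq$, given $h$ it sets $f(z) = g(z) = h(z)$, and the graphs $(\graph[1] \cap \graph[2])[h]$ and $\graph[1][f] \cap \graph[2][g]$ then differ only in that the latter glues \emph{two} copies of the same loop graph $h(z)$ at each vertex shared by the two components (the merged source/target for $\cap$, the middle vertex for $\compo$), so the two extensions are homomorphically equivalent and have the same $\hat{\val}$-value; for $\subseteq$, given $(f,g)$ it sets $h$ equal to $f$ on $\domain{\graph[1]}$ and to $g$ elsewhere, and then $(\graph[1] \cap \graph[2])[h]$ is a subgraph of $\graph[1][f] \cap \graph[2][g]$ with the same source and target, so every homomorphism of the larger graph restricts to one of the smaller, giving $\hat{\val}(\graph[1][f] \cap \graph[2][g]) \subseteq \hat{\val}((\graph[1] \cap \graph[2])[h])$. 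Your argument instead isolates the semantic identity $\bigcup_{f} \hat{\val}(\graph[2][f]) = \widehat{\val \restriction D}(\graph[2])$, with $D$ the set of loop-carrying vertices, and reduces both equalities to two applications of \Cref{lem: glang val op}, one to $\val$ and one to the submodel $\val \restriction D$. This costs extra bookkeeping that the paper's proof avoids entirely: the case split on $D = \emptyset$ (needed because relational models must have non-empty domain), the verification that $\const{G}(\val \restriction D, x, y)$ is the induced subgraph of $\const{G}(\val, x, y)$ (which does hold, also for the labels $\com{a}$ and $\com{\eps}$, since their interpretations likewise restrict to $D^2$), and a choice of loop witnesses in the assembly step. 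What it buys is conceptual: your central identity is essentially the semantic content of \Cref{prop: glang u} itself --- loop extension relativizes evaluation to the loop-carrying vertices --- so your route in effect proves that proposition directly rather than just the combinatorial lemma feeding its induction. Both delicate points you flag (the clean splitting of a homomorphism of $\graph[2][f]$ into the base map plus independent loop witnesses, and the losslessness of restricting the codomain to $D$) are handled correctly, so there is no gap.
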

\begin{proof}
    ($\supseteq$): By letting $f(z) = h(z)$ and $g(z) = h(z)$.
    ($\subseteq$): By letting $h(z) = \begin{cases}
    f(z) & (z \in \domain{\graph[1]})\\
    g(z) & (\mbox{otherwise})
\end{cases}$ (either one is fine for $z \in \domain{\graph[1]} \cap \domain{\graph[2]}$). %
\end{proof}

\begin{proof}[Proof of \Cref{prop: glang u}]
       By easy induction on $\term$.

       Case $a$ where $a \in \tilde{\vsig} = \set{a, \com{a} \mid a \in \vsig} \cup \set{\com{\eps}, \top}$:
       For $\tuple{x, y} \in \hat{\val}(\top)$, we have
       \begin{align*}
              \tuple{x, y} \in \hat{\val}(\Tr_{\term[3]}(a)) & \quad\Leftrightarrow\quad \tuple{x, y} \in \hat{\val}(\term[3]^{\lop} a \term[3]^{\lop})\\
              & \quad\Leftrightarrow\quad \tuple{x, y} \in \hat{\val}(\left\{\begin{tikzpicture}[baseline = -.5ex]
                \graph[grow right = 1.2cm, branch down = 6ex, nodes={mynode}]{
                {0/{}[draw, circle]}-!-{1/{}[draw, circle]}
                };
                \node[left = .5em of 0](l){};
                \node[right = .5em of 1](r){};
                \node[above = 1.em of 0](0e0){\scriptsize $\graph[1]$};
                \node[above = 1.em of 1](1e1){\scriptsize $\graph[2]$};
                \graph[use existing nodes, edges={color=black, pos = .5, earrow}, edge quotes={fill=white, inner sep=1pt,font= \scriptsize}]{
                   0 ->["$a$"] 1;
                   0 ->[bend right] 0e0 ->[bend right] 0;
                   1 ->[bend right] 1e1 ->[bend right] 1;
                   l -> 0; 1 -> r;
                };
         \end{tikzpicture} \mid \graph[1],\graph[2] \in \glang(\term[3]^{\lop}) \right\}) \tag{\Cref{prop: glang}}                           \\
                                             & \quad\Leftrightarrow\quad \tuple{x, y} \in \hat{\val}(\glang_{\term[3]}(a)). \tag{Definition of $\glang_{\term[3]}$}
       \end{align*}

       Case $\eps$:
       For every $\tuple{x, y} \in \hat{\val}(\top)$, we have
       \begin{align*}
              \tuple{x, y} \in \hat{\val}(\Tr_{\term[3]}(\eps))  & \quad\Leftrightarrow\quad \tuple{x, y} \in \hat{\val}(\term[3]^{\lop})\\
              & \quad\Leftrightarrow\quad \tuple{x, y} \in \hat{\val}(\left\{\begin{tikzpicture}[baseline = -.5ex]
                \graph[grow right = 1.2cm, branch down = 6ex, nodes={mynode}]{
                {0/{}[draw, circle]}
                };
                \node[left = .5em of 0](l){};
                \node[right = .5em of 0](r){};
                \node[above = 1.em of 0](0e0){\scriptsize $\graph[1]$};
                \graph[use existing nodes, edges={color=black, pos = .5, earrow}, edge quotes={fill=white, inner sep=1pt,font= \scriptsize}]{
                   0 ->[bend right] 0e0 ->[bend right] 0;
                   l -> 0; 0 -> r;
                };
         \end{tikzpicture} \mid \graph[1] \in \glang(\term[3]^{\lop}) \right\}) \tag{\Cref{prop: glang}}                           \\
                                                & \quad\Leftrightarrow\quad \tuple{x, y} \in \hat{\val}(\glang_{\term[3]}(\eps)). \tag{Definition of $\glang_{\term[3]}$}
       \end{align*}

       Case $\emp$:
       For every $\tuple{x, y}$, we have
       \begin{align*}
              \tuple{x, y} \in \hat{\val}(\emp) & \Leftrightarrow \const{false}  \Leftrightarrow \tuple{x, y} \in \hat{\val}(\emptyset) \Leftrightarrow \tuple{x, y} \in \hat{\val}(\glang_{\term[3]}(\emp)). \tag{Definition of $\hat{\val}$ and $\glang_{\term[3]}$}
       \end{align*}

       Case $\term[1] \compo \term[2]$:
       \begin{align*}
              &\hat{\val}(\Tr_{\term[3]}(\term[1] \compo \term[2])) \\
              & \quad=\quad \hat{\val}(\Tr_{\term[3]}(\term[1])) \compo \hat{\val}(\Tr_{\term[3]}(\term[2]))  \tag{Definition of $\hat{\val}$}                                                                                                                     \\
                                                  & \quad=\quad \hat{\val}(\glang_{\term[3]}(\term[1])) \compo \hat{\val}(\glang_{\term[3]}(\term[2]))          \tag{IH}                                                                                                                \\
                                                  & \quad=\quad \bigcup_{\substack{ \graph[1] \in \glang(\term[1])\\\graph[2] \in \glang(\term[2])}}
                                                  \bigcup_{\substack{f \colon \domain{\graph[1]} \to \glang(\term[3]^{\lop}),\\
                                                  g \colon \domain{\graph[2]} \to \glang(\term[3]^{\lop})}}
                                                  \hat{\val}(\graph[1][f]) \compo \hat{\val}(\graph[2][g])                 \tag{$\compo$ is distributive with respect to $\cup$} \\
                                                  & \quad=\quad \bigcup_{\substack{ \graph[1] \in \glang(\term[1])\\\graph[2] \in \glang(\term[2])}}
                                                  \bigcup_{\substack{f \colon \domain{\graph[1]} \to \glang(\term[3]^{\lop}),\\
                                                  g \colon \domain{\graph[2]} \to \glang(\term[3]^{\lop})}} \hat{\val}(\graph[1][f] \compo \graph[2][g]) \tag{\Cref{lem: glang val compo}}                                      \\
                                                  & \quad=\quad \bigcup_{\substack{ \graph[1] \in \glang(\term[1])\\\graph[2] \in \glang(\term[2])}}
                                                  \bigcup_{\substack{h \colon \domain{\graph[1] \compo \graph[2]} \to \glang(\term[3]^{\lop})}}          \hat{\val}((\graph[1] \compo \graph
                                                  [2])[h])
                                                  \tag{\ref{lem: glang val compo u}}\\
                                                  & \quad=\quad \hat{\val}(\glang_{\term[3]}(\term[1] \compo \term[2])). \tag{Definition of $\glang_{\term[3]}$}
       \end{align*}

       Case $\term[1] \cap \term[2]$:
       \begin{align*}
             & \hat{\val}(\Tr_{\term[3]}(\term[1] \cap \term[2])) \\
             & \quad=\quad \hat{\val}(\Tr_{\term[3]}(\term[1])) \cap \hat{\val}(\Tr_{\term[3]}(\term[2]))                     \tag{Definition of $\hat{\val}$}                                                                                                  \\
                                                 & \quad=\quad \hat{\val}(\glang_{\term[3]}(\term[1])) \cap \hat{\val}(\glang_{\term[3]}(\term[2]))            \tag{IH}                                                                                                              \\
                                                 &\quad=\quad \bigcup_{\substack{ \graph[1] \in \glang(\term[1])\\\graph[2] \in \glang(\term[2])}}
                                                 \bigcup_{\substack{f \colon \domain{\graph[1]} \to \glang(\term[3]^{\lop}),\\
                                                 g \colon \domain{\graph[2]} \to \glang(\term[3]^{\lop})}}
                                                 \hat{\val}(\graph[1][f]) \cap \hat{\val}(\graph[2][g])           \tag{$\cap$ is distributive with respect to $\cup$} \\
                                                 & \quad=\quad \bigcup_{\substack{ \graph[1] \in \glang(\term[1])\\\graph[2] \in \glang(\term[2])}}
                                                 \bigcup_{\substack{f \colon \domain{\graph[1]} \to \glang(\term[3]^{\lop}),\\
                                                 g \colon \domain{\graph[2]} \to \glang(\term[3]^{\lop})}} \hat{\val}(\graph[1][f] \cap \graph[2][g]) \tag{\Cref{lem: glang val cap}}                                       \\
                                                 & \quad=\quad \bigcup_{\substack{ \graph[1] \in \glang(\term[1])\\\graph[2] \in \glang(\term[2])}}
                                                 \bigcup_{\substack{h \colon \domain{\graph[1] \cap \graph[2]} \to \glang(\term[3]^{\lop})}}          \hat{\val}((\graph[1] \cap \graph
                                                 [2])[h])
                                                 \tag{\ref{lem: glang val cap u}}\\
                                                 & \quad=\quad \hat{\val}(\glang_{\term[3]}(\term[1] \cap \term[2])). \tag{Definition of $\glang_{\term[3]}$}
       \end{align*}

       Case $\term[1] \union \term[2]$:
       \begin{align*}
              \hat{\val}(\Tr_{\term[3]}(\term[1] \union \term[2])) & \quad=\quad \hat{\val}(\Tr_{\term[3]}(\term[1])) \cup \hat{\val}(\Tr_{\term[3]}(\term[2]))                     \tag{Definition of $\hat{\val}$} \\
                                                 & \quad=\quad \hat{\val}(\glang_{\term[3]}(\term[1])) \cup \hat{\val}(\glang_{\term[3]}(\term[2]))            \tag{IH}             \\
                                                 & \quad=\quad \hat{\val}(\glang_{\term[3]}(\term[1]) \cup \glang_{\term[3]}(\term[2])) = \hat{\val}(\glang_{\term[3]}(\term[1] \union \term[2])). \tag{Definition of $\glang_{\term[3]}$}
       \end{align*}

       Case $\term[1]^*$:

       \begin{align*}
             \hat{\val}(\Tr_{\term[3]}(\term[1]^{*})) &\quad=\quad \hat{\val}(\Tr_{\term[3]}(\term[1])^{*} \compo \term[3]^{\lop}) \tag{Definition of $\hat{\val}$} \\
             &\quad=\quad \hat{\val}(\Tr_{\term[3]}(\term[1])^{+} \union \term[3]^{\lop})  \\
             &\quad=\quad \hat{\val}(\glang_{\term[3]}(\term[1])^+) \cup \hat{\val}(\term[3]^{\lop})       \tag{IH}                                                                  \\
             &\quad=\quad \hat{\val}(\glang_{\term[3]}(\term[1]^+)) \cup \hat{\val}(\term[3]^{\lop})      \tag{By the same argument as that for Case $\term \compo \term[2]$}    \\
             &\quad=\quad \hat{\val}(\glang_{\term[3]}(\term[1]^*)).      \tag{Definition of $\glang_{\term[3]}$}
       \end{align*}
\end{proof}

\section{Note: the equational theories with respect to DREL and REL coincide for propositional while programs}\label{section: DREL KAT}
In this section, we note that the equational theories with respect to $\DREL$ and $\REL$ coincide for propositional while programs.
We use this fact in \Cref{table: complexity SDIPDL}.
This equivalence holds even for \emph{\kl{regular programs}}.
We recall $\mathrm{B}$ the set of \kl{tests} given in \Cref{section: syntax}.
The set of \intro*\kl{regular programs} is the minimal subset $\mathrm{W} \subseteq \PCoR_{\set{\bl^{*}}}$ satisfying the following:
\begin{gather*}
    \begin{prooftree}[separation = .5em]
        \hypo{b \in \mathrm{B}}
        \infer1{b \in \mathrm{W}}
    \end{prooftree}
    \quad
    \begin{prooftree}[separation = .5em]
        \hypo{x \in \asig}
        \infer1{x \in \mathrm{W}}
    \end{prooftree}
    \quad
    \begin{prooftree}[separation = .5em]
        \hypo{\term[1] \in \mathrm{W}}
        \hypo{\term[2] \in \mathrm{W}}
        \infer2{\term[1] \compo \term[2] \in \mathrm{W}}
    \end{prooftree}
    \quad
    \begin{prooftree}[separation = .5em]
        \hypo{\term[1] \in \mathrm{W}}
        \hypo{\term[2] \in \mathrm{W}}
        \infer2{\term[1] \union \term[2] \in \mathrm{W}}
    \end{prooftree}
    \quad
    \begin{prooftree}[separation = .5em]
        \hypo{\term[1] \in \mathrm{W}}
        \infer1{\term[1]^{*} \in \mathrm{W}}
    \end{prooftree}.
\end{gather*}
We then have the following.
\begin{prop}\label{prop: KAT DREL and REL}
For \kl{regular programs} $\term[1]$ and $\term[2]$, we have the following:
\[\DREL_{\mathtt{test}} \models \term[1] \le \term[2] \quad\iff\quad \REL_{\mathtt{test}} \models \term[1] \le \term[2].\]
\end{prop}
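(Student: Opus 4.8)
The plan is to prove the two directions separately. The direction $\Longleftarrow$ is immediate: since every deterministic \kl{relational model} is a \kl{relational model} we have $\DREL_{\mathtt{test}} \subseteq \REL_{\mathtt{test}}$, so $\REL_{\mathtt{test}} \models \term[1] \le \term[2]$ implies $\DREL_{\mathtt{test}} \models \term[1] \le \term[2]$. All the content is in $\Longrightarrow$, which I would prove contrapositively using the \kl{graph language} characterization of \Cref{prop: glang}. The first ingredient is the structural fact that, because \kl{regular programs} are both $\cap$-free and $\bl^{\smile}$-free, every \kl{graph} in $\glang(\term)$ is \emph{linear}: a simple path $v_0, v_1, \dots, v_n$ in which $v_0$ is the \kl{source}, $v_n$ the \kl{target}, and each consecutive pair $\tuple{v_{i-1}, v_i}$ carries a single edge labelled by some $p \in \psig$ or $a \in \asig$. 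This is a routine induction on $\term$ (the cases $\eps$, $\emp$, $p$, $a$ are linear, and \kl{series-composition}, union, and Kleene star preserve linearity, while no $\cap$ or $\bl^{\smile}$ occurs to break it).

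Now suppose $\REL_{\mathtt{test}} \not\models \term[1] \le \term[2]$, so there are $\val \in \REL_{\mathtt{test}}$ and \kl{vertices} $x, y$ with $\tuple{x, y} \in \hat{\val}(\term[1]) \setminus \hat{\val}(\term[2])$. By \Cref{prop: glang} there is a linear \kl{graph} $\graph[2] \in \glang(\term[1])$ and a \kl{graph homomorphism} $h \colon \graph[2] \homo \const{G}(\val, x, y)$. Because $\val \in \REL_{\mathtt{test}}$, each test edge $\tuple{v_{i-1}, v_i}$ labelled $p$ satisfies $\tuple{h(v_{i-1}), h(v_i)} \in \val(p) \subseteq \triangle_{\domain{\val}}$, forcing $h(v_{i-1}) = h(v_i)$. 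I would therefore collapse $\graph[2]$ along its test edges: let $\bar{G}$ be the \kl{quotient graph} obtained by merging the endpoints of every test edge (turning test edges into self-loops), with $q \colon \graph[2] \homo \bar{G}$ the quotient map and $\bar{h} \colon \bar{G} \homo \const{G}(\val, x, y)$ the induced map, so $h = q \compo \bar{h}$ on \kl{vertices}. The key point is that after this collapse the action edges of $\bar{G}$ form a simple path $u_0 \to u_1 \to \dots \to u_m$, so each \kl{vertex} of $\bar{G}$ has at most one outgoing $a$-edge for every $a \in \asig$.

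I would then build the deterministic model $\val'$ on the \kl{vertices} of $\bar{G}$ by taking the action edges exactly as they occur in $\bar{G}$, and by defining the test values \emph{from} $\val$: for each $p \in \psig$ set $\val'(p) \defeq \set{\tuple{u, u} \mid \tuple{\bar{h}(u), \bar{h}(u)} \in \val(p)}$. Since $\val$ partitions the identity at every \kl{vertex} between $p$ and $\tilde{p}$, so does $\val'$; hence $\val' \in \REL_{\mathtt{test}}$, and $\val'$ is deterministic by the single-outgoing-edge property, so $\val' \in \DREL_{\mathtt{test}}$. Two \kl{graph homomorphisms} finish the argument. First, $q \colon \graph[2] \homo \const{G}(\val', u_0, u_m)$ is a \kl{graph homomorphism} (test edges land in $\val'(p)$ by the definition of $\val'$ together with $h = q \compo \bar{h}$, and action edges are preserved by construction), so \Cref{prop: glang} gives $\tuple{u_0, u_m} \in \hat{\val'}(\term[1])$. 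Second, $\bar{h} \colon \const{G}(\val', u_0, u_m) \homo \const{G}(\val, x, y)$ is a \kl{graph homomorphism} (test edges are preserved precisely because $\val'(p)$ was defined from $\val$, action edges because $\bar{h}$ is induced by $h$); hence any $\graph[3] \in \glang(\term[2])$ mapping into $\const{G}(\val', u_0, u_m)$ would compose with $\bar{h}$ to map into $\const{G}(\val, x, y)$, contradicting $\tuple{x, y} \notin \hat{\val}(\term[2])$. Thus $\tuple{u_0, u_m} \notin \hat{\val'}(\term[2])$, and therefore $\DREL_{\mathtt{test}} \not\models \term[1] \le \term[2]$.

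The main obstacle, and the step I would treat most carefully, is verifying that collapsing the test edges yields a genuinely \emph{deterministic} model while simultaneously keeping both required homomorphisms: determinism needs the action edges to form a simple path, which is exactly where linearity of $\graph[2]$ is used, and the two homomorphisms pull in opposite directions, so the test values of $\val'$ must be completed consistently with $\val$ (not arbitrarily) for $\bar{h}$ to remain a \kl{graph homomorphism}. Everything else is the routine induction for the linearity lemma and the bookkeeping of \kl{source} and \kl{target} \kl{vertices} under $q$ and $\bar{h}$.
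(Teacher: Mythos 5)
Your proposal is correct and takes essentially the same route as the paper's own proof: contraposition via \Cref{prop: glang}, the observation that regular-program graphs are paths, collapsing the test edges (your quotient $\bar{G}$ is exactly the paper's merged graph $\graph[2]'$), defining the deterministic model on the collapsed path with test values pulled back from $\val$, and concluding with the same two homomorphisms in opposite directions. The only difference is presentational: you phrase the collapse in quotient-map language, while the paper writes out the merged path with $\psig$-labelled self-loops explicitly.
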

\begin{proof}
($\Leftarrow$):
By $\DREL_{\mathtt{test}} \subseteq \REL_{\mathtt{test}}$.
($\Rightarrow$):
We show the contraposition.
Let $\tuple{x, y} \in \hat{\val}(\term[1]) \setminus \hat{\val}(\term[2])$ where $\val \in \REL_{\mathtt{test}}$.
By \Cref{prop: glang},
there is some \kl{graph homomorphism} $f$ from some \kl{graph} $\graph[2] \in \glang(\term[1])$ into $\const{G}(\val, x, y)$.
By the definition of $\glang(\term[1])$, the \kl{graph} $\graph[2]$ is a path, more precisely, of the following form:
\[\begin{tikzpicture}[baseline = -.5ex]
       \graph[grow up= 1.cm, branch right = 1.5cm, nodes={font=\tiny}]{
       {0/{$0$}[mynode, fill = gray!30],
       1/{$1$}[mynode, fill = gray!30],
       2/{$2$}[mynode, fill = gray!30],
       /{$\dots$},
       nm1/{}[mynode, fill = gray!30],
       n/{$n$}[mynode, fill = gray!30]}
       };
       \node[left = 4pt of 0](0l){}; \path (0l) edge[earrow, ->] (0);
       \node[right = 4pt of n](nr){}; \path (n) edge[earrow, ->] (nr);
       \graph[use existing nodes, edges={color=black, pos = .5, earrow}, edge quotes={fill = white, inner sep=1pt,font= \scriptsize}]{
       (0) ->["$a_1$"] (1) ->["$a_2$"] (2); (nm1) ->["$a_{n}$"] (n);
       };
\end{tikzpicture}.\]
Let $\graph[2]'$ be the \kl{graph} $\graph[2]$ obtained by merging all adjacent \kl{vertices} $i-1$ and $i$ such that $a_i \in \psig$.
The \kl{graph} $\graph[2]'$ is of the following form (up to \kl{graph isomorphisms}):
\[\begin{tikzpicture}[baseline = -.5ex]
       \graph[grow up= 1.cm, branch right = 1.5cm, nodes={font=\tiny}]{
       {0/{$0$}[mynode, fill = gray!30],
       1/{$1$}[mynode, fill = gray!30],
       2/{$2$}[mynode, fill = gray!30],
       /{$\dots$},
       nm1/{}[mynode, fill = gray!30],
       n/{$m$}[mynode, fill = gray!30]}
       };
       \node[left = 4pt of 0](0l){}; \path (0l) edge[earrow, ->] (0);
       \node[right = 4pt of n](nr){}; \path (n) edge[earrow, ->] (nr);
       \graph[use existing nodes, edges={color=black, pos = .5, earrow}, edge quotes={fill = white, inner sep=1pt,font= \scriptsize}]{
       (0) ->["$a_1'$"] (1) ->["$a_2'$"] (2); (nm1) ->["$a_{m}'$"] (n);
       0 ->["$P_0$",out = 60, in = 120, looseness = 8] 0;
       1 ->["$P_1$",out = 60, in = 120, looseness = 8] 1;
       2 ->["$P_2$",out = 60, in = 120, looseness = 8] 2;
       n ->["$P_m$",out = 60, in = 120, looseness = 8] n;
       };
\end{tikzpicture},\]
where $P_i \subseteq \psig$ and $a_i' \in \asig$.
By the map $f$ with $\hat{\val}(p) \subseteq \hat{\val}(\eps)$ for each $p \in \psig$,
we have $\graph[2]' \homo \const{G}(\val, x, y)$.
From $\graph[2]'$,
we define the \kl{valuation} $\val' \colon \vsig \to \range{0, m}$ so that
\begin{itemize}
\item $\hat{\val}'(p) \defeq \set{\tuple{i, i} \mid i \in \range{0, m} \mbox{ and } \tuple{f'(i), f'(i)} \in \hat{\val}(p)}$ for $p \in \psig$, and
\item $\hat{\val}'(a) \defeq \set{\tuple{i-1, i} \mid i \in \range{1,m} \mbox{ and } a_i' = a}$ for $a \in \asig$.
\end{itemize}
By construction, $\val' \in \DREL_{\mathtt{test}}$.
We then have $\graph[2]' \homo \const{G}(\val', 0, m)$ by the identity map, and hence $\graph[2] \homo \const{G}(\val', 0, m)$.
We thus have $\tuple{0, m} \in \hat{\val}'(\term[1])$ by \Cref{prop: glang}.
Also,
by $\const{G}(\val', 0, m) \homo \const{G}(\val, x, y)$ and $\tuple{x, y} \not\in \hat{\val}(\term[2])$,
we have $\tuple{0, m} \not\in \hat{\val}'(\term[2])$.
Thus, $\tuple{0, m} \in \hat{\val}'(\term[1]) \setminus \hat{\val}'(\term[2])$.
Hence, this completes the proof.
\end{proof}
Consequently, by combining with \cite[Theorems 6]{kozenKleeneAlgebraTests1996}, we have:
\[\DREL_{\mathtt{test}} \models \term[1] \le \term[2] 
\quad\iff\quad \REL_{\mathtt{test}} \models \term[1] \le \term[2]
\quad\iff\quad G(\term[1]) \subseteq G(\term[2]).\]
Here, $G(\term)$ is the guarded string language \cite[Section 3.1]{kozenKleeneAlgebraTests1996} of a \kl{regular program} $\term$ where,
precisely, the definition for \kl{primitive tests} is replaced
as follows (assuming $\psig = \set{p_0, \dots, p_{2n - 1}}$ and 
$\tilde{p_i} = p_{n+1}$):
\begin{align*}
G(p_i) &\defeq \set{P \subseteq \set{p_0, \dots, p_{n-1}} \mid p \in P}, & 
G(\tilde{p_i}) &\defeq \set{P \subseteq \set{p_0, \dots, p_{n-1}} \mid p \not\in P}
\end{align*}
for $i \in \range{0, n-1}$.
Hence, the \kl{equational theory} above can be characterized 
by \kl[equational theory]{that} of Kleene algebra with tests \cite[Theorem 8]{kozenKleeneAlgebraTests1996}.

\end{document}